\documentclass
[preprint,11pt]{elsarticle}
\usepackage[left=25mm, top=20mm, right=20mm, bottom=20mm, nohead, footskip=10mm]{geometry}
\usepackage[colorlinks=true, urlcolor=blue, pdfborder={0 0 0}]{hyperref}
\usepackage{color}
\usepackage{graphicx}

\usepackage{float}

\usepackage{listings}

\usepackage{dsfont}

\usepackage[labelfont=bf]{caption}
\usepackage{subcaption}

\renewcommand{\gamma}{\delta}
\renewcommand{\varrho}{\eta}

\usepackage{amsmath,amsfonts,amssymb,amsthm,booktabs,  algorithm, algpseudocode}

\usepackage{mathrsfs}
\RequirePackage[OT1]{fontenc}

\allowdisplaybreaks 

\numberwithin{equation}{section}

\theoremstyle{plain}
\newtheorem{theorem}{Theorem}
\newtheorem{lemma}{Lemma}
\newtheorem{corollary}{Corollary}
\newtheorem{proposition}{Proposition}
\newtheorem{Assumption}{Assumption}
\theoremstyle{remark}
\newtheorem{remark}{Remark}

\newtheorem{definition}{Definition}

\newcommand{\ignore}[1]{} 

\numberwithin{equation}{section} %

\newcommand\cA{{\cal A}}

\newcommand\cE{{\cal E}}

\newcommand\cH{{\cal H}}

\newcommand\cF{{\cal F}}
\newcommand\cL{{\cal L}}

\newcommand\cN{{\cal N}}

\newcommand\cX{{\cal X}}

\newcommand\KL{{\bf KL}}

\newcommand\cV{{\cal V}}

\newcommand\vphi{\varphi}

\def\bbr{{\mathbb R}}

\def\text#1{\hbox{#1}}
\def\proof{{\noindent \bf Proof. }}
\def\endproof{\mbox{\ $\qed$}}

\def\E{\mathbb{E}}
\def\P{\mathbb{P}}

\def\C{{\bf C}}

\def\H{{\bf H}}

\def\R{{\bf R}}

\def\V{{\bf V}}

\def\x{{\bf x}}
\def\l{{\bf l}}

\def\a{{\bf a}}
\def\b{{\bf b}}

\def\q{{\bf q}}

\def\d{\mathrm{d}}
\def\build #1_#2{\mathrel{\mathop{\kern 0pt #1}\limits_\zs{#2}}} 

\newcommand{\bro}[2]{(#1_\zs{#2})_\zs{0 \le t \le T}}
\newcommand{\rr}[2]{\mathbb{#1}^{#2}}

\newcommand{\gr}[1]{\textbf{#1}}
\newcommand{\zt}[2]{\zeta_\zs{#1}^{#2}}

\newcommand{\zs}[1]{{\mathchoice{#1}{#1}{\lower.25ex\hbox{$\scriptstyle#1$}}
{\lower0.25ex\hbox{$\scriptscriptstyle#1$}}}}

\usepackage{setspace}
\usepackage{indentfirst}

\usepackage[english]{babel}
\usepackage{verbatim}

\def\pc#1{{\textcolor{blue}{#1}}}

\begin{document}


\begin{frontmatter}
	
	\title{Optimal Investment and Entropy-Regularized Learning Under Stochastic Volatility Models with Portfolio Constraints}
	
	\author{Thai Nguyen\fnref{f1}}
\ead{thai.nguyen@act.ulaval.ca}

\author{Pertiny Nkuize\fnref{f2}}
\ead{pwnkk@ulaval.ca}

\fntext[f1]{École d'Actuariat, Université Laval, 2325 Rue de l'Université, Québec, QC G1V 0A6, Canada}

\fntext[f2]{École d'Actuariat, Université Laval, 2325 Rue de l'Université, Québec, QC G1V 0A6, Canada}

\begin{abstract}
We study the problem of optimal portfolio selection under stochastic volatility within a continuous-time reinforcement learning framework with portfolio constraints. Exploration is modeled through entropy-regularized relaxed controls, where the investor selects probability distributions over admissible portfolio allocations rather than deterministic strategies. Using dynamic programming arguments, we derive the associated entropy-regularized Hamilton–Jacobi–Bellman equation, whose Hamiltonian involves optimization over probability measures supported on a compact control set. We show that the optimal exploratory policy takes the form of a truncated Gaussian distribution characterized by spatial derivatives of the solution of the resulting nonlinear quasilinear parabolic partial differential equation. 
Under suitable structural conditions on the model coefficients, we prove the existence of classical solutions to this nonlinear HJB equation for the value function.
We then establish a verification theorem and analyze the policy-improvement structure induced by the entropy-regularized Hamiltonian, showing how the resulting sequence of PDEs provides a continuous-time interpretation of actor–critic learning dynamics. Finally, our PDE analysis with a semi-closed form of optimal value and optimal policy enables the design of an implementable reinforcement learning algorithm by recasting the optimal problem in a martingale framework.

\end{abstract}
	\begin{keyword}
		Optimal investment \sep entropy regularized \sep reinforcement learning \sep exploration \sep stochastic optimal control \sep portfolio constraint
		\MSC 34D20 \sep 60H10 \sep 92D25 \sep 93D05 \sep 93D20.
	\end{keyword}
	
\end{frontmatter}


\section{Introduction}

The problem of optimal portfolio selection in continuous time originates from the seminal works of \cite{merton1969lifetime,merton1975optimum}, which established expected-utility maximization as a cornerstone of mathematical finance. In this framework, an investor dynamically allocates wealth between a risk-free asset and one or several risky assets to maximize the expected utility of terminal wealth or intertemporal consumption. The associated Hamilton–Jacobi–Bellman (HJB) equation yields a nonlinear parabolic partial differential (PDE) equation  whose explicit solution can be derived under power or logarithmic preferences in complete diffusion markets. The mathematical foundations of this theory were rigorously developed in \cite{KaratzasLehoczkyShreve1987,yong1999stochastic,fleming2006controlled,pham2009continuous} and further systematized in \cite{karatzas1998methods}. In these classical formulations, the optimal strategy is characterized by a deterministic feedback control—often referred to as the Merton proportion—under the assumption that model coefficients are perfectly known.  

In realistic investment environments, however, portfolio decisions are subject to institutional and regulatory constraints. Short-selling prohibitions, borrowing limits, and leverage caps restrict the admissible set of strategies and fundamentally alter the geometry of the optimization problem. Constrained portfolio optimization has been extensively studied in \cite{cvitanic1992convex,cvitanic1993hedging,shreve1994optimal,karatzas2003optimal,cuoco1997optimal}, where convex-analytic and duality techniques were employed to characterize optimal strategies under compact or convex control sets. The presence of constraints often induces boundary solutions and transforms the associated HJB equation into a fully nonlinear variational inequality; see also \cite{bensoussan2011applications}. The interaction between state constraints and control compactness plays a central role in the analytical treatment of such problems. Another essential extension concerns stochastic volatility. Empirical evidence of volatility clustering and time-varying uncertainty motivated diffusion-based stochastic volatility models, most notably \cite{heston1993closed}. Optimal portfolio selection in stochastic volatility environments has been investigated in \cite{kraft2005optimal,chacko2005dynamic,liu2007portfolio,kim1996dynamic,campbell1999consumption}. In such models, volatility evolves as an additional state variable, possibly correlated with asset returns, and markets typically become incomplete when volatility risk cannot be perfectly hedged. The resulting value function depends jointly on wealth and volatility factors, and the HJB equation becomes multi-dimensional and fully nonlinear. Explicit or semi-explicit solutions are rare and generally rely on specific structural assumptions.

Despite these structural refinements, classical portfolio theory relies heavily on the assumption that drift and volatility parameters are known. In practice, however, estimating expected returns is notoriously difficult. This issue was emphasized early by \cite{merton1980estimating} and later analyzed empirically in \cite{pastor2000portfolio}. Small estimation errors in the drift parameter may generate substantial deviations in optimal portfolio weights, a phenomenon often referred to as the mean-estimation problem. Robust control approaches, pioneered in \cite{hansen2001robust} and developed further in \cite{maenhout2004robust,maenhout2006robust}, address model misspecification by incorporating worst-case distortions of probability measures. Risk-sensitive control formulations, as in \cite{bielecki1999risk}, provide an alternative representation in which ambiguity is embedded directly into the objective functional.

While robust and Bayesian approaches remain model-based, recent developments in reinforcement learning (RL) advocate a model-free paradigm. RL, systematically developed in \cite{sutton1998reinforcement}, enables an agent to learn optimal behavior through interaction with the environment via exploration and exploitation. Continuous-time perspectives were proposed in \cite{doya2000reinforcement}, suggesting a connection between stochastic differential equations and learning dynamics. A growing body of work has incorporated learning mechanisms and data-driven approaches to model financial decision-making and market dynamics. In the context of market microstructure, \cite{Chiarella2015} introduces a genetic algorithm-based learning framework in limit order markets, where traders adapt their order submission strategies endogenously based on market conditions. \cite{Arifovic2022} further extended it to high-frequency trading environments with joint impact of trading speed and machine learning on market efficiency. RL has also been employed in broader financial decision-making problems, for instance in household finance by \cite{Bandyopadhyay2025}, as well as in agent-based financial markets through adaptive Actor--Critic approaches as in \cite{Bekiros2010}. These contributions, however, are predominantly developed in discrete-time or simulation-based settings.

A major breakthrough in continuous-time reinforcement learning is the entropy-regularized framework introduced in \cite{wang2020continuous,wang2020reinforcement}. 
The theoretical foundations of continuous time RL were substantially advanced in \cite{Jia2022policya,jia2022policyb}, where martingale methods were used to derive policy evaluation, policy gradient, and actor–critic algorithms in continuous time. These results place reinforcement learning firmly within the framework of stochastic analysis and dynamic programming. In the context of portfolio optimization, \cite{dai2023learning} studied entropy-regularized learning of Merton-type strategies in incomplete markets and identified biased Gaussian exploratory policies. In a similar vein, \cite{WU2024} recently addressed the continuous-time mean-variance problem in regime-switching markets with unobservable states, utilizing a semi-analytical approach for actor-critic learning. Complementarily, \cite{chau2025continuous} analyzed constrained entropy-regularized investment in complete markets and demonstrated the emergence of truncated Gaussian policies under compact action sets. Extensions to jump-diffusion environments were obtained in \cite{gao2024reinforcement}, while \cite{bender2024random} provided a random-measure formulation linking idealized relaxed controls to implementable sampling schemes.

The idea of replacing pointwise controls with measure-valued controls originates in the classical theory of relaxed stochastic control, notably in \cite{kushner1990numerical} and in the monograph \cite{fleming2006controlled}. In that literature, relaxed controls were introduced primarily as a compactification device: enlarging the admissible control set to probability measures restores convexity and compactness properties that may fail in the original formulation. This enlargement is essential for proving the existence of optimal controls and the stability of value functions. The entropy-regularized reinforcement learning formulation builds upon this relaxed-control framework but introduces a structurally significant ingredient: an entropy penalty in the performance criterion. The entropy term discourages overly concentrated control distributions and promotes systematic exploration of the action space. Unlike classical relaxed control, where measure-valued controls are mainly technical tools, the entropy-regularized framework interprets them as intrinsic to the learning mechanism. Exploration thus becomes part of the optimization objective itself, implying important structural consequences. In particular, the associated HJB equation involves optimization over probability measures, and the entropy term induces additional convexity and smoothing effects in the Hamiltonian. When the exploration weight vanishes, the formulation converges to the classical deterministic control problem, thereby establishing a rigorous bridge between reinforcement learning and stochastic optimal control. Related connections between entropy regularization and stochastic control have also been discussed in \cite{todorov2009efficient,rawlik2013stochastic}.

Despite the rapid development of entropy-regularized reinforcement learning in continuous time, existing studies typically consider either complete markets, unconstrained investment problems, or simplified volatility structures. The existing works, e.g. \cite{Jia2022policya,jia2022policyb, chau2025continuous}, do not investigate but rather assume the existence of solutions for the resulting PDEs, which are generally highly non-linear in nature. Therefore, a rigorous analytical treatment combining stochastic volatility, explicit portfolio constraints, and entropy-regularized exploration within a nonlinear PDE framework remains largely absent from the literature. From a mathematical standpoint, entropy-regularized portfolio optimization leads to a fully nonlinear parabolic HJB equation, whose Hamiltonian involves optimization over probability measures supported on constrained sets.  We recall that the analytical theory of nonlinear parabolic equations originates in \cite{ladyzhenskaia1968linear,krylov1987nonlinear,lieberman1996second} and the viscosity-solution framework of \cite{crandall1983viscosity,crandall1992user}. Probabilistic representations through backward stochastic differential equations, developed in \cite{peng1991probabilistic,el1997backward,pardoux2005backward}, further clarify the link between dynamic programming and nonlinear PDEs.

The present paper addresses this structural gap at a rigorous analytical level. Building on the entropy-regularized exploration paradigm of \cite{wang2020continuous}, which captures the correlation between the risky asset and the factor variable \cite{dai2023learning}, as well as the martingale-based continuous-time reinforcement learning framework of \cite{Jia2022policya}, we develop a continuous-time reinforcement learning formulation of optimal investment under stochastic volatility with explicit portfolio constraints. Our contributions are fivefold. First, we derive the entropy-regularized Hamilton–Jacobi–Bellman equation associated with the constrained stochastic-volatility problem and explicitly characterize the optimizer in the Hamiltonian. We show that, under compact portfolio constraints, the optimal exploratory control distribution is a truncated Gaussian whose mean and variance depend on the spatial derivatives of the value function. The variance is proportional to the temperature parameter and inversely proportional to the curvature of the value function, thereby linking exploration intensity to risk aversion and the local concavity of the value function. Second, we establish the existence of a classical solution to the nonlinear parabolic HJB equation under suitable structural conditions on the model coefficients. After transforming the value function via a homothetic ansatz, the problem reduces to a one-dimensional quasilinear parabolic equation with nonlinear gradient terms and logarithmic entropy contributions. Using the theory of nonlinear parabolic equations in Hölder spaces and structural growth conditions of Ladyzhenskaya–Solonnikov type \cite{ladyzhenskaia1968linear}, we prove that the reduced equation admits a solution in Hölder space.  Third, we prove a verification theorem linking the PDE solution to the stochastic control problem. Under admissibility conditions on the exploratory policy class, we show that the candidate value function satisfies the dynamic programming principle and dominates the performance functional of any admissible exploratory distribution. Moreover, equality holds for the truncated Gaussian policy derived from the Hamiltonian maximization. This establishes optimality in the class of relaxed entropy-regularized controls. Finally, we analyze the structure of policy improvement induced by the entropy-regularized Hamiltonian. The optimal distribution at each iteration depends on the gradient and Hessian of the current value function, which, in turn, solves a nonlinear PDE parameterized by the policy moments. This leads naturally to a sequence of coupled parabolic PDEs corresponding to successive policy updates. We show that, under suitable compactness and monotonicity conditions, this sequence yields a policy improvement scheme consistent with the continuous-time actor–critic paradigm. 
Our PDE analysis enables us to design an interpretable actor–critic algorithm that learns both the value function and the exploratory policy. Numerical experiments show that the critic parameters converge stably toward their theoretical values and that the learned stochastic policy behaves consistently with the optimal truncated-Gaussian form derived from the HJB equation.

The closest work is the seminar paper \cite{dai2023learning}. While the exploratory formulation is similar to that in \cite{dai2023learning}, our analysis with portfolio constraints is substantially different. In addition, our analysis rigorously addresses several important issues regarding the existence of solutions to the corresponding PDEs that are not studied in \cite{dai2023learning}. In particular, by combining nonlinear PDE analysis, relaxed stochastic control, and entropy-regularized reinforcement learning in a stochastic-volatility environment with portfolio constraints, the present work provides a mathematically grounded bridge between continuous-time learning theory and constrained stochastic control. A central component of our contribution is the establishment of a continuous-time policy improvement theorem at the PDE level. Starting from an admissible exploratory policy, we associate a value function that solves a corresponding parabolic partial differential equation. The entropy-regularized Hamiltonian then induces an explicit update of the exploratory distribution through a maximization step depending on the spatial derivatives of the current value function. This generates a sequence of parabolic PDEs, each corresponding to the evaluation of an updated policy. We show that this iterative scheme produces a monotone improvement in the performance functional and remains well-posed under suitable structural conditions on the coefficients and the constraint set. The entropy term plays a crucial stabilizing role, ensuring the regularity of the Hamiltonian and preventing the degeneracy of the optimization step. In the vanishing-temperature limit, the exploratory distributions concentrate, and the scheme converges toward the classical deterministic feedback control, thereby recovering the Merton-type solution. Compared to the recent work \cite{chau2025continuous} where a complete financial market is considered for a logarithmic or quadratic utility function, our model with stochastic volatility implies an incomplete market with a focus on nonlinear PDE analysis. We remark that by setting the risk aversion $\eta=1$ and a deterministic volatility, we can recover the results obtained for logarithmic utility in \cite{chau2025continuous}.

The remainder of the paper is organized as follows: 
Section 2 introduces the stochastic-volatility market model and the admissible exploratory policy class. We also establish the well-posedness of the exploratory wealth dynamics and the admissibility conditions for truncated Gaussian policies. 
Section 3 is devoted to the analytical study of the reduced partial differential equation and establishes the existence of classical solutions under suitable structural growth conditions, together with the regularity required for feedback optimality. 
Sections 4 and 5 provide the verification theorem, prove the optimality of the truncated Gaussian exploratory policy, and analyze the limiting regime near the zero-temperature limit. 
Section 6 develops the policy evaluation and policy-improvement framework associated with the entropy-regularized Hamiltonian. 
Section 7 presents numerical illustrations of the proposed learning framework. Section 8 concludes and discusses possible extensions. Finally, extra arguments can be found in the Appendix.

\section{Market Model}
\noindent In this section, we assume that standard independent Wiener processes $(W_\zs{t})_\zs{0\le t \le T}$
	 and $(\bar{W}_\zs{t})_\zs{0\le t \le T}$ are defined on the probability space
 $(\Omega\,, \cF_\zs{T}^0\,, (\cF_t^0)_\zs{0\le t \le T}\,, \mathbb{P}) ,$
 with the fields $\cF_\zs{t}^0=\sigma\{W_\zs{v}\,,\bar{W}_\zs{v}\,,\,0\le v \le t\}$.
Our financial markets consist of one risk-less bond $B=(B_\zs{t})_\zs{0\le t \le T}$ defined as
$$
\d B_\zs{t}=r B_\zs{t}\d t\,, \quad B_\zs{0}=1
$$
and a risky asset $S=\bro{S}{t}$ governed by the following system
\begin{equation}
	\label{scv1}
	\begin{cases}
		\d S_\zs{t}=\mu S_\zs{t}\d t+ \sigma(y_\zs{t}) S_\zs{t}\d W_\zs{t},\qquad S_\zs{0}>0,\\
		\d y_\zs{t}=\varpi (y_\zs{t})\d t+ \gamma(y_\zs{t})\d U_\zs{t}\,,
	\end{cases}
 \end{equation} 
 where $U_\zs{t}=\rho W_\zs{t}+\sqrt{1-\rho^2}\bar{W}_\zs{t}$, with $0 <\rho < 1$. Here, $\mu$ and $\sigma$ are the drift and volatility parameters, respectively, and $y_\zs{0}$ is a fixed non-random initial value. We assume that the drift $\mu$ and the interest rate $r$ are constant with $r\le \mu$. Consider an investor endowed with an initial wealth $x_0$, which will be used for investments. We assume that the investor splits her initial wealth between the two assets given above.
 We use $\pi=\{\pi_\zs{t}\}_\zs{t \in [0,T]}$ to denote the fraction of wealth that the investor invests in the risky asset. The remaining money is invested in the risk-free asset. We assume that the process $\{\pi_\zs{t}\}_\zs{t \in [0,T]}$ is $\cF^0$-adapted. The investor wealth process $X^{\pi}_t$ satisfies the following stochastic differential equation:
 \begin{equation}
     \d X_{t}^{\pi}=(r+\pi_\zs{t}(\mu-r))X_\zs{t}^{\pi}\,\d t + \sigma(y_\zs{t})\pi_\zs{t}X_\zs{t}^{\pi}\, \d W_\zs{t}, \quad X_\zs{0}=x_0>0 .\label{eq:wealth}
 \end{equation}
We assume that the investor chooses an investment strategy from the following admissible set
 \begin{align*}
     \mathcal{A}(x)_\zs{[a,b]}:&=\bigg\{\pi |a \le \pi \le b\, \text{is progressively measurable}, X_\zs{t}^{\pi}\ge 0 \quad \,\text{for all} \,t \in [0,T], \int_{0}^{T} \pi_\zs{t}^{2}\d t < \infty
     \bigg\},
 \end{align*}
 {where $a,b$ are the portfolio bounds. Note that the short-selling constraint can be achieved by setting $a<b=1$, while the case $a=0<b$ is nothing other than the borrowing constraint.} 

Below, we suppose that the investor's preference is given by a CRRA utility function $U(x)=\frac{x^{1-\eta}-1}{1-\eta}$, where $0<\eta<1$ is the investor's risk aversion. His objective is then to choose an admissible strategy that maximizes the terminal expected utility,
 \begin{equation}\label{classic_EU}
 	\underset{\pi \in \mathcal{A}(x)_\zs{[a,b]}}{\max} \E[U(X_\zs{T}^{\pi})],
 \end{equation}
 subject to dynamics (\ref{eq:wealth}). {The following condition is accepted throughout the paper.}
 
\vspace{2mm}

\noindent$(\gr{SV})$ {\em The functions $\varpi$ and $\gamma$ are deterministic and continuously differentiable, such that the second equation in \eqref{scv1} has a unique strong solution.}

\noindent We remark that the exact forms of these functions are unknown to the investor.

\subsection{Exploratory wealth dynamics}

\noindent In RL settings where the parameters of the underlying model are unknown, dynamic learning becomes crucial. Dynamic learning refers to the ability of an agent to adapt and learn from her interactions with an environment where the underlying model is not known. In such cases, the agent must employ exploration strategies to interact with the environment, gather data, and learn through trial and error. In particular, at each time $t$, an action (control) $\pi_t$ is generated or sampled from the distribution  $\lambda=\{\lambda_\zs{t}(\pi), 0\le t\le T\}$. By making use of the law of large numbers,  \cite{wang2020reinforcement} considers the following exploratory version of the wealth dynamics \eqref{eq:wealth} 
  \begin{equation}\label{exploratry_dynamics_0}
  	 \d X_\zs{t}^{\lambda}=(r+(\mu-r)\E(\lambda_\zs{t}))X_\zs{t}^{\lambda}\d t+ \sigma(y_\zs{t})\sqrt{\E(\lambda_\zs{t}^{2})}X_\zs{t}^{\lambda}\d W_{t},
  \end{equation}
  where $\mathbb{E}(\lambda_t)$ is the mean of the distribution $\lambda$, i.e., $
  	\mathbb{E}(\lambda_t)=\int_{a}^{b}\pi \lambda_t(\pi|t,X_t^\lambda,y_t)\d \pi.
  $ 
Here we assume that the policy $\lambda$ is of feedback form; see e.g., \cite{Jia2022policya,jia2022policyb,chau2025continuous}. For further motivations and discussions, we refer e.g., to \cite{wang2020reinforcement, chau2025continuous,dai2023learning}. Remark that when all the parameters are known, exploration is not needed at all, and we return to dynamic \eqref{eq:wealth}.
  We notice that the exploratory wealth depends on the first and second moments of a probability distribution of $\lambda_\zs{t}$.  However, as pointed out in \cite{dai2023learning}, the exploratory wealth \eqref{exploratry_dynamics_0} does not properly take into account the correlation between $W_t$ and $\bar{W}_t$; consequently, the agent behaves as if the second moment remains constant when measuring the noise in exploration with respect to this second moment. As pointed out in \cite[Appendix A]{dai2023learning}, the quadratic variation of $\d X_\zs{t}^{\lambda}$ is given by
  \begin{align*}
  \d <{ X^{\lambda}}>_\zs{t} =\sigma^{2}(y_\zs{t})\E(\lambda_\zs{t}^{2})(X_\zs{t}^{\lambda})^{2}\d t
  =\sigma^{2}(y_\zs{t})(X_\zs{t}^{\lambda})^{2}((\E(\lambda_\zs{t}))^{2}+Var(\lambda_\zs{t}))\d t,
\end{align*}
and the exploratory wealth dynamics should read
\begin{equation}\label{exploratory_dynamics}
	\d X_t^{\lambda} = \big(r + (\mu - r)\mathbb{E}[\lambda(\pi \mid t, X_t^\lambda, y_t)]\big) X_t^{\lambda} \d t 
	+ \sigma(y_t) X_t^{\lambda} \big(\mathbb{E}[\lambda(\pi \mid t, X_t^\lambda, y_t)] \d W_t + \sqrt{\text{Var}(\lambda(\pi \mid t, X_t^\lambda, y_t))} \d \hat{W}_t\big),
\end{equation}
 where 
  $\hat{W}_\zs{t}$ is another Brownian motion independent of both $W_t$ and $\bar{W}_\zs{t}$. Here, a third Brownian motion $\hat{W}_t$ is introduced to capture noises resulting from exploration while maintaining an appropriate correlation between the wealth and volatility processes. This new randomness can be seen as a random number generator that the agent utilizes to devise a randomized policy. 
	Below, we expand the fields $\cF_\zs{t}$ with $(\hat{W}_t)_\zs{0\le t\le T}$ i.e $\cF_\zs{t}=\sigma\{W_\zs{v}\,,\bar{W}_\zs{v}\,,\hat{W}_\zs{v},\,0\le v \le t\}$ in order to capture exploration.
 
 {Under such an exploration setting, the investor tries to choose the best exploration policy $\lambda$ that maximizes the following entropy-regularized objective (see, e.g;\cite{wang2020continuous, chau2025continuous})
 \begin{equation}\label{explo-EU}
 	 \E\bigg[U(X_\zs{T}^{\lambda})+m\int_0^T \cE(\lambda_s) d s\bigg],
 \end{equation}	where  $\cE(.)$ is Shannon's differential entropy, defined by $\cE(\lambda)=-\int_\zs{a}^{b}\lambda(\pi)\ln \lambda (\pi) \d \pi,$, and $m$ is a parameter that measures the level of exploration. This means that the objective is to achieve the best trade-off between exploration and exploitation. 	Here, $\lambda$ can be seen as a probability density in $[a,b]$ so that the entropy can be properly defined.}
\subsection{Admissible exploration policies}
Let $\cH_{[a,b]}$ be the set of (feedback) admissible policies $\lambda$ that are characterized as follows:
 \begin{definition}\label{def_1}

The policy $\lambda$ is admissible if it satisfies the following properties:
 \begin{itemize}
     \item [(i)] For each $ (s,x,y) \in [t,T]\times\bbr^+\times\bbr$, $\lambda(\cdot|s,x,y)$ is a density function on $[a,b]$ a.s;
     \item [(ii)]  $\lambda(\cdot|s,x,y)$ is a measurable mapping from $[t,T] \times \mathbb{R}^+ \times \mathbb{R} \, \,\text{to} \,\, [a,b];$
     \item [(iii)] the SDE \eqref{exploratory_dynamics} admits a unique strong, non negative solution for  any initial $(t,x,y)\in [0, T]\times \bbr^+\times \bbr$ ;
     \item [(iv)] For some $1<q<\frac{1}{1-\eta}$, 
     $\mathbb{E}\left[e^{\int_{0}^{T}qm(1-\eta)|\cE(\lambda_s)|\d s}\right] +\mathbb{E}\left[(U(X^{\lambda}_\zs{T}))^q\right] <\infty.$
 \end{itemize}
  \end{definition}

We remark that condition \( (iv) \) is a technical assumption that allows us to apply Fatou's Lemma in the verification theorem below. When the density $\lambda$ is independent of the wealth level, i.e.  \( \lambda_t = \lambda(\pi |t, y) \), Proposition \ref{Prop1} below shows that the SDE \eqref{exploratory_dynamics} admits a unique strong solution that satisfies \( X^{\lambda}_s \geq 0 \) for \( t \leq s \leq T \). 
\begin{proposition}\label{Prop1} 
	Assume that Condition \textbf{(SV)} holds. Let $\lambda(\cdot|t,y)$ be a probability density function on $[a, b]$ with mean $m(t,y)$ and variance $M(t,y)$. Suppose that $\cE(.)$ is uniformly bounded and that $m$ and $M$ are deterministic, continuous, and differentiable in both variables, satisfying \( m^2(t, y) + M(t, y) >0,\, \text{for all } t \text{ and } y \). Then, $\lambda$ is admissible.
\end{proposition}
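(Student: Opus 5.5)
We verify the four requirements of Definition \ref{def_1} in turn; (i) and (ii) hold by assumption, so the work is in (iii) and (iv). The key observation is that, because $\lambda$ does not depend on $x$, the wealth equation \eqref{exploratory_dynamics} is \emph{linear} in $X^\lambda$, with coefficients depending only on $(s,y_s)$:
\begin{equation*}
\d X_s^{\lambda} = \big(r+(\mu-r)m(s,y_s)\big)X_s^\lambda \d s + \sigma(y_s)X_s^\lambda\big(m(s,y_s)\d W_s+\sqrt{M(s,y_s)}\,\d\hat W_s\big).
\end{equation*}
By \textbf{(SV)}, $y$ is the unique strong solution of its own equation and does not involve $X$. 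Hence we first solve for $y$ and then treat the drift $\alpha_s:=r+(\mu-r)m(s,y_s)$ and the diffusion vector $\beta_s:=\sigma(y_s)(m(s,y_s),\sqrt{M(s,y_s)})$ as given progressively measurable processes. Since $\lambda$ is a density on $[a,b]$, its mean and variance satisfy $|m|\le \max(|a|,|b|)$ and $0\le M\le (b-a)^2$, so $\alpha$ is bounded. Moreover, $\beta$ is locally bounded along paths, because $\sigma$ is continuous and $y$ has continuous paths (we assume $\sigma$ is continuous, as implicit in the model).

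For (iii), we write down the candidate Doléans-Dade solution
\begin{equation*}
X_s^\lambda = x\exp\Big(\int_t^s\big(\alpha_u-\tfrac12|\beta_u|^2\big)\d u+\int_t^s\beta_u\cdot \d(W_u,\hat W_u)\Big),
\end{equation*}
which is well defined because $\int_t^T|\beta_u|^2\d u<\infty$ a.s. Itô's formula shows that it solves the equation, and it is strictly positive, hence nonnegative. For uniqueness, let $\tilde X$ be another strong solution. Itô's formula applied to $\tilde X_s/X_s$ gives zero drift and zero diffusion, so $\tilde X\equiv X$. Alternatively, we can argue via locally Lipschitz coefficients, using a localization by stopping times $\tau_n=\inf\{s:|y_s|\ge n\}$. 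The condition $m^2+M>0$ is used to guarantee a genuinely nondegenerate diffusion $|\beta|>0$ wherever $\sigma\neq0$. It is not needed for existence, but it is consistent with the statement, and we will note where it enters, if anywhere.

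For (iv), the entropy part is immediate: if $|\cE(\lambda)|\le C$, then $\E[e^{\int_0^T qm(1-\eta)|\cE(\lambda_s)|\d s}]\le e^{qm(1-\eta)CT}<\infty$. The utility part is the main obstacle. Since $0<\eta<1$, we have $|U(X_T)|^q\lesssim 1+X_T^{q(1-\eta)}$, and
\begin{equation*}
X_T^{q(1-\eta)} = x^{q(1-\eta)}\exp\Big(p\int\alpha\,\d u-\tfrac p2\int|\beta|^2\d u+p\int\beta\cdot\d\mathbf W\Big), \qquad p:=q(1-\eta)\in(0,1).
\end{equation*}
We rewrite this as $e^{p\int\alpha}\,\mathcal E(p\beta)_T\,\exp\big(\tfrac{p(p-1)}{2}\int|\beta|^2\d u\big)$, where $\mathcal E(p\beta)$ denotes the stochastic exponential. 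Because $p<1$, the last factor is at most $1$. The first factor is bounded since $\alpha$ is bounded. The stochastic exponential is a nonnegative local martingale, hence a supermartingale, so $\E[\mathcal E(p\beta)_T]\le 1$. This yields $\E[X_T^{p}]\le x^p e^{pT\sup|\alpha|}<\infty$ without any integrability assumption on $\sigma(y)$.

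This supermartingale trick is the step that makes the constraint $q<1/(1-\eta)$ essential, and it is where we expect the care to be needed. In particular, we must treat the $-1/(1-\eta)$ shift in $U$ by the elementary inequality $(u+v)^q\le 2^{q-1}(|u|^q+|v|^q)$. Combining these bounds gives (iv), and therefore $\lambda\in\cH_{[a,b]}$.
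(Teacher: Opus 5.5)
Your proof is correct. Its skeleton (explicit exponential solution, then a moment bound on $X_T^{q(1-\eta)}$) is the same as the paper's, but the uniqueness and integrability steps are done differently.

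\textbf{What the paper does.} It first merges $W$ and $\hat W$ into a single Brownian motion $\check W$ with diffusion coefficient $\check h=\sigma\sqrt{m^2+M}$. This normalization is the only place the hypothesis $m^2+M>0$ enters. It then proves uniqueness by localizing at $\tau_n=\inf\{t:|\check k_t|+|\check h_t|\ge n\}\wedge T$. Integrability comes from applying Novikov to the localized processes and letting $n\to\infty$.

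\textbf{What your route changes.}
\begin{itemize}
\item Working directly with the vector noise $(W,\hat W)$ makes $m^2+M>0$ superfluous, as you suspected.
\item The ratio argument for $\tilde X/X$ gives uniqueness without any localization.
\item The factorization $X_T^{p}=x^{p}e^{p\int\alpha}\,\mathcal{E}(p\beta)_T\,e^{\frac{p(p-1)}{2}\int|\beta|^2}$, together with $\E[\mathcal{E}(p\beta)_T]\le 1$ for a nonnegative local martingale, gives $\E[X_T^{p}]\le x^{p}e^{pT\sup|\alpha|}$ in one step, with no Novikov condition.
\end{itemize}

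\textbf{Why this is more robust.} The paper's localized bound is $n$-dependent, because it uses only $|\check k^n|\le n$. It also bounds the $\check h^2$ term using $|\check h^n|\le n$ in the wrong direction. The limit is then taken by ``dominated convergence'' without exhibiting a dominating function. Making that step rigorous requires exactly your two observations: $\alpha=r+(\mu-r)m$ is bounded uniformly because $m\in[a,b]$, and the $\int|\beta|^2$ factor is at most $1$ because $p=q(1-\eta)<1$. Fatou's lemma then finishes the argument.

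Your handling of the shift in $U$, namely $|U(x)|^q\le C\big(1+x^{q(1-\eta)}\big)$, is also the right one. The paper's claim $U(x)^q\le Cx^{q(1-\eta)}$ fails as $x\downarrow 0$, where $U<0$.

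\textbf{What each buys.} The paper's single-noise rewriting gives only notational convenience. Your version uses weaker hypotheses: measurability of $m$ and $M$ suffices, their boundedness being automatic from the support $[a,b]$, plus local boundedness of $\sigma$ along the continuous paths of $y$. Continuity and differentiability of $m,M$ are never needed.
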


\begin{proof}
	We rewrite \eqref{exploratory_dynamics}:
	\begin{equation}\label{exploratory_dynamics_modified}
		\d X_t^{\lambda} = \big(r + (\mu - r)\,m(t, y_t)\big) X_t^{\lambda} \,\d t 
		+ \sigma(y_t) X_t^{\lambda} \big(m(t, y_t) \,\d W_t + \sqrt{M(t, y_t)} \,\d \hat{W}_t\big)\,,
	\end{equation}
which can be rewritten as
\begin{equation}\label{ex-check-coef-1}
		\d X_t^{\lambda} = \check{k}_t X_t^{\lambda} \,\d t 
		+ \check{h}_tX_t^{\lambda} \,\d \check{W}_t,
\end{equation}
where
$$
\check{k}_t=r + (\mu - r)\,m(t, y_t)
\quad\mbox{and}\quad
\check{h}_t=\sigma(y_t) \sqrt{m^{2}(t, y_t)+M(t, y_t)},
$$
and
$$
\check{W}_\zs{u}=\int^{u}_\zs{0} \,\left( \frac{m(t, y_t)}{\sqrt{m^{2}(t, y_t)+M(t, y_t)}} \,\d W_t  +  \frac{ \sqrt{M(t, y_t)}}{\sqrt{m^{2}(t, y_t)+M(t, y_t)}}\,\d \hat{W}_t\right), \quad u\in [0,T],
$$
is a Wiener process. Note that the processes $(\check{k}_t)_\zs{0\le t\le T}$ and $(\check{h}_t)_\zs{0\le t\le T}$ are continuous. Therefore, by Itô's formula, 
there exists a strong solution to \eqref{ex-check-coef-1} which is given by
\begin{equation}\label{ex-check-coef-22}
 X_t^{\lambda}= X_0^{\lambda}\,
 \exp\left\{ \int^{t}_0 \big(\check{k}_u -\check{h}^{2}_u/2\big)\d u +  \int^{t}_0 \check{h}_u \d \check{W}_\zs{u}\right\}.
\end{equation}
To show uniqueness, we set
\begin{equation}\label{st-tim}
	\tau_n := \inf \{ t \geq 0 : |\check{k}_t|+|\check{h}_t| \geq n \} \wedge T,
\end{equation}
which is a stopping time. Taking into account that $\max_\zs{0\le t\le T} \big( |\check{k}_t|+|\check{h}_t|\big) <\infty$ a.s., we get that  $\tau_n\to T$ a.s. as $n\to\infty$. Now, set
$$
\d X_t^{\lambda,n} = \check{k}^{n}_t X_t^{\lambda} \,\d t 
		+ \check{h}^{n}_t X_t^{\lambda} \,\d \check{W}_t,
$$
where 
$\check{k}^{n}_t=\check{k}_{t\wedge \tau_n}$ and 
$\check{h}^{n}_t=\check{h}_{t\wedge \tau_n}$. This equation has a unique solution which is
$$
 X_t^{\lambda,n}= X_0^{\lambda}\,
 \exp\left\{ \int^{t}_0 \big(\check{k}^{n}_u -(\check{h}^{n}_u)^{2}/2\big)\d u +  \int^{t}_0 \check{h}^{n}_u \d \check{W}_\zs{u}\right\}
\,.
$$
Note that on the interval $[0,\tau_n]$, the processes $X_t^{\lambda,m}$
coincide with \eqref{ex-check-coef-22} for any $m\ge n$. 
Let $\zeta_t$ be a  solution of \eqref{ex-check-coef-1}; then, as $n\to\infty$
$$
\P\left(\max_{0\le t\le T}|\zeta_t-X_t^{\lambda,n}|>0 \right)
=\P(\tau_n<T)
\to 0\,.
$$
So, any solution $\zeta$ of the equation \eqref{ex-check-coef-1} can be represented as the uniform limit in probability of the processes $X_t^{\lambda,n}$, so $\zeta$ coincides with \eqref{ex-check-coef-22}, hence uniqueness. 

It remains to prove that 
$
\mathbb{E}\left[(U(X^{\lambda}_T))^q\right] < \infty
$ for some $1<q<\frac{1}{1-\eta}$. Note first that  
there exists a constant \( C > 0 \) such that \( U(x)^q \le C x^{q(1 - \eta)} \) for all \( x > 0 \). Hence, it suffices to prove that
$
\mathbb{E}\left[(X^{\lambda}_T)^{q(1 - \eta)}\right] < \infty.
$ 
To this end, we consider the localized processes \( X^{\lambda,n}_{\tau_n} \). We will show that
$
\underset{n}{\sup}\mathbb{E}\left[(X^{\lambda,n}_{\tau_n})^{q(1 - \eta)}\right] < \infty.$ 
Indeed, from the explicit representation of the wealth process, we have:

\begin{align*}
	\mathbb{E}[(X^{\lambda,n}_{\tau_n})^{q(1-\eta)}]&=\mathbb{E}\bigg[(X_0^{\lambda})^{q(1-\eta)}\,
	\exp\left\{ \int^{\tau_n}_0 q(1-\eta)\big(\check{k}^{n}_u -(\check{h}^{n}_u)^{2}/2\big)\d u +  \int^{\tau_n}_0 q (1-\eta)\check{h}^{n}_u \d \check{W}_\zs{u}\right\}\bigg]\nonumber\\
	&=(X_0^{\lambda})^{q(1-\eta)})\mathbb{E}\exp\left\{\int^{\tau_n}_0 q(1-\eta)\big(\check{k}^{n}_u -(\check{h}^{n}_u)^{2}/2\big)+\frac{(q(1-\eta)\check{h}^{n}_u)^2}{2}\d u\right\},
\end{align*}
where we used the fact that the stochastic exponential is a true martingale under Novikov’s condition.

\noindent Now, using the bounds \( |\check{k}^n_u| \le n \), \( |\check{h}^n_u| \le n \), for $u \in [0,\tau_n]$ we obtain:
\[
\mathbb{E}[(X^{\lambda,n}_{\tau_n})^{q(1 - \eta)}] \le (X_0^{\lambda})^{q(1 - \eta)} \exp\left\{ T q(1 - \eta) \left( n - \frac{n^2}{2}(1 - q(1 - \eta))  \right) \right\},
\]
which is finite for every \( n \) and $
q(1 - \eta) < 1$. 
Using  the Dominated Convergence Theorem, we can conclude that $
\mathbb{E}\left[(X^{\lambda}_T)^{q(1 - \eta)}\right] < \infty
$, 
which completes the proof.
\end{proof}

\begin{corollary}\label{cor1}
Assume that Condition \textbf{(SV)} holds and $\lambda\sim\mathcal{N}(\alpha(t,y),\beta(t,y))_{[a,b]}$ is a Gaussian density truncated in $[a, b]$ with $a<b$ fixed.  Suppose further that 	\begin{itemize}
		\item[(a)] {For all $(t,y)\in [0,T]\times \bbr$}, $\beta(t,y)>0$ and $\sup_{y\in\bbr} \beta(t,y)\le C$ for some constant $C>0$ independent of $t$;
		\item[(b)] there exists $\iota>0$ such that $\alpha(t,y)\in[a+\iota,b-\iota]$ for all $(t,y)\in[0,T]\times\mathbb{R}$.
	\end{itemize}
		Then the exploration policy $\lambda(\cdot|t,y)$ is admissible and 
	$
	\mathbb{E}\!\left[\exp\!\left(\int_{0}^{T} qm(1-\eta)\,|\mathcal{E}(\lambda_s)|\,ds\right)\right]<\infty
	$ for any $0<
q<(1 - \eta)^{-1}.$ 
\end{corollary}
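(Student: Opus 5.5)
The plan is to reduce the corollary to Proposition \ref{Prop1} by checking its hypotheses for the truncated Gaussian family, and then to derive the exponential entropy moment from a deterministic, uniform bound on $|\cE(\lambda(\cdot|t,y))|$.

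First, write the truncated Gaussian density as
\[
\lambda(\pi|t,y)=\frac{\exp\{-(\pi-\alpha)^2/(2\beta)\}}{\sqrt{2\pi\beta}\,Z(t,y)}\,\mathds{1}_{[a,b]}(\pi),\qquad
Z=\Phi\Big(\tfrac{b-\alpha}{\sqrt\beta}\Big)-\Phi\Big(\tfrac{a-\alpha}{\sqrt\beta}\Big).
\]
The mean $m(t,y)$ and variance $M(t,y)$ are explicit smooth functions of $(\alpha,\beta)$ through $\Phi$ and $\phi$. So they are deterministic, and they are continuous and differentiable whenever $\alpha,\beta$ are; I take that regularity as implicit in the statement, as the paper does. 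Positivity of $m^2+M$ follows because the density is nondegenerate on an interval of positive length, so $M>0$.

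The main step is uniform boundedness of the entropy. A density supported on $[a,b]$ always satisfies $\cE(\lambda)\le \ln(b-a)$, so only a lower bound is needed. Compute
\[
\cE(\lambda)=\ln\big(\sqrt{2\pi\beta}\,Z\big)+\frac{1}{2\beta}\mathbb{E}_\lambda[(\pi-\alpha)^2]\ \ge\ \ln\big(\sqrt{2\pi\beta}\,Z\big).
\]
The danger is $\beta\to0$, where the entropy tends to $-\infty$. Hypothesis (a) only gives $\beta>0$ pointwise and $\beta\le C$, so I would interpret it together with (b) as follows. By (b), $\alpha$ is at distance at least $\iota$ from both endpoints. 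Hence $Z\ge \Phi(\iota/\sqrt{C})-\Phi(-\iota/\sqrt{C})>0$ uniformly, since $\Phi(\iota/\sqrt\beta)-\Phi(-\iota/\sqrt\beta)$ is decreasing in $\beta$. This controls the normalising constant. A uniform positive lower bound on $\beta$ is still needed to keep $\ln\sqrt{\beta}$ bounded. This is the expected obstacle. I would either read (a) as implicitly giving $\inf\beta>0$ (e.g.\ by continuity on a compact range, or as in the later application where $\beta$ is proportional to $m$ over a bounded curvature), or add that condition explicitly.

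With $\beta\in[\underline{\beta},C]$ and $Z\ge z_0>0$, we get $|\cE(\lambda_s)|\le K$ deterministically for some constant $K$. Therefore
\[
\mathbb{E}\Big[\exp\Big(\int_0^T qm(1-\eta)|\cE(\lambda_s)|\,ds\Big)\Big]\le e^{qm(1-\eta)KT}<\infty
\]
for every $q>0$, in particular for $0<q<(1-\eta)^{-1}$. Proposition \ref{Prop1} then gives conditions (i)–(iii) of Definition \ref{def_1} and the utility moment in (iv). Combining these yields admissibility.
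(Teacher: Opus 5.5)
Your route is the paper's: bound $|\cE(\lambda_s)|$ deterministically using the closed-form entropy of the truncated Gaussian, deduce the exponential moment, and let Proposition~\ref{Prop1} supply the rest. The obstacle you flag is genuine, and it is exactly where the paper's own proof breaks. From $\beta Z\le C(b-a)/\sqrt{2\pi}$ and $-\tfrac12\le\frac{A\varphi(A)-B\varphi(B)}{2Z}\le 0$ the paper gets only an \emph{upper} bound on \eqref{entropy}, yet it declares the entropy uniformly bounded. Nothing in (a)--(b) keeps $\ln\beta(s,y_s)$ away from $-\infty$. Neither of your proposed readings repairs this. Continuity gives $\inf\beta>0$ only on compact sets, whereas $y$ ranges over $\mathbb{R}$; the paper obtains compactness in Theorem~\ref{Theo_value} only after localising to $|y|\le n$.

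In fact no argument can give the corollary from (a)--(b) alone. Take the following data.
\begin{itemize}
\item[$\cdot$] $\varpi\equiv 0$ and $\delta\equiv 1$, so (SV) holds and $y$ is a Brownian motion started at $y_0$.
\item[$\cdot$] $[a,b]=[0,1]$ and $\alpha\equiv\tfrac12$, so (b) holds with $\iota=\tfrac12$.
\item[$\cdot$] $\beta(t,y)=\exp(-1-e^{Ky^2})\le e^{-2}$ with $K>1/(2T)$; this is smooth and positive, so (a) holds.
\end{itemize}
Read $\beta$ as the standard deviation, as in \eqref{entropy}. Then $Z\le 1$ and the sign of the second term give $\cE(\lambda_s)\le\ln\sqrt{2\pi e}-1-e^{Ky_s^2}$, so $|\cE(\lambda_s)|\ge e^{Ky_s^2}-1$. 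The variance reading gives the same conclusion with $\tfrac12 e^{Ky_s^2}$ in place of $e^{Ky_s^2}$. Since $y_s\sim\mathcal{N}(y_0,s)$, we have $\mathbb{E}[e^{Ky_s^2}]=\infty$ for $s\ge 1/(2K)$. Hence, for every $q>0$ (with $m>0$),
$$\mathbb{E}\Big[\exp\Big(\int_0^T qm(1-\eta)|\cE(\lambda_s)|\,ds\Big)\Big]\ge 1+qm(1-\eta)\int_0^T\mathbb{E}|\cE(\lambda_s)|\,ds=\infty,$$
and condition (iv) of Definition~\ref{def_1} fails.

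So $\inf_{t,y}\beta(t,y)>0$ must be added as an explicit hypothesis; it cannot be read into (a). It does hold in the paper's application $\beta^{*2}=m/(\eta\sigma^2(y))$, because $\sigma$ is bounded above under Proposition~\ref{Prop_4}(ii). You also need the continuity and differentiability of $\alpha,\beta$ that Proposition~\ref{Prop1} requires, which, as you note, the statement also omits. With these two additions your argument is complete. Your uniform bound $Z\ge\Phi(\iota/\sqrt{C})-\Phi(-\iota/\sqrt{C})$ supplies the other half of the entropy lower bound, which the paper never addresses.
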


\begin{proof} Let $\lambda$ be a truncated Gaussian distribution $\lambda$ that satisfies all conditions (a)-(b) of Corollary \ref{cor1}. We recall first that the entropy of $\lambda$ is given by
	\begin{align}\label{entropy}
		\ln\!\big(\sqrt{2\pi e}\,\beta(t,y_t)Z\big)
		+\frac{A\varphi(A)-B\varphi(B)}{2Z},
	\end{align}
	where $Z=\Phi(B)-\Phi(A)$ and 
	 \[
	A=\frac{a-\alpha(t,y_t)}{\beta(t,y_t)},\qquad
	B=\frac{b-\alpha(t,y_t)}{\beta(t,y_t)}. 
	\] 
	We verify all the conditions in Definition~\ref{def_1}. Indeed, Conditions (i) and (ii) are immediate.  Note that by (b), we have $A<0<B$ uniformly.  From 	$
	\beta(t,y_t)Z\le\frac{C(b-a)}{\sqrt{2\pi}}$, it can be seen that the first term in \eqref{entropy} is integrable on $[0,T]$. 
		For the second term, we simply note that
	\[
	\frac{A\varphi(A)-B\varphi(B)}{2Z}
	=\frac12\Big(\mathbb{E}[U^2\mid A\le U\le B]-1\Big),
	\] with $U\sim\mathcal{N}(0,1).$ 	Since $A\le0\le B$, it follows that
	\[
	-\frac12\le \frac{A\varphi(A)-B\varphi(B)}{2Z}\le0.
	\]
	Therefore, the entropy term is uniformly bounded and hence integrable on $[0,T]$.	Combining both terms, we conclude that
	$
	\mathbb{E}\!\left[\exp\!\left(\int_{0}^{T} qm(1-\eta)\,|\mathcal{E}(\lambda_s)|\,ds\right)\right]<\infty,
	$
	which establishes the admissibility of $\lambda(t,y_t)$ by using Proposition \ref{Prop1}.
\end{proof}

\subsection{Recursive entropy-penalized optimization problem}
 Due to the exploration terms, the entropy-regularized optimization problem \eqref{explo-EU} does not admit closed-form solutions, except in special cases, for example, when $\eta=1$ and the utility function is logarithmic (see \cite{chau2025continuous}). To make learning more efficient, we consider the following recursive entropy-penalized objective  
	\begin{equation}\label{aim-recursive}
		J^{\lambda}_t=\mathbb{E}\left[\int_{t}^{T}m[(1-\varrho)J^{\lambda}_s+1]\cE(\lambda_\zs{s})\d s+U(X_\zs{T}^{\lambda})\bigg|\cF_t		\right].
	\end{equation}
The recursive form \eqref{aim-recursive} includes the entropy in an endogenous way; see e.g. \cite{dai2023learning} for a similar  setting.  In particular, it combines the terminal expected utility (the second term) and the utility-dependent (the first term) factor on exploration, which is determined recursively. In \ref{se:Mertonconnection}, we discuss how the recursive framework developed in this paper can also be used to address the Merton problem, where the agent no longer optimizes a trade-off
between exploration and exploitation but still benefits from an exploratory parameterization of the
policy inherited from the entropy-regularized formulation.

The recursive form \eqref{aim-recursive} can be connected to a BSDE via the martingale representation theorem as follows:

 \begin{lemma}\label{Lemma1}
 For any given policy $\lambda$, consider $J^{\lambda}$ defined in \eqref{aim-recursive}. Define the martingale 
 	\begin{align}
 		M_t = \mathbb{E}\left[ \int_{0}^{T} m \left[ (1-\varrho) J^{\lambda}_s + 1 \right] \mathcal{E}(\lambda_s)  \mathrm{d}s + U(X_T^\lambda)  \bigg| \cF_t \right], \quad t\in [0, T].
 	\end{align}
 	
 	\noindent Then there exists a predictable process $Z_t$
 	such that
 	\begin{align}
 		M_t = M_0 + \int_{0}^{t} Z_s \cdot  \mathrm{d}\gr{W}_s, \quad t \in [0, T],
 	\end{align}
 	
 	\noindent where $\cdot$ denotes the inner product, $\gr{W}_s=(\bar{W}_s,W_s, \hat{W}_s)^{\top}$, and $\top$ denotes the transposition. Furthermore, it holds that
 	\begin{align}
 		J^\lambda_t &= M_t - \int_{0}^{t} m \left[ (1-\eta) J^\lambda_s + 1 \right] \cE(\lambda_s) \mathrm{d}s,
 	\end{align}
 	and
 	\begin{align}\label{BSDE_1}
 		\mathrm{d} J^\lambda_t &= -m \left[ (1-\eta) J^\lambda_t + 1 \right]\cE(\lambda_t) \mathrm{d}t + Z_t \cdot \mathrm{d}\gr{W}_t.
 	\end{align}
 \end{lemma}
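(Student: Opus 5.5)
The plan is a direct application of the martingale representation theorem, followed by an algebraic rearrangement. Note that the lemma writes the weight both as $(1-\varrho)$ and $(1-\eta)$; since the preamble sets $\varrho$ to display as $\eta$, these are the same constant.

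\textbf{Step 1: integrability.} Write $\xi := \int_0^T m[(1-\eta)J^\lambda_s+1]\mathcal{E}(\lambda_s)\,\d s + U(X_T^\lambda)$, so that $M_t=\E[\xi\mid\cF_t]$. Admissibility (condition (iv) of Definition~\ref{def_1}) gives $U(X^\lambda_T)\in L^q$ with $q>1$. It also gives an exponential moment of $\int_0^T qm(1-\eta)|\mathcal{E}(\lambda_s)|\,\d s$.

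For the integral term, I would argue in the spirit of a linear BSDE. The recursion \eqref{aim-recursive} is linear in $J^\lambda$ with coefficient $m(1-\eta)\mathcal{E}(\lambda_s)$. Taking this as the definition of $J^\lambda$, one solves it through the discount factor $\Gamma_{t,s}=\exp\bigl(\int_t^s m(1-\eta)\mathcal{E}(\lambda_u)\,\d u\bigr)$, which gives
\[
J^\lambda_t=\E\Bigl[\Gamma_{t,T}U(X^\lambda_T)+\int_t^T\Gamma_{t,s}\,m\,\mathcal{E}(\lambda_s)\,\d s\Bigm|\cF_t\Bigr].
\]
Hölder's inequality, with the exponential moment and the $L^q$ bound, then shows that $J^\lambda$ is well defined and that $\E\int_0^T|m[(1-\eta)J^\lambda_s+1]\mathcal{E}(\lambda_s)|\,\d s<\infty$. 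Hence $\xi\in L^1(\cF_T)$.

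I expect this integrability bookkeeping to be the only genuine obstacle. If it becomes delicate, I would simply take $\xi\in L^1$ as implicit in the well-posedness of \eqref{aim-recursive} for admissible $\lambda$.

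\textbf{Step 2: martingale representation.} The filtration $\cF_t$ is the augmented filtration generated by the three-dimensional Brownian motion $\gr{W}=(\bar W,W,\hat W)^\top$. So $M$ is a uniformly integrable martingale in a Brownian filtration, and the martingale representation theorem for $L^1$ martingales supplies a predictable $Z$ with $\int_0^T|Z_s|^2\,\d s<\infty$ a.s. such that $M_t=M_0+\int_0^t Z_s\cdot \d\gr{W}_s$. If one prefers the $L^2$ version, one can instead localize, or use $q$-integrability with $q>1$ and the Burkholder–Davis–Gundy inequality.

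\textbf{Step 3: identify $J^\lambda$ and read off the BSDE.} Split $\xi$ at time $t$. The part $\int_0^t m[(1-\eta)J^\lambda_s+1]\mathcal{E}(\lambda_s)\,\d s$ is $\cF_t$-measurable, because $J^\lambda$ is adapted and $\lambda_s$ is a feedback of adapted states. It therefore comes out of the conditional expectation, giving
\[
M_t=\int_0^t m[(1-\eta)J^\lambda_s+1]\mathcal{E}(\lambda_s)\,\d s+J^\lambda_t
\]
by \eqref{aim-recursive}. Rearranging yields the stated expression for $J^\lambda_t$. Differentiating it and substituting $\d M_t=Z_t\cdot \d\gr{W}_t$ gives \eqref{BSDE_1}, with terminal condition $J^\lambda_T=U(X^\lambda_T)$.
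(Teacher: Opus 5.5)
Your argument is correct and is the route the paper intends: represent $M$ in the Brownian filtration of $(\bar W,W,\hat W)$, then pull the $\cF_t$-measurable part $\int_0^t m[(1-\eta)J^\lambda_s+1]\cE(\lambda_s)\,\mathrm{d}s$ out of the conditional expectation (the paper gives no written proof beyond invoking the martingale representation theorem). One small repair in Step 1: H\"older against the $q$-exponential moment would need $\Gamma_{t,T}\in L^{q/(q-1)}$, which condition (iv) does not give when $q<2$; instead use $\cE(\lambda)\le\ln(b-a)$ for densities on $[a,b]$ and $m(1-\eta)>0$, so $\Gamma_{t,s}$ is bounded by a deterministic constant and you only need $U(X^\lambda_T)\in L^q$ plus polynomial moments of $\int_0^T|\cE(\lambda_s)|\,\mathrm{d}s$.
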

{In such a context of recursive utility, the term \( m[(1-\eta)J^{\lambda} + 1] \) represents the generator, which encapsulates the relationship between immediate utility and future continuation utility, while the process \( Z \) corresponds to the hedging portfolio, capturing the sensitivity of the utility to variations in the underlying stochastic factors; see e.g;\cite{duffie1992stochastic,el1997backward}. Observe that \eqref{BSDE_1} is a linear BSDE whose unique solution (see e.g.  \cite{el1997backward}) can be expressed in the following form
	\begin{equation}\label{Re-Sol}
		J^{\lambda}_t=\E \bigg[\int_{t}^{T}e^{\int_{t}^{s}m(1-\eta)\cE(\lambda_u)\d u}m\cE(\lambda_s)\d s+e^{\int_{t}^{T}m(1-\eta)\cE(\lambda_u)\d u}U(X^{\lambda}_T)\bigg| \cF_t\bigg].
	\end{equation}
	Therefore, the investor's objective is now stated as follows: 
	\begin{equation} \label{exploratory}
\underset{\lambda\in \cH_[a,b]}{\sup}\E \bigg[\int_{t}^{T}e^{\int_{t}^{s}m(1-\eta)\cE(\lambda_u)\d u}m\cE(\lambda_s)\d s+e^{\int_{t}^{T}m(1-\eta)\cE(\lambda_u)\d u}U(X^{\lambda}_T)\bigg| X_\zs{t}^{\lambda}=x,y_\zs{t}=y
 	\bigg].
 	\end{equation}
    \begin{remark}\label{Rem1} Under some smoothness assumptions, using Proposition 4.3 in \cite{el1997backward}, we can link the BSDE \eqref{BSDE_1} with the following PDE:
   \begin{equation}\label{PDE_BSDE}
 	 \frac{\partial J}{\partial t}+\mathcal{A^{\lambda}} J(t,x,y)+ m\cE(\lambda_t)[(1-\varrho)J(t,x,y)+1] = 0, \quad J(T,x,y)=U(x),
 \end{equation}
where, for each fixed $\pi\in[a,b]$ and $v$, a smooth function, the local operator $\mathcal A^\pi$ is defined by
\[
\mathcal A^\pi v
:=
(r+(\mu-r)\pi)x\,v_x
+\frac12\sigma^2(y)x^2\pi^2\,v_{xx}
+\varpi(y)\,v_y
+\frac12\delta^2(y)\,v_{yy}
+\rho\,\delta(y)\sigma(y)x\pi\,v_{xy},
\]
and the relaxed operator $\mathcal A^\lambda$ is given by
\begin{align}\label{operator_A}
\mathcal A^\lambda v
=
\int_a^b \mathcal A^\pi v\,\lambda(\pi)\,d\pi.
\end{align}
Then, by the Feynman-Kac formula (see \cite{el1997backward} or Theorem \ref{Thm:FeymanKac}), the solution to the PDE \eqref{PDE_BSDE} can be expressed as:
 \begin{align}\label{eq:new-EU}
 	J(t,x,y)=\E \bigg[\int_{t}^{T}e^{\int_{t}^{s}m(1-\eta)\cE(\lambda_u)\d u}m\cE(\lambda_s)\d s+e^{\int_{t}^{T}m(1-\eta)\cE(\lambda_u)\d u}U(X^{\lambda}_T)\bigg| X^{\lambda}_t=x, y_t=y\bigg].
 \end{align} 
  	\end{remark}
	 \begin{remark}\label{remark2}
For $\varrho=1$ and $U(x)=\ln x$, \eqref{exploratory} becomes \begin{equation} \label{classic_Merton}
 		\underset{\lambda\in \cH_[a,b]}{\sup}\mathbb{E}\left[\int_{t}^{T}m\cE(\lambda_\zs{s})\d s+U(X_\zs{T}^{\lambda})|X_\zs{t}=x,Y_\zs{t}=y
 		\right],
 	\end{equation}
 which is studied in \cite{chau2025continuous}. If $m=0$, then we are back to the classical Merton problem.
 	\end{remark}
	}

 \subsection{ Hamilton-Jacobi-Bellman Equation and optimal control distribution }
In the rest of the paper, we will investigate the exploratory optimization problem \eqref{eq:new-EU} using the stochastic dynamic programming principle. The optimal value function is given by
\begin{align}\label{goal}
V(t,x,y;m):=\underset{\lambda \in \cH_\zs{[a,b]}}{\sup}
J^\lambda(t,x,y;m) 
 \end{align} 
		To this end, we rewrite the process \eqref{exploratory_dynamics} and the second equation in \eqref{scv1} in the following matrix form:
\begin{equation}
	\label{scv5}
	d\zeta_\zs{t}= \mathbf{a}(\zeta_\zs{t}^{\lambda},\lambda)\d t + \mathbf{b}(\zt{t}{\lambda},\lambda)\d \gr{W}_\zs{t}, \quad \gr{W}_\zs{t} =\left(\begin{matrix}
		\bar{W}_\zs{t}\\
		W_\zs{t}\\
		\hat{W}_\zs{t}
	\end{matrix}\right),
\end{equation}
where the coefficients $\gr{a}$ and $\gr{b}$ are defined for any state vector $\zeta=(x,y)^\top$ and policy $ \lambda$ as 
\begin{equation*}
	\mathbf{a}(\zeta,\lambda)=\left(\begin{matrix}
		(r+(\mu-r)Mean(\lambda))x\\
		\varpi(y)
	\end{matrix}\right) \quad \text{and} \quad b(\zeta,\lambda)=\left(\begin{matrix}
		0& \sigma(y)x Mean(\lambda) & \sigma(y)x\sqrt{Var(\lambda)} \\  \sqrt{1-\rho^{2}}\gamma(y)& \rho \gamma(y)&0  
	\end{matrix}\right).
\end{equation*}
Now, for any vector  $\mathbf{q}=(\mathbf{q}_\zs{1},
\mathbf{q}_\zs{2})^\top \in \rr{R}{2}$ and for the symmetric matrix $2 \times 2$ $\gr{M}=(M_\zs{i,j})_\zs{1 \le i,j, \le 2}$, we set the Hamiltonian function as
\begin{equation}
	\label{scv2}
	H(J,\zeta,\gr{q},\gr{M}):=\sup_\zs{\lambda \in \cH_\zs{[a,b]}} H_\zs{0}(\lambda,J,\zeta,\gr{q},\gr{M}),
\end{equation}
where
\begin{equation}\label{H_0}
	H_\zs{0}(\lambda,\zeta,J,\gr{q},\gr{M}):=\gr{a}^\top(\zeta,\lambda)\gr{q}+ \frac{1}{2}\mathbf{tr}[\gr{b}\gr{b}^\top(\zeta,u)\gr{M}] +m\cE(\lambda)[(1-\varrho)J+1].
\end{equation}
The HJB equation associated with Problem \eqref{goal} is given by
 	\begin{equation}
 		\label{scvh}
 		\begin{cases}
 			V_\zs{t}(\zeta,t)+H(V,\zeta,V_\zeta(\zeta,t),V_\zs{\zeta,\zeta}(\zeta,t))=0, \quad t\in [0, T),\\[2mm]
 			V(\zeta,T)=\frac{x^{1-\eta}-1}{1-\eta}, \quad \zeta \in \bbr^2\,,
 		\end{cases}
 	\end{equation}
 	where $V_\zs{\zeta}(\zeta,t)=(V_\zs{x},V_\zs{y})^\top \in \rr{R}{2}$ and 
 	$
 	V_\zs{\zeta\zeta}(\zeta,t)=\left(\begin{matrix}
 		V_\zs{xx}&V_\zs{xy}\\
	V_\zs{xy}&V_\zs{yy}
 	\end{matrix}\right)
 	\,.
 	$
 	\noindent
 	Note that in this case, the function $H_\zs{0}$ defined in \eqref{H_0} is of the form
 	\begin{align*}
 		H_\zs{0}(\lambda,J,\zeta,\gr{q},\gr{M})&=\int_{a}^{b}((r+(\mu-r)\pi)xq_\zs{1})\lambda(\pi)\d \pi +\varpi(y)q_\zs{2}+\frac12 \sigma^2(y)\int_{a}^{b}x^2\pi^2\lambda(\pi)\d \pi\gr{M}_\zs{11}\\ 
 		&+\int_{a}^{b}\rho \delta(y)\sigma(y)x\lambda(\pi)\d \pi \gr{M}_\zs{12}+\frac12\delta^2(y) \gr{M}_\zs{22}+m\cE(\lambda)[(1-\varrho)J+1]
 	\end{align*}
 and the value function $V(t,x,y;m)$ solves the following HJB PDE 
 \begin{align}\label{HJB-1}
  	 \frac{\partial V}{\partial t}+\underset{\lambda \in \cH_[a,b]} {\sup}
   \bigg(\mathcal{A^{\lambda}} V(t,x,y)+ m\cE(\lambda_t)[(1-\varrho)V(t,x,y)+1] \bigg)= 0, 
 \end{align}
 with terminal condition $V(T,x,y;m)=\frac{x^{1-\varrho}-1}{1-\varrho} $. The optimal policy $\lambda^{0,[a,b]}$ is characterized by the following: 
\begin{lemma}
 \label{lem_2}
 	In the exploratory constrained setting, the optimal density $\lambda^{0,[a,b]}$ to the HJB equation \eqref{HJB-1} 
 is a normal distribution with mean $\alpha$ and variance $\beta^2$, truncated on the interval $[a,b]$
 where \[
 \alpha(t,x,y)=\dfrac{-(\mu-r)xV_\zs{x}-\rho\gamma(y)\sigma(y)xV_\zs{xy}}{\sigma^2(y)x^2V_\zs{xx}},\quad
 \beta^2(t,x,y)=\dfrac{-m[(1-\varrho)V+1]}{\sigma^2(y)x^2V_\zs{xx}}.
 \]
 Moreover, the optimal density  $\lambda_\zs{t}^{0,[a,b]}$ 
  can be written in the following form:
    \begin{align}\label{optimal-density}
    	\lambda_\zs{t}^{0,[a,b]}(\pi|t,x,y;m) &=\dfrac{\frac{1}{\beta}\frac{1}{\sqrt{2\pi_\zs{e}}}\exp\{-\frac{1}{2}(\frac{\pi-\alpha}{\beta})^{2}\}}{\Phi(\frac{b-\alpha}{\beta})-\Phi(\frac{a-\alpha}{\beta})}
    	=\frac1\beta \frac{\vphi(\frac{\pi-\alpha}{\beta})}{\Phi(\frac{b-\alpha}{\beta})-\Phi(\frac{a-\alpha}{\beta})},
    \end{align}
    where $\vphi$ and $\Phi$ represent the PDF and  CDF of the standard normal distribution, respectively.
    \end{lemma}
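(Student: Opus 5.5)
Fix $(t,x,y)$ and the derivatives $V,V_x,V_{xy},V_{xx}$, and treat the supremum in \eqref{HJB-1} as a pointwise maximization over densities $\lambda$ on $[a,b]$. I will assume the structural sign conditions that make the problem concave: $V_{xx}<0$ and $(1-\varrho)V+1>0$. The latter holds at maturity, since $(1-\varrho)U(x)+1=x^{1-\varrho}>0$, and is expected to persist under the homothetic ansatz. Then $\beta^2>0$. Terms of $H_0$ that do not depend on $\lambda$ ($\varpi V_y$, $\tfrac12\gamma^2V_{yy}$, $rxV_x$) drop out. What remains is the functional
\[
\mathcal{J}(\lambda)=\int_a^b g(\pi)\lambda(\pi)\,\d\pi - m[(1-\varrho)V+1]\int_a^b\lambda(\pi)\ln\lambda(\pi)\,\d\pi,
\]
where
\[
g(\pi)=(\mu-r)\pi xV_x+\tfrac12\sigma^2(y)x^2\pi^2V_{xx}+\rho\gamma(y)\sigma(y)x\pi V_{xy}.
\]
This is a linear functional plus $\kappa:=m[(1-\varrho)V+1]>0$ times the entropy, to be maximized over the convex set of probability densities on the compact interval $[a,b]$.

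The key step is a Gibbs variational argument rather than a formal Lagrange-multiplier computation. Set $\lambda^*(\pi)=\exp(g(\pi)/\kappa)/\int_a^b\exp(g/\kappa)$, which is well defined because $g$ is continuous on a compact interval. For any admissible density $\lambda$ with finite entropy, a direct rearrangement gives
\[
\mathcal{J}(\lambda)=\kappa\ln\int_a^b e^{g/\kappa}\,\d\pi-\kappa\,\KL(\lambda\|\lambda^*).
\]
Since $\KL\ge 0$ with equality iff $\lambda=\lambda^*$ a.e., $\lambda^*$ is the unique maximizer, and the supremum equals $\kappa\ln\int_a^b e^{g/\kappa}$. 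This avoids any regularity issues in the calculus of variations and also handles the positivity constraint automatically.

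It then remains to identify $\lambda^*$. Because $g$ is quadratic in $\pi$ with leading coefficient $\tfrac12\sigma^2x^2V_{xx}<0$, completing the square gives
\[
g(\pi)/\kappa=-\frac{(\pi-\alpha)^2}{2\beta^2}+\text{const},
\]
with $\alpha=-\big((\mu-r)xV_x+\rho\gamma\sigma xV_{xy}\big)/(\sigma^2x^2V_{xx})$ and $\beta^2=-\kappa/(\sigma^2x^2V_{xx})$, exactly as in the statement. Normalizing over $[a,b]$ yields the truncated-Gaussian form \eqref{optimal-density}, with normalizing constant $\beta\big(\Phi(\tfrac{b-\alpha}{\beta})-\Phi(\tfrac{a-\alpha}{\beta})\big)$ after the change of variables $u=(\pi-\alpha)/\beta$.

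The main obstacle is not the computation but justifying the sign conditions $V_{xx}<0$ and $(1-\varrho)V+1>0$, which are needed for $\beta^2>0$ and for concavity. I would state these as standing hypotheses on the classical solution, to be confirmed later by the homothetic reduction $V=(x^{1-\varrho}f(t,y)-1)/(1-\varrho)$ with $f>0$. A secondary point is that the measurable selection $(t,x,y)\mapsto\lambda^*$ is a genuine feedback policy; its admissibility is deferred to Corollary~\ref{cor1} once $\alpha$ and $\beta$ are shown to be bounded appropriately.
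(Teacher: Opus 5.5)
Your argument is correct and is essentially the paper's proof. The identity $\mathcal{J}(\lambda)=\kappa\ln\int_a^b e^{g/\kappa}\,\d\pi-\kappa\,\KL(\lambda\|\lambda^*)$ is exactly the Donsker--Varadhan/Gibbs variational formula the paper invokes (with Lebesgue measure on $[a,b]$ as reference and $h=g/\kappa$), and you state it more carefully: the scaling by $\kappa>0$, uniqueness via $\KL=0$, and integrability from compactness of $[a,b]$ are all made explicit. One small correction: concavity of $\mathcal{J}$ in $\lambda$ needs only $\kappa>0$, and $V_{xx}<0$ is used solely to identify $\lambda^*$ as a truncated Gaussian with $\beta^2>0$.
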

    \begin{proof}{
Lemma \ref{lem_2} can be shown using a Lagrange approach with a density constraint. Here we make use of Varadhan-Donsker's Lemma. In particular, note first that $\lambda$ is absolutely continuous with respect to the Lebesgue measure. By setting $\d Q=\lambda(\pi)\d \pi $ and $\d P=\d \pi$, we can write the above supremum in the following form:
    	\begin{align*}
    		\underset {\lambda_t \in \cH_[a,b]}{\sup}\bigg\{ \int_{a}^{b}[((r&+(\mu-r)\pi)xV_\zs{x})+\frac12\sigma^2(y)\pi^2x^2V_\zs{xx}+\rho\gamma(y)\sigma(y)xV_\zs{xy}\pi]\d Q\\&-m[(1-\varrho)V+1]\int_{a}^{b}\ln ( \frac{\d Q}{\d P})\d Q
    		\bigg\}= \sup_{Q<<P} E^Q[h]-D_{KL}(Q||P),
    	\end{align*}
			where $h=\exp\{\frac{1}{m[(1-\varrho)V+1]}[((\mu-r)xV_\zs{x}+\rho\gamma(y)\sigma(y)x V_\zs{xy})\pi+\frac{1}{2}\sigma^{2}(y)\pi^{2}x^{2}V_\zs{xx}]\}$ and $D_\zs{KL}(Q||P)=\int \ln (\frac{\d Q}{\d P})\d Q$ is the Kullback-Leibler divergence. Note that to ensure the integrability of h with respect to $\d Q$, we impose $V_\zs{xx}< 0$. By Varadhan-Donsker's Lemma (see e.g. Appendix \ref{lem:DV}), the supremum is attained at
    	\begin{align}\label{optimal-density_proof}
    		\lambda_\zs{t}^{0,[a,b]}(\pi;x,y,m)&=\dfrac{\exp\{\frac{1}{m[(1-\varrho)V+1]}[((\mu-r)xV_\zs{x}+\rho\gamma(y)\sigma(y)x V_\zs{xy})\pi+\frac{1}{2}\sigma^{2}(y)\pi^{2}x^{2}V_\zs{xx}]\}}{\int_{a}^{b}\exp\{\frac{1}{m[(1-\varrho)V+1]}[((\mu-r)xV_\zs{x}+\rho\gamma(y)\sigma(y)x V_\zs{xy})\pi+\frac{1}{2}\sigma^{2}(y)\pi^{2}x^{2}V_\zs{xx}]\}\d \pi},
    	\end{align}}
        which leads to \eqref{optimal-density}.
    \end{proof}

	{The previous section establishes the formulation of the exploratory control
problem and derives the associated Hamilton--Jacobi--Bellman equation.
We now turn to the analysis of this HJB equation and study the structure of its
solutions.}

\section{Solution of the  HJB equation}
\noindent In this section, we study the HJB equation  \eqref{HJB-1}. To this end, recall first  that the first two moments and the entropy of a truncated Gaussian distribution can be
computed explicitly; see for instance \cite{johnson1972continuous} or \ref{Sec:App-Truncated}. Substituting \eqref{optimal-density} back into the HJB \eqref{HJB-1}, we have :
\begin{align}\label{HJB-substitution}
V_\zs{t}&+rxz_\zs{x}+((\mu-r)xV_\zs{x}+\rho\gamma(y)\sigma(y)xV_\zs{xy})\alpha+\frac{1}{2}\sigma^{2}(y)x^{2}V_\zs{xx}(\alpha^2+\beta^2)\nonumber \\
&+\varpi(y)V_\zs{y}+\frac{1}{2}\gamma^{2}(y)V_\zs{yy}+\frac{m}{2}\ln(2\pi_\zs{e}e\beta^2)\times[(1-\varrho)V+1]\nonumber \\
&+m\ln Z_\zs{a,b}(x,y;m) \times [(1-\varrho)V+1]=0,\quad V(T,x,y;m)=\frac{x^{1-\varrho}-1}{1-\varrho},
\end{align}
 where $Z_\zs{a,b}(x,y;m):=\Phi(\bar{B}(x,y;m))-\Phi(\bar{A}(x,y;m))$, 
 with
 \begin{align}
  \bar{A}(x,y;m)=(a+\dfrac{(\mu-r)xV_\zs{x}+\rho\gamma(y)\sigma(y)xV_\zs{xy}}{\sigma^2(y)x^2V_\zs{xx}})\sqrt{\frac{\sigma^2(y)x^2V_\zs{xx}}{-m[(1-\varrho)V+1]}}
 \end{align}
 and 
 \begin{align}
 	 \bar{B}(x,y;m)=(b+\dfrac{(\mu-r)xV_\zs{x}+\rho\gamma(y)\sigma(y)xV_\zs{xy}}{\sigma^2(y)x^2V_\zs{xx}})\sqrt{\frac{\sigma^2(y)x^2V_\zs{xx}}{-m[(1-\varrho)V+1]}} .\end{align}
 	 \noindent We find a solution of \eqref{HJB-substitution} with the following ansatz.  
 	 \begin{align}\label {Value_solution}
 	 	V(t,x,y;m)=\frac{x^{1-\varrho}e^{u(t,y;m)}-1}{1-\varrho},
 	 \end{align}
 	 where $u$ satisfies the following PDE
 	  \begin{align}\label{sol_u}
 	 	u_\zs{t}+&r(1-\varrho)+\frac{1}{2}\gamma^2(y)(u_\zs{yy}+u_\zs{y}^2)+\varpi(y)u_\zs{y} 
 	 	+\frac{1-\varrho}{2\varrho}\left[  \frac{(\mu-r)^2}{\sigma^2(y)}+\frac{2\rho (\mu-r)\gamma(y)}{\sigma(y)}u_\zs{y}+\rho^2\gamma^2(y)u_\zs{y}^2
 	 	\right ]\nonumber \\
 	 	+&\frac{(1-\varrho)m}{2}\left[ \ln\frac{2\pi_\zs{e}m}{\eta\sigma^2(y)}+2\ln Z_\zs{a,b}(y,u_y;m)
 	 	\right]=0,
 	 \end{align} 
 	 with the terminal condition $u(T,y)=0$, where $Z_\zs{a,b}(y,u_y;m)=\Phi(D(y,u_y;m))-\Phi(F(y,u_y;m))$ where 
 	 \begin{align}\label{A_y}
 	 	D(y,u_y;m)=\left(a-\dfrac{(\mu-r)+\rho\delta(y)\sigma(y)u_\zs{y}}{\eta \sigma^2(y)}\right)\sqrt{\frac{\eta\sigma^2(y)}{m}},
 	 \end{align}
 	 \begin{align}\label{B_y}
 	 	F(y,u_y;m)=\left(b-\dfrac{(\mu-r)+\rho\delta(y)\sigma(y)u_\zs{y}}{\eta \sigma^2(y)}\right)\sqrt{\frac{\eta\sigma^2(y)}{m}}.
 	 \end{align}

\noindent We now study the PDE \eqref{sol_u}. 
First, by setting $l(y,t)=u(T-t,y)$, Equation \eqref{sol_u} can be rewritten as
\begin{align}\label{Ap_2}
	l_t&-\frac12\delta^2(y)l_{yy}-\frac12\delta^2(y)l^2_y-\varpi(y)l_y-r(1-\eta) 
	-\frac{1-\eta}{2\eta}\big[\frac{(\mu-r)^2}{\sigma^2(y)}+\frac{2\rho(\mu-r)\delta(y)}{\sigma(y)}l_y+\eta^2\delta^2(y)l^2_y\big]\nonumber \\
	&-\frac{(1-\eta)m}{2}\big[\ln(\frac{2\pi_em}{\eta \sigma^2(y)})+2\ln Z_{a,b}(y,l_y;m)\big]=0,
\end{align}
with initial condition $l(y,0)=0$. Observe that \eqref{Ap_2} is a quasilinear parabolic PDE.  
 {The existence and regularity of solutions to the quasilinear parabolic PDE \eqref{Ap_2} have been intensively investigated in the literature, see e.g. \cite{ladyzhenskaia1968linear} and \ref{Sec:paraPDE}.} Note that  \eqref{Ap_2} takes the form of \eqref{sec:CauchyEq.110} with $n=1$, $c_\zs{1}=\frac{\delta^2(y)p}{2}$ and 
\begin{align*}
	c(y,t,l,p)&=-\frac{1}{2}\delta^2(y)p^2-\varpi(y)p-\frac{1-\eta}{2\eta}\big[\frac{(\mu-r)^2}{\sigma^2(y)}+\frac{2\rho(\mu-r)\delta(y)}{\sigma(y)}p+\eta^2\delta^2(y)p^2\big]\\
	&-\frac{(1-\eta)m}{2}\big[\ln(\frac{2\pi_em}{\eta \sigma^2(y)})+2\ln Z_{a,b}(y,p;m)\big] -r(1-\eta),\\
	C_m(y,t,l,p)&=c(y,t,l,p)-\delta(y)\dot{\delta}(y)p.
\end{align*}
Clearly, 
$$	C_m(y,t,l,0)=-\frac{1-\eta}{2\eta}\frac{(\mu-r)^2}{\sigma^2(y)}-\frac{(1-\eta)m}{2}\big[\ln(\frac{2\pi_em}{\eta \sigma^2(y)})+2\ln Z_{a,b}(y,0;m)\big] -r(1-\eta).$$


The existence of the solution for \eqref{Ap_2} is obtained under the following standard assumption.
\vspace{2mm}

\noindent{
$(\gr{SB})$ {\em \(\delta\) is continuously differentiable and bounded, along with their derivatives. In addition, \(\varpi\) and \(\sigma\) are continuous and differentiable, and
\begin{align}  
	\sigma_{*} = \underset{y \in \mathbb{R}}{\inf} \sigma(y) > 0 \quad \text{and} \quad \underset{y \in \mathbb{R}}{\inf} \delta(y) > 0.  
\end{align}
}
In our exploratory framework with stochastic volatility and portfolio constraints, the existence of the solution for \eqref{Ap_2} is guaranteed by Theorem \ref{CaPrt00_Theorem} (see Appendix), which requires several technical conditions, among which Condition $\C_\zs{2})$ is verified by Proposition \ref{Prop_4} below.
}
\begin{proposition}[Exploration with portfolio constraints]\label{Prop_4}  Suppose that the following two conditions hold:
	\begin{itemize}
		\item [(i)] $\frac{(\mu-r)^2}{\eta \sigma_*^2}+m\ln\big(\frac{2\pi_\zs{e}m}{\eta \sigma^2_*}\big)+2m\ln\bigg(\Phi\bigg(b\sqrt{\frac{\eta \sigma_*^2}{m}}-\frac{\mu-r}{\sqrt{m\eta\sigma_*^2}}\bigg)-\Phi\bigg(a\sqrt{\frac{\eta \sigma_*^2}{m}}-\frac{\mu-r}{\sqrt{m\eta\sigma_*^2}}\bigg)\bigg)>0$,
		\item [(ii)]$\underset{y\in \bbr}{\sup}\, \sigma^2(y)\le \frac{\q_\zs{0}^m}{\eta} $with $q_\zs{0}^{m}$ given by
		\begin{align}
			q_\zs{0}^m:=\inf\{q>\eta \sigma^2_*: f^{a,b}_m(q)=0\},
		\end{align}
		where 
		\begin{align}\label{Prop_4_eq_1} f^{a,b}_\zs{m}(q)=\frac{(\mu-r)^2}{q}+m\ln\bigg(\frac{2\pi_e m}{q}\bigg)+2m\ln\bigg(\Phi\bigg(b\sqrt{\frac{q}{m}}-\frac{\mu-r}{\sqrt{mq}}\bigg)-\Phi\bigg(a\sqrt{\frac{q}{m}}-\frac{\mu-r}{\sqrt{mq}}\bigg)\bigg).\end{align} 
	\end{itemize}
	Then, under Assumptions $(\gr{SV})$-$(\gr{SB})$, Condition $\C_\zs{2})$ in Theorem \ref{CaPrt00_Theorem} is fulfilled with 
	$$\Psi=\frac{1-\eta}{2\eta}\bigg(\frac{\mu-r}{\sigma_*}\bigg)^2 +m\ln(\frac{2\pi_e m}{\eta \sigma^2_*})+2m\ln\bigg(\Phi\bigg(b\sqrt{\frac{\eta\sigma^2_*}{m}}-\frac{\mu-r}{\sqrt{m\eta\sigma_*^2}}\bigg)-\Phi\bigg(a\sqrt{\frac{\eta\sigma_*^2}{m}}-\frac{\mu-r}{\sqrt{m\eta\sigma_*^2}}\bigg)\bigg)+r(1-\eta).$$
\end{proposition}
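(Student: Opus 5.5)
The plan is to reduce Condition $\C_2)$ to a uniform two-sided bound on the zero-gradient coefficient $C_m(y,t,l,0)$. Since $C_m(y,t,l,0)$ does not depend on $l$, a bound $|C_m(y,t,l,0)|\le\Psi$ uniform in $(y,t)$ gives the growth-type estimate of Theorem \ref{CaPrt00_Theorem} (of the form $l\,C_m(y,t,l,0)\ge-\Psi(1+l^2)$, or its sign-reversed version). The first step is a change of variables. Set $q(y):=\eta\sigma^2(y)$ and evaluate $D(y,0;m)$ and $F(y,0;m)$ from \eqref{A_y}--\eqref{B_y}; they become $a\sqrt{q/m}-\frac{\mu-r}{\sqrt{mq}}$ and $b\sqrt{q/m}-\frac{\mu-r}{\sqrt{mq}}$. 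Comparing with the expression for $C_m(y,t,l,0)$ displayed before $(\gr{SB})$ gives the identity
\begin{equation*}
C_m(y,t,l,0)=-\frac{1-\eta}{2}\,f^{a,b}_m\big(q(y)\big)-r(1-\eta),
\end{equation*}
with $f^{a,b}_m$ as in \eqref{Prop_4_eq_1}. So everything reduces to controlling the scalar function $f^{a,b}_m$ on the range of $q(\cdot)$.

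The second step locates that range. By $(\gr{SB})$ we have $q(y)\ge\eta\sigma_*^2$, and by assumption (ii) we have $q(y)\le q_0^m$. Hence $q(y)\in[\eta\sigma_*^2,q_0^m]$, a compact interval on which $f^{a,b}_m$ is continuous. Here $\Phi$ is continuous and the argument of the logarithm is strictly positive because $a<b$. Assumption (i) says exactly that $f^{a,b}_m(\eta\sigma_*^2)>0$. Since $q_0^m$ is the first zero of $f^{a,b}_m$ to the right of $\eta\sigma_*^2$, the intermediate value theorem gives $f^{a,b}_m\ge0$ on $[\eta\sigma_*^2,q_0^m]$. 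Therefore $C_m(y,t,l,0)\le-r(1-\eta)\le0$ uniformly, which yields the one-sided (sign) part of $\C_2)$.

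The third step is the upper bound on $f^{a,b}_m$, which gives the lower bound $C_m(y,t,l,0)\ge-\Psi$ with $\Psi$ as stated. The terms $(\mu-r)^2/q$ and $m\ln(2\pi_e m/q)$ are decreasing in $q$, so they are maximized at $q=\eta\sigma_*^2$; this produces the first two summands of $\Psi$. The logarithm of the truncation mass $Z$ is always $\le0$, but $\Psi$ contains it evaluated at $\sigma_*$, so I must show that $q\mapsto f^{a,b}_m(q)$ is maximized at the left endpoint. I will try to prove that $f^{a,b}_m$ is nonincreasing on $[\eta\sigma_*^2,q_0^m]$ by differentiating in $q$: the derivatives of the first two terms are $-(\mu-r)^2/q^2-m/q$, and the derivative of $2m\ln Z$ involves $\varphi(F)\partial_qF-\varphi(D)\partial_qD$ divided by $Z$, to be compared against them. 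If monotonicity fails in general, the fallback is to take $\Psi:=\frac{1-\eta}{2}\max_{[\eta\sigma_*^2,q_0^m]}f^{a,b}_m+r(1-\eta)$. This is finite by compactness and suffices for $\C_2)$, and $\Psi$ in the statement is an admissible (possibly sharper) choice whenever the endpoint bound holds.

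I expect this last step to be the main obstacle. The $\ln Z$ term is non-monotone in general, because the truncation window $[a\sqrt{q/m},b\sqrt{q/m}]$ both widens and shifts as $q$ varies. A clean explicit constant therefore requires the derivative comparison, or an argument that the Gaussian mass term is increasing in $q$ when $0\in(a,b)$ relative to the shift $(\mu-r)/\sqrt{mq}$. The remaining structural conditions on $c_1$ and $C_m$ follow directly from $(\gr{SB})$, with $\delta$ bounded with bounded derivative and $\sigma$ bounded away from $0$ and above by (ii).
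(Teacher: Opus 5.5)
Your reduction is the paper's argument. You identify $C_m(y,t,l,0)=-\frac{1-\eta}{2}f^{a,b}_m(\eta\sigma^2(y))-r(1-\eta)$, then use (i), (ii) and the intermediate value theorem to get $f^{a,b}_m\ge0$ on $[\eta\sigma_*^2,q_0^m]$.

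The step you call the obstacle is exactly what the paper asserts without proof, when it says $f^{a,b}_m$ maps $[\eta\sigma_*^2,\infty)$ into $(-\infty,f^{a,b}_m(\eta\sigma_*^2)]$. You have found the real content, but the claim does not fail. Write the truncation mass as $\sqrt{q/m}\int_a^b\varphi\big(\sqrt{q/m}\,(\pi-\frac{\mu-r}{q})\big)\,d\pi$ and expand the square. The prefactor $\sqrt{q/m}$ cancels $m\ln(2\pi_e m/q)$, and the constant in the exponent cancels $(\mu-r)^2/q$, so
\[
f^{a,b}_m(q)=2m\ln\int_a^b\exp\Big\{\frac{(\mu-r)\pi-\frac12 q\pi^2}{m}\Big\}\,d\pi ,
\qquad
\frac{d}{dq}f^{a,b}_m(q)=-\int_a^b\pi^2\,\lambda_q(\pi)\,d\pi<0,
\]
where $\lambda_q$ is the truncated Gaussian $\mathcal N\big(\frac{\mu-r}{q},\frac mq\big)_{[a,b]}$. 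This is twice the Donsker--Varadhan value of the Hamiltonian maximization in Lemma \ref{lem_2} at $u_y=0$. You are right that $\ln Z$ alone is non-monotone, but the $-m\ln q$ term compensates it exactly.

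The same formula gives $f^{a,b}_m(q)\to-\infty$ as $q\to\infty$ by dominated convergence. So $q_0^m$ is the unique zero and is finite. You need this for the compactness of $[\eta\sigma_*^2,q_0^m]$ but never check it; if the zero set were empty, (ii) would impose no upper bound on $\sigma$. Hence $-\frac{1-\eta}{2}f^{a,b}_m(\eta\sigma_*^2)-r(1-\eta)\le C_m(y,t,l,0)\le -r(1-\eta)$ uniformly, and your fallback is not needed.

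Where your proposal goes wrong is the comparison with the displayed constant. The bound you obtain is $\frac{1-\eta}{2}f^{a,b}_m(\eta\sigma_*^2)+r(1-\eta)$, which equals $\sup_y|C_m(y,t,l,0)|$ since $\sigma$ approaches $\sigma_*$. Its entropy terms carry the factor $\frac{1-\eta}{2}$, whereas in the displayed $\Psi$ they have coefficient $1$. So the endpoint evaluation does not produce ``the first two summands of $\Psi$'', and the displayed $\Psi$ is not automatically admissible.

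Set $E_*:=m\ln\frac{2\pi_e m}{\eta\sigma_*^2}+2m\ln Z_*$, with $Z_*$ the truncation mass at $q=\eta\sigma_*^2$. The displayed $\Psi$ exceeds the sharp bound by $\frac{1+\eta}{2}E_*$, so it dominates $\sup_y|C_m(y,t,l,0)|$ only if $E_*\ge0$. Condition (i) does not imply this: if $\mu\neq r$, $\frac{\mu-r}{\eta\sigma_*^2}\in(a,b)$ and $m$ is small, (i) holds while $E_*<0$. The constant that actually verifies $\C_{2})$ is your fallback. By the monotonicity above, it equals $\frac{1-\eta}{2}f^{a,b}_m(\eta\sigma_*^2)+r(1-\eta)$.
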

\begin{proof}
First, we remark that the function $f^{a,b}_{m}$ is continuous, mapping $[\eta\sigma^2_*, \infty) \to ( -\infty,f^{a,b}_m(\eta\sigma^2_*)]$. From property (i) we have $f^{a,b}_m(\eta\sigma^2_*)>0$. Therefore, by the intermediate value theorem, there exists {$k \ge \eta\sigma^2_*$ such that $f_m^{a,b}(k)=0$. By definition, 	
		}
$f^{a,b}_m(q)\ge 0$ for every $q \in [\eta \sigma^2_*,q_0^m]$. Then condition $\C_{2})$ is ful\-fil\-led with $\Psi=\frac{1-\eta}{2\eta}\bigg(\frac{\mu-r}{\sigma_*}\bigg)^2 +m\ln(\frac{2\pi_e m}{\eta \sigma^2_*})+2m\ln\bigg(\Phi\bigg(b\sqrt{\frac{\eta\sigma^2_*}{m}}-\frac{\mu-r}{\sqrt{m\eta\sigma_*^2}}\bigg)-\Phi\bigg(a\sqrt{\frac{\eta\sigma_*^2}{m}}-\frac{\mu-r}{\sqrt{m\eta\sigma_*^2}}\bigg)\bigg)+r(1-\eta)$. 
\end{proof}

\vspace{2mm}
{Proposition \ref{Prop_4} suggests that Condition $\C_\zs{2})$ in Theorem \ref{CaPrt00_Theorem} is fulfilled if the (stochastic) volatility $\sigma$ is uniformly bounded from above by $q_\zs{0}^{m}/\eta$ which depends on the exploration parameter $m$. This is not restrictive when exploration is not very rewarding and $m$ is small since we have $q_\zs{0}^{m}\to +\infty$ when $m \to 0$. Verification of Condition $\C_\zs{2})$  and the corresponding upper bound in special cases of Proposition  \ref{Prop_4} are reported in \ref{sec: upperbound}.}

\begin{theorem}\label{thm:HJBEQ}

 
{Under the assumption of Proposition \ref{Prop_4}, there exists a solution  $
 	 	V(t,x,y;m)=\frac{x^{1-\varrho}e^{u(t,y)}-1}{1-\varrho}, $ to the HJB equation \eqref{HJB-substitution}, where   \( u(t, y) \) is a unique solution to the PDE \eqref{sol_u}.}
   \end{theorem}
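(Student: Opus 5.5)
The proof splits into two parts. The first is an algebraic reduction showing that the ansatz \eqref{Value_solution} turns \eqref{HJB-substitution} into \eqref{sol_u}. The second is an existence and uniqueness argument for \eqref{sol_u}, obtained by applying Theorem \ref{CaPrt00_Theorem} to the time-reversed form \eqref{Ap_2}.

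\textbf{Step 1: reduction.} I plug $V=(x^{1-\eta}e^{u}-1)/(1-\eta)$ into \eqref{HJB-substitution}. Writing $w=x^{1-\eta}e^{u}$, the relevant quantities are
\[
(1-\eta)V+1=w,\qquad xV_x=w,\qquad x^2V_{xx}=-\eta w,\qquad xV_{xy}=wu_y,
\]
\[
V_y=\frac{wu_y}{1-\eta},\qquad V_{yy}=\frac{w(u_{yy}+u_y^2)}{1-\eta}.
\]
Hence $\alpha=\big((\mu-r)+\rho\delta\sigma u_y\big)/(\eta\sigma^2)$ and $\beta^2=m/(\eta\sigma^2)$, both independent of $x$. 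The truncation arguments $\bar A,\bar B$ then reduce to $D,F$ in \eqref{A_y}--\eqref{B_y}, up to the sign/ordering convention, which I will track carefully.

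Two identities handle the quadratic terms:
\[
\big((\mu-r)xV_x+\rho\delta\sigma xV_{xy}\big)\alpha=\eta\sigma^2\alpha^2 w,\qquad \tfrac12\sigma^2x^2V_{xx}\alpha^2=-\tfrac12\eta\sigma^2\alpha^2 w .
\]
Their sum is $\tfrac12\eta\sigma^2\alpha^2 w$, which expands to the bracket in \eqref{sol_u} once multiplied by $(1-\eta)$. The term $\tfrac12\sigma^2x^2V_{xx}\beta^2=-\tfrac m2 w$ cancels against the $\tfrac m2$ coming from $\ln(2\pi e\beta^2)=1+\ln(2\pi\beta^2)$. After this cancellation the entropy terms give $\tfrac m2\big[\ln\tfrac{2\pi m}{\eta\sigma^2}+2\ln Z\big]w$.

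Finally I multiply by $(1-\eta)/w$, using $w>0$. The constant term in $V_t+rxV_x$ is then consistent with the $-1/(1-\eta)$ shift, and the equation becomes exactly \eqref{sol_u}. The terminal condition $u(T,\cdot)=0$ corresponds to $V(T)=U(x)$. Hence any classical solution $u$ of \eqref{sol_u} yields a solution $V$ of \eqref{HJB-substitution}. It also satisfies $V_{xx}<0$ and $m[(1-\eta)V+1]>0$, so the maximizer of Lemma \ref{lem_2} is well defined.

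\textbf{Step 2: existence and uniqueness for \eqref{Ap_2}.} With $l(y,t)=u(T-t,y)$, \eqref{Ap_2} is a Cauchy problem in divergence-type form \eqref{sec:CauchyEq.110} with $c_1=\delta^2p/2$ and $C_m$ as displayed. I will check the hypotheses of Theorem \ref{CaPrt00_Theorem} one at a time.

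\emph{Uniform parabolicity.} This follows from (\textbf{SB}): $\delta$ is bounded with $\inf\delta>0$, and $\delta'$ is bounded.

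\emph{Condition C$_2$).} This is the sign/growth condition on $C_m(y,t,l,0)$, and it is supplied by Proposition \ref{Prop_4} with the constant $\Psi$ given there. The bound $\sup\sigma^2\le q_0^m/\eta$ keeps $f_m^{a,b}(\eta\sigma^2(y))\ge0$.

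\emph{Growth and Hölder/smoothness conditions.} The remaining conditions require at most quadratic growth of $C_m$ in $p$ and local Lipschitz/Hölder dependence on $(y,l,p)$. Since $\sigma\ge\sigma_*$ and $\delta$ is bounded, the only nontrivial term is $\ln Z_{a,b}(y,p;m)$.

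\textbf{Main obstacle: the $\ln Z_{a,b}$ term.} I will show that $\ln Z_{a,b}$ is smooth in $p$ and grows at most quadratically. The argument is as follows. Since $a<b$ and $\beta>0$, $Z_{a,b}>0$ everywhere. Moreover $D,F$ are affine in $p$ with $F-D=(b-a)\sqrt{\eta\sigma^2/m}\ge (b-a)\sqrt{\eta}\,\sigma_*/\sqrt m$. The Mills-ratio bound $\Phi(F)-\Phi(D)\gtrsim \varphi(\max(|D|,|F|))\cdot\min(1,F-D)/(1+|D|+|F|)$ then gives $-\ln Z\le C(1+p^2)$. The derivatives of $\ln Z$ in $p$ are ratios $\varphi/(\Phi(F)-\Phi(D))$, and these grow at most linearly in $|p|$. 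This supplies the structure condition $|\partial_p C_m|\le C(1+|p|)$ of Ladyzhenskaya--Solonnikov type.

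With these checks, Theorem \ref{CaPrt00_Theorem} yields a unique solution $l\in H^{2+\beta,1+\beta/2}$, and hence a unique $u$. Uniqueness within the class of the theorem also follows from a comparison principle, since $C_m$ is locally Lipschitz in $p$. Step 1 then gives the asserted $V$.
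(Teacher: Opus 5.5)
Your proposal follows the paper's route. The homothetic ansatz reduces \eqref{HJB-substitution} to \eqref{sol_u}, and existence for the time-reversed equation \eqref{Ap_2} comes from Theorem~\ref{CaPrt00_Theorem}, with $\C_2)$ supplied by Proposition~\ref{Prop_4}. Your Step~1 bookkeeping is correct and more explicit than the paper's ``direct calculation''. This includes the cancellation of $\tfrac12\sigma^2x^2V_{xx}\beta^2$ against the $e$ in $\ln(2\pi e\beta^2)$, and the ordering $Z_{a,b}=\Phi(F)-\Phi(D)$, which the paper writes with the wrong sign.

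Your treatment of $\ln Z_{a,b}$ fills in what the paper only asserts as $|c|\le K(1+|p|+p^2)$. Every point of $[D,F]$ has modulus at most $\max(|D|,|F|)$, so $\Phi(F)-\Phi(D)\ge (F-D)\,\varphi(\max(|D|,|F|))$, which gives $-\ln Z_{a,b}\le C(1+p^2)$. Also, $\partial_p\ln Z_{a,b}$ equals $\rho\delta(y)/\sqrt{\eta m}$ times the mean of a standard normal truncated to $[D,F]$. It is therefore bounded by $\max(|D|,|F|)\le C(1+|p|)$, uniformly in $y$ because $\sigma_*\le\sigma\le\sqrt{q_0^m/\eta}$ and $\delta$ is bounded.

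Where you claim more than you establish is uniqueness.
\begin{itemize}
\item The first part of Theorem~\ref{CaPrt00_Theorem} only yields a bounded $l$ with $l\in\cH^{2+\varepsilon,1+\varepsilon/2}(\Gamma_N)$ for each $N$. The global H\"older class would require the constants in $\C_4)$--$\C_5)$ to be independent of $N$. That fails under $(\gr{SB})$, because $\varpi$ may be unbounded (as in the numerical example, $\varpi(y)=c(y_0-y)$), so the term $\varpi(y)p$ in $c$ is controlled only on compacts.
\item The uniqueness clause of the theorem applies only to solutions bounded together with their first and second derivatives on all of $\bbr\times[0,T]$. It also requires $|\partial C/\partial p|\le\mu_1(N)$ uniformly in $y\in\bbr$, which again fails for unbounded $\varpi$. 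So it cannot be applied to the solution you constructed.
\item Your comparison-principle fallback has the same defect. The difference of two solutions satisfies a linear equation whose drift $\int_0^1\partial_pC_m\,d\theta$ involves $\varpi$ and the gradients $l^{(1)}_y,l^{(2)}_y$. Local Lipschitz continuity in $p$ gives no growth control on this drift over the unbounded domain.
\end{itemize}

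As written, then, you prove existence only. The paper's proof has exactly the same limitation: it never addresses uniqueness, and the Remark after Theorem~\ref{Theo_value} concedes uniqueness only within the verification class. Securing the word ``unique'' therefore needs one of two things. Either add hypotheses, for example $\varpi$ and $\sigma$ bounded with H\"older continuous derivatives, so that both the global estimates and the uniqueness clause apply. Or state uniqueness only within a restricted class of solutions.
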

 
 \begin{proof}\label{preuve_HJB}
 We find a solution in the form  $V(t,x,y;m)=\frac{{x^{1-\varrho}e^{u(t,y)}}-1}{1-\varrho}$ for some appropriate function $u$ by using Lemma \ref{lem_2}. In particular, direct calculation shows that  the optimal density  is attained with $\lambda^*(\pi|t,y)=\cN(\alpha^*,(\beta^{*})^2)_\zs{[a,b]}$ which does not depend on $x$, where 
  \begin{align}\label{alpha_beta_star}
 	\alpha^*(t,y)= \dfrac{\mu-r+\rho\gamma(y)\sigma(y)u_\zs{y}(t,y)}{\varrho\sigma^2(y)} ,\qquad (\beta^*)^2(t,y)=\frac{m}{\varrho \sigma^2(y)}.
 \end{align} 
 	Now, the HJB equation \eqref{HJB-substitution} boils down to
 	 	\begin{align}
 	\frac{x^{1-\eta}e^{u(t,y)}}{1-\eta}\biggl[u_\zs{t}+&r(1-\eta)+((\mu-r)(1-\eta)+\rho\delta(y)\sigma(y)u_\zs{y}(1-\eta))\alpha^*(t,y)\nonumber \\
 	&-\frac12 \sigma^2(y)\eta (1-\eta)\bigg(\frac{(\mu-r)^2}{\eta^2 \sigma^4(y)}+2\frac{(\mu-r)\rho \delta(y)u_\zs{y}}{\eta^2\sigma^{3}(y)}+\frac{\rho^2\delta^2(y)u^2_\zs{y}}{\eta^2\sigma^2(y)}+\frac{m^2}{\eta^2\sigma^4(y)}\bigg) \nonumber \\
 	&+\varpi(y)u_\zs{y}+\frac{1}{2}\delta^2(y)(u^2_\zs{y}+u_\zs{yy})+\frac{m(1-\eta)}{2}\left(\ln(\frac{2\pi_\zs{e}e\,m}{\eta\sigma^2(y)})+2\ln Z_\zs{a,b}(y,u_y;m)\right)
 	\biggr]=0. 
 	\end{align}By plugging $\alpha^*(t,y), (\beta^*)^2(t,y)$ from \eqref{alpha_beta_star} into the latest equation, we obtain the PDE \eqref{sol_u}.  

  {The next step is to show the existence of the solutions of the PDE \eqref{sol_u}. Equivalently, it suffices to show the existence of the solution of PDE \eqref{Ap_2}. To this end, we use Theorem \ref{CaPrt00_Theorem} in \ref{Sec:paraPDE}. First, by Propositions \eqref{Prop_4}, we see that condition $\C_{2})$ holds. To verify condition $\C_4)$, let us recall that we have
 	 $$ c_\zs{1}=\frac{\gamma^2(y)}{2}p, \quad \frac{\partial c_1}{\partial l}=0, \quad \frac{\partial c_1}{\partial y}=\gamma(y)\dot{\gamma}(y)p,$$ and 
hence $|c_1|<Kp, |\frac{\partial c_1}{\partial y}|\le Kp$, also $\underset{0\le t\le T,|y|\le N}{\sup}|c(y,t,l,p)|\le K(1+p+p^2)$ and $K$ is a constant in $\bbr$ that depends on $\delta^*$ and $\dot{\gamma}^*$, where
	$$\gamma^*=\underset{0\le t\le T} {\sup}\,\underset{y\in \bbr}{\sup}\, \gamma^2(y),\quad \dot{\gamma}^*=\underset{0\le t\le T}{\sup}\underset{y\in \bbr}{\sup} \,\dot{\gamma}(y).$$  
Therefore, 
 	 \begin{align*}
 	 	\sup_\zs{(y, t)  \leq  \Gamma_\zs{N}}\,\sup_\zs{|l|  \leq  N}\,\sup_\zs{ p \in \bbr}
 	 	\,\frac{ 
 	 		\big( |c_1| + \big| \dfrac{\partial c_1}{\partial l} \big| \big) (1+|p|) +  |\dfrac{\partial c_1}{\partial y}| +|c|}{1+ |p|^2}&=\sup_\zs{ p \in \bbr}\dfrac{Kp(1+|p|)+Kp+K(1+p+p^2)}{1+p^2}<\infty .
 	 \end{align*}
	Moreover, it can be checked directly that Conditions $\C_1)$ and $\C_3) $ hold true, while Condition $\C_5)$ is satisfied for any $ 0 < \varepsilon < 1$. By Theorem \ref{CaPrt00_Theorem} in \ref{Sec:paraPDE} there exists a solution $u(y,t)$ to the PDE \eqref{Ap_2}which is bounded in $\bbr\times[0, T]$ and $u(y,t)$ belongs to $\cH^{2+\varepsilon, 1+\varepsilon/2}(
 	 \Gamma_\zs{N})$ for any $N \ge 1$. }
\end{proof}

{Although the HJB equation has been solved, a verification step is
still required to validate the dynamic programming principle and the optimality
of the associated feedback control.
This step is carried out in the next section.
}
\section{Verification Theorem}
\noindent To construct the optimal policy, recall that   \( u(t, y) \) satisfies the PDE \eqref{sol_u}. We recall
the optimal policy $\lambda^*(t,y_t):=\lambda^{0,[a,b]}(\pi|t,y_t;m)=\cN(\alpha^*(t,y),(\beta^*)^2(t,y))|_{[a,b]}$, where
\begin{align}
	\alpha^*(t,y)=
    \dfrac{\mu-r+\rho\gamma(y_t)\sigma(y_t)u_\zs{y}(t,y_t)}{\varrho\sigma^2(y_t)},\qquad (\beta^*)^2(t,y)=
    \frac{m}{\varrho \sigma^2(y_t)}. 
\end{align}
where $X^{\lambda^*}_t$ is the optimal wealth process defined in \eqref{exploratory_dynamics} 
and, {by \ref{Sec:App-Truncated},} \begin{equation}
	\text{Mean}(\lambda^*(t,y_t))=\alpha^{*}(t,y_t)+\dfrac{\vphi\left((a-\alpha^*(t,y_t))\sigma(y_t)\sqrt{\varrho}m^{-\frac{1}{2}}\right)-\vphi\left((b-\alpha^*(t,y_t))\sigma(y_t)\sqrt{\varrho}m^{-\frac{1}{2}}\right)}{m^{-\frac{1}{2}}\sqrt{\varrho }\sigma(y_t)Z_\zs{a,b}(y_t;m)}
\end{equation}
and \begin{equation}
	\text{Var}(\lambda^{*}(t,y_t))=(\beta^*)^2\left[1+\dfrac{\frac{a-\alpha^*}{\beta^*}\vphi(\frac{a-\alpha^*}{\beta^*})-\frac{b-\alpha^*}{\beta^*}\vphi(\frac{b-\alpha^*}{\beta^*})}{\Phi(\frac{b-\alpha^*}{\beta^*})-\Phi(\frac{a-\alpha^*}{\beta^*})}-\left(\dfrac{\vphi(\frac{a-\alpha^*}{\beta^*})-\vphi(\frac{b-\alpha^*}{\beta^*})}{\Phi(\frac{b-\alpha^*}{\beta^*})-\Phi(\frac{a-\alpha^*}{\beta^*})}\right)^2\right].
\end{equation}
\begin{theorem}\label{Theo_value}
Assume that $u$ satisfies the PDE \eqref{sol_u}.
Then the process $\lambda^*(t,y_t)$ 
is the optimal strategy for Problem \eqref{goal}, i.e.
\begin{equation*}
\underset{\lambda \in \cH_{[a,b]}}{\sup}J^{\lambda}(t,x,y;m)
=J^{\lambda^*}(t,x,y;m)=V(t,x,y;m).
\end{equation*}
\end{theorem}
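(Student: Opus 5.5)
The plan is a standard verification argument for the linear-in-$J$ recursive structure. We write $\Lambda^\lambda_{t,s}:=\exp\{\int_t^s m(1-\eta)\cE(\lambda_u)\,\d u\}$ and work with the candidate $V(t,x,y)=\frac{x^{1-\eta}e^{u(t,y)}-1}{1-\eta}$ from Theorem \ref{thm:HJBEQ}. Since $V_{xx}=-\eta x^{-1-\eta}e^{u}<0$ and $(1-\eta)V+1=x^{1-\eta}e^{u}>0$, the Hamiltonian maximization of Lemma \ref{lem_2} is well posed pointwise. Consequently, for every admissible $\lambda$,
\[
\partial_tV+\mathcal A^{\lambda}V+m\cE(\lambda)\big[(1-\eta)V+1\big]\le 0 ,
\]
with equality for $\lambda=\lambda^*$, because $V$ solves \eqref{HJB-1} (equivalently \eqref{HJB-substitution}).

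Fix an admissible $\lambda\in\cH_{[a,b]}$ and initial data $(t,x,y)$. We apply It\^o's formula to the process
\[
\Lambda^\lambda_{t,s}V(s,X^\lambda_s,y_s)+\int_t^s\Lambda^\lambda_{t,v}\,m\cE(\lambda_v)\,\d v .
\]
The $C^{1,2}$ regularity required here holds locally since $u\in\cH^{2+\varepsilon,1+\varepsilon/2}(\Gamma_N)$. The $\d s$-part of the differential equals
\[
\Lambda^\lambda_{t,s}\Big(\partial_tV+\mathcal A^{\lambda}V+m(1-\eta)\cE(\lambda)V+m\cE(\lambda)\Big),
\]
which is $\le 0$ by the above inequality. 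The martingale part involves the terms $\Lambda\, V_x\sigma X(\mathrm{Mean}\,\d W+\sqrt{\mathrm{Var}}\,\d\hat W)$ and $\Lambda\, V_y\delta\,\d U$. We localize with stopping times $\tau_n$ at which $|X|$, $1/X$, $|y|$ or $\int|\cE|$ exceed $n$, as in \eqref{st-tim}. Taking expectations then gives
\[
V(t,x,y)\ \ge\ \E\Big[\int_t^{\tau_n}\Lambda^\lambda_{t,s}m\cE(\lambda_s)\,\d s+\Lambda^\lambda_{t,\tau_n}V(\tau_n,X_{\tau_n},y_{\tau_n})\Big].
\]

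The main obstacle is letting $n\to\infty$, which requires uniform integrability. This is exactly why condition (iv) of Definition \ref{def_1} is imposed. Since $u$ is bounded (Theorem \ref{thm:HJBEQ}), we have $|V(s,X,y)|\le C(1+X^{1-\eta})$. We then apply H\"older with exponents $q$ and $q/(q-1)$ to $\Lambda^\lambda\cdot X^{1-\eta}$. Using $\E[e^{qm(1-\eta)\int|\cE|}]<\infty$ together with an $L^{q}$ bound on $\sup_s (X^\lambda_s)^{1-\eta}$, the family is uniformly integrable. The latter bound is obtained from $\E[U(X_T)^q]<\infty$ and Doob's inequality applied to the submartingale $(X^\lambda)^{1-\eta}$ after a change of measure; alternatively, we use Fatou on the lower bound since $V\ge -1/(1-\eta)$. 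If only Fatou applies cleanly, we handle the integral term by monotone/dominated convergence and the terminal term by the lower bound $U\ge -1/(1-\eta)$ combined with the $q$-moment. The terminal condition $V(T,x,y)=U(x)$ then yields $V(t,x,y)\ge J^\lambda(t,x,y)$ via the representation \eqref{eq:new-EU}/\eqref{Re-Sol}.

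For $\lambda^*$, all inequalities above become equalities. It remains to check that $\lambda^*\in\cH_{[a,b]}$, and we do this with Corollary \ref{cor1}. First, $(\beta^*)^2=m/(\eta\sigma^2(y))\le m/(\eta\sigma_*^2)$ is bounded and positive. Second, $\alpha^*$ is continuous and bounded because $u_y$ is bounded, which follows from the Hölder estimates and (SB). Third, since the entropy of the truncated normal is bounded whenever $\beta^*$ is bounded above and below, the arguments of Proposition \ref{Prop1} give the strong solution and the moment condition (iv). Condition (b) of Corollary \ref{cor1} may fail if $\alpha^*$ leaves $[a,b]$. We would circumvent this by noting that the proof only needs the entropy to be bounded, which holds directly because $\sigma$ is bounded above and below (Proposition \ref{Prop_4}(ii)) and $Z_{a,b}$ is bounded away from zero when $\alpha^*$ stays bounded. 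The same limit passage as before then gives $V=J^{\lambda^*}$, and combining the two parts proves $\sup_\lambda J^\lambda=J^{\lambda^*}=V$.
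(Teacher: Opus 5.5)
The overall route is the paper's: It\^o's formula for $e^{M^\lambda_s}V(s,X^\lambda_s,y_s)$ plus the running entropy term, and the HJB inequality from the Donsker--Varadhan maximizer of Lemma~\ref{lem_2} (legitimate since $V_{xx}<0$ and $(1-\eta)V+1=x^{1-\eta}e^{u}>0$). Then localization and $n\to\infty$. Your Fatou route for $V\ge J^\lambda$ is fine, because the lower bound $-\Lambda^\lambda_{t,\tau_n}/(1-\eta)$ is integrable.

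The gap is the admissibility of $\lambda^*$. You assert that $u_y$ is globally bounded ``by the H\"older estimates and (SB)''. Theorem~\ref{thm:HJBEQ}, via Theorem~\ref{CaPrt00_Theorem}, gives only that $u$ is bounded and lies in $\cH^{2+\varepsilon,1+\varepsilon/2}(\Gamma_N)$ for each $N$, with $N$-dependent norms. The reason is that $c(y,t,l,p)$ contains $\varpi(y)p$, and $\varpi$ is not assumed bounded (it is affine and unbounded in the numerical example). So the constants in $\C_4)$--$\C_5)$ are not uniform in $N$. Without a global bound, $\alpha^*$ may be unbounded in $y$. For $\beta^*$ in a fixed compact range and $|\alpha^*|\to\infty$, one has $Z_{a,b}\to0$ and $\cE(\lambda^*)\to-\infty$, roughly like $-\ln|\alpha^*|$. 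Hence your ``third'' claim (entropy bounded whenever $\beta^*$ is bounded above and below) is false as stated. Consequently condition (iv) of Definition~\ref{def_1}, which involves $|\cE|$, is not established for $\lambda^*$. Without it you get only $\sup_{\lambda\in\cH_{[a,b]}}J^\lambda\le V=J^{\lambda^*}$, not that the supremum is attained within the admissible class.

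The paper avoids needing a global bound by localizing at $|y|\le n$, but its own version of this step is also incomplete. It asserts $\alpha^*(s,y_s)\in[a+\iota_n,b-\iota_n]$ without checking $\iota_n>0$. It then passes from admissibility on each $[0,\tau_n]$ to admissibility on $[0,T]$, which does not yield the global exponential moment in (iv). Your observation that condition (b) of Corollary~\ref{cor1} can fail, and that only control of the entropy matters, is therefore sharper than the paper's argument. Closing it still needs an extra ingredient, for example:
\begin{itemize}
\item a hypothesis making the constants in $\C_4)$--$\C_5)$ independent of $N$, so that Theorem~\ref{CaPrt00_Theorem} gives $u\in\cH^{2+\varepsilon,1+\varepsilon/2}(\bbr\times[0,T])$ and hence bounded $\alpha^*$; or
\item a growth bound on $u_y$ plus moment bounds on $\sup_s|y_s|$, using $|\cE(\lambda^*)|\le C\big(1+\ln(1+|\alpha^*|)\big)$ for $\beta^*$ in a compact subset of $(0,\infty)$; this reduces (iv) to a polynomial moment of $\sup_s|u_y(s,y_s)|$.
\end{itemize}

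A smaller point: $V=J^{\lambda^*}$ does not follow from ``the same limit passage''. Fatou gives only $\ge$. Your H\"older/Doob route is mis-paired: exponents $q,q/(q-1)$ would require $\sup_s(X_s^\lambda)^{1-\eta}\in L^{q/(q-1)}$, and $(X^\lambda)^{1-\eta}$ is not a submartingale in general. This is easy to repair:
\begin{itemize}
\item $\cE\le\ln(b-a)$ for any density on $[a,b]$, so $\Lambda^\lambda_{t,s}\le e^{m(1-\eta)T(\ln(b-a))^+}$ for every policy;
\item $u$ is bounded, and $X^{\lambda^*}$ is a stochastic exponential with bounded coefficients under (SB) and Proposition~\ref{Prop_4}(ii), so the terminal term is uniformly integrable;
\item the running term converges by bounded convergence for the positive part of $\cE$ and monotone convergence for its negative part.
\end{itemize}
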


\begin{proof}
Define a sequence of stopping times
\[
\tau_n=\inf \Big\{ t\ge 0 :|y_t|\vee (X^{\lambda}_\zs{t})^{1-\eta}\vee e^{M^{\lambda}_\zs{t}}\ge n\Big\}, 
\qquad n\ge 1.
\]
We first prove that the feedback control $\lambda^*(t,y_t)$ belongs to $\cH_{[a,b]}$.
Fix $n\ge 1$ and work on the localized interval $[0,\tau_n]$. By definition of $\tau_n$, we have $|y_s|\le n$
for all $0\le s\le \tau_n$. Since $u\in H^{2+\varepsilon,\,1+\varepsilon/2}(\Gamma_n)$, the map $(t,y)\mapsto u_y(t,y)$ is
H\"older and hence continuous; therefore $u_y$ is bounded on the compact set $[0,T]\times[-n,n]$.
It follows that $\alpha^*(t,y)$ is continuous and bounded on $[0,T]\times[-n,n]$. 
Observe that 
\(
\alpha^*(s,y_s)\in[a+\iota_n,b-\iota_n],\; 0\le s\le \tau_n,
\)
where 
\[
\iota_n:=\min\Big\{\inf_{(t,y)\in[0,T]\times[-n,n]}\big(\alpha^*(t,y)-a\big),\ 
\inf_{(t,y)\in[0,T]\times[-n,n]}\big(b-\alpha^*(t,y)\big)\Big\}.
\]
Since $\beta^*$ is bounded and strictly positive, Conditions (a)--(b) of Corollary~\ref{cor1} hold on $[0,\tau_n]$.
Therefore, by Corollary \ref{cor1},  $\lambda^*$ is admissible on $[0,\tau_n]$ for each $n\ge1$.
Since $\tau_n\uparrow T$ a.s. as $n\to\infty$, we conclude that $\lambda^*\in \cH_{[a,b]}$ on $[0,T]$.

Next, we prove that $V(t,x,y;m)=\frac{x^{1-\varrho}e^{u(t,y)}-1}{1-\varrho}$, where   \( u(t, y) \) satisfies the PDE \eqref{sol_u}, is the optimal value function. Indeed, for any admissible $\lambda$, denote $M^{\lambda}_{t}=\int_{0}^{t}m(1-\varrho)\cE(\lambda_\zs{s})\d s$. Applying Itô's formula to $e^{M_{t}}V(t,X^{\lambda}_\zs{t},
	y_\zs{t})$, we get 
	\begin{align}
		&\d \,(e^{M_\zs{t}} V(t,X^{\lambda}_\zs{t},y_\zs{t})) 
		=\biggl\{\frac{\partial V}{\partial t}+[r+(\mu -r)Mean(\lambda_t)]X^{\lambda}_\zs{t}V_x+\varpi(y_t)V_y+\frac12 \sigma^2(y_t)[Mean(\lambda_t)^2\nonumber \\
		&+Var(\lambda_t)](X^{\lambda}_\zs{t})^2V_\zs{xx}+\frac12 \gamma^2 (y_t)V_\zs{yy}+\rho\gamma(y_t)\sigma(y_t)Mean(\lambda_t)X^{\lambda}_\zs{t}V_\zs{xy}\nonumber +m(1-\varrho)\cE(\lambda_t)V
		\biggr\}e^{M^{\lambda}_\zs{t}}\d t\nonumber \\
		&+\biggl\{[\sigma(y_\zs{t})Mean(\lambda_t)X^{\lambda}_\zs{t}V_x+\rho\gamma(y_\zs{t})V_y]\d W_\zs{t}+\rho_\zs{1}\gamma(y_t)V_y \d \bar{W}_\zs{t}+\sqrt{Var(\lambda_t)}\sigma(y_t)X^{\lambda}_\zs{t}V_x\d \hat{W}_\zs{t}
		\biggr \}e^{M^{\lambda}_\zs{t}}.
	\end{align}
	 By using \eqref{HJB-1}, we have
	\begin{align}
		&e^{M^{\lambda}_\zs{\tau n}}V(\tau_n \wedge T,X^{\lambda}_\zs{\tau_n \wedge T},y_\zs{\tau_n \wedge T})-V(0,X^{\lambda}_\zs{0},y_\zs{0 })
        \le \int_{0}^{\tau_n \wedge T}-m\cE(\lambda_s)e^{M^{\lambda}_\zs{s}}\d s\nonumber\\&+\int_{0}^{\tau_n \wedge T}e^{M^{\lambda}_\zs{s}}(X^{\lambda}_\zs{s})^{1-\eta}e^{u(s,y_s)}\biggl\{\big[\sigma(y_s)Mean(\lambda_s)+\rho\delta(y_s)\frac{u_y(s,y_s)}{1-\eta}\big]\d W_s\nonumber\\
        &\rho_\zs{1}\delta (y_s)\frac{u_y(s,y_s)}{1-\eta}\d \bar{W}_s+\sqrt{Var(\lambda_s)}\sigma(y_s)\d \hat{W}_s
		\biggr\},\label{eq: 47}
	\end{align}
	and the equality holds if $\lambda=\lambda^{*}$. The quadratic variation of the stochastic integral above is bounded by 
	\begin{align}\label{Stochastic_martingale}
		\mathbb{E}\bigg[ \int_{0}^{\tau_n \wedge T}&e^{2M^{\lambda}_s}(X^{\lambda}_\zs{s})^{2(1-\eta)}e^{2u(s,y_s)}\biggl\{\big[\sigma(y_s)Mean(\lambda_s)+\rho \delta(y_s)\frac{u_y(s,y_s)}{1-\eta}\big]^2 \nonumber \\
		&+(1-\rho^2)\delta^2(y_s)\frac{u^2_y(s,y_s)}{(1-\eta)^2}+Var(\lambda_s)\sigma^2(y_s)
		\biggr\}\d s
		\bigg].
        \end{align}
On the interval $[0,\tau_n]$, by definition of the stopping time,
\(
|y_s|\vee (X_s^{\lambda})^{1-\eta}\vee e^{M_s^{\lambda}}\le n,
\)
which implies that
\(
e^{2M_s^{\lambda}}\le n^2 \) and \(
(X_s^{\lambda})^{2(1-\eta)}\le n^2.
\)
Moreover, since $u\in H^{2+\varepsilon,\,1+\varepsilon/2}$,
the functions $(t,y)\mapsto u(t,y)$ and $(t,y)\mapsto u_y(t,y)$ are H\"older continuous; hence, they are continuous.
Therefore, they are bounded on the compact set $[0,T]\times[-n,n]$. In particular, there exist constants
\[
U_n:=\sup_{(t,y)\in[0,T]\times[-n,n]} |u(t,y)|<\infty,
\qquad
G_n:=\sup_{(t,y)\in[0,T]\times[-n,n]} |u_y(t,y)|<\infty,
\]
such that for all $0\le s\le \tau_n$,
\(
|u(s,y_s)|\le U_n\) and \(|u_y(s,y_s)|\le G_n.
\)
Consequently, on $[0,\tau_n]$,
\(
e^{2u(s,y_s)}\le e^{2U_n}=:K_n\) and \(
u_y(s,y_s)^2\le G_n^2.
\) 
Using inequality $(a+b)^2\le 2(a^2+b^2)$, we obtain
\[
\big[\sigma(y_s)\mathrm{Mean}(\lambda_s)
+\rho \delta(y_s)\tfrac{u_y(s,y_s)}{1-\eta}\big]^2
\le
2\sigma^2(y_s)\mathrm{Mean}(\lambda_s)^2
+2\rho^2\delta^2(y_s)\tfrac{u_y^2(s,y_s)}{(1-\eta)^2}.
\]
Gathering the remaining terms yields that the terms in brackets of \eqref{Stochastic_martingale} are bounded by
\[
2\sigma^2(y_s)\mathrm{Mean}(\lambda_s)^2
+\delta^2(y_s)\tfrac{u_y^2(s,y_s)}{(1-\eta)^2}
+\sigma^2(y_s)\mathrm{Var}(\lambda_s).
\]
Since $\lambda_s$ has compact support $[a,b]$, we have
$|\mathrm{Mean}(\lambda_s)|< \infty \,\text{and}\,
\mathrm{Var}(\lambda_s)< \infty$.
Furthermore, by continuity, $\sigma$ and $\delta$ are bounded on $\{|y|\le n\}$.
Therefore, there exists a constant $T_n>0$, depending only on $n$, such that for all $s\le\tau_n$, the term inside the integral \eqref{Stochastic_martingale} is bounded by
\[
T_n\Big(\sigma^2(y_s)\big(\mathrm{Mean}(\lambda_s)^2+\mathrm{Var}(\lambda_s)\big)+1\Big),
\]
Therefore, the stochastic integral \eqref{Stochastic_martingale} is bounded by
		$$
        \mathbb{E}\left[\int_{0}^{\tau_n \wedge T}T_n\biggl\{\sigma^2(y_s)[Mean(\lambda_s)^2+ Var(\lambda_s)]+1\biggr\}\d s\right]<\infty,$$
	for some constants $T_n>0$. Therefore, the expectations of the corresponding stochastic integrals in \eqref{eq: 47}  are equal to 0.
	For an admissible $\lambda$, by the condition (ii) in Definition \ref{def_1}, we deduce
	\begin{align*}
		&V(0,x_0,y_0)\\
		&\ge \mathbb{E}\biggl\{e^{M^{\lambda}_{\tau_n \wedge T}}V(\tau_n \wedge T,X^{\lambda}_{\tau_n},y_{\tau_n \wedge T})+\int_{0}^{\tau_n \wedge T}m\cE(\lambda_s)e^{M^{\lambda}_s}\d s
		\biggr\}\\
		&=\mathbb{E}\left\{e^{M^{\lambda}_{\tau_n \wedge T}}\frac{(X^{\lambda}_{\tau_n \wedge T})^{1-\eta}}{1-\eta}e^{u(\tau_n \wedge T,y_{\tau_n \wedge T})}-\frac{e^{M^{\lambda}_{\tau_n \wedge T}}}{1-\eta}+\int_{0}^{\tau_n \wedge T} m\cE(\lambda_s)e^{M^{\lambda}_{s}}\d s \right\}\\
		&=\mathbb{E}\left\{e^{M^{\lambda}_T}\frac{(X^{\lambda}_T)^{1-\eta}}{1-\eta}e^{u(T,y_\zs{T})}\mathds{1}_{\{\tau_n>T\}}+e^{M^{\lambda}_\zs{\tau_n}}\frac{(X^{\lambda}_\zs{\tau_n})^{1-\eta}}{1-\eta}e^{u(\tau_n,y_\zs{\tau_n})}\mathds{1}_{\tau_n\le T}-\frac{e^{M^{\lambda}_{\tau_n}\wedge T}}{1-\eta}+\int_{0}^{\tau_n} m\cE(\pi_s)e^{M^{\lambda}_{s}}\d s\right\}.
	\end{align*}
	By condition (iv) in Definition~1, there exists $\frac{1}{1-\eta}>q>1$ such that
\(
\mathbb{E}\!\left[\exp\!\left(qm|1-\eta|\int_0^T |\cE(\lambda_s)|\,ds\right)\right]<\infty.
\)
Let $q'>1$ be the conjugate exponent defined by $1/q+1/q'=1$. Then, by Hölder's inequality, we obtain
\begin{align*}
\mathbb{E}\!\left[\int_0^T m|\cE(\lambda_t)|e^{M_t^{\lambda}}\,dt\right]
&=
\mathbb{E}\!\left[\int_0^T m|\cE(\lambda_t)|
\exp\!\left(\int_0^T m|1-\eta|\,|\cE(\lambda_s)|\,ds\right)dt\right] \\
&\le
\left(
\mathbb{E}\!\left[\left(\int_0^T m|\cE(\lambda_t)|\,dt\right)^{q'}\right]
\right)^{1/q'}
\left(
\mathbb{E}\!\left[
\exp\!\left(qm|1-\eta|\int_0^T |\cE(\lambda_s)|\,ds\right)
\right]
\right)^{1/q}
<\infty,
\end{align*}
which implies that    
	\begin{align*}
		V(0,x_0,y_0)&\ge \limsup \mathbb{E}\biggl\{e^{M^{\lambda}_T}\frac{(X^{\lambda}_T)^{1-\eta}}{1-\eta}e^{u(T,y_\zs{T})}\mathds{1}_{\{\tau_n>T\}}+e^{M^{\lambda}_\zs{\tau_n}}\frac{(X^{\lambda}_\zs{\tau_n})^{1-\eta}}{1-\eta}e^{u(\tau_n,y_\zs{\tau_n})}\mathds{1}_{\tau_n\le T}\\
		&-\frac{e^{M^{\lambda}_{\tau_n \wedge T}}}{1-\eta}+\int_{0}^{\tau_n} m\cE(\lambda_s)e^{M^{\lambda}_{s}}\d s\biggr \}.
	\end{align*}
By dominate convergence theorem and Fatou's lemma we have 
	\begin{align*}
		V(0,x_0,y_0)&\ge \mathbb{E}\bigg[e^{M^\lambda_T}U(X^\lambda_T)+\int_{0}^{T}m\cE(\lambda_s)e^{M^\lambda_s}\bigg]
		=J(0,x,y;m),
	\end{align*}
    which concludes the proof.
    	\end{proof}
\begin{remark}
In general, \eqref{sol_u} does not admit a unique solution. However, under the
assumptions of Theorem~\ref{Theo_value} (in particular, the regularity and integrability conditions
ensuring the validity of the verification argument), the solution of \eqref{sol_u} is unique within this verification class. Indeed, let $u^{(1)}$ and $u^{(2)}$ be two functions satisfying \eqref{sol_u} and the assumptions of
Theorem~\ref{Theo_value}. Define for $i=1,2$ the associated candidate value functions
\(
V^{(i)}(t,x,y;m):=\frac{x^{1-\eta}e^{u^{(i)}(t,y)}-1}{1-\eta}.
\)
Applying the verification argument to each $V^{(i)}$ yields, for all $(t,x,y)$,
\[
V^{(i)}(t,x,y;m)=\sup_{\lambda\in\cH_{[a,b]}} J^\lambda(t,x,y;m)=:J^{\lambda^*}(t,x,y;m),
\]
hence $V^{(1)}\equiv V^{(2)}$. Fix any $x>0$. Then
\[
x^{1-\eta}e^{u^{(1)}(t,y)}=(1-\eta)V^{(1)}(t,x,y;m)+1=(1-\eta)V^{(2)}(t,x,y;m)+1=x^{1-\eta}e^{u^{(2)}(t,y)},
\]
which implies $u^{(1)}(t,y)=u^{(2)}(t,y)$ for all $(t,y)$. 
\end{remark}

 \section{Asymptotic Expansion and Derivation of Perturbative PDEs}
 
 \noindent We denote by $u^{(0)}$ the solution to \eqref{sol_u} with $m=0$ (without exploration), and by $V^{(0)}$ the optimal value function for the classical problem (i.e., when $m=0$). It follows from Theorem \ref{thm:HJBEQ} that $V^{(0)}(t,x,y)=\frac{x^{1-\eta}e^{u^{(0)}(t,y)}-1}{1-\eta  }$, where $u^{(0)}$ satisfies
 \begin{align}\label{Expansion_0}
 	&\frac{\partial u^{(0)}}{\partial t}+r(1-\eta)+\frac12\delta^2(y)(u^{(0)}_{yy}+u^{(0)^2}_y)+\varpi(y)u^{(0)}_y\nonumber\\
 	&+\frac{1-\eta}{2\eta}\bigg[\frac{(\mu-r)^2}{\sigma^2(y)}+\frac{2\rho(\mu-r)\delta(y)}{\sigma(y)}u^{(0)}_y+\rho^2\delta^2(y)u^{(0)^2}_y\bigg]\nonumber \\
 	&+(1-\eta)\varsigma^{(0)}(a,b,u^{(0)}_y)=0,\quad u^{(0)}(T,y)=0,
 \end{align}
 where $\varsigma^{(0)}(a,b,u^{(0)}_y)=\underset{m\to 0}{\lim}m\ln Z_{a,b}(y,u_y;m)$ is defined in Lemma \ref{LEM_Appendix} in the Appendix. 
 
 \noindent Below, we derive an asymptotic expansion of the solution \eqref{sol_u} in a neighborhood of $m=0$. Specifically, we are interested in exploring the behavior of the solution for small values of $m$ to account for the effects of exploration.

\noindent Our asymptotic results involve several functions, among which $u^{(1)}$ and $u^{(2)}$ are, respectively, the solutions of the following PDEs: 
 \begin{align}\label{Expansion_1}
 	&\frac{\partial u^{(1)}_t}{\partial t}+\frac12 \delta^2(y)(u^{(1)}_{yy}+2u^{(0)}_y u^{(1)}_y)+\varpi(y)u^{(1)}_y+\frac{1-\eta}{2}\ln \frac{2\pi_e}{\eta \sigma^{2}(y)}\nonumber \\
 	&+\frac{1-\eta}{2\eta}\bigg[\frac{2\rho (\mu-r)\delta(y)}{\sigma(y)}u^{(1)}_y+2\eta^2\delta^{2}(y)u^{(0)}_yu^{(1)}_y\bigg]=0, \,\, u^{(1)}(T,y)=0,
 \end{align}
 and 
 \begin{align}\label{Expansion_2}
 	&\frac{\partial u^{(2)}}{\partial t}+\frac12\delta^{2}(y)(u^{(2)}_{yy}+u^{(1)^2}_y+2u^{(0)}_yu^{(1)}_y)+\varpi(y)u^{(2)}_y\nonumber \\
 	&+\frac{1-\eta}{2\eta}\bigg[\frac{2\rho (\mu-r)\delta(y)}{\sigma(y)}u^{2}_y+\eta^2\delta^2(y)(u^{(1)^2}_y+2u^{(0)}_yu^{2}_y)\bigg]=0, \,\, u^{2}(T,y)=0.
 	 \end{align}
     The result can be summarized in the following Lemma.
 \begin{lemma}
 	The solution $u$ of \eqref{sol_u} admits the following Taylor expansion with respect to $m$:
 	\begin{align}\label{Lemma_2_representation_sol_u}
 		u(t,y)&=u^{(0)}(t,y) + \frac{(1-\eta)}{2}m\ln m (T-t)+mu^{(1)}(t,y)
 		+m^2u^{(2)}(t,y)+O(m^3),
 	\end{align}
   {where $u^{(1)}$ and $u^{(2)}$ are given by \eqref{Expansion_1}, \eqref{Expansion_2} respectively.}
    \end{lemma}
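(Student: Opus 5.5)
We propose to prove the expansion \eqref{Lemma_2_representation_sol_u} by a formal regular-perturbation argument in $m$, followed by an estimate on the remainder. The logarithmic term $\frac{1-\eta}{2}\,m\ln m\,(T-t)$ is treated as a known, explicit, $y$-independent correction. It comes from splitting the entropy contribution in \eqref{sol_u} as
\[
\tfrac{(1-\eta)m}{2}\ln\tfrac{2\pi_e m}{\eta\sigma^2(y)}=\tfrac{(1-\eta)}{2}m\ln m+\tfrac{(1-\eta)m}{2}\ln\tfrac{2\pi_e}{\eta\sigma^2(y)}.
\]
The first piece is constant in $(t,y)$, so it can be absorbed exactly by adding $\frac{1-\eta}{2}m\ln m(T-t)$ to $u$, since $\partial_t$ of this term cancels it and the term vanishes at $t=T$. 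Because this shift does not depend on $y$, it leaves $u_y$, $u_{yy}$ unchanged. We therefore set
\[
w:=u-\tfrac{1-\eta}{2}m\ln m\,(T-t),
\]
and this $w$ solves a PDE that depends on $m$ only through $m\ln\frac{2\pi_e}{\eta\sigma^2}$ and through the term $m\ln Z_{a,b}(y,w_y;m)$.

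Next we expand the truncation term. By Lemma \ref{LEM_Appendix}, $m\ln Z_{a,b}(y,p;m)\to\varsigma^{(0)}(a,b,p)$. We would work in the regime where the unconstrained Merton fraction lies strictly inside $(a,b)$, that is, $\alpha^*\in(a+\iota,b-\iota)$. There $D\to-\infty$ and $F\to+\infty$ at rate $m^{-1/2}$, so $\varsigma^{(0)}\equiv0$. Moreover, $1-Z_{a,b}$ is exponentially small in $1/m$, so $m\ln Z_{a,b}=O(m^k)$ for every $k$, uniformly for $p$ in compacts. This lets us drop the truncation term at every polynomial order, which is consistent with the absence of $Z$-terms in \eqref{Expansion_1}--\eqref{Expansion_2}.

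We then insert $w=u^{(0)}+mu^{(1)}+m^2u^{(2)}+R_m$ and collect powers of $m$. At order $m^0$ we recover \eqref{Expansion_0}. At order $m^1$ the quadratic terms $\frac12\delta^2u_y^2+\frac{1-\eta}{2\eta}\rho^2\delta^2u_y^2$ produce the cross term $2u^{(0)}_yu^{(1)}_y$, and together with the source $\frac{1-\eta}{2}\ln\frac{2\pi_e}{\eta\sigma^2}$ this gives the linear equation \eqref{Expansion_1}. At order $m^2$ we obtain \eqref{Expansion_2}, whose forcing is $(u^{(1)}_y)^2$. Both \eqref{Expansion_1} and \eqref{Expansion_2} are linear parabolic equations with coefficients built from $u^{(0)}_y$, with bounded Hölder data under $(\mathbf{SB})$. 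Their solvability and Hölder regularity then follow from the linear theory in \cite{ladyzhenskaia1968linear}, or from Theorem \ref{CaPrt00_Theorem} specialised to the linear case.

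The main obstacle is the remainder estimate $R_m=O(m^3)$. The remainder satisfies a linear parabolic equation of the form
\[
\partial_tR+\tfrac12\delta^2R_{yy}+b_m(t,y)R_y=m^3G_m+O(m^\infty),
\]
with terminal value $R(T,\cdot)=0$. The drift $b_m$ is obtained by writing the quadratic gradient nonlinearity in difference-quotient form, $u_y^2-v_y^2=(u_y+v_y)(u_y-v_y)$, and $G_m$ collects the cubic and quartic products of $u^{(1)}_y$, $u^{(2)}_y$ and $R_y$. We would first obtain uniform-in-$m$ bounds on $w_y$ from the gradient estimates underlying Theorem \ref{CaPrt00_Theorem}; the constants there depend continuously on $m$ through the bound $\Psi$ of Proposition \ref{Prop_4}. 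With $b_m$ bounded, the maximum principle for the linear equation, via a barrier of the form $Ce^{\kappa(T-t)}m^3$, yields $\sup|R_m|\le Cm^3$. This holds at least locally uniformly in $y$, and globally if the gradient bounds are global. The delicate points are securing this uniform gradient control and justifying the exponential smallness of the truncation term uniformly in $y$. The second point requires a uniform interior condition on $\alpha^*$; if that condition fails, $\varsigma^{(0)}\neq0$ and additional $m\ln Z$ contributions would have to be carried into the expansion.
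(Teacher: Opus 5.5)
Your argument follows the same route as the paper. The paper also substitutes the ansatz into \eqref{sol_u}, lets the $-\frac{1-\eta}{2}m\ln m$ coming from $u_t$ cancel the $\frac{1-\eta}{2}m\ln m$ coming from the entropy term (your shift to $w$ does exactly this), and then collects the coefficients of $m^0$, $m$ and $m^2$. The differences lie in the two non-formal steps, and in both you are more careful than the paper.

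\textbf{The truncation term.} The paper handles it only by citing Lemma \ref{LEM_Appendix}, that is, $m\ln Z_{a,b}=\varsigma^{(0)}+o(1)$. An $o(1)$ error does not justify matching at orders $m$ and $m^2$. Your uniform interior hypothesis on the Merton fraction makes $1-Z_{a,b}$ exponentially small, hence $m\ln Z_{a,b}=O(m^k)$ for every $k$. This is precisely what legitimizes the absence of $Z$-terms in \eqref{Expansion_1}--\eqref{Expansion_2}. It is not just a convenience of your proof. If the constraint binds, say $\pi_0<a$, then $m\ln Z_{a,b}=\varsigma^{(0)}+\frac m2\ln m+O(m)$ and $\partial_p\varsigma^{(0)}=(a-\pi_0)\rho\delta\sigma\neq0$. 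The $m\ln m$ source then becomes $y$-dependent, so the correction is no longer $\frac{1-\eta}{2}m\ln m\,(T-t)$, and \eqref{Expansion_1} acquires an extra drift. You should therefore state this hypothesis in the lemma itself; the paper's proof tacitly needs it too.

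\textbf{The remainder.} The paper never controls it. Your quasilinearization combined with the maximum principle is what upgrades the formal series to a genuine $O(m^3)$ statement. A small correction: in the difference-quotient form, all dependence on $R_y$ sits in $b_m$. So $G_m$ contains only $u^{(1)}_yu^{(2)}_y$ and $(u^{(2)}_y)^2$, not $R_y$.

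\textbf{What remains to be proved.} This step needs a bound on $w_y$ that is uniform in $m$. It is needed both to bound $b_m$ and to keep the Merton fraction evaluated at $w_y$ (not just at $u^{(0)}_y$) inside $[a+\iota/2,b-\iota/2]$. It also needs global control of the coefficients, since $\varpi$ may be unbounded under $(\mathbf{SB})$, as in the mean-reverting example of Section 7. Theorem \ref{CaPrt00_Theorem} as stated gives only local H\"older bounds, with unspecified dependence on $m$. These estimates must therefore be proved separately, or assumed.
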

 	\begin{proof}
 	We first compute the derivation of \eqref{Lemma_2_representation_sol_u} with respect of each argument:
 		\begin{align*}
 			u_t&=u^{(0)}_t-\frac{(1-\eta)}{2}m\ln m+ mu^{(1)}_t+ m^2u^{(2)}_t+\ldots\\
 			u_y&=u^{(0)}_y+mu^{(1)}_y+ m^2u^{(2)}_y+\ldots\\
 			u_{yy}&=u^{(0)}_{yy}+ mu^{(1)}_{yy}+ m^2u^{(2)}_{yy}+\ldots\\
 			u^2_y&=u^{(0)^2}_y+ m^2u^{(1)^2}_y+ 2mu^{(0)}_yu^{(1)}_y+2m^2u^{(0)}_yu^{(2)}_y +\ldots 			 
 		\end{align*} 
 		By substituting all this derivatives in \eqref{sol_u} we get:
 		  \begin{align*}
 		  	&u^{(0)}_t-\frac{(1-\eta)}{2}m\ln m+ mu^{(1)}_t+ m^2u^{(2)}_t+ r(1-\eta)+ \frac12\delta^2(y)(u^{(0)}_{yy}+ mu^{(1)}_{yy}+ m^2u^{(2)}_{yy}+u^{(0)^2}_y\\
 		  	&+ m^2u^{(1)^2}_y+ 2mu^{(0)}_yu^{(1)}_y+2m^2u^{(0)}_yu^{(2)}_y)+ \varpi(y)(u^{(0)}_y+mu^{(1)}_y+ m^2u^{(2)}_y)\\
 		  	&+\frac{1-\eta}{2\eta}\bigg[\frac{(\mu-r)^2}{\sigma^2(y)}+\frac{2\rho(\mu-r)\delta(y)(u^{(0)}_y+mu^{(1)}_y+ m^2u^{(2)}_y)}{\sigma(y)}+\eta^2\delta^2(y)(u^{(0)^2}_y+ m^2u^{(1)^2}_y+ 2mu^{(0)}_yu^{(1)}_y\\
 		  	&+2m^2u^{(0)}_yu^{(2)}_y)\bigg]+\frac{(1-\eta)}{2}m\ln m+ \frac{1-\eta}{2}m\ln \bigg(\frac{2\pi_\zs{e}}{\eta \sigma^2(y)}\bigg)\notag\\
            &+(1-\eta)m\ln Z_{a,b}(y,u^{(0)}_y+mu^{(1)}_y+ m^2u^{(2)}_y;m)=0,
 		  \end{align*}
 		 with $u(T,y)=0.$ 
         Using the expansion
\[
u_y(t,y;m)
=
u_y^{(0)}(t,y)
+
m\,u_y^{(1)}(t,y)
+
m^2\,u_y^{(2)}(t,y)
+
O(m^3),
\]
we define
\(
p_m
:=
u_y^{(0)} + m u_y^{(1)} + m^2 u_y^{(2)} 
\). Obviously, \(
p_m \xrightarrow[m\to 0]{} p_0 := u_y^{(0)}.
\)
Recall that
\[
Z_{a,b}(y,p_m;m)
=
\Phi\!\big(D(y,p_m;m)\big)
-
\Phi\!\big(F(y,p_m;m)\big),
\]
Applying Lemma \ref{LEM_Appendix}, we obtain
\[
m \ln Z_{a,b}
\!\left(
y,
u_y^{(0)}+m u_y^{(1)}+m^2 u_y^{(2)};
m
\right)
=
\varsigma^{(0)}\!\left(a,b,u_y^{(0)}\right)
+
o(1),
\]
where
\[
\varsigma^{(0)}(a,b,u_y^{(0)})
:=
\begin{cases}
0,
& \pi_0\in[a,b], \\[6pt]
-\dfrac{(a-\pi_0)^2}{2\eta\sigma^2(y)},
& \pi_0<a, \\[10pt]
-\dfrac{(b-\pi_0)^2}{2\eta\sigma^2(y)},
& \pi_0>b,
\end{cases}
\qquad
\pi_0
=
\frac{\mu-r+\rho\,\delta(y)\sigma(y)\,u_y^{(0)}}{\eta\sigma^2(y)}.
\]
          Now, grouping all the terms of $m^0,\,m,\,m^2$, we obtain \eqref{Expansion_0}, \eqref{Expansion_1}, \eqref{Expansion_2}.
 	\end{proof}
    {
\begin{remark}
The term $\frac{1-\eta}{2}\,m\ln m\,(T-t)$ in the expansion
\eqref{Lemma_2_representation_sol_u} is a purely exploratory effect.
It originates from the entropy-induced variance of the control and reflects the
cost of exploration at the value-function level.
\end{remark}
 \noindent The next theorem shows how the temperature parameter $m$
affects the value function when exploration is removed; i.e., we focus only on exploitation with the mean of the optimal
exploratory policy.
}

\begin{theorem}\label{Theo:3}
Let $\lambda^*(t,y)$ denote the optimal exploratory policy of the entropy--regularized problem with exploration parameter $m$,
and let $\alpha^{(m)}(t,y):=\mathrm{Mean}(\lambda^*(t,y))$.
Then the mean admits the expansion
\begin{align}\label{eq:mean-expansion}
\alpha^{(m)}(t,y)
=
\alpha^*(t,y)
+m\,\frac{\rho\delta(y)}{\eta\sigma(y)}\,u^{(1)}_y(t,y)
+O(m^2),
\end{align}
where
\[
\alpha^*(t,y)
=
\frac{\mu-r+\rho\delta(y)\sigma(y)u^{(0)}_{y}(t,y)}{\eta\sigma^2(y)}
\]
is the optimal policy of the classical (non-exploratory) problem (i.e.\, $m=0$), and $u^{(1)}$ is the first-order
corrector in the asymptotic expansion of the exploratory value function given in \eqref{Expansion_1}.
{
\noindent Define the deterministic policy obtained by freezing the control at its mean,
\(
\check\lambda^{(m)}(t,y)\equiv \alpha^{(m)}(t,y),
\)
and let $V^{(m)}(t,x,y)$ be the value function of the original (non-regularized) investment problem
evaluated under this policy, i.e.,
\begin{align}\label{def:Vm}
V^{(m)}(t,x,y)
:=
\mathbb E\!\left[
U(X_T^{\check\lambda^{(m)}})\,\big|\,X_t^{\check\lambda^{(m)}}=x,\;y_t=y
\right].
\end{align}
}
Then $V^{(m)}$ admits the representation
\begin{align}\label{eq:Vm-ansatz}
V^{(m)}(t,x,y)
=
\frac{x^{1-\eta}e^{\psi^{(m)}(t,y)}-1}{1-\eta},
\end{align}
where $\psi^{(m)}$ satisfies an HJB equation with zero control variance and a drift depending on $m$
through $\alpha^{(m)}$.
Moreover, $\psi^{(m)}$ admits the expansion
\[
\psi^{(m)}(t,y)
=
u^{(0)}(t,y)
+
m^2\phi^{(2)}(t,y)
+O(m^3),
\]
where $\phi^{(2)}$ solves
\begin{align}
\phi^{(2)}_t
&+
\varpi(y)\phi^{(2)}_y
+
\frac12\delta^2(y)\big(\phi^{(2)}_{yy}+2u^{(0)}_y\phi^{(2)}_y\big)
\nonumber\\
&+
\frac{(1-\eta)\rho\delta(y)}{\eta\sigma(y)}
\big[\mu-r+\rho\delta(y)\sigma(y)u^{(0)}_y\big]\phi^{(2)}_y
=
0,
\qquad
\phi^{(2)}(T,y)=0. \label{eq: phi2}
\end{align}
\end{theorem}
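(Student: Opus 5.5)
The plan has two parts: first an expansion of the mean of the truncated Gaussian $\lambda^*$ in $m$, then a Feynman--Kac/ansatz argument for the frozen-mean value function $V^{(m)}$, followed by a regular perturbation in $m$.

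\emph{Expansion of the mean.} By \eqref{alpha_beta_star} and the truncated-Gaussian mean formula recalled before Theorem~\ref{Theo_value}, $\mathrm{Mean}(\lambda^*)$ equals $\tilde\alpha^{(m)}:=\frac{\mu-r+\rho\delta\sigma u_y}{\eta\sigma^2}$ plus a correction of the form $\beta^*\,[\vphi(A)-\vphi(B)]/Z_{a,b}$. Here $\beta^*=\sqrt{m/(\eta\sigma^2)}$, $A=(a-\tilde\alpha^{(m)})/\beta^*$ and $B=(b-\tilde\alpha^{(m)})/\beta^*$. I will work locally, on $\{|y|\le N\}$ as in the proof of Theorem~\ref{Theo_value}, under the interior condition $\alpha^*(t,y)\in[a+\iota,b-\iota]$ of Corollary~\ref{cor1}(b). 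In that regime $A\to-\infty$ and $B\to+\infty$ like $m^{-1/2}$. Hence the correction is $O(e^{-c/m})$ and $Z_{a,b}\to1$, so both are negligible against any power of $m$; this is consistent with $\varsigma^{(0)}=0$ in Lemma~\ref{LEM_Appendix}.

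Next I insert the expansion \eqref{Lemma_2_representation_sol_u} for $u_y$. The term $\frac{1-\eta}{2}m\ln m\,(T-t)$ is independent of $y$, so it drops out of $u_y$, leaving $u_y=u^{(0)}_y+mu^{(1)}_y+O(m^2)$. Substituting this into $\tilde\alpha^{(m)}$ gives \eqref{eq:mean-expansion} directly.

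\emph{Representation of $V^{(m)}$.} Under the deterministic feedback $\pi_t=\alpha^{(m)}(t,y_t)$, the wealth follows \eqref{eq:wealth}. By Feynman--Kac (Remark~\ref{Rem1} with zero entropy and zero variance), $V^{(m)}$ solves the linear PDE $V_t+\mathcal A^{\alpha^{(m)}}V=0$ with $V(T)=U$. The homothetic ansatz \eqref{eq:Vm-ansatz} reduces this to
\[
\psi_t+\varpi\psi_y+\tfrac12\delta^2(\psi_{yy}+\psi_y^2)+(1-\eta)r+G(\alpha^{(m)},\psi_y)=0,\qquad \psi(T,\cdot)=0,
\]
where $G(\alpha,p)=(1-\eta)\big[(\mu-r+\rho\delta\sigma p)\alpha-\tfrac12\eta\sigma^2\alpha^2\big]$. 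This semilinear equation is covered by Theorem~\ref{CaPrt00_Theorem} exactly as in Theorem~\ref{thm:HJBEQ}, and a localized verification argument identifies its solution with \eqref{def:Vm}.

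\emph{Perturbative expansion of $\psi^{(m)}$.} I write $\psi^{(m)}=\psi^{(0)}+m\phi^{(1)}+m^2\phi^{(2)}+O(m^3)$ and $\alpha^{(m)}=\alpha^*+m\alpha_1+O(m^2)$, with $\alpha_1=\frac{\rho\delta}{\eta\sigma}u^{(1)}_y$, and match powers of $m$.
\begin{itemize}
\item At order $m^0$, since $\alpha^*=\arg\max_\alpha G(\alpha,u^{(0)}_y)$, the equation for $\psi^{(0)}$ coincides with \eqref{Expansion_0} with $\varsigma^{(0)}=0$. Uniqueness in the verification class (Remark after Theorem~\ref{Theo_value}) then gives $\psi^{(0)}=u^{(0)}$.
\item At order $m^1$, the first-order change in $\alpha$ enters only through $\partial_\alpha G(\alpha^*,u^{(0)}_y)=0$. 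So $\phi^{(1)}$ solves the homogeneous linear equation \eqref{eq: phi2} with zero terminal data, and therefore $\phi^{(1)}\equiv0$; I would conclude this by the maximum principle.
\item At order $m^2$, matching terms yields the linear operator in \eqref{eq: phi2}. Its drift $\frac{(1-\eta)\rho\delta}{\eta\sigma}[\mu-r+\rho\delta\sigma u^{(0)}_y]$ comes from $\partial_pG$ evaluated at $(\alpha^*,u^{(0)}_y)$.
\end{itemize}

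\emph{Main obstacle.} The second-order matching is the step I expect to be hardest. The envelope identity $\partial_\alpha G=0$ removes the linear $\alpha_1$ terms, but $\tfrac12\partial^2_\alpha G\,\alpha_1^2=-\tfrac12\eta(1-\eta)\sigma^2\alpha_1^2$ survives as a source term. I would first check carefully whether this term cancels or is absorbed. If it does not, it must appear as a (negative) forcing term in the $\phi^{(2)}$ equation, giving $\psi^{(m)}\le u^{(0)}$ at order $m^2$. Justifying the remainder $O(m^3)$ uniformly on compacts requires differentiability of $u$ in $m$, which I would obtain from Schauder estimates for the linearized equations in the H\"older classes of Theorem~\ref{CaPrt00_Theorem}.
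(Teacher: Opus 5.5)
Your outline follows the paper's proof step for step. You expand the mean of the truncated Gaussian via $u_y=u^{(0)}_y+mu^{(1)}_y+O(m^2)$, write the Feynman--Kac PDE for $V^{(m)}$, reduce it with the homothetic ansatz, and match powers of $m$, obtaining $\phi^{(1)}\equiv0$ from $\partial_\alpha G(\alpha^*,u^{(0)}_y)=0$.

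The check you postpone decides the matter, and it goes against the statement: the term does not cancel. At order $m^2$, the linear part gives $\phi^{(2)}_t+\varpi\phi^{(2)}_y+\tfrac12\delta^2\big(\phi^{(2)}_{yy}+2u^{(0)}_y\phi^{(2)}_y+(\phi^{(1)}_y)^2\big)$. Since $G$ is quadratic in $\alpha$ and affine in $p$, the order-$m^2$ part of $G(\alpha^{(m)},\psi^{(m)}_y)$ is exactly $\partial_\alpha G\,\alpha_2+\partial_pG\,\phi^{(2)}_y+\tfrac12\partial_\alpha^2G\,\alpha_1^2+\partial^2_{\alpha p}G\,\alpha_1\phi^{(1)}_y$, with derivatives at $(\alpha^*,u^{(0)}_y)$. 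The first and last terms vanish. Hence the $m^2$-coefficient of $\psi^{(m)}$ solves
\[
\phi^{(2)}_t+\varpi\phi^{(2)}_y+\tfrac12\delta^2\big(\phi^{(2)}_{yy}+2u^{(0)}_y\phi^{(2)}_y\big)+\frac{(1-\eta)\rho\delta}{\eta\sigma}\big[\mu-r+\rho\delta\sigma u^{(0)}_y\big]\phi^{(2)}_y-\frac{(1-\eta)\rho^2\delta^2}{2\eta}\big(u^{(1)}_y\big)^2=0,\qquad \phi^{(2)}(T,\cdot)=0 .
\]
The source is not identically zero. The forcing $\frac{1-\eta}{2}\ln\frac{2\pi_e}{\eta\sigma^2(y)}$ in \eqref{Expansion_1} gives $u^{(1)}_y\approx-(1-\eta)(T-t)\sigma'(y)/\sigma(y)$ near $t=T$. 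So as soon as $\sigma$ genuinely depends on $y$ (and $\rho>0$), $\phi^{(2)}\le0$ and $\phi^{(2)}\not\equiv0$. This is just the second-order loss from displacing an interior optimizer by $O(m)$, consistent with $V^{(m)}\le V^{(0)}$.

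So the theorem cannot be proved as written. Its $\phi^{(2)}$-equation is homogeneous with zero terminal data. The same uniqueness argument (Theorem~\ref{CaPrt00-Theorem-Lin}) that gives $\phi^{(1)}\equiv0$ then gives $\phi^{(2)}\equiv0$, so the statement would assert $\psi^{(m)}=u^{(0)}+O(m^3)$. That is incompatible with its own mean expansion whenever $u^{(1)}_y\not\equiv0$. The paper's proof reaches the homogeneous equation only through the sentence ``collecting the terms of order $m^2$'', which is exactly where the source $\tfrac12\partial^2_\alpha G\,\alpha_1^2$ is lost. You should not hedge here: state and prove the corrected expansion above.

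Three smaller points:
\begin{itemize}
\item The interior hypothesis you impose is absent from the statement but indispensable. If $\pi_0\notin[a,b]$, the mean of $\lambda^*$ tends to the nearest endpoint, $\alpha^*$ is not the constrained classical optimum, and \eqref{eq:mean-expansion} fails.
\item For the $\psi^{(m)}$ part, the interior hypothesis must hold uniformly in $y$, not only on $\{|y|\le N\}$, because $\psi^{(m)}$ sees $\alpha^{(m)}$ along entire paths of $y$.
\item At order $m^0$, the uniqueness you need is that of the fixed-policy equation, which the substitution $w=e^{\psi}$ turns into a linear PDE. The Remark after Theorem~\ref{Theo_value} concerns the exploratory HJB and is not the right tool.
\end{itemize}
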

\begin{proof}

\textbf{Step 1: Expansion of the mean policy.}
From Theorem~\eqref{thm:HJBEQ} and Lemma~\eqref{lem_2}, the mean of the optimal exploratory policy
$\alpha^{(m)}(t,y)=\mathrm{Mean}(\lambda^*(t,y))$ admits the expansion
\[
\alpha^{(m)}(t,y)
=
\alpha^*(t,y)
+
m\frac{\rho\delta(y)}{\eta\sigma(y)}u^{(1)}_y(t,y)
+
m^2\frac{\rho\delta(y)}{\eta\sigma(y)}u^{(2)}_y(t,y)
+
O(m^3),
\]
where $u^{(1)}$ and $u^{(2)}$ are the correctors given in
\eqref{Expansion_1}--\eqref{Expansion_2}.
{
These correctors account jointly for the perturbation of the mean control and the contribution of the
exploratory variance.
}
\medskip

\textbf{Step 2: PDE satisfied by $V^{(m)}$.}
Fix the deterministic policy $\check\lambda^{(m)}\equiv\alpha^{(m)}$ and define $V^{(m)}$ by \eqref{def:Vm}.
By the dynamic programming principle {(DPP still holds because we look at the same optimization problem without entropy regularization)}, $V^{(m)}$ satisfies
\begin{align}\label{eq:Vm-PDE}
V^{(m)}_t
&+
\big(r+(\mu-r)\alpha^{(m)}(t,y)\big)xV^{(m)}_x
+
\frac12\sigma^2(y)\alpha^{(m)}(t,y)^2x^2V^{(m)}_{xx}
\nonumber\\
&+
\varpi(y)V^{(m)}_y
+
\frac12\delta^2(y)V^{(m)}_{yy}
+
\rho\delta(y)\sigma(y)\alpha^{(m)}(t,y)xV^{(m)}_{xy}
=
0,
\end{align}
with terminal condition $V^{(m)}(T,x,y)=\frac{x^{1-\eta}-1}{1-\eta}$.
Note that \eqref{eq:Vm-PDE} contains no entropy or variance term: {we look at an exploitation problem but with the mean of the optimal exploratory policy to take into account the effect of the temperature parameter on the value function},
i.e. $m$ enters only through the drift $\alpha^{(m)}$.

\medskip
\textbf{Step 3: Expansion of $\psi^{(m)}$.}
Using the ansatz \eqref{eq:Vm-ansatz}, equation \eqref{eq:Vm-PDE} reduces to a PDE for $\psi^{(m)}$.
Since $\alpha^{(m)}=\alpha^*+O(m)$ and $\psi^{(m)}(T,\cdot)=0$, we have $\psi^{(0)}=u^{(0)}$.
We therefore expand
\[
\psi^{(m)}=u^{(0)}+m\phi^{(1)}+m^2\phi^{(2)}+O(m^3),
\]
while keeping the expansion of $\alpha^{(m)}$ obtained in Step~1. Collecting the terms of order $m$ yields: 

	\begin{align}\label{h-i-1}
 			\phi^{(1)}_t+\frac12\delta^{2}(y)\phi^{(1)}_{yy}+\bigg[\frac{(1-\eta)\rho\delta(y)(\mu-r)}{\eta\sigma(y)}+\delta^2(y)u^{(0)}_y+\frac{(1-\eta)\rho\delta(y)}{\eta\sigma(y)}u^{(0)}_y\bigg]\phi^{(1)}_y=0, \,\,\phi^{(1)}(T,y)=0.
 		\end{align}
 Note that 
 		Theorem \ref{CaPrt00-Theorem-Lin} implies immediately that the equations  \eqref{h-i-1}
 		have the unique solution $\phi^{(1)}\equiv 0$
The PDE \eqref{eq: phi2} for $\phi^{(2)}$ then follows by collecting the terms of order $m^2$.
\end{proof}
{
\begin{remark}
The functions $u^{(1)}, u^{(2)}$ arising from the asymptotic expansion of the \emph{exploratory} HJB equation
solve, respectively, the non-homogeneous linear PDE \eqref{Expansion_1},\eqref{Expansion_2} whose source term originates from
the entropy-induced exploratory variance.
In contrast, $\phi^{(1)}, \phi^{(2)}$ are obtained from the HJB equation satisfied by $V^{(m)}$
under the mean policy $\check\lambda^{(m)}$.
As a result, the corresponding first-order equation is homogeneous and admits the trivial solution
$\phi^{(1)}\equiv 0$.
Therefore, $u^{(1)}, u^{(2)}$ and $\phi^{(1)}, \phi^{(2)}$ capture fundamentally different effects and should not be identical. 
\end{remark}
}

 \section{Policy Evaluation and Policy Improvement}
 \noindent Since we are operating in an unknown environment, it is necessary to estimate the value function of a given policy using sample data. To achieve this, we adopt the approach from \cite{Jia2022policya,jia2022policyb}, which incorporates the martingale property into the estimation process. For a given policy $\lambda$, we recall that the value function is given by
 	\begin{equation}\label{Policy_value}
 	J^\lambda(t,x,y;m)=\mathbb{E}\left[\int_{t}^{T}m[(1-\varrho)J^\lambda(s,x,y;m)+1]\cE(\lambda_\zs{s})\d s+U(X_\zs{T}^{\lambda})|X^{\lambda}_\zs{t}=x,Y_\zs{t}=y
 	\right].
 \end{equation}
 Based on the discussion in Lemma \ref{Lemma1} and Remark \ref{Rem1}, $J^\lambda(t,x,y;m)$ solves the following PDE
 \begin{equation}
 \label{Policy_PDE}
 	\frac{\partial J}{\partial t} + \mathcal{A^{\lambda}} J= 0,
 \end{equation}
  with a terminal condition $J^\lambda(T,x,y;m)=\frac{x^{1-\varrho}-1}{1-\varrho} $.
  \begin{theorem}
  	A function $J^\lambda$ is the value function associated with the policy $\lambda$ if and only if it satisfies the terminal condition $J^\lambda(T,x,y;m)=U(x)$, and for any initial $(t,x,y)\in [0, T)\times \bbr^+\times \bbr$ :
  	\begin{align}
  		M_s:=J^\lambda(s,X^{\lambda}_s,y_s;m)+\int_{t}^{s}m[(1-\eta)J^{\lambda}(s',X^{\lambda}_{s'},y_{s'};m)+1]\cE(\lambda_{s'})\d s',\quad s\in [t, T]
  	\end{align}
  	is an $\cF_s=\sigma\{X^{\lambda}_u,y_u, \, t\le u\le s\}$ martingale on $[t, T]$.
  \end{theorem}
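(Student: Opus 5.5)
The argument is a martingale characterization of the solution of the linear recursive equation \eqref{aim-recursive}, following \cite{Jia2022policya}. Both directions go through the conditional-expectation identity that defines $J^\lambda$, together with the Markov property of the pair $(X^\lambda,y)$ under a feedback policy.

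\emph{Necessity.} Let $J^\lambda$ be the value function of $\lambda$. Its terminal condition $J^\lambda(T,x,y;m)=U(x)$ is immediate from \eqref{Policy_value} at $t=T$. Fix the initial point $(t,x,y)$. Since $\lambda$ is a feedback policy and Definition~\ref{def_1}(iii) gives a unique strong solution, the pair $(X^\lambda,y)$ is Markov. Hence for $s\in[t,T]$ we have
\[
J^\lambda(s,X^\lambda_s,y_s;m)=\mathbb E\Big[\int_s^T m[(1-\eta)J^\lambda_{s'}+1]\cE(\lambda_{s'})\,\d s'+U(X^\lambda_T)\,\Big|\,\cF_s\Big].
\]
Adding the adapted integral $\int_t^s m[(1-\eta)J^\lambda_{s'}+1]\cE(\lambda_{s'})\,\d s'$ to both sides gives
\[
M_s=\mathbb E\Big[\int_t^T m[(1-\eta)J^\lambda_{s'}+1]\cE(\lambda_{s'})\,\d s'+U(X^\lambda_T)\,\Big|\,\cF_s\Big].
\]
This is a Doob martingale, exactly as $M_t$ in Lemma~\ref{Lemma1}, provided the random variable inside is integrable. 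Integrability comes from Definition~\ref{def_1}(iv) and the explicit representation \eqref{Re-Sol}. The exponential moment bound controls $|J^\lambda|$, and by Hölder's inequality, as in the proof of Theorem~\ref{Theo_value}, it also controls $\int|J^\lambda_{s'}\cE(\lambda_{s'})|\,\d s'$.

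\emph{Sufficiency.} Let $\tilde J$ be a function with $\tilde J(T,\cdot)=U$ such that the corresponding $\tilde M$ is a martingale on $[t,T]$. Write $\tilde M_t=\mathbb E[\tilde M_T\mid\cF_t]$ and use the terminal condition to get
\[
\tilde J(t,x,y)=\mathbb E\Big[\int_t^T m[(1-\eta)\tilde J_{s'}+1]\cE(\lambda_{s'})\,\d s'+U(X^\lambda_T)\Big].
\]
So $\tilde J$ solves the same recursive equation \eqref{aim-recursive} along the trajectory from every starting point. Equivalently, by the martingale representation of Lemma~\ref{Lemma1}, the process $\tilde J_s:=\tilde J(s,X^\lambda_s,y_s)$ solves the linear BSDE \eqref{BSDE_1}, with terminal value $U(X^\lambda_T)$ and generator $m[(1-\eta)J+1]\cE(\lambda)$. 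The same holds for $J^\lambda$. The BSDE is linear, and its coefficient $m(1-\eta)\cE(\lambda_s)$ has exponential moments by Definition~\ref{def_1}(iv). Uniqueness for such linear BSDEs (\cite{el1997backward}) then gives $\tilde J_s=J^\lambda_s$. An alternative route to the same conclusion is to take the difference $\Delta=\tilde J-J^\lambda$, which satisfies $\Delta_s=\mathbb E[\int_s^T m(1-\eta)\cE(\lambda)\Delta\,\d s'\mid\cF_s]$, and conclude $\Delta\equiv0$ via the integrating factor $e^{\int m(1-\eta)\cE}$, which is what produces \eqref{Re-Sol}. Evaluating at $s=t$ gives $\tilde J(t,x,y)=J^\lambda(t,x,y)$ for every $(t,x,y)$.

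\emph{Main obstacle.} The delicate step is uniqueness, because $\cE(\lambda)$ is not assumed bounded. A Gronwall argument therefore cannot be run directly. I would first localize with stopping times $\tau_n$ as in Theorem~\ref{Theo_value}, then pass to the limit using the exponential moment in Definition~\ref{def_1}(iv) together with Hölder's inequality. This also requires restricting $\tilde J$ to the class where $\sup_s|\tilde J_s|$ has a $q$-th moment. That restriction is implicit in calling $\tilde M$ a true martingale, and I would state it explicitly.
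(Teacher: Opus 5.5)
Your necessity direction is the paper's proof. The paper splits $\int_t^T$ at $s$, identifies the conditional expectation of $U(X^\lambda_T)+\int_s^T\cdots$ with $J^\lambda(s,X^\lambda_s,y_s)$, and pulls out $\int_t^s$. Your version is the cleaner one: you condition on $\cF_s$ and invoke the Markov property explicitly. The paper conditions only on $(X^\lambda_s,y_s)$, and the path integral $\int_t^s$ is not measurable with respect to that.

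The real difference is scope, because the paper's proof never addresses the ``if'' direction. Your sufficiency argument is therefore an addition, not an alternative: martingale property $\Rightarrow$ recursive identity along the path $\Rightarrow$ solution of the linear BSDE \eqref{BSDE_1} $\Rightarrow$ equality with $J^\lambda$ via the integrating factor $e^{\int m(1-\eta)\cE(\lambda)}$. It is the right argument, and it rests on the same uniqueness for the linear BSDE that the paper quotes from El Karoui et al.\ after Lemma~\ref{Lemma1}. It gives the full equivalence, at the price of fixing a class of candidate functions in which that uniqueness can actually be proved, which you correctly flag.

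Two corrections to how you set up the integrability.

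\emph{Uniqueness class.} Pairing $\sup_s|\tilde J_s|$ with the exponential moment $\E[\exp(qm(1-\eta)\int_0^T|\cE(\lambda_s)|\,ds)]<\infty$ via H\"older requires $\sup_s|\tilde J_s|\in L^{q'}$ with the conjugate exponent $q'=q/(q-1)$, not $L^q$. This is a genuine extra hypothesis. It is not implicit in $\tilde M$ being a true martingale, which only gives $L^1$ information.

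\emph{Necessity part.} Condition (iv) of Definition~\ref{def_1} alone does not deliver the integrability you attribute to it. It puts $\exp(m(1-\eta)\int_0^T|\cE(\lambda_s)|\,ds)$ and $U(X^\lambda_T)$ in $L^q$ with the same $q<1/(1-\eta)$. H\"older then controls their product only when $q\ge 2$, which is impossible for $\eta\le 1/2$. That product appears both in \eqref{Re-Sol} and, through $|J^\lambda_{s'}|$, in $\int|J^\lambda_{s'}\cE(\lambda_{s'})|\,ds'$. You need either a stronger moment condition, or to take finiteness of \eqref{Re-Sol} and integrability of $M_T$ as standing assumptions, which is what the paper does implicitly.
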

 \begin{proof}
 	\begin{align}
 		\mathbb{E}[M_T|X^{\lambda}_s,y_s]&=\mathbb{E}\bigg[J^\lambda(T,X^{\lambda}_T,y_T;m)+\int_{t}^{T}m[(1-\eta)J^{\lambda}(s',X^{\lambda}_{s'},y_{s'};m)+1]\cE(\lambda_{s'})\d s'|X^{\lambda}_s,y_s\bigg]\nonumber\\
 		&=\mathbb{E}\bigg[J^\lambda(T,X^{\lambda}_T,y_T;m)+\int_{s}^{T}m[(1-\eta)J^{\lambda}(s',X^{\lambda}_{s'},y_{s'};m)+1]\cE(\lambda_{s'})\d s'|X^{\lambda}_s,y_s\bigg]\nonumber \\
 		&+\int_{t}^{s}m[(1-\eta)J^{\lambda}(s',X^{\lambda}_{s'},y_{s'};m)+1]\cE(\lambda_{s'})\d s'\nonumber\\
 		&=M_s,
 	\end{align}
    which concludes the proof.
  \end{proof}
 
 \noindent By the martingale property of $M$, for any process $g$ satisfying $\mathbb{E}[\int_{0}^{T}g_s^2\d\langle M \rangle_s]<\infty$, we have  $\mathbb{E}\int_{0}^{T}g_s\d M_s=0$. Equivalently,
 \begin{align}
 	\mathbb{E}\bigg[\int_{0}^{T}g_t\bigg(\d J^\lambda(t,X^{\lambda}_t,y_t;m)+m[(1-\eta)J^\lambda(t,X^{\lambda}_{t},y_{t};m)+1]\cE(\lambda_{t})\bigg)\bigg]=0.
 \end{align}
 Such a process $g$ is called a test function. For a given $m$, let $J^{\theta}(t,X^{\lambda}_t,y_t;\lambda)$ be a parameterized family that is used to approximate $J^\lambda$, where $\theta \in \Theta \subset \bbr ^n, n\ge 1$. Our goal now is to find the best $\theta^*$. As seen above, the process 
 \begin{align*}
 	M^{\theta^*}_t:=J^{\theta^*}(t,X^{\lambda}_t,y_t;\lambda)+\int_{0}^{t}m[(1-\eta)J^{\theta^*}(s',X^{\lambda}_{s'},y_{s'};\lambda)+1]\cE(\lambda_{s'})\d s',
 \end{align*}
 is a martingale in $[0, T]$. Moreover, a fundamental property of the conditional expectation implies that \( M_t \) is the orthogonal projection of \( M_T \) onto the space of all \( \mathcal{F}_t \)-measurable random variables. This means that \( M_t \) minimizes the \( L^2 \)-error between \( M_T \) and any \( \mathcal{F}_t \)-measurable random variable; in particular,
 \(
 M^{\theta}_t = \mathbb{E}[M^{\theta}_T \mid \mathcal{F}_t] = \text{argmin}_{\xi \text{ is } \mathcal{F}_t\text{-measurable}} \mathbb{E}\left[|M^{\theta}_T - \xi|^2\right]\) for  $t \in [0, T].$ Our objective is therefore to minimize the martingale loss function $\mathbb{E}[\int_0^T(M^{\theta}_T-M^{\theta}_t)^2\d t]$ over admissible $\theta\in\Theta$. In other words, the agent considers the following minimization 
 \begin{align}
 	\underset{\theta\in \Theta}{\min}\mathbb{E}\bigg[\int_{0}^{T}\bigg(U(X^{\lambda}_T)-J^{\theta}(t,X^{\lambda}_t,y_t;\lambda)+\int_{t}^{T}m[(1-\eta)J^{\theta}(s',X^{\lambda}_{s'},y_{s'})+1]\cE(\lambda_{s'})\d s')\bigg)^2\d t\bigg].
 \end{align}
This approach requires the entire trajectory to predict the value function, making it an offline learning procedure. In such an offline learning framework, the estimate is updated only after a complete trajectory is observed.   In contrast, in an online learning framework, the value function is updated continuously as time progresses, rather than waiting for the entire trajectory.  The martingale orthogonality property now reads \begin{align}\label{orthonal_conditions}
	\mathbb{E}\bigg[\int_{0}^{T}g_t\bigg(\d J^{\theta}(t,X^{\lambda}_t,y_t;\lambda)+m[(1-\eta)J^{\theta}(t,X^{\lambda}_{t},y_{t};\lambda)+1]\cE(\lambda_{t})\bigg)\bigg]=0.
\end{align}
A common choice of test process is $g_t=\frac{\partial}{\partial \theta}J^{\theta}(t,X^{\lambda}_t,y_t;\lambda)\in \bbr^n$. Remark that to fully determine $\theta$, we need  at least $n$ equations in \eqref{orthonal_conditions}. We make the following assumptions about these functional approximators.
\begin{Assumption}
	For all $\theta \in \Theta \subset \bbr^n$, the functions $J^{\theta}$ and $\frac{\partial J^{\theta}}{\partial \theta}\in C^{1,2,2}([0,T)\times \bbr^+\times \bbr)\cap C([0, T]\times\bbr^+\times \bbr)$ and satisfy the polynomial growth condition in $x$ and $y$.
	\end{Assumption}

  We now turn our attention to the policy improvement. We begin with an arbitrary policy $\lambda(\cdot|t,x,y)$, associated with the value function $V^{\lambda}$, which satisfies the following PDE: 
	\begin{align*}
	 V^{\lambda}_t + \mathcal{A^{\lambda}} V^{\lambda} = 0,
	\end{align*}
	with a terminal condition $V^{\lambda}(T,x,y;m)=\frac{x^{1-\varrho}-1}{1-\varrho} $, where $\mathcal{A^{\lambda}}$ is defined in \eqref{operator_A}. As shown in Theorem \ref{thm:HJBEQ}, we can represent $V^{\lambda}(t,x,y;m)=\frac{x^{1-\eta}e^{u^{\lambda}(t,y)}-1}{1-\eta}$, where $u^{\lambda}$ satisfies the following PDE:
	\begin{align}
		u^{\lambda}_t&+(1-\eta)\int_{a}^{b}(r+(\mu-r)\pi)\lambda(\pi)\d \pi+\varpi(y)u^{\lambda}_y-\frac{\eta(1-\eta)}{2}\sigma^2(y)\int_{a}^{b}\pi^2\lambda(\pi)\d \pi\nonumber \\ &+\rho\delta(y)\sigma(y)(1-\eta)u^{\lambda}_y\int_{a}^{b}\pi\lambda(\pi)d\pi+\frac12\delta^2(y)(u^{\lambda}_{yy}+u^{\lambda^2}_y)+m(1-\eta)\cE(\lambda)=0, \quad u(T,y)=0.
	\end{align}
    Inspired by Theorem \ref{thm:HJBEQ}, the policy is now updated to
	$\hat{\lambda}(\pi|t,y)=\cN(\hat{\alpha},\hat{\beta}^2)|_\zs{[a,b]}$, where	\begin{align}\label{policy_eval_improve}
		\hat{\alpha}(t,y)= \dfrac{\mu-r+\rho\gamma(y)\sigma(y)u^{\hat{\lambda}}_\zs{y}(t,y)}{\varrho\sigma^2(y)} ,\qquad \hat{\beta}^2(t,y)=\frac{m}{\varrho \sigma^2(y)},
	\end{align}
	where $u^{\hat{\lambda}}$ satisfies the PDE \eqref{sol_u}. The following policy improvement is crucial for an interpretable learning framework since it ensures that the iterated value function is non decreasing and converges to the true optimal value function.
av

    \begin{theorem}\label{Theo:5}	Let $V^{\hat{\lambda}}$ be the value function corresponding to the new policy $\hat{\lambda}(\pi|t,y)$ with parameters in \eqref{policy_eval_improve}. Then	\begin{align}\label{THeo5}
		V^{\hat{\lambda}}(t,x,y;m)\ge 	V^{{\lambda}}(t,x,y;m),\quad (t,x,y)\in [0, T]\times \bbr_+\times \bbr.
	\end{align}
		\end{theorem}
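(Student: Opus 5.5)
We prove \eqref{THeo5} by a comparison argument in the style of the verification Theorem \ref{Theo_value}, applied to the policy-evaluation equation of $\hat\lambda$ instead of the HJB equation. The \emph{key observation} is that the new policy $\hat\lambda$ is, pointwise in $(t,x,y)$, a maximizer of the entropy-regularized Hamiltonian built from the current value function. Here we read \eqref{policy_eval_improve} in the standard policy-improvement sense: $\hat\alpha$ is computed from $u^{\lambda}_y$, the derivative of the \emph{old} value. By Lemma \ref{lem_2} (Donsker--Varadhan), for each fixed $(t,x,y)$ the truncated Gaussian $\mathcal N(\hat\alpha,\hat\beta^2)|_{[a,b]}$ maximizes
\[
\nu\mapsto \mathcal A^{\nu}V^{\lambda}(t,x,y)+m\,\mathcal E(\nu)\big[(1-\eta)V^{\lambda}(t,x,y)+1\big]
\]
over densities on $[a,b]$. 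This uses the structure $V^{\lambda}=\frac{x^{1-\eta}e^{u^{\lambda}}-1}{1-\eta}$: it gives $V^\lambda_{xx}<0$ and $(1-\eta)V^\lambda+1=x^{1-\eta}e^{u^\lambda}>0$, and it makes the maximizer independent of $x$ with exactly the moments in \eqref{policy_eval_improve}. Since $V^{\lambda}$ solves its own evaluation equation $V^{\lambda}_t+\mathcal A^{\lambda}V^{\lambda}+m\mathcal E(\lambda)[(1-\eta)V^{\lambda}+1]=0$ (Remark \ref{Rem1}), we obtain the \emph{subsolution inequality}
\[
V^{\lambda}_t+\mathcal A^{\hat\lambda}V^{\lambda}+m\mathcal E(\hat\lambda)\big[(1-\eta)V^{\lambda}+1\big]\;\ge\;0,\qquad V^{\lambda}(T,\cdot)=U .
\]

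Next we compare with $V^{\hat\lambda}$, which solves the same linear equation with equality. Fix $(t,x,y)$ and run the $\hat\lambda$-controlled state $(X^{\hat\lambda}_s,y_s)$ from it. Set $M^{\hat\lambda}_s=\int_t^s m(1-\eta)\mathcal E(\hat\lambda_u)\,du$ and apply It\^o's formula to $e^{M^{\hat\lambda}_s}V^{\lambda}(s,X^{\hat\lambda}_s,y_s)$, exactly as in the proof of Theorem \ref{Theo_value}. The drift equals $e^{M^{\hat\lambda}_s}\big(V^\lambda_t+\mathcal A^{\hat\lambda}V^\lambda+m(1-\eta)\mathcal E(\hat\lambda)V^\lambda\big)$. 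By the inequality above, this is $\ge -m\,\mathcal E(\hat\lambda_s)e^{M^{\hat\lambda}_s}$.

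We then localize with the stopping times $\tau_n$ of Theorem \ref{Theo_value}. The stochastic integrals are true martingales up to $\tau_n$, by the same boundedness of $u^\lambda$, $u^\lambda_y$, $\sigma$, $\delta$ on $[0,T]\times[-n,n]$ and the compact support of $\hat\lambda$. This yields
\[
V^{\lambda}(t,x,y)\le \mathbb E\Big[e^{M^{\hat\lambda}_{\tau_n}}V^{\lambda}(\tau_n,X^{\hat\lambda}_{\tau_n},y_{\tau_n})+\int_t^{\tau_n}m\,\mathcal E(\hat\lambda_s)e^{M^{\hat\lambda}_s}ds\Big].
\]
Letting $n\to\infty$ and using the terminal condition, the right side converges to the representation \eqref{eq:new-EU} of $J^{\hat\lambda}=V^{\hat\lambda}$, which gives \eqref{THeo5}.

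The passage to the limit needs uniform integrability. Here we use admissibility of $\hat\lambda$, which Corollary \ref{cor1} provides locally: $\hat\beta^2=m/(\eta\sigma^2)\le m/(\eta\sigma_*^2)$ is bounded, $\hat\alpha$ stays in a compact subinterval on $\{|y|\le n\}$, and the entropy is bounded. Condition (iv) of Definition \ref{def_1} together with H\"older's inequality (as in Theorem \ref{Theo_value}) then controls $e^{M^{\hat\lambda}}(X^{\hat\lambda})^{1-\eta}e^{u^\lambda}$ and the entropy integral.

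\emph{Main obstacle.} The delicate step is justifying that $V^{\lambda}$ is smooth enough for It\^o's formula, and that the evaluation PDE for $u^{\lambda}$ has a classical solution with a locally bounded gradient. Without this, neither the pointwise Hamiltonian maximization nor the localization goes through. I would get this regularity from the linear parabolic theory (Theorem \ref{CaPrt00-Theorem-Lin}) applied to the equation for $u^{\lambda}$. That equation is quasilinear only through $\tfrac12\delta^2 (u_y^{\lambda})^2$, which can be removed by a Cole--Hopf substitution $w=e^{u^\lambda}$, leaving a linear equation with bounded coefficients when the moments of $\lambda$ are bounded and continuous. The second subtlety is that the $\ge0$ sign in the subsolution inequality relies on the entropy weight $(1-\eta)V^\lambda+1=x^{1-\eta}e^{u^\lambda}$ being positive, so that the maximization over $\nu$ is a concave problem; this positivity is exactly what the homothetic ansatz guarantees.
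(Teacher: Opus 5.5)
Your proposal is correct and follows essentially the same route as the paper. The paper's key step is also that $\hat\lambda$ maximizes the entropy-regularized Hamiltonian evaluated at $V^{\lambda}$ (by the argument of Lemma~\ref{lem_2}), giving $V^{\lambda}_t+\mathcal{A}^{\hat\lambda}V^{\lambda}+m\,\mathcal{E}(\hat\lambda)\big[(1-\eta)V^{\lambda}+1\big]\ge 0$; it then applies It\^o's formula to $e^{M^{\hat\lambda}_s}V^{\lambda}(s,X^{\hat\lambda}_s,y_s)$, localizes with $\tau_n$, and lets $n\to\infty$. Your reading of \eqref{policy_eval_improve} with the old $u^{\lambda}_y$, and your keeping the entropy term inside the maximized Hamiltonian, are exactly what the paper's argument implicitly uses (its first displays omit the entropy, which only reappears in the inequality fed into It\^o's formula), so your write-up is, if anything, the more careful version.
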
 
		\begin{proof}
			Recall first that $V^{\lambda}(t,x,y;m)$ solves the following PDE:
			$
			 V^{\lambda}_t + \mathcal{A^{\lambda}} V^{\lambda} = 0,
			$
			 with terminal condition $V^{\lambda}(T,x,y;m)=\frac{x^{1-\varrho}-1}{1-\varrho} $. It follows that 
			 	\begin{align*}
			 	V^{\lambda}_\zs{t}(t,x,y;m)+\underset {\tilde{\lambda} \in \cH_[a,b]}{\sup}\mathcal{A^{\tilde{\lambda}}}V^{\lambda}\ge0,
			 \end{align*}
			Similarly to Lemma \ref{lem_2},
            the above supremum is attained at  the updated policy $\hat{\lambda}$ defined by \eqref{policy_eval_improve}. In other words,
		\begin{align}\label{policy_improve_1}
			V^{\lambda}_\zs{t}(t,x,y;m)+\mathcal{A^{\hat{\lambda}}}V^{\lambda}\ge0,
		\end{align}
		Now for the policy $\hat{\lambda}$, the corresponding value function is given by
		\begin{align}
			V^{\hat{\lambda}}(t,x,y;m)=\E \bigg[\int_{t}^{T}e^{\int_{t}^{s}m(1-\eta)\cE(\hat{\lambda}_u)\d u}m\cE(\hat{\lambda}_s)\d s+e^{\int_{t}^{T}m(1-\eta)\cE(\hat{\lambda}_u)\d u}U(X^{\hat{\lambda}}_T)| \cF_\zs{t}\bigg].
		\end{align}
		We define $M^{\hat{\lambda}}_t=e^{\int_{0}^{t}m(1-\eta)\cE(\hat{\lambda}_s)\d s}$. By Itô's formula we obtain 
		\begin{align}\label{lemm_ito_6.12}
			&e^{M^{\hat{\lambda}}_s}V^{\lambda}(s,X^{\hat{\lambda}}_s,y_s)=e^{M^{\hat{\lambda}}_0}V^{\lambda}(0,X^{\hat{\lambda}}_0,y_0)
			+\int_{0}^{s}\bigg\{ V^{\lambda}_u+\big[r+(\mu-r)Mean(\hat{\lambda}_u)\big]X_u^{\hat{\lambda}}V^{\lambda}_x+\varpi(y_u)V^{\lambda}_y\nonumber \\
			&+\frac{1}{2}\sigma^2(y_u)[Mean(\hat{\lambda}_u)^2+var(\hat{\lambda}_u)](X^{\hat{\lambda}}_u)^2V^{\lambda}_{xx}+\frac12\delta^2(y_u)V^{\lambda}_{yy}+\rho\delta(y_u)\sigma(y_u)Mean(\hat{\lambda}_u)X^{\hat{\lambda}}_uV^{\lambda}_{xy}\nonumber \\
			&+m(1-\eta)\cE(\hat{\lambda}_u)V^{\lambda}\bigg\}e^{M^{\hat{\lambda}}_u}\d u 
            + K_s,
		\end{align}
        where $K$ is a local martingale. 
		Let $\{\tau_n\}$ be a sequence of stopping times $\tau_n=\inf\{s\ge t\ : (X^{\hat{\lambda}}_t)^{1-\eta}\vee|y_t|\vee e^{M^{\hat{\lambda}}_t}\ge n\}\wedge T$. Clearly $\lim_{n\to \infty}\tau_n=T$. Then by \eqref{policy_improve_1} we have

	\begin{align}\label{al_}
			V^{\lambda}_\zs{t}(t,x,y;m)+\mathcal{A^{\hat{\lambda}}}V^{\lambda}_t
			&+m\cE(\hat{\lambda}_s)[(1-\varrho)V^{\lambda}]\ge -m\cE(\hat{\lambda}_s),
		\end{align}
		taking the expectation of \eqref{lemm_ito_6.12}, using the inequality \eqref{al_} and taking $s=T\wedge\tau_n$ and sending $n\to \infty$  we obtain
		\begin{align*}
			e^{M^{\hat{\lambda}}_0}V^{\lambda}(0,X^{\hat{\lambda}}_0,y_0)&\le \mathbb{E}\bigg[\int_{0}^{T}e^{M^{\hat{\lambda}}_u}m\cE(\hat{\lambda}_u)\d u+e^{M^{\hat{\lambda}}_T}U(X^{\hat{\lambda}}_T)| \cF_\zs{t}\bigg]
		\end{align*}
        and $			V^{\lambda}(0,x,y)\le V^{\hat{\lambda}}(0,x,y),$ 
		{which shows the inequality for $t=0$}. The case $t\in(0,T]$ can be done similarly. 		\end{proof}
        
        \noindent {
        Although Theorem \ref{Theo:5} holds true for arbitrary admissible policies, the convergence towards
the optimal policy cannot be expected without additional structure.
In particular, the choice of the parametric form of the policy plays a crucial
role.} Our next theorem shows that when policies are parametrized according to the same
functional form as the optimal exploratory policy—even with unknown or
misspecified parameters—successive improvements lead to convergence towards the
optimum.
\begin{theorem}\label{convergence}
 Let $\lambda^{(0)}(\pi;t,y;m)=\cN\left(\frac{\kappa}{\sigma^2(y)},\frac{\chi^2}{\sigma^2(y)}\right)_{\zs{[a,b]}}$ with some real number $\kappa$, $\chi>0$. Define the sequence of feedback policies $(\lambda^{(n)}(\pi;t,y;m))$ updated by the policy improvement scheme \eqref{policy_eval_improve}, i.e., 
		\begin{align}
			\lambda^{(n)}(\pi;t,y;m)=\cN\bigg(\pi\bigg|\dfrac{\mu-r+\rho\gamma(y)\sigma(y)u^{\lambda^{(n-1)}}_\zs{y}(t,y)}{\varrho\sigma^2(y)} , \frac{m}{\varrho \sigma^2(y)}\bigg)_\zs{[a,b]}, \quad n=1,2\ldots,
		\end{align}
		where $u^{\lambda^{(n-1)}}$ is associated to the value function $V^{\lambda^{(n-1)}}$ corresponding to the policy $\lambda^{(n-1)} $ defined by 
		\begin{align*}\label{Value_function_lambda_n-1}
				V^{\lambda^{(n-1)}}(t,x,y;m)=\E \bigg[\int_{t}^{T}e^{\int_{t}^{s}m(1-\eta)\cE(\lambda^{(n-1)}_u)\d u}m\cE(\lambda^{(n-1)}_s)\d s+e^{\int_{t}^{T}m(1-\eta)\cE(\lambda^{(n-1)}_u)\d u}U(X^{\lambda^{(n-1)}}_T)| \cF_\zs{t}\bigg].	
		\end{align*}
    	 Then,
	$$
		\lim_{n\to \infty} \lambda^{(n)}(\cdot,t,y,m)=\lambda^*(\cdot,t,y,m).
$$
    \end{theorem}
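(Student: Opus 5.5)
Since each policy $\lambda^{(n)}$ has fixed variance $m/(\eta\sigma^2(y))$ and mean parameter determined by $u^{\lambda^{(n-1)}}_y$, convergence of the policies reduces to convergence of $u^{\lambda^{(n)}}_y$ to $u_y$, locally uniformly on $[0,T]\times\bbr$, where $u$ solves \eqref{sol_u}. Indeed, the truncated Gaussian density \eqref{optimal-density} depends continuously on $(\alpha,\beta)$ with $\beta>0$ fixed, so pointwise convergence of the means $\alpha^{(n)}\to\alpha^*$ gives pointwise (indeed $L^1([a,b])$, by Scheffé) convergence $\lambda^{(n)}\to\lambda^*$. The plan is therefore: (1) monotonicity and boundedness of the value iterates; (2) compactness of $\{u^{\lambda^{(n)}}\}$ in $C^{1,2}_{loc}$; (3) identification of every limit point as a solution of \eqref{sol_u}; (4) uniqueness via the verification class.

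For step (1), the ansatz $V^{\lambda^{(n)}}=\frac{x^{1-\eta}e^{u^{(n)}}-1}{1-\eta}$ (with $u^{(n)}:=u^{\lambda^{(n)}}$) holds for every $n$, because each $\lambda^{(n)}$ is independent of $x$. This follows by induction: $\lambda^{(0)}$ is $x$-free, and $\lambda^{(n)}$ depends only on $u^{(n-1)}_y$. Moreover, $u^{(n)}$ solves the linear policy-evaluation PDE displayed before \eqref{policy_eval_improve}, with coefficients given by the moments and entropy of $\lambda^{(n)}$. Theorem~\ref{Theo:5} gives $V^{\lambda^{(n+1)}}\ge V^{\lambda^{(n)}}$, hence $u^{(n)}$ is nondecreasing in $n$. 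Theorem~\ref{Theo_value} gives $V^{\lambda^{(n)}}\le V$, hence $u^{(n)}\le u$. So $u^{(n)}\uparrow \bar u\le u$ pointwise, with $u^{(0)}\le u^{(n)}\le u$ uniformly bounded on compacts. Corollary~\ref{cor1} applied on the localized sets $\{|y|\le N\}$ (as in the proof of Theorem~\ref{Theo_value}) ensures each $\lambda^{(n)}$ is admissible, so these theorems apply.

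Step (2) is the main obstacle. We need a priori Hölder estimates for $u^{(n)}_y$ that are uniform in $n$. The equation for $u^{(n)}$ is quasilinear in $u^{(n)}$, with quadratic gradient term $\frac12\delta^2 (u^{(n)}_y)^2$ and a coupling through $u^{(n)}_y\,\mathrm{Mean}(\lambda^{(n)})$. Its lower-order coefficients, namely the moments of $\lambda^{(n)}$ and $\cE(\lambda^{(n)})$, are bounded uniformly in $n$: the support $[a,b]$ is compact, and the entropy is bounded by the argument in Corollary~\ref{cor1}, since $\beta$ is fixed and $A\varphi(A)-B\varphi(B)$ is controlled. These coefficients also satisfy structural growth conditions of the form $\C_4)$ with constants independent of $n$. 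I would therefore invoke the a priori estimates underlying Theorem~\ref{CaPrt00_Theorem} (Ladyzhenskaya--Solonnikov--Ural'tseva): first a uniform gradient bound $\sup_{\Gamma_N}|u^{(n)}_y|\le C_N$, then a uniform $\cH^{1+\varepsilon,(1+\varepsilon)/2}(\Gamma_N)$ bound. That Hölder bound makes the coefficients involving $u^{(n-1)}_y$ in the equation for $u^{(n)}$ uniformly Hölder, and the Schauder estimate (Theorem~\ref{CaPrt00-Theorem-Lin}) then yields a uniform $\cH^{2+\varepsilon,1+\varepsilon/2}(\Gamma_N)$ bound. The delicate point is that the coefficients depend on the previous iterate, so the bootstrap must be carried out in this induction-free form, using only uniform $L^\infty$ bounds and never assuming regularity of the previous iterate beyond what has already been uniformly bounded.

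By Arzelà--Ascoli and a diagonal argument, $u^{(n)}\to\bar u$ in $C^{1,2}$ on every $\Gamma_N$, so $u^{(n)}_y\to\bar u_y$ locally uniformly. Along the sequence, $u^{(n-1)}_y\to\bar u_y$ as well, and hence $\lambda^{(n)}\to\bar\lambda:=\cN\big(\frac{\mu-r+\rho\delta\sigma\bar u_y}{\eta\sigma^2},\frac{m}{\eta\sigma^2}\big)_{[a,b]}$. Passing to the limit in the evaluation PDE shows that $\bar u$ solves the evaluation PDE for $\bar\lambda$. Since $\bar\lambda$ is exactly the Hamiltonian maximizer of Lemma~\ref{lem_2} built from $\bar u$, this equation coincides with \eqref{sol_u}. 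Finally, $\bar u$ is bounded and locally $\cH^{2+\varepsilon,1+\varepsilon/2}$, so it lies in the verification class; the uniqueness remark following Theorem~\ref{Theo_value} gives $\bar u=u$. Hence the whole sequence (not only a subsequence) satisfies $u^{(n)}_y\to u_y$, so $\alpha^{(n)}\to\alpha^*$ and $\lambda^{(n)}\to\lambda^*$.
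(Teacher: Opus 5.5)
Your skeleton is the paper's: uniform H\"older bounds for $u^{\lambda^{(n)}}$ (your $u^{(n)}$) on each $\Gamma_N$, Arzel\`a--Ascoli with a diagonal extraction, passage to the limit in \eqref{related_PDE_n}, and uniqueness. Your step (1) improves on it. Equation \eqref{related_PDE_n} couples $u^{(n)}$ with $u^{(n-1)}_y$, and the paper passes to the limit along a subsequence $n_k$ without showing that $u^{(n_k-1)}$ has the same limit as $u^{(n_k)}$. Your monotone sandwich $u^{(0)}\le u^{(n)}\uparrow\bar u\le u$ (Theorems~\ref{Theo:5} and~\ref{Theo_value}) fixes the limit of the whole sequence. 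Hence $u^{(n-1)}_y$ and $u^{(n)}_y$ converge together, and the limit equation really is \eqref{sol_u}.

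The gap is at the start of your bootstrap in step (2). You assert that the zeroth-order coefficients of the $u^{(n)}$-equation are bounded uniformly in $n$, including $\cE(\lambda^{(n)})$ ``by the argument in Corollary~\ref{cor1}''.

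\begin{itemize}
\item The moments are indeed bounded, since the support is $[a,b]$.
\item The entropy is not. Corollary~\ref{cor1} uses hypothesis (b), $A<0<B$, which fails once the untruncated mean $\alpha^{(n)}$ leaves $[a,b]$, i.e.\ exactly when the constraint binds.
\item With $\beta^2=m/(\eta\sigma^2)$ fixed, $\cE(\lambda^{(n)})\le\ln(b-a)$, but there is no lower bound. As $d:=\mathrm{dist}(\alpha^{(n)},[a,b])\to\infty$ the density becomes essentially exponential at the nearest endpoint, and $\cE(\lambda^{(n)})\approx 1+\ln(\beta^2/d)\to-\infty$.
\item $\alpha^{(n)}=\frac{\mu-r+\rho\delta\sigma u^{(n-1)}_y}{\eta\sigma^2}$ is affine in $u^{(n-1)}_y$, with slope bounded above and away from zero by $(\gr{SB})$ and Proposition~\ref{Prop_4}(ii).
\end{itemize}

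So a uniform bound on the $\C_4)$ structure constant for the $n$-th equation is equivalent to a uniform bound on $u^{(n-1)}_y$ on a larger cylinder. That is precisely what the gradient estimate is supposed to produce, so the argument is circular. A naive induction also loses room at each step, because the local estimate on $\Gamma_N$ needs the coefficients on a bigger cylinder. Your appeal to Corollary~\ref{cor1} for admissibility of $\lambda^{(n)}$ has the same defect, though there local boundedness of $u^{(n-1)}_y$ on $\{|y|\le N\}$ already bounds the entropy. The paper's own proof does not help here: it simply asserts that the $\cH^{2+\varepsilon,1+\varepsilon/2}(\Gamma_N)$ bounds are uniform in $n$.

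What is missing is an estimate that closes the recursion. For example, it would suffice to show $G_n\le K\big(1+\ln(1+G_{n-1})\big)$ for $G_n:=\sup|u^{(n)}_y|$, with $K$ independent of $n$. The merely logarithmic dependence makes this plausible. One route is $L^p$ estimates for the linear equation
$w_t+\frac12\delta^2w_{yy}+k^{(n)}w_y+f^{(n)}w=0$
satisfied by $w^{(n)}=e^{u^{(n)}}$. Here the drift $k^{(n)}$ is uniformly bounded on compacts, the potential $f^{(n)}$ is bounded above, and the $L^p$ norm of $f^{(n)}$ grows only like $\ln(1+G_{n-1})$. This still has to be proved, with global or suitably nested local control of the unbounded drift $\varpi$. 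Alternatively, you could find a direct a priori bound on $\alpha^{(n)}$. Until one of these is done, compactness in step (2), and hence the convergence $\lambda^{(n)}\to\lambda^*$, is not established.
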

		\begin{proof}
		Observe first that $V^{\lambda^0}$ solves the following PDE 
		\begin{align}
			V^{\lambda^{(0)}}_\zs{t}+\mathcal{A}^{\lambda^{(0)}}V^{\lambda^{(0)}}
			+m\cE(\lambda^{(0)})[(1-\varrho)V^{\lambda^{(0)}}+1]=0.
		\end{align}
		Solving the above PDE with terminal condition $V^{\lambda^{(0)}}(T,x,y;m)=\frac{x^{1-\eta}-1}{1-\eta}$ and $\lambda^{(0)}(\pi;t,y;m)=\cN(\frac{\kappa}{\sigma^2(y)},\frac{\chi^2}{\sigma^2(y)})_\zs{[a,b]}$, we can show that $V^{\lambda^{(0)}}(t,x,y;m)=\frac{x^{1-\eta}e^{u^{\lambda^{(0)}}}-1}{1-\eta}$, where $u^{\lambda^0}$ satisfies the following PDE:  \begin{align}\label{u_lambda_0}
			u^{\lambda^{(0)}}_t&+\frac12\delta^2(y)(u^{\lambda^{(0)}}_{yy}+u^{\lambda^{(0)^2}}_y)+\varpi(y)u^{\lambda^{(0)}}_y+(1-\eta)\rho\delta(y)(\frac{\kappa}{\sigma(y)}+\chi\frac{\varphi(A)-\varphi(B)}{Z})u^{\lambda^{(0)}}_y+r(1-\eta)\nonumber \\
			&+(\mu-r)(\frac{\kappa}{\sigma^2(y)}+\frac{\chi}{\sigma^2(y)}\frac{\vphi(A)-\vphi(B)}{Z})(1-\eta)\nonumber \\
			&-\frac{\eta(1-\eta)}{2}(\frac{\kappa^2}{\sigma^2(y)}+\chi^2-\chi^2\frac{B\vphi(B)-A\vphi(A)}{Z}+2\frac{\kappa\chi}{\sigma^3(y)}\frac{\vphi(A)-\vphi(B)}{Z})\nonumber\\
			&+m(1-\eta)[\ln \sqrt{2\pi e}\frac{\chi}{\sigma(y)}Z+\frac{A\vphi(A)-B\vphi(B)}{2Z}]=0,\quad u^{\lambda^0}(T,y)=0,
		\end{align}
		where $A=\dfrac{a-\frac{\kappa}{\sigma^2(y)}}{\frac{\chi}{\sigma(y)}},\, B=\dfrac{b-\frac{\kappa}{\sigma^2(y)}}{\frac{\chi}{\sigma(y)}}$, $Z=\Phi(A)-\Phi(B)$. Now we need to study the PDE \eqref{u_lambda_0}. As before we set $l(y,t)=u(T-t,y)$, we obtain the following PDE:
		\begin{align}\label{l_lambda_0}
			\begin{cases}
				l_t-\frac{\delta^2(y)}{2}l_{yy}+G(y,t,l,p)=0\\
				l(y,0)=0,
			\end{cases}
		\end{align}
		where 	
        \begin{align}G(y,t,l,l_y)&=-\frac12\delta^2(y)l^2_y-(\varpi(y)l_y+(1-\eta)\rho\delta(y)(\frac{\kappa}{\sigma(y)}+\chi\frac{\varphi(A)-\varphi(B)}{Z}))l_y-r\nonumber\\
        &-(\mu-r)(\frac{\kappa}{\sigma^2(y)}+\frac{\chi}{\sigma^2(y)}\frac{\vphi(A)-\vphi(B)}{Z})(1-\eta)
		+\frac{\eta(1-\eta)}{2}(\frac{\kappa^2}{\sigma^2(y)}+\nonumber\\
        &\chi^2-\chi^2\frac{B\vphi(B)-A\vphi(A)}{Z}+2\frac{\kappa\chi}{\sigma^3(y)}\frac{\vphi(A)-\vphi(B)}{Z})\nonumber\\
        &-m(1-\eta)[\ln \sqrt{2\pi e}\frac{\chi}{\sigma(y)}Z+\frac{A\vphi(A)-B\vphi(B)}{2Z}]. \end{align}
        We have that $C(y, t, l, p)=G(y,t,l,p)-\delta(y)\dot{\delta}(y)p$ 
        and $C(y,t,l,0)=G(y,t,l,0)$. Since $G$ is bounded, it follows that 
			\(
		g_* = \sup_{0 \leq t \leq T} \sup_{y \in \bbr } |G(y,t,l,0)| < \infty.
		\) This implies that Condition \( \C_2) \) in Theorem \ref{CaPrt00_Theorem} is satisfied with \( \psi = g_* \), and all other conditions of the theorem are also fulfilled. Hence, by Theorem \ref{CaPrt00_Theorem}, Then \( l(y,t) \) (or \( u^{\lambda^{(0)}} \)) exists and unique. Moreover,    \( l(y,t) \), and its first and second derivatives with respect to \( y \), are bounded. Therefore, the next Gaussian truncated policy is fully defined and given by 
		\begin{align}
			\lambda^{(1)}(\pi;t,y;m)=\cN\bigg(\pi\bigg|\dfrac{\mu-r+\rho\gamma(y)\sigma(y)u^{\lambda^{(0)}}_\zs{y}(t,y)}{\varrho\sigma^2(y)} , \frac{m}{\varrho \sigma^2(y)}\bigg)_\zs{[a,b]},
		\end{align}
		$V^{\lambda^{(1)}}(t,x,y;m)\ge V^{\lambda^{(0)}}(t,x,y;m)$ where $V^{\lambda^{(1)}}$ be the value function corresponding to this new policy $\lambda^{(1)}(\pi,t,y;m)$. Again $V^{\lambda^{(1)}}(t,x,y;m)$ solves the following PDE 
		\begin{align}
			V^{\lambda^{(1)}}_\zs{t}+\mathcal{A}V^{\lambda^{(1)}}
			+m\cE(\lambda^{(1)})[(1-\varrho)V^{\lambda^{(1)}}+1]=0.
		\end{align}
		Direct computations related to the distribution of $\lambda^{(1)}$ lead to the following PDE 
		 \begin{align}\label{PDE_V_lamba_1}
			&V^{\lambda^{(1)}}_t+rxV^{\lambda^{(1)}}_x+(\mu-r)xV^{\lambda^{(1)}}_x(\alpha_1+\beta\frac{\phi(\bar{A}(x,y;m))-\phi(\bar{B}(x,y;m))}{Z})+\frac12\sigma^2(y)x^2V^{\lambda^{(1)}}_{xx}(\alpha_1^2+\beta^2\nonumber \\
			&-\beta^2\frac{\phi(\bar{B}(x,y;m))-\bar{A}(x,y;m)\phi(\bar{A}(x,y;m))}{Z_{a,b}(x,y;m)}
			+2\alpha\beta\frac{\phi(\bar{A}(x,y;m))-\phi(\bar{B}(x,y;m))}{Z_{a,b}(x,y;m)})+\varpi(y)V^{\lambda^{(1)}}_y\nonumber \\
			&+\frac12\delta^2(y)V^{\lambda^{(1)}}_{yy}+\rho\delta(y)\sigma(y)xV^{\lambda^{(1)}}_{xy}(\alpha_1+\beta\frac{\phi(\bar{A}(x,y;m))-\phi(\bar{B}(x,y;m))}{Z_{a,b}(x,y;m)})\nonumber \\
			&+\bigg(m\ln(\sqrt{2\pi\beta^2}Z_{a,b}(x,y;m))+m(\frac{\bar{A}(x,y;m)\phi(A)-\bar{B}(x,y;m)\phi(\bar{B}(x,y;m))}{Z_{a,b}(x,y;m)})\bigg)((1-\eta)V^{\lambda^{(1)}}+1)=0,
		\end{align}
		where $Z_{a,b}(x,y;m)=\Phi(\bar{B}(x,y;m))-\Phi(\bar{A}(x,y;m))$, $\bar{B}(x,y;m)=\frac{b-\alpha_1}{\beta}$, $\bar{A}(x,y;m)=\frac{a-\alpha_1}{\beta}$,
		with $V^{\lambda^{(1)}}(T,x,y)=\frac{x^{1-\eta}-1}{1-\eta}$, where $\alpha_1=\dfrac{\mu-r+\rho\gamma(y)\sigma(y)u^{\lambda^{(0)}}_\zs{y}(t,y)}{\varrho\sigma^2(y)}$ and $\beta^2=\frac{m}{\eta\sigma^2(y)}$. We find a solution of PDE \eqref{PDE_V_lamba_1} of the form  $V^{\lambda^{(1)}}(t,x,y;m)=\frac{x^{1-\eta}e^{u^{\lambda^{(1)}}}-1}{1-\eta}$, where $u^{\lambda^{(1)}}$ sattisfies the following PDE:

		\begin{align}\label{related_PDE}
			u^{\lambda^{(1)}}_t&+r(1-\eta)+\varpi(y)u^{\lambda^{(1)}}_y+\frac12\delta^2(y)(u^{\lambda^{(1)^2}}_y+u^{\lambda^{(1)}}_{yy})
			+(1-\eta)\frac{m}{2}\ln(\frac{2\pi m}{\eta\sigma^2(y)})+(1-\eta)m \ln Z_{a,b}(x,y;m) \nonumber\\
            &+\frac{1-\eta}{2\eta}\bigg(\frac{(\mu-r)^2}{\eta\sigma^2(y)}+\frac{2\rho\delta(y)(\mu-r)u^{\lambda^{(1)}}_y}{\eta\sigma(y)}-\rho^2\delta^2(y)u^{\lambda^{(0)^2}}_y
			+2\rho^2\delta^2(y)u^{\lambda^{(0)}}_yu^{\lambda^{(1)}}_y\bigg)\notag\\
			&+{\sqrt{\frac{m}{\eta\sigma^2(y)}}}(\frac{\phi(\bar{A}(x,y;m))-\phi(\bar{B}(x,y;m))}{Z_{a,b}(x,y;m)})(1-\eta)\rho\delta(y)\sigma(y)(u^{\lambda^{(1)}}_y-u^{\lambda^{(0)}}_y)=0,
		\end{align}
with terminal condition $\,\, u^{\lambda^{(1)}}(T,y)=0.$ Using the same arguments as in the proof of Theorem \ref{thm:HJBEQ}, we have $G(y,t,l,p)=-\frac12\delta^2(y)l_y^2-(\varpi(y)l_y+\frac{1-\eta}{\eta})\frac{\rho\delta(y)(\mu-r)}{\eta\sigma(y)}l_y+2\rho^2\delta^2(y)u_y^{\lambda^{(0)}}l_y)-\frac{1-\eta}{2}m\ln(\frac{2\pi m}{\eta\sigma^2(y)})-\frac{1-\eta}{2\eta}\frac{(\mu-r)^2}{\eta\sigma^2(y)}-r(1-\eta)+\frac{(1-\eta)}{2\eta}\rho^2\delta^2(y)(u_y^{\lambda^{(0)}})^2-(1-\eta)m\ln Z_{a,b}(x,y;m)-\sqrt{\frac{m}{\eta}}(\frac{\vphi(\bar{A}(x,y;m))-\vphi(\bar{B}(x,y;m))}{Z_{a,b}(x,y;m)})(1-\eta)\rho\delta(y)$. Moreover, $C(y, t, l, p)=G(y,t,l,p)-\delta(y)\dot{\delta}(y)p$ and $C(y,t,l,0)=G(y,t,l,0)$. Using the fact that $G$ is bounded, one can show that \eqref{related_PDE} admits a unique solution \(u^{\lambda^{(1)}}\), which is bounded along with its first and second derivatives with respect to $y$.

 The next-step gaussian truncated policy is given by 
\begin{align}
	\lambda^{(2)}(\pi;t,y;m)=\cN\bigg(\pi\bigg|\dfrac{\mu-r+\rho\gamma(y)\sigma(y)u^{\lambda^{(1)}}_\zs{y}(t,y)}{\varrho\sigma^2(y)} , \frac{m}{\varrho \sigma^2(y)}\bigg)_\zs{[a,b]},
\end{align}
It is clear that
$V^{\lambda^{(2)}}(t,x,y;m)\ge V^{\lambda^{(1)}}(t,x,y;m)$ where $V^{\lambda^{(2)}}$ be the value function corresponding to this new policy and $V^{\lambda^{(2)}}(t,x,y;m)=\frac{x^{1-\eta}e^{u^{\lambda^{(2)}}}-1}{1-\eta}$, where $u^{\lambda^{(2)}}$ satisfies the following PDE: 

\begin{align}\label{related_PDE_2}
u^{\lambda^{(2)}}_t&+r(1-\eta)+\varpi(y)u^{\lambda^{(2)}}_y+\frac12\delta^2(y)(u^{\lambda^{(2)^2}}_y+u^{\lambda^{(2)}}_{yy})+(1-\eta)\frac{m}{2}\ln(\frac{2\pi m}{\eta\sigma^2(y)})+(1-\eta)m \ln Z_{a,b}(x,y;m)\notag\\
&+\frac{1-\eta}{2\eta}\bigg(\frac{(\mu-r)^2}{\eta\sigma^2(y)}+\frac{2\rho\delta(y)(\mu-r)u^{\lambda^{(2)}}_y}{\eta\sigma(y)}-\rho^2\delta^2(y)u^{\lambda^{(1)^2}}_y+2\rho^2\delta^2(y)u^{\lambda^{(1)}}_yu^{\lambda^{(2)}}_y\bigg)\nonumber\\
&+\beta(\frac{\phi(\bar{A}(x,y;m))-\phi(\bar{B}(x,y;m))}{Z_{a,b}(x,y;m)})(1-\eta)\rho\delta(y)\sigma(y)(u^{\lambda^{(2)}}_y-u^{\lambda^{(1)}}_y)=0,\,\, u^{\lambda^{(2)}}(T,y)=0.
\end{align}
With the same arguments we have that $u^{\lambda^{(2)}}$ is a unique solution of PDE \eqref{related_PDE_2} with $u^{\lambda^{(2)}}_y$ bounded. We can continue to update the policy to the step $n$ and we obtain that $u^{\lambda^{(n)}}$ satisfy the following PDE:
\begin{align}\label{related_PDE_n}
u^{\lambda^{(n)}}_t&+r(1-\eta)+\varpi(y)u^{\lambda^{(n)}}_y+\frac12\delta^2(y)(u^{\lambda^{(n)^2}}_y+u^{\lambda^{(n)}}_{yy})\notag\\
&+\frac{1-\eta}{2\eta}\bigg(\frac{(\mu-r)^2}{\eta\sigma^2(y)}+\frac{2\rho\delta(y)(\mu-r)u^{\lambda^{(n)}}_y}{\eta\sigma(y)}-\rho^2\delta^2(y)u^{\lambda^{{(n-1)}^2}}_y+2\rho^2\delta^2(y)u^{\lambda^{(n-1)}}_yu^{\lambda^{(n)}}_y\bigg)\notag\\
&
+(1-\eta)\frac{m}{2}\ln(\frac{2\pi m}{\eta\sigma^2(y)})+(1-\eta)m \ln Z_{a,b}(x,y;m) \nonumber\\
&+\beta(\frac{\phi(\bar{A}(x,y;m))-\phi(\bar{B}(x,y;m))}{Z_{a,b}(x,y;m)})(1-\eta)\rho\delta(y)\sigma(y)(u^{\lambda^{(n)}}_y-u^{\lambda^{(n-1)}}_y)=0,\notag\\ &u^{\lambda^{(n)}}(T,y)=0.
\end{align}
With the same arguments we have that $u^{\lambda^{(n)}}$ is the unique solution of PDE \eqref{related_PDE_n} and $u^{\lambda^{(n)}}$ is uniformly bounded, i.e. 
\(
\underset{\Gamma_N}{\sup }|u^{\lambda^{(n)}}(t,y)|\le M
\) for some positive constant $M$, for any $N>1$ and $u^{\lambda^{(n)}}$ belongs to $\cH^{2+\epsilon,1+\epsilon/2}(\Gamma_N)$, and we have the estimates 
$\|u^{\lambda^{(n)}}\|_\zs{\epsilon,\epsilon/2,\Gamma_N}<\infty.$ Now, by employing the usual diagonal process (see \ref{Diag_process}), we can extract from $u^{\lambda^{(n)}}$ a subsequence $u^{\lambda^{(n_k)}}$ that converges together with the derivatives $u^{\lambda{(n_k)}}_y, u^{\lambda{(n_k)}}_{yy}, u^{\lambda_{(n_k)}}_{t}$ at each point of $\bbr\times [0, T]$ to some function $u$ and its corresponding derivatives. It is clear that $u(y,t)$ does not exceed $M$ and belongs to $\cH^{2+\epsilon,1+\epsilon/2}(\Gamma_N)$ in each finite cylinder $\Gamma_N$ and satisfies PDE \eqref{related_PDE_n} with the same terminal condition, i.e
\begin{align}\label{related_PDE_n0}
u_t&+r(1-\eta)+\varpi(y)u_y+\frac12\delta^2(y)(u^{2}_y+u_{yy})+\frac{1-\eta}{2\eta}\bigg(\frac{(\mu-r)^2}{\eta\sigma^2(y)}+\frac{2\rho\delta(y)(\mu-r)u_y}{\eta\sigma(y)}
+\rho^2\delta^2(y)u_y^2\bigg)\nonumber\\
&+(1-\eta)\frac{m}{2}\ln(\frac{2\pi m}{\eta\sigma^2(y)})+(1-\eta)m \ln Z_{a,b}(x,y;m),\quad
u(T,y)=0.
\end{align}
By unicity, we conclude that $u(t,y)=u^{\lambda^*}(t,y)$. \end{proof}
	
\begin{theorem}\label{thm:PG-BSDE}
	Let $\lambda^\theta$ be an admissible feedback policy and 
	$V^\psi(t,x,y;\lambda^\theta)$ be the (approximate) value function associated with
	$\lambda^\theta$. Define the policy gradient
	\[
	h(t,x,y;\lambda^\theta):=\frac{\partial}{\partial\theta}
	V^\psi(t,x,y;\lambda^\theta),
	\qquad (t,x,y)\in[0,T]\times\R_+\times\R ,
	\]
	and set
	\[
	V^\psi(t):=V^\psi\!\big(t,X_t^{\lambda^\theta},y_t;\lambda^\theta\big),
	\qquad 
	\cE(\lambda_t^\theta):=-\widetilde{\E}\!\left[\log\lambda^\theta(\pi_t\mid t,y_t)\,\Big|\,\cF_t\right],
	\]
    where \(\widetilde{\mathbb E}\) is the expectation
with respect to the enlarged probability space supporting the
state dynamics and the randomization \footnote{For any integrable measurable function \(g\),
\[
\widetilde{\mathbb E}\!\left[g(\pi_t)\mid \cF_t\right]
=
\int_a^b g(\pi)\,\lambda^\theta(\pi\mid t,y_t)\,d\pi.
\]}.
	Assume that for the policy $\lambda^\theta$, the wealth--value pair
	$(X^{\lambda^\theta},V^\psi)$ satisfies the BSDE
	\begin{equation}\label{eq:V-BSDE}
		\mathrm dV^\psi(t)
		=
		-\,m\big[(1-\eta)V^\psi(t)+1\big]\cE(\lambda_t^\theta)\,\mathrm dt
		+Z_t^\psi\cdot\mathrm d \gr{W}_t,
		\qquad V_T^\psi = U(X_T^{\lambda^\theta}),
	\end{equation}
	for some progressively measurable process $Z^\psi$. Assume moreover that all objects are
	differentiable in $\theta$, that differentiation and conditional expectation may be interchanged,
	and that
	\[
	\widetilde{\E}\!\left[\int_0^T
	\Big\|
	\frac{\partial}{\partial\theta}\log\lambda^\theta(\pi_t\mid t,y_t)\,Z_t^\psi
	\Big\|^2\,dt\right]<\infty.
	\]
	Then, $h(0,x,y;\lambda^\theta)$ admits the representation
	\begin{align}
		h(0,x,y;\lambda^\theta)
		=&\widetilde{\mathbb E}\Bigg[
		\int_0^T
		\frac{\partial}{\partial\theta}
		\log\lambda^\theta(\pi_t\mid t,y_t)\,
		\mathrm dV^\psi(t,X_t^{\lambda^\theta},y_t;\lambda^\theta)
\label{eq:h-final}\\
		&\qquad
		+\int_0^T
		m\big[(1-\eta)V^\psi(t,X_t^{\lambda^\theta},y_t;\lambda^\theta)+1\big]
		\frac{\partial}{\partial\theta}\cE(\lambda_t^\theta)\,\mathrm dt
		\nonumber\\
		&\qquad
		+\int_0^T
		m(1-\eta)
		\Big(
		\cE(\lambda_t^\theta)\,
		h(t,X_t^{\lambda^\theta},y_t;\lambda^\theta)
		\Big)\mathrm dt
		\Bigg].
		\nonumber
	\end{align}
\end{theorem}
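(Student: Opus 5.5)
We follow the policy-gradient argument of \cite{jia2022policyb}, adapted to the recursive generator. The starting point is the representation \eqref{Re-Sol}: for the policy $\lambda^\theta$ we write
\[
V^\psi(0,x,y;\lambda^\theta)=\widetilde{\E}\Big[\int_0^T m\big[(1-\eta)V^\psi(t)+1\big]\cE(\lambda^\theta_t)\,\d t+U(X_T^{\lambda^\theta})\Big],
\]
which follows from integrating the BSDE \eqref{eq:V-BSDE} on $[0,T]$ and taking expectations. The integrability assumption makes $\int Z^\psi\cdot\d\gr W$ a true martingale. The dependence on $\theta$ enters in three places: through the law of the sampled actions $\pi_t\sim\lambda^\theta(\cdot\mid t,y_t)$, which drive the wealth $X^{\lambda^\theta}$; through the entropy $\cE(\lambda^\theta_t)$; and through $V^\psi(t)$ inside the generator. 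We differentiate in $\theta$ and exchange derivative and expectation, as permitted by hypothesis. Each of these three sources will produce one of the three terms in \eqref{eq:h-final}.

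\textbf{Step 1 (generator terms).} Differentiating the integrand $m[(1-\eta)V^\psi(t)+1]\cE(\lambda_t^\theta)$ with the product rule gives two contributions. The first is $m[(1-\eta)V^\psi(t)+1]\partial_\theta\cE(\lambda^\theta_t)$, which is the second term of \eqref{eq:h-final}. The second is $m(1-\eta)\cE(\lambda^\theta_t)\partial_\theta V^\psi(t)$. For this one we identify $\partial_\theta V^\psi(t)$, evaluated along the path with the state frozen, with $h(t,X_t^{\lambda^\theta},y_t;\lambda^\theta)$, which gives the third term.

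\textbf{Step 2 (likelihood-ratio term).} The dependence through the action law is handled with the score-function identity. Conditionally on $\cF_t$, the increment $\d V^\psi(t)$ is computed under the randomized action $\pi_t$ drawn from $\lambda^\theta(\cdot\mid t,y_t)$. Differentiating
\[
\int_a^b(\cdots)\,\lambda^\theta(\pi\mid t,y_t)\,\d\pi
\]
in $\theta$ produces the weight $\partial_\theta\log\lambda^\theta(\pi_t\mid t,y_t)$ multiplying the realized increment $\d V^\psi(t)$. Summing these contributions over time, in a Riemann-sum / tower-property sense as in \cite[Thm.~5]{jia2022policyb}, yields
\[
\widetilde{\E}\Big[\int_0^T\partial_\theta\log\lambda^\theta(\pi_t\mid t,y_t)\,\d V^\psi(t)\Big].
\]
This expectation is well defined because the square-integrability assumption on $\partial_\theta\log\lambda^\theta\,Z^\psi$ controls the martingale part, while the drift part is finite by the bounds on $V^\psi$ and $\cE$. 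Finally we use $\widetilde{\E}[\partial_\theta\log\lambda^\theta(\pi_t\mid\cdot)\mid\cF_t]=0$, which follows from $\int_a^b\lambda^\theta\,\d\pi=1$. This identity removes any term that depends only on the state and not on the realized action. In particular, it lets us replace $\d V^\psi$ by the action-dependent increment without adding bias.

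\textbf{Main obstacle.} The delicate point is Step 2: making rigorous the passage from the infinitesimal score-function identity to the stochastic integral against $\d V^\psi$. The exploratory dynamics \eqref{exploratory_dynamics} average over actions, so the sampled $\pi_t$ appears only in the enlarged space, and $\d V^\psi$ must be understood under the sampled (not the averaged) dynamics. Our first attempt will use a time discretization $0=t_0<\dots<t_N=T$. On each interval we apply the one-step likelihood-ratio formula conditional on $\cF_{t_i}$ and control the error with the $L^2$ assumption. We then pass to the limit in $L^1$ using It\^o isometry. The self-referential third term, in which $h$ appears on both sides, requires no fixed-point argument: it is simply left as part of the representation.
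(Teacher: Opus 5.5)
Your argument is correct at the level of rigour of the statement, but it takes a different route from the paper.

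The paper works at the PDE level. It differentiates $\partial_t V+\mathcal{A}^{\lambda^\theta}V+m\cE(\lambda^\theta_t)[(1-\eta)V+1]=0$ in $\theta$, which gives a linear equation for $h$ with zero terminal value and three sources: $m[(1-\eta)V+1]\partial_\theta\cE$, $m(1-\eta)\cE\,h$, and $\int_a^b(V_t+\mathcal{A}^\pi V)\lambda^\theta\partial_\theta\log\lambda^\theta\,d\pi$. It then represents $h(0,x,y)$ by Duhamel/Feynman--Kac and reads the $d\pi$-integral as $\widetilde{\mathbb E}$ over the sampled action. Finally, It\^o's formula under the sampled dynamics trades $(V_t+\mathcal{A}^{\pi_t}V)\,dt$ for $dV^\psi-Z^\psi\cdot d\mathbf{W}$.

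You instead differentiate the probabilistic representation directly. The frozen-path product rule gives the two generator terms, and one-step score identities telescoped over a time grid give the path-law contribution.

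The paper's route is short and exact in continuous time. It presupposes, however, that the classical solution is differentiable in $\theta$ and that a Feynman--Kac formula holds for $h$. Your route never needs the equation for $h$ and mirrors the estimator $\nabla_\theta\log f\,\Delta V$ of Algorithm~\ref{alg:AC-compact}. In exchange, it needs a limit argument that the stated hypotheses do not cover.

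When you write Step 2 out, explain why summing one-step score terms captures the whole path-law derivative, even though earlier actions also shift the law of the state at $t_i$. This works only because $V$ is the value of $\lambda^\theta$ at the current parameter. Hence the residual $m[(1-\eta)V+1]\cE\,\Delta t+\Delta V_i$ is conditionally centred given the state, and the scores of earlier actions drop out. On a grid this centring holds only up to $o(\Delta t)$ per step. You must therefore control the $\theta$-derivatives of these errors, not just the errors themselves.

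You are right, and more explicit than the paper's statement, that $dV^\psi$ in the first term must be the sampled-dynamics increment. With the averaged BSDE increment, that term would have zero expectation.

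There are two minor slips:
\begin{itemize}
\item Square integrability of $\partial_\theta\log\lambda^\theta Z^\psi$ does not make $\int Z^\psi\cdot d\mathbf{W}$ a martingale. You need that property for $Z^\psi$ itself, or you can take the integrated identity as the definition of the value.
\item Under the sampled dynamics the drift of $V^\psi$ is $V_t+\mathcal{A}^{\pi_t}V$. This involves derivatives of $V$, not only $V$ and $\cE$, so "bounds on $V^\psi$ and $\cE$" do not control it.
\end{itemize}
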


\medskip

\begin{proof}
For notational simplicity, we write
\[
V(t):=V^\psi(t,x,y;\lambda^\theta),
\qquad
h(t,x,y;\lambda^\theta):=\partial_\theta V^\psi(t,x,y;\lambda^\theta).
\]
Recall that $V$ solves
\begin{equation}
\frac{\partial V}{\partial t}+\mathcal A^{\lambda^\theta}V
+m\,\mathcal E(\lambda_t^\theta)\big[(1-\eta)V+1\big]=0,
\qquad
V(T,x,y)=U(x),
\label{eq:V-PDE}
\end{equation}
where
\[
\mathcal A^{\lambda^\theta}V
=
\int_a^b \mathcal A^\pi V\,\lambda^\theta(\pi\mid t,y)\,d\pi.
\]
We write \eqref{eq:V-PDE} in the following form:
\begin{equation}\label{eq:V-PDE2}
    \int_{a}^{b}\left(\frac{\partial V}{\partial t}+\cA^{\pi}V\right)\lambda^{\theta}(\pi|t,y)\,d\pi
    +m\,\mathcal E(\lambda_t^\theta)\big[(1-\eta)V+1\big]=0.
\end{equation}
Differentiating \eqref{eq:V-PDE2} with respect to $\theta$, while keeping $(t,x,y)$ fixed, gives
\begin{align}
h_t
&+\mathcal A^{\lambda^\theta}h+m\big[(1-\eta)V+1\big]\partial_\theta\mathcal E(\lambda_t^\theta)
+m(1-\eta)\mathcal E(\lambda_t^\theta)\,h\notag\\
&
+\int_a^b \left(V_t+
\mathcal A^\pi V\right)\,
\lambda^\theta(\pi\mid t,y)\,
\partial_\theta\log\lambda^\theta(\pi\mid t,y)\,d\pi
=0.
\label{eq:h-PDE}
\end{align}
We obtain, by using the Duhamel formula applied to \eqref{eq:h-PDE}, 
\begin{align}\label{eq:h-FK-2}
h(0,x,y;\lambda^\theta)
=
\mathbb E\Bigg[
&\int_0^T
\int_a^b
\left(V_t+A^\pi V^\psi(t,X_t^{\lambda^\theta},y_t;\lambda^\theta)\right)\,
\lambda^\theta(\pi\mid t,y_t)\,
\partial_\theta\log\lambda^\theta(\pi\mid t,y_t)\,d\pi\,dt
\nonumber\\
&\quad+
\int_0^T
m\big[(1-\eta)V^\psi(t)+1\big]\partial_\theta\mathcal E(\lambda_t^\theta)\,dt
\nonumber\\
&\quad+
\int_0^T
m(1-\eta)\mathcal E(\lambda_t^\theta)\,
h(t,X_t^{\lambda^\theta},y_t;\lambda^\theta)\,dt
\Bigg];
\end{align}
The above representation corresponds to a mild 
formulation associated with the Markov transition semigroup
induced by the controlled diffusion $(X^{\lambda^{\theta}}_t,y_t)$.
Such formulations are standard in the theory of semilinear evolution equations;
see, e.g., Chapter~6 of \cite{pazy1983applications}.
We may rewrite the first term in \eqref{eq:h-FK-2} as
\[
\widetilde{\mathbb E}\Bigg[
\int_0^T
\left(
V_t+\mathcal A^{\pi_t}V^\psi(t,X_t^{\lambda^\theta},y_t;\lambda^\theta)
\right)
\partial_\theta\log\lambda^\theta(\pi_t\mid t,y_t)\,dt
\Bigg].
\]
 Hence,
\eqref{eq:h-FK-2} becomes
\begin{align*}
h(0,x,y;\lambda^\theta)
=
\widetilde{\mathbb E}\Bigg[
&\int_0^T
\left(
V_t+\mathcal A^{\pi_t}V^\psi(t,X_t^{\lambda^\theta},y_t;\lambda^\theta)
\right)
\partial_\theta\log\lambda^\theta(\pi_t\mid t,y_t)\,dt
\\
&\quad+
\int_0^T
m\big[(1-\eta)V^\psi(t)+1\big]\partial_\theta\mathcal E(\lambda_t^\theta)\,dt
\\
&\quad+
\int_0^T
m(1-\eta)\mathcal E(\lambda_t^\theta)\,
h(t,X_t^{\lambda^\theta},y_t;\lambda^\theta)\,dt
\Bigg].
\end{align*}
Finally, applying It\^o's formula to
\(
V^\psi(t,X_t^{\lambda^\theta},y_t;\lambda^\theta)
\),
we obtain
\[
dV^\psi(t)
=
\Big(
\frac{\partial V^{\psi}(t)}{\partial t}
+
\mathcal A^{\pi_t}V^\psi(t,X_t^{\lambda^\theta},y_t;\lambda^\theta)
\Big)\,dt
+Z_t^\psi\cdot dW_t.
\]
Hence
\[
\left(
\frac{\partial V^{\psi}(t)}{\partial t}
+
\mathcal A^{\pi_t}V^\psi(t,X_t^{\lambda^\theta},y_t;\lambda^\theta)
\right)\,dt
=
dV_t^\psi-Z_t^\psi\cdot dW_t.
\]
Substituting this identity into the previous expression, we get
\begin{align*}
h(0,x,y;\lambda^\theta)
=
\widetilde{\mathbb E}\Bigg[
&\int_0^T
\partial_\theta\log\lambda^\theta(\pi_t\mid t,y_t)\,dV_t^\psi
-
\int_0^T
\partial_\theta\log\lambda^\theta(\pi_t\mid t,y_t)\,Z_t^\psi\cdot dW_t
\\
&\quad+
\int_0^T
m\big[(1-\eta)V^\psi(t)+1\big]\partial_\theta\mathcal E(\lambda_t^\theta)\,dt
\\
&\quad+
\int_0^T
m(1-\eta)\mathcal E(\lambda_t^\theta)\,
h(t,X_t^{\lambda^\theta},y_t;\lambda^\theta)\,dt
\Bigg].
\end{align*}
By the square-integrability assumption, the stochastic integral above is a martingale with zero expectation, and \eqref{eq:h-final} follows.
\end{proof}

Theorem~\ref{thm:PG-BSDE} provides a representation of the policy gradient associated with the parametric exploratory family $\{\lambda^\theta\}_{\theta\in \Theta}$. This representation immediately yields a first-order condition for any interior maximizer of the value function with respect to the policy parameter.

\begin{corollary}
Assume that there exists $\theta^\ast$ in the interior of the admissible parameter set $\Theta$ such that
\[
V^\psi(0,x,y;\lambda^{\theta^\ast})
=
\max_{\theta} V^\psi(0,x,y;\lambda^\theta).
\]
If the mapping $\theta \mapsto V^\psi(0,x,y;\lambda^\theta)$ is differentiable at $\theta^\ast$, then
\[
h(0,x,y;\lambda^{\theta^\ast})
=
\partial_\theta V^\psi(0,x,y;\lambda^{\theta^\ast})
=
0.
\]
\end{corollary}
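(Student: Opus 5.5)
This is the classical first-order necessary condition (Fermat's rule) applied to the real-valued map $g(\theta):=V^\psi(0,x,y;\lambda^\theta)$ on $\Theta\subset\bbr^n$, with $(x,y)$ fixed. Theorem~\ref{thm:PG-BSDE} is only used to identify $\partial_\theta g(\theta^\ast)$ with the quantity $h(0,x,y;\lambda^{\theta^\ast})$, which by definition is the derivative of $V^\psi$ in $\theta$. The first equality in the statement therefore holds by the definition of $h$, and only the vanishing of the gradient needs proof.

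The argument goes coordinatewise. Since $\theta^\ast$ lies in the interior of $\Theta$, there is $\varepsilon>0$ with $\theta^\ast+s e_i\in\Theta$ for all $|s|<\varepsilon$ and each canonical basis vector $e_i$. Consider the one-variable function $g_i(s):=g(\theta^\ast+se_i)$. By the maximality assumption, $g_i(s)\le g_i(0)$ for all such $s$.

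Differentiability of $g$ at $\theta^\ast$ gives differentiability of $g_i$ at $0$, with $g_i'(0)=\partial_{\theta_i}g(\theta^\ast)$. Now compare the two one-sided difference quotients. For $s>0$,
\[
\frac{g_i(s)-g_i(0)}{s}\le 0 .
\]
For $s<0$ the same quotient is $\ge 0$. Letting $s\to0^{\pm}$, both one-sided limits equal $g_i'(0)$, so $g_i'(0)=0$. Since this holds for every $i$, the full gradient is zero:
\[
h(0,x,y;\lambda^{\theta^\ast})=\partial_\theta V^\psi(0,x,y;\lambda^{\theta^\ast})=0 .
\]

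The only point requiring care is the identification step. The function $h$ defined in Theorem~\ref{thm:PG-BSDE} must be the honest derivative of the map $\theta\mapsto V^\psi(0,x,y;\lambda^\theta)$ considered in the corollary. This holds by definition, and the interchange assumptions of that theorem are not needed here.

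As a consequence, the representation \eqref{eq:h-final} evaluated at $\theta^\ast$ has zero expectation. This is the moment condition used for the actor update.
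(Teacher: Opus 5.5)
Your proof is correct and matches the paper: the paper gives no separate argument and treats the corollary as the immediate first-order (Fermat) condition at an interior maximizer, with $h$ equal to $\partial_\theta V^\psi$ by definition. Only your closing remark, that the right-hand side of \eqref{eq:h-final} vanishes at $\theta^\ast$, actually uses the differentiability/interchange and square-integrability hypotheses of Theorem~\ref{thm:PG-BSDE}; the corollary itself does not need them.
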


\begin{remark}
According to the verification result established above, the optimal exploratory policy $\lambda^\ast$ maximizes the value function $V$. Therefore, if $\lambda^\ast=\lambda^{\theta^\ast}$ belongs to the parametric family $\{\lambda^\theta\}_\theta$ and corresponds to an interior parameter $\theta^\ast$, the first-order condition of the corollary is satisfied by $\lambda^\ast$.
\end{remark}

	\section{Numerical Example} \label{sec: num}

\subsection{Model choice and learning preparation}

In this section, we implement our learning scheme on the model specified earlier. For our numerical example, we assume that
\[
\mu(t,y)=k_2\sqrt{y+\delta_*}\sqrt{y^2+\sigma_*^2}+r,\qquad
\sigma(y)=\sqrt{y^2+\sigma_*^2},\qquad
\delta(y)=k_1\sqrt{y+\delta_*},\qquad
\varpi(y)=c(y_0-y),
\]
where $\sigma_*>0$ ensures strict positivity of volatility and prevents degeneracy when $y$ approaches zero.  
The shift parameter $\delta_*>0$ ensures that the square-root term $\sqrt{y+\delta_*}$ remains well-defined for all admissible values of $y$.
We remark that such shifts are commonly introduced for numerical stability and to avoid singular behavior in stochastic volatility models. Note that the function
\(
\delta(y)=k_1\sqrt{y+\delta_*}
\)
describes the sensitivity of the asset drift with respect to the volatility factor.
The variance factor follows a mean-reverting dynamic driven by
$
\varpi(y)=c(y_0-y),
$
where $c>0$ is the speed of mean reversion and $y_0$ is the long-run mean level. The above  specification is consistent with stochastic volatility models of square-root or $3/2$-type commonly used in the literature (e.g., Heston-type models and their nonlinear extensions).

In this numerical example, we assume $a=0$ and $b=1$, corresponding to a no-short-selling and no-leverage constraint. 
This choice reflects realistic portfolio restrictions frequently encountered in practice and allows us to assess the impact of exploration under binding investment constraints.

For $m=0$ (i.e., without exploration), the problem reduces to the classical constrained investment problem.
In this case, the value function admits the affine representation
\begin{equation}\label{V0_form}
	V^{(0)}(t,x,y)
	=\frac{x^{1-\eta}e^{L(t)y+M(t)}-1}{1-\eta},
	\qquad \mbox{with}\qquad
	u^{(0)}(t,y)=L(t)\,y+M(t),
\end{equation}
where the functions $L(t)$ and $M(t)$ are obtained by
substituting \eqref{V0_form} into the PDE~\eqref{Expansion_0} with \(a=0\) and \(b=1\) yields the coupled ODEs
\[
\begin{cases}
	\frac{\partial L}{\partial t} + \dfrac{1}{2}k_1^2 L(t)^2 - c L(t)
	+ \dfrac{1 - \eta}{2\eta}
	\big[ k_2^2 + 2\rho k_1 k_2 L(t) + \rho^2 k_1^2 L(t)^2 \big]
	= 0, \\[1.2ex]
	\frac{\partial M}{\partial t} + r(1 - \eta)
	+ \dfrac{1}{2}k_1^2\delta_* L(t)^2
	+ c y_0 L(t)
	+ \dfrac{1-\eta}{2\eta}
	\big[ k_2^2\delta_* + 2\rho k_2 k_1 \delta_* L(t)
	+ \rho^2 k_1^2 \delta_* L(t)^2 \big]
	= 0,
\end{cases}
\]
with terminal conditions \(L(T)=0=M(T)\).

To learn $L,M$ in our reinforcement-learning implementation, we use the following parametrization
\begin{equation}\label{L_param}
	L^\Psi(t)=\frac{-\Psi_1+\Psi_1 e^{\Psi_0 (T-t)}}{-\Psi_2+\Psi_3 e^{\Psi_0 (T-t)}},
	\qquad
	M^\Psi(t)=\Psi_4 (T-t)+\Psi_5\log\!\Big(
	\frac{-\Psi_2+\Psi_3 e^{\Psi_0 (T-t)}}{-\Psi_2+\Psi_3}
	\Big),
\end{equation}
where \(\psi=(\Psi_0,\dots,\Psi_5)\in\bbr^6\).
This functional form is sufficiently flexible to capture the qualitative behaviour of the true solution while remaining low-dimensional. Figure \ref{fig:comparison} compares the exact ODE solutions \(L(t),M(t)\) with their parametric counterparts \(L^\Psi(t),M^\Psi(t)\).
The good alignment confirms that the representation~\eqref{L_param} is accurate enough for use within the entropy-regularized RL framework.

\begin{figure}[H]
    \centering

    \begin{subfigure}[b]{0.49\linewidth}
        \centering
        \includegraphics[width=0.9\linewidth,height=7cm]{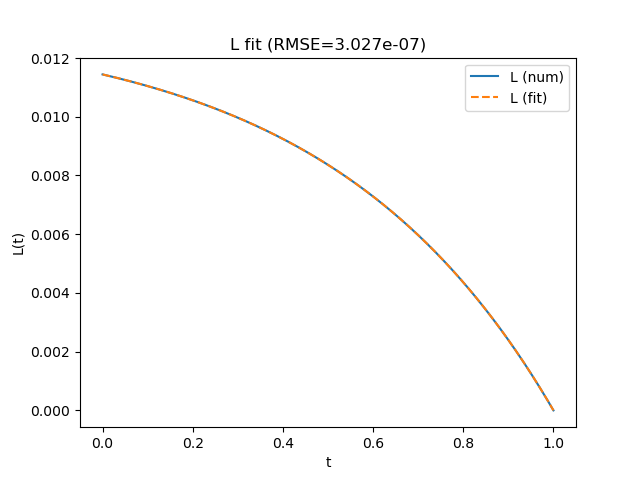}
        \caption{Exact vs.\ parametric \(L(t)\).}
        \label{fig:Lt}
    \end{subfigure}\hfill
    \begin{subfigure}[b]{0.49\linewidth}
        \centering
        \includegraphics[width=0.9\linewidth,height=7cm]{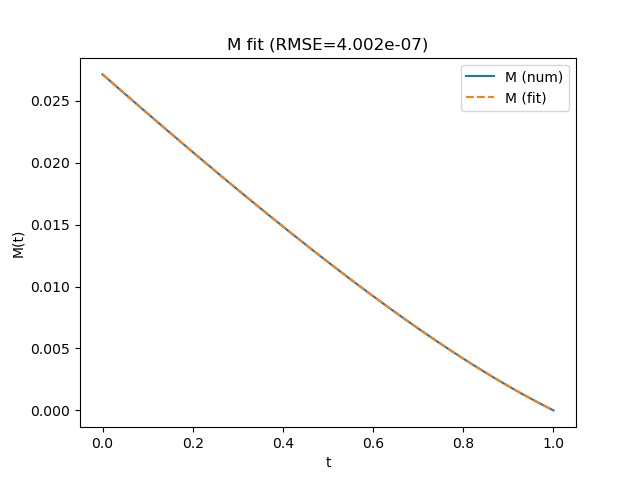}
        \caption{Exact vs.\ parametric \(M(t)\).}
        \label{fig:Mt}
    \end{subfigure}

    \caption{Comparison of the exact and parametric solutions.
    The parameters \(k_1, k_2, \delta_*, \sigma_*, c, y_0\) are given in Table \ref{tab:simulation_parameters}.} 
    \label{fig:comparison}
\end{figure}
\subsection{State process simulation}
Accurate simulation of the variance factor is crucial for the stability of the learning algorithm. In our SV model, since both the control and value functions depend nonlinearly on $y$, any numerical bias or loss of positivity in its simulation propagates directly into the policy updates. Recall that the diffusion coefficient of the factor is of square-root type, $
c_v(y)=k_1\sqrt{y+\delta_*},
$
, which degenerates near zero.
For such processes, the standard Euler–Maruyama discretization may produce negative values with positive probability.
This may lead to numerical instability, especially since the functions $\sqrt{y+\delta_*}$ and $\sqrt{y^2+\sigma_*^2}$ are repeatedly evaluated during learning.
Ad–hoc truncation or reflection schemes may restore positivity,
but they introduce bias and may significantly distort the dynamics,
which is particularly problematic in reinforcement learning where small errors accumulate over many iterations.

To overcome this issue, we adopt the inverse–Gaussian variance (IVI) step introduced in \cite{jaber2024simulation}.
This scheme is specifically designed for square-root diffusions and yields an explicit update in terms of an inverse–Gaussian random variable.
Importantly, it preserves positivity by construction and avoids artificial boundary corrections.
This ensures numerical robustness and consistency with the theoretical continuous-time dynamics. Note that for a given $y_k$, the value of $y_{k+1}$ can be simulated by the IVI procedure as metioned above. However, the logreturn process $\log S$ is correlated with the stochastic volatility process $y$ which continuously depends on $y_t$, for $t\in[t_k, t_{k+1}]$. To maintain the stability, 
 we take \(y_{ bar}=(y_k+y_{k+1})/2\), set
\(\sigma_{bar}=\sqrt{y_{ bar}^2+\sigma_*^2}\),
\(\mu_{bar}=\mu(t_k+\tfrac12\Delta t,y_{ bar})\),
and use a single Brownian driver correlated with the factor increment. This yields a stable mid–point update for \(\log S\) over \([0,T]\).

\subsection{Policy parameterizations and training protocol}

We define
\[
V^\psi(t,x,y)=\frac{x^{1-\eta}}{1-\eta}\,\exp\!\big(L^\psi(t)\,y+M^\psi(t)+\frac{m}{2}(1-\eta)\ln m (T-t)\big)-\frac{1}{1-\eta}.
\]
{The specific parametric form of $V^\psi$ is motivated by the asymptotic analysis of the
exploratory value function.
In particular, Theorem~\ref{Theo:3} shows that the singular term
$\frac{1-\eta}{2}\,m\ln m\,(T-t)$ originates exclusively from the entropy--induced
exploratory variance.
This contribution is therefore retained explicitly in the value proxy.
The remaining higher--order correction terms in the asymptotic expansion
(e.g.\ $m u^{(1)}$, $m^2 u^{(2)}$) are deliberately not parameterized.
They correspond to finer deterministic corrections that do not affect the leading-order
exploratory behavior and would unnecessarily complicate the learning procedure.
Instead, the functions $L^\psi$ and $M^\psi$ are trained to capture the dominant
state-dependent structure of the value function beyond the variance effect.} Following Lemma \ref{lem_2}, the optimal control \(\lambda_t\in[0,1]\) is modeled as a truncated normal:
\[
\lambda_t^\theta \sim \cN\big(\mu_\theta(t,y_t),\,\sigma_\theta(t,y_t)^2\big)_\zs{[0,1]},
\]
with
\[
\mu_\theta(t,y)=\frac{\sqrt{y+\delta_*}}{\sqrt{y^2+\sigma_*^2}}\;\frac{\theta_4+\theta_5\,L^\theta(t)}{\eta},
\qquad
L^\theta(t)=\frac{-\theta_1+\theta_1 e^{\theta_0 (T-t)}}{-\theta_2+\theta_3 e^{\theta_0 (T-t)}},
\]
and exploration variance
\(
\sigma^2_\theta(t,y)={\frac{m}{\eta\,(y^2+\sigma_*^2)}}
\) which depends on the temperature value $m>0$. Its density on \([0,1]\) is given by
\begin{equation}\label{eq:tn01_pdf}
	f_\theta(\pi\mid t,y)=
	\frac{\varphi\!\left(\frac{\pi-\mu_\theta(t,y)}{\sigma_\theta(t,y)}\right)}
	{\sigma_\theta(t,y)\,\big[\Phi(\hat{b})-\Phi(\hat{a})\big]}\,\mathbf{1}_{\{0\le \pi\le 1\}},
	\quad
	\hat{a}=\frac{0-\mu_\theta}{\sigma_\theta},\ \ \hat{b}=\frac{1-\mu_\theta}{\sigma_\theta},
\end{equation}
with \(\varphi,\Phi\) the standard normal pdf and cdf respectively.

\begin{algorithm}[h]
	\caption{Entropy-regularized actor--critic learning}
	\label{alg:AC-compact}
	\begin{algorithmic}[1]
		
		\State \textbf{Initialize} critic parameters $\psi^{(0)}$, actor parameters $\theta^{(0)}$.
		\For{episode $j=0,\dots,J_{\max}-1$}
		
		\State Using $\lambda^{\theta^{(j)}}$, simulate $N$ independent trajectories
		$\{(X^\lambda_{t_k,i},y_{t_k,i})_{k=0..K}\}_{i=1..N}$.
		\State Simulate $y_{t_k,i}$ via IVI.
		\State Compute truncated-normal moments $(m^1_{t_k,i},m^2_{t_k,i})$ from $\lambda^{\theta^{(j)}}$.
		\State Simulate wealth $X^\lambda$ through the exploratory dynamics.
		
		\State For all $(k,i)$, compute
		\[
		\Delta V_{k,i}
		=
		V^{\psi^{(j)}}(t_{k+1},X^\lambda_{t_{k+1},i},y_{t_{k+1},i})
		-
		V^{\psi^{(j)}}(t_k,X^\lambda_{t_k,i},y_{t_k,i}),
		\]
		the entropy $\mathcal E_{k,i}$, and its gradient
		$\nabla_\theta \mathcal E_{k,i}$.
		
		\State Draw $\pi_{k,i}\sim\lambda^{\theta^{(j)}}(\cdot\mid t_k,y_{t_k,i})$,
		and compute
		\[
		\nabla_\theta \log f_{\theta^{(j)}}(\pi_{k,i}\mid t_k,y_{t_k,i}).
		\]
		
		
		\State Critic gradient accumulator:
		\[
		H_\psi
		=
		\sum_{k,i}
		\partial_\psi V^{\psi^{(j)}}(t_k,X^\lambda_{t_k,i},y_{t_k,i})\,
        \big(		\Delta V_{k,i}
		+
		\Delta t\,m\big[(1-\eta)V^{\psi^{(j)}}(t_k,X^\lambda_{t_k,i},y_{t_k,i})+1\big]\mathcal E_{k,i}\big).
		\]
		
		\State Actor gradient accumulator:
		\[
		H_\theta
		=
		\sum_{k,i}
		\left[
		\nabla_\theta \log f_{\theta^{(j)}}(\pi_{k,i}\mid t_k,y_{t_k,i})\,\Delta V_{k,i}
		+
		m\,\Delta t\,
		\big((1-\eta)V^{\psi^{(j)}}(t_k,X^\lambda_{t_k,i},y_{t_k,i})+1\big)\,
		\nabla_\theta \mathcal E_{k,i}
		\right].
		\]
		
		\State Learning rate $\ell(j)$ (e.g. $\ell(j)=j^{-0.51}$).
		
		\State Update:
		\[
		\psi^{(j+1)}=\psi^{(j)}+\ell(j)\frac{
        H_\psi}{N},\qquad
		\theta^{(j+1)}=\theta^{(j)}+\ell(j)\frac{H_\theta}{N}.
		\]
		
		\EndFor
	\end{algorithmic}
\end{algorithm}

The policy evaluation and policy gradient results in Theorems \ref{convergence}-\ref{thm:PG-BSDE} enable us to design a learnig Algorithm~\ref{alg:AC-compact} which is implemented in discrete time on a uniform grid over a one-year horizon,
with a daily time step $\Delta t = 1/252$.
At each training episode $j$, a batch of $N=32$ independent trajectories is simulated in parallel,
as described in Step~3 of the algorithm.
This batch-based implementation reduces the variance of the stochastic gradients
and stabilizes the learning dynamics. For each trajectory, the latent volatility factor $y_t$ is simulated using the IVI scheme, which preserves positivity and ensures numerical stability.
Conditional on the factor path, the risky asset price and the wealth process are evolved forward in time
under the exploratory dynamics induced by the truncated Gaussian policy.
At each time step, the first two moments of the truncated normal distribution,
$(m_{t_k,i}^1, m_{t_k,i}^2)$, are computed explicitly from the current actor parameters $\theta^{(j)}$
and used to propagate the wealth process.

The critic update relies on temporal--difference increments
$\Delta V_{k,i}$ computed along each trajectory, together with the entropy term $\mathcal{E}_{k,i}$
and its gradient.
The critic residuals $\delta^{\psi}_{k,i}$ are accumulated over both time and the batch,
as indicated in Step~8, yielding a single averaged gradient direction for the value-function parameters.
The closed-form structure of the value function allows all required derivatives to be evaluated analytically.

The actor update is performed simultaneously using the policy residuals $\delta^{\theta}_{k,i}$,
which combine the temporal--difference information, the entropy regularization term,
and the log-density gradient of the truncated Gaussian policy.
Both actor and critic parameters are updated at the end of each episode using a common,
episode-dependent learning rate $\ell(j)$ of order $j^{-1/2}$, as specified in Step~9.
This choice ensures a gradual stabilization of the parameter updates as training progresses.

\begin{table}[H]
\centering
\begin{tabular}{ccccccccccc}
\hline
$k_1$ & $k_2$ & $\delta_\ast$ & $\sigma_\ast$ & $r$ & $c$ & $y_0$ & $\rho$ & $S_0$ & $m$ & $\eta$ \\
\hline
0.015 & 0.23 & 0.3 & 0.3 & 0.02 & 2.0 & 0.5 & 0.5 & 1.0 & 1 & 0.5 \\
\hline
\end{tabular}
\caption{Model parameters used in the simulation experiments.}
\label{tab:simulation_parameters}
\end{table}

Table \ref{tab:simulation_parameters} reports the parameter values used in the simulation experiments. We verify in \ref{sec:App-num} that the parameter choices used in the simulations satisfy the admissibility conditions derived earlier. In particular, Corollary \ref{cor1} provides sufficient conditions under which truncated Gaussian exploratory policies belong to the admissible policy class.

\noindent
Overall, the learning procedure is essentially offline in nature,
as each parameter update requires access to complete simulated trajectories over the entire horizon.

\subsection{Numerical result}

\begin{figure}[htbp]
	\centering
	
	\begin{subfigure}{0.45\linewidth}
		\centering
		\includegraphics[width=\linewidth]{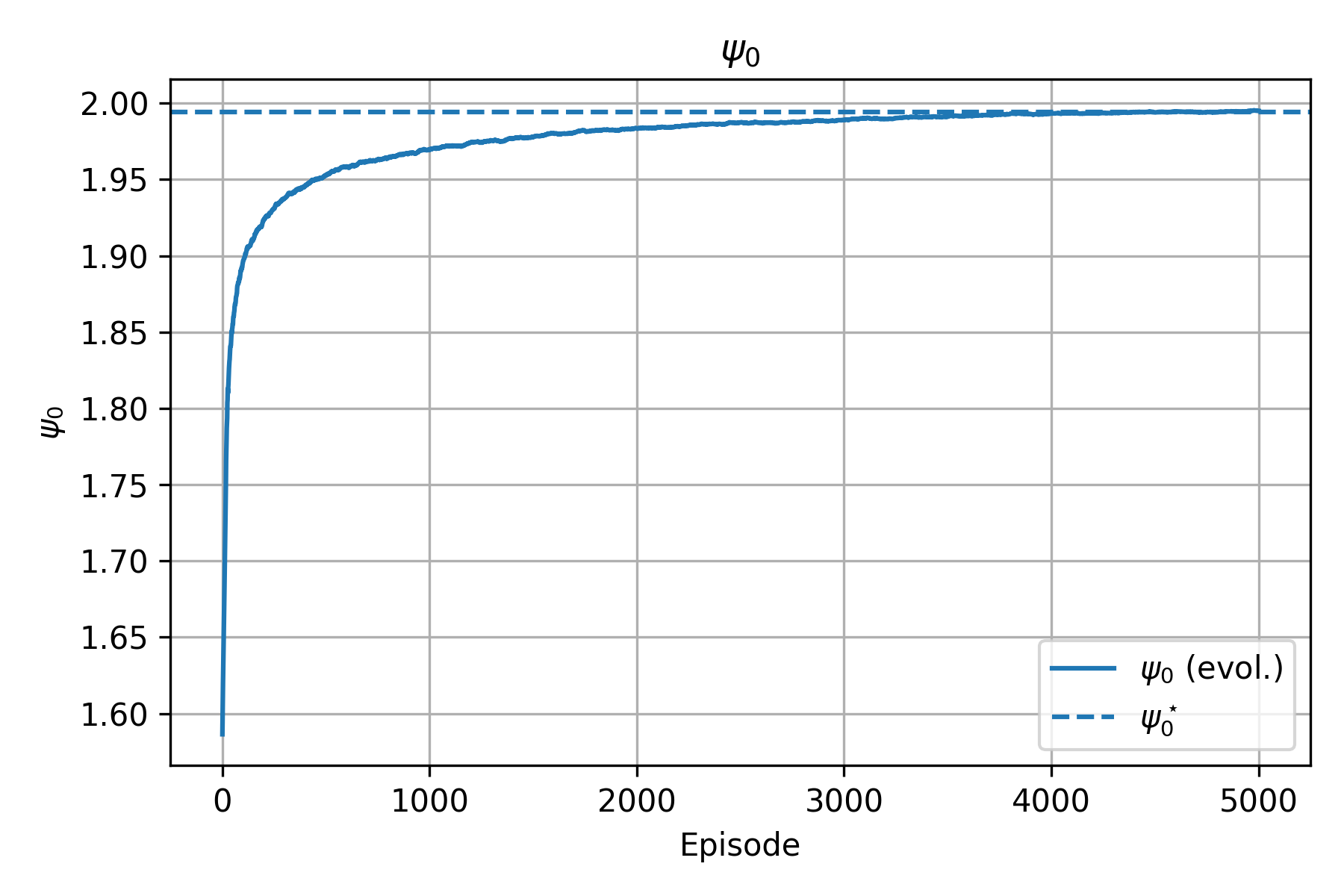}
		\caption{$\psi_0$}
	\end{subfigure}
	\hfill
	\begin{subfigure}{0.45\linewidth}
		\centering
		\includegraphics[width=\linewidth]{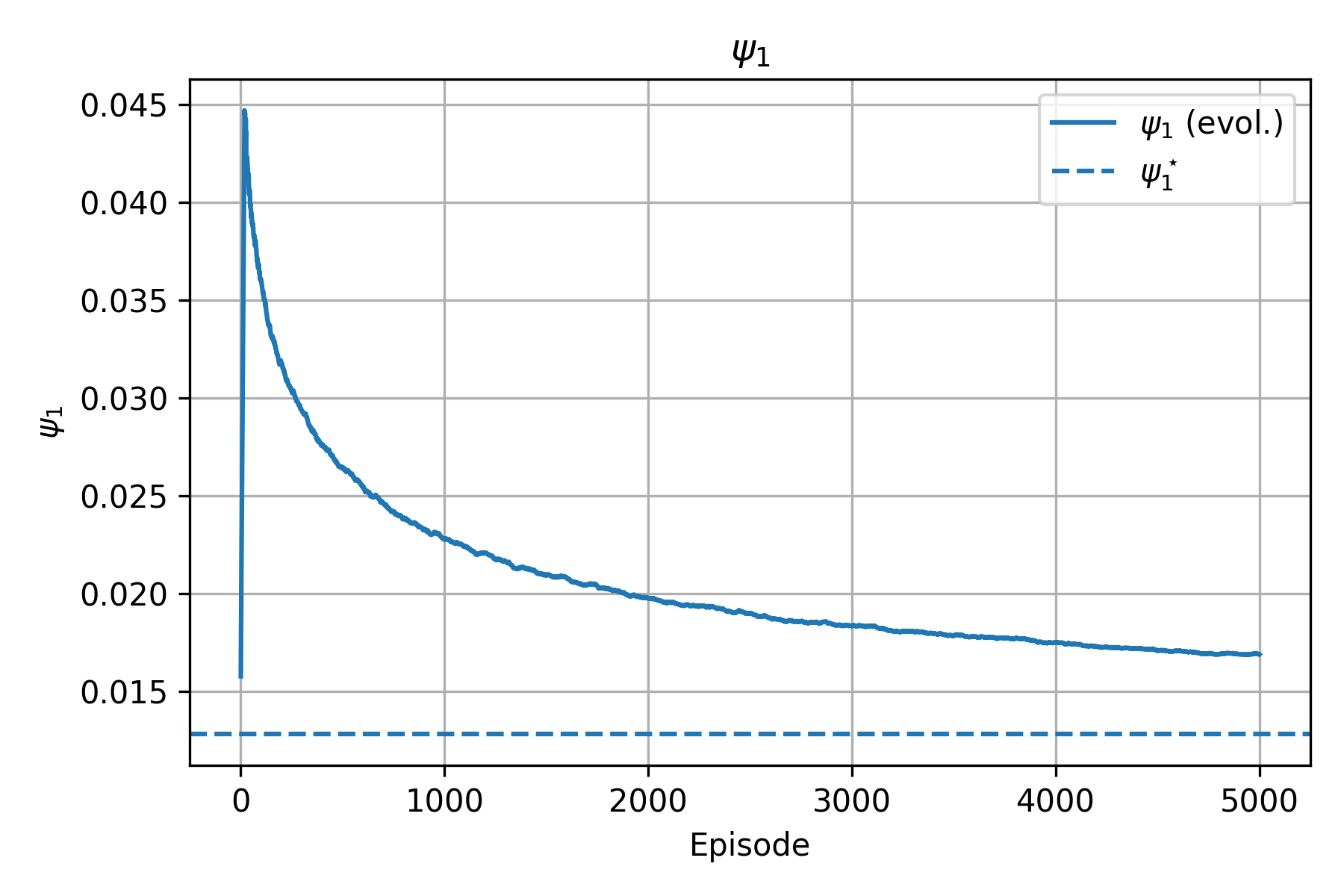}
		\caption{$\psi_1$}
	\end{subfigure}
	
	\medskip
	
	\begin{subfigure}{0.45\linewidth}
		\centering
		\includegraphics[width=\linewidth]{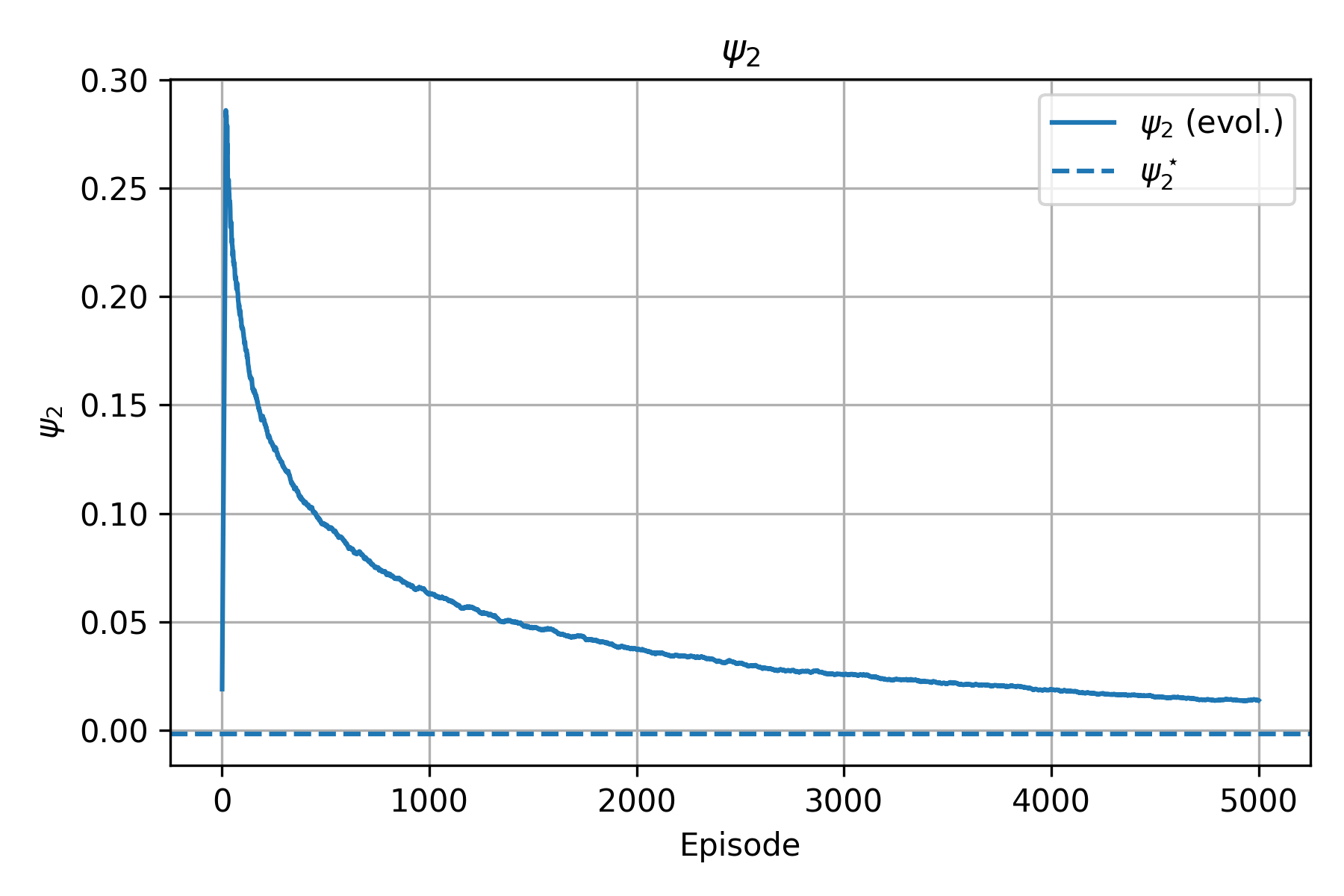}
		\caption{$\psi_2$}
	\end{subfigure}
	\hfill
	\begin{subfigure}{0.45\linewidth}
		\centering
		\includegraphics[width=\linewidth]{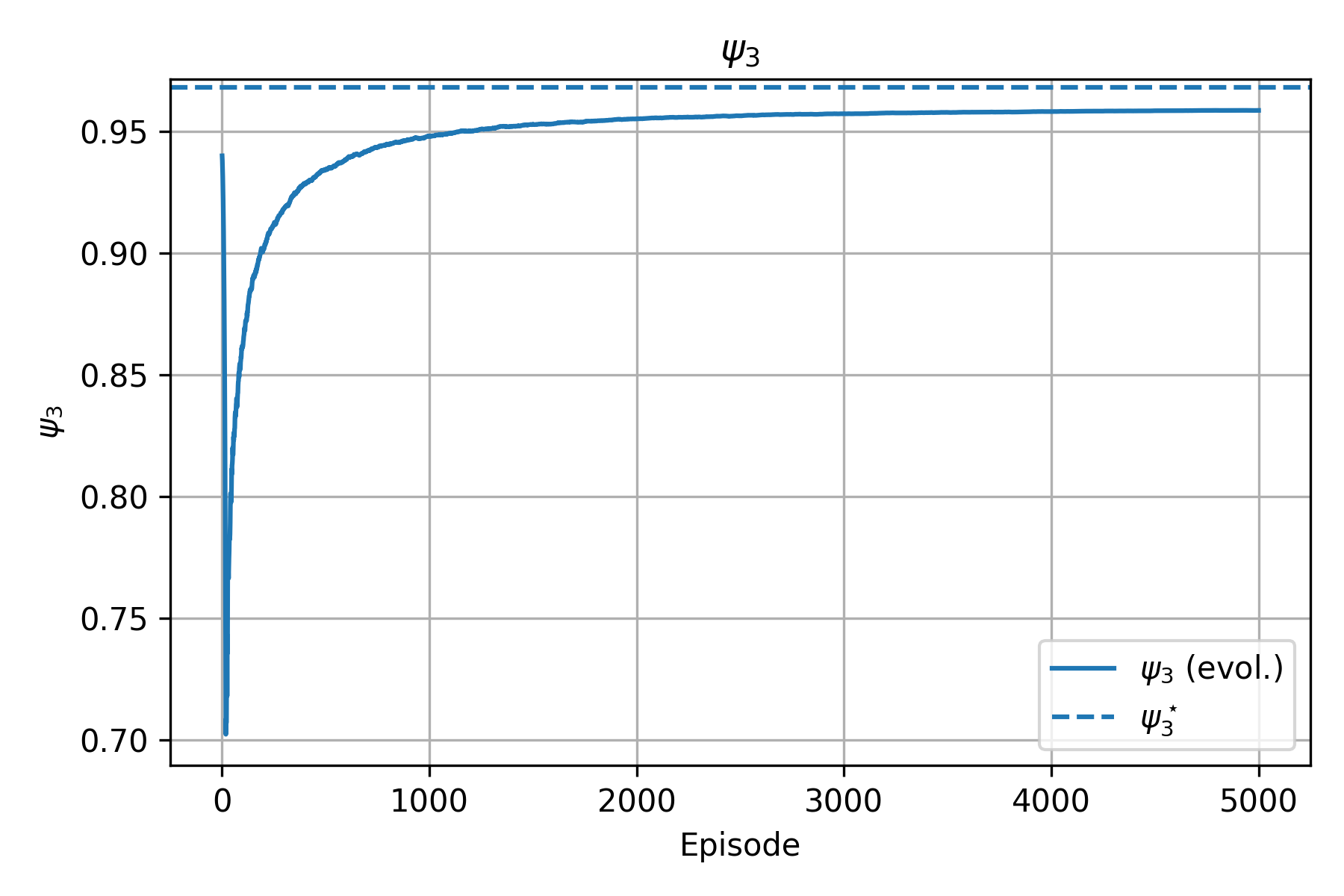}
		\caption{$\psi_3$}
	\end{subfigure}
	
	\medskip
	
	\begin{subfigure}{0.45\linewidth}
		\centering
		\includegraphics[width=\linewidth]{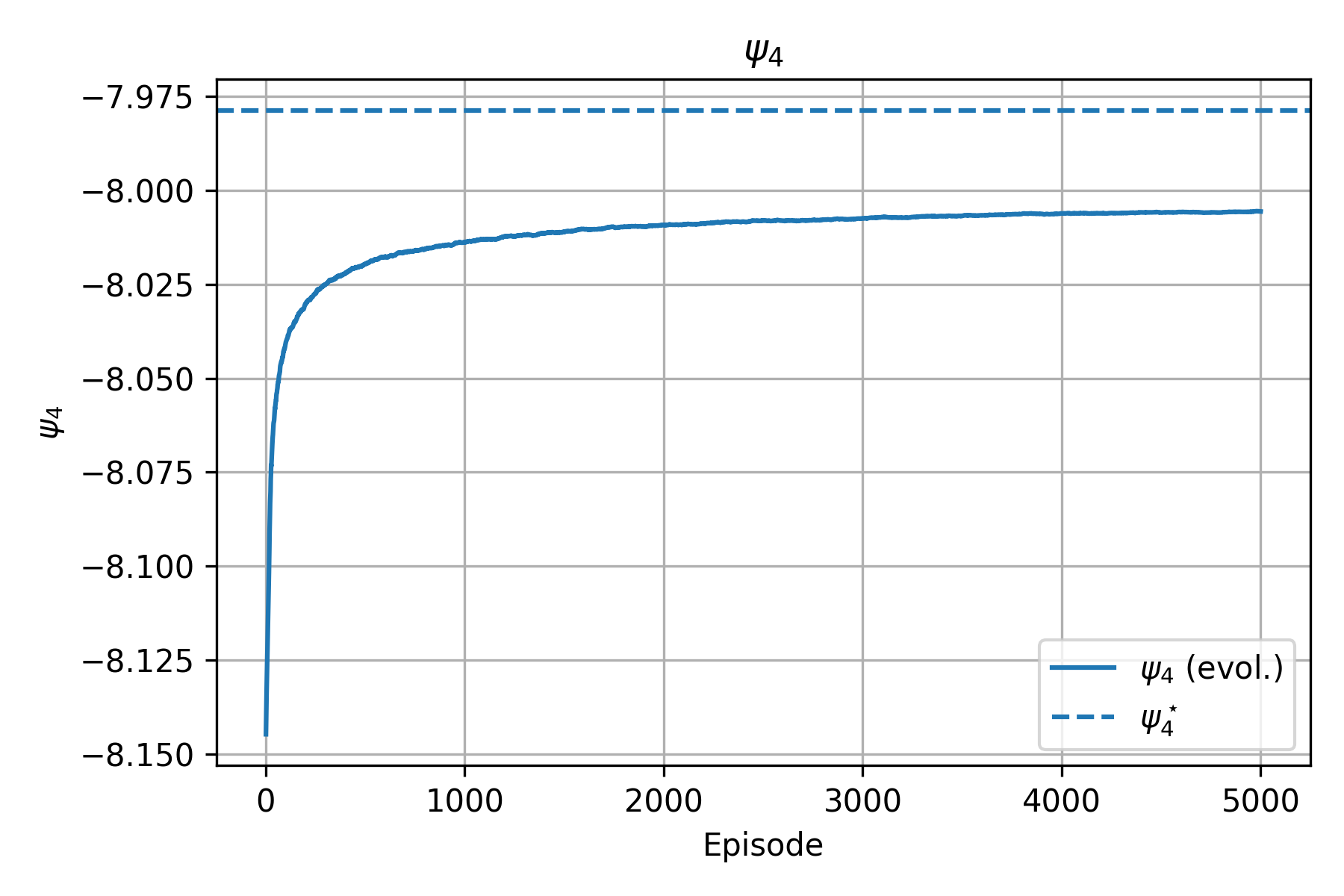}
		\caption{$\psi_4$}
	\end{subfigure}
	\hfill
	\begin{subfigure}{0.45\linewidth}
		\centering
		\includegraphics[width=\linewidth]{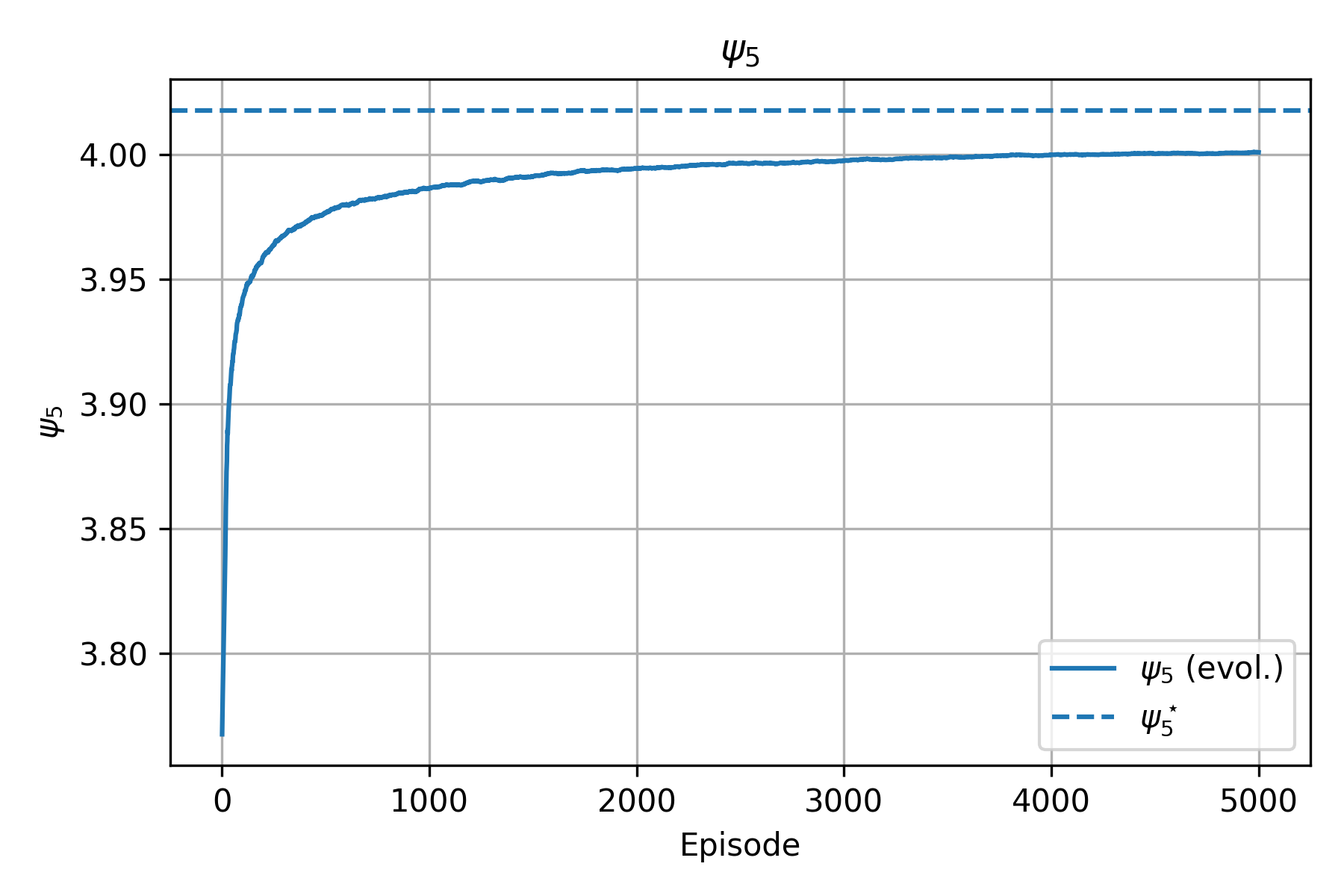}
		\caption{$\psi_5$}
	\end{subfigure}
	
	\caption{Convergence of the parameters $\psi_k$ over the training episodes.}
	\label{fig:psi_grid}
\end{figure}

Table~\ref{tab:final_params} summarizes the final values of the learned parameters
$(\psi^\star,\theta^\star)$ obtained after $5{,}000$ training episodes.
The convergence result is reported in Figure~\ref{fig:psi_grid}. While Figure~\ref{fig:psi_grid} illustrates the learning dynamics of the critic parameters,
Table \ref{tab:final_params} provides a concise overview of the asymptotic values reached by both the
value-function and policy parameterizations.

\begin{table}[ht]
\centering
\caption{Final learned parameters after 5,000 training episodes.}
\label{tab:final_params}
\begin{tabular}{lcccccc}
\hline
\multicolumn{7}{c}{\textbf{Value-function parameters}} \\
\hline
 & $\psi^{*}_0$ & $\psi^{*}_1$ & $\psi^{*}_2$ & $\psi^{*}_3$ & $\psi^{*}_4$ & $\psi^{*}_5$ \\
\hline
Values & 1.9950 & 0.0169& 0.0137 & 0.9587 & -8.0055 & 4.0010 \\
\hline\hline
\multicolumn{7}{c}{\textbf{Policy parameters}} \\
\hline
 & $\theta^{*}_0$ & $\theta^{*}_1$ & $\theta^{*}_2$ & $\theta^{*}_3$ & $\theta^{*}_4$ & $\theta^{*}_5$ \\
\hline
Values & 2.09 & 0.3986 & 0.0068 & 0.8806 & 0.1568 & 0.0945 \\
\hline
\end{tabular}
\end{table}

After the training phase, the learned policy is evaluated in a separate out-of-sample test.
The parameters $(\psi^\star,\theta^\star)$ obtained at the end of training are kept fixed,
and $N_{\text{test}}$ independent wealth trajectories are simulated over a one-year horizon.

\noindent In contrast to the training phase, we first consider a \emph{deterministic}
implementation of the learned policy.
Specifically, the portfolio weight at time $t$ is set to the conditional mean of the learned
truncated Gaussian policy:
\(
\pi_t^{\theta^\star}
=
\mathbb{E}\!\left[\lambda_t \mid t, y_t; \theta^\star \right],
\) 
so that no exploration noise is injected into the portfolio decisions.
The corresponding wealth process $X_t^{\theta^\star}$ then satisfies
\[
dX_t^{\theta^\star}
=
X_t^{\theta^\star}
\Big(
r + (\mu(t,y_t)-r)\,\pi_t^{\theta^\star}
\Big)\,dt
+
X_t^{\theta^\star}\,
\sigma(t,y_t)\,\pi_t^{\theta^\star}\, dW_t,
\]
which coincides with the standard portfolio dynamics under a deterministic trading strategy.
This setting reflects a realistic deployment of the learned policy and enables a stable and
interpretable performance evaluation.

For each out-of-sample test trajectory, we compute the terminal CRRA utility
\(
U(X_T^{\theta^\star})
=
\frac{(X_T^{\theta^\star})^{1-\eta}-1}{1-\eta},
\)
and estimate its expectation using Monte Carlo averaging over $N_{\text{test}}$ independent
simulations:
\[
\widehat{\mathbb{E}}\!\left[U(X_T^{\theta^\star})\right]
=
\frac{1}{N_{\text{test}}}
\sum_{i=1}^{N_{\text{test}}}
U\!\left(X_T^{(i),\theta^\star}\right).
\]
The Monte Carlo estimate is compared to the theoretical value function evaluated at the
initial state,
\(
V^{\psi^\star}(0,1,y_0),
\)
which represents the optimal expected utility at time $t=0$ for initial wealth $X_0=1$ and
initial volatility level $y_0$.
Figure~\ref{fig:monte_carlo_convergence} illustrates the convergence of the Monte Carlo
estimate of $\mathbb{E}[U(X_T^{\theta^\star})]$ as the number of test trajectories increases.
As expected, the estimator stabilizes as $N_{\text{test}}$ grows and approaches the
theoretical benchmark $V^{\psi^\star}(0,1,y_0)$, highlighting the consistency between the
learned policy and the underlying value function.

\begin{figure}
    \centering
    \includegraphics[width=0.5\linewidth]{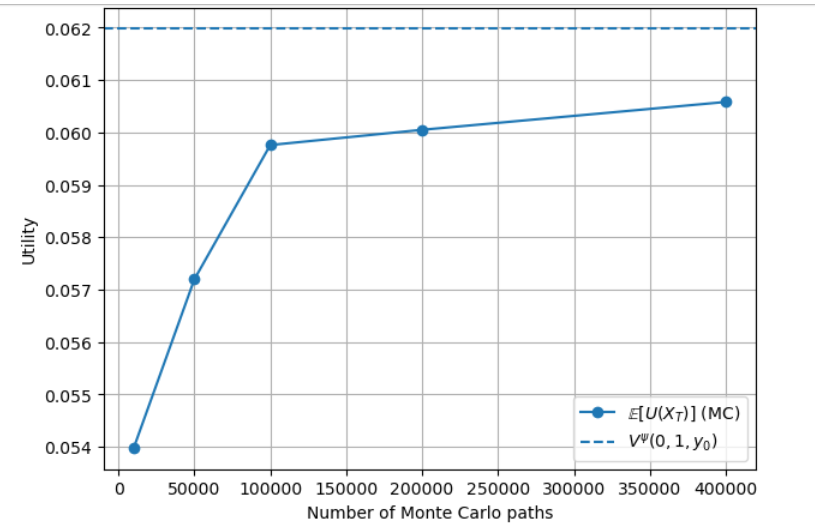}
    \caption{\small Monte Carlo convergence of the estimated expected terminal utility
		$\widehat{\mathbb{E}}[U(X_T^{\theta^\star})]$ toward the theoretical value
		$V^{\psi^\star}(0,1,y_0)$.}
        \label{fig:monte_carlo_convergence}
\end{figure}
	
To complement the graphical illustration, we report in Table~\ref{tab:test_results} numerical
summaries of the out-of-sample performance for different values of $N_{\text{test}}$,
including the mean, standard deviation, and median of the terminal utility distribution.
The discrepancy with respect to the theoretical benchmark is measured by
\[
\text{Gap}
=
\widehat{\mathbb{E}}\!\left[U(X_T^{\theta^\star})\right]
-
V^{\psi^\star}(0,1,y_0).
\]

\begin{table}[H]
	\centering
	\caption{Out-of-sample performance of the learned deterministic policy.}
	\label{tab:test_results}
	\begin{tabular}{ccccc}
		\hline
		$N_{\text{test}}$ 
		& $\mathbb{E}[U(X_T)]$ 
		& Std$(U(X_T))$ 
		& Median$(U(X_T))$ 
		& Gap \\[0.3em]
		\hline
		$10{,}000$  & $0.05398$ & $0.30107$ & $0.03379$ & $-0.00801$ \\
		$50{,}000$  & $0.05720$ & $0.30026$ & $0.03336$ & $-0.00478$ \\
		$100{,}000$ & $0.05976$ & $0.30360$ & $0.03714$ & $-0.00222$ \\
		\hline
	\end{tabular}
\end{table}

Table \ref{tab:test_results} confirms the visual evidence provided by the convergence plot, namely,
as the number of test trajectories increases, the Monte Carlo estimate of the expected
terminal utility becomes increasingly accurate, and the gap relative to the theoretical
value function steadily decreases. This behavior is consistent with standard Monte Carlo
sampling error and provides empirical validation of the coherence between the learned value
function and the realized out-of-sample performance of the policy.
To further assess the role of exploration, we repeat the test by retaining the stochastic
policy during the evaluation phase.
In this case, the exploratory noise is preserved in the wealth dynamics, leading to the
results reported in Table~\ref{tab:test_results_exploration}. We observe that when exploration is retained during testing, the expected terminal utility is
systematically lower than the theoretical benchmark.
Unlike the deterministic case, the gap does not vanish as the number of Monte Carlo samples
increases, indicating that the discrepancy is structural rather than statistical.
This behaviour quantifies the intrinsic cost of exploration, which increases the dispersion
of wealth outcomes and reduces average utility.

\begin{table}[htbp]
	\centering
	\caption{Out-of-sample performance of the learned stochastic policy (with exploration).}
	\label{tab:test_results_exploration}
	\begin{tabular}{ccccc}
		\hline
		$N_{\text{test}}$ 
		& $\mathbb{E}[U(X_T)]$ 
		& Std$(U(X_T))$ 
		& Median$(U(X_T))$ 
		& Gap \\[0.3em]
		\hline
		$10{,}000$  & $0.03405$ & $0.42150$ & $-0.00660$ & $-0.02793$ \\
		$50{,}000$  & $0.03513$ & $0.42412$ & $-0.00952$ & $-0.02686$ \\
		$100{,}000$ & $0.03848$ & $0.42215$ & $-0.00334$ & $-0.02350$ \\
		\hline
	\end{tabular}
\end{table}

	\section{Conclusion}
\noindent	We study a continuous-time portfolio optimization problem under stochastic volatility, portfolio constraints, and entropy-regularized exploration.
	We proved the well-posedness of the associated Hamilton–Jacobi–Bellman equation by establishing the existence and uniqueness of a classical solution in the spirit of \cite{ladyzhenskaia1968linear}, allowing us to characterize the value function in closed form.
	This analytical structure enabled us to design an interpretable actor–critic algorithm that learns both the value function and the exploratory policy.
	Numerical experiments show that the critic parameters converge stably toward their theoretical values and that the learned stochastic policy behaves consistently with the optimal truncated-Gaussian form derived from the HJB equation.
	Our framework opens several directions for future work, including an extended empirical evaluation, the incorporation of additional sources of risk such as stochastic interest rates or ambiguity aversion, and the development of higher-dimensional versions of the learning algorithm.
	These aspects will be addressed in subsequent research

    \section*{Acknowledgements}
 \noindent The authors acknowledge the support from the Natural Sciences and Engineering Research Council of Canada (Grant No. RGPIN-2021-02594) and the Chair of Actuary, Laval University. 
 \bibliographystyle{elsarticle-harv}
\bibliography{literature}

\appendix
\renewcommand{\thesubsection}{\Alph{subsection}}

\section{Auxiliary lemmas}\label{sec:App-A}

\begin{lemma}[Donsker--Varadhan variational formula  \cite{donsker2006large}]
\label{lem:DV}
Let $(\Theta,\mathcal F,P)$ be a probability space, and let $h:\Theta\to\mathbb R$ be a measurable function such that $e^{h}\in L^1(P)$. Then
\begin{equation}
\label{eq:DV}
\ln \E_{P}\!\left[e^{h}\right]
=
\sup_{Q\ll P}
\left\{
\E_{Q}[h]-\KL(Q\|P)
\right\},
\end{equation}
where $\KL(Q\|P)$ is the Kullback-Leibler divergence defined by
\[
\KL(Q\|P):=
\begin{cases}
\displaystyle \E_Q\!\left[\ln\!\left(\frac{dQ}{dP}\right)\right], & Q\ll P,\\[1.2ex]
+\infty, & \text{otherwise}.
\end{cases}
\]
Moreover, the supremum is attained uniquely at the probability measure $Q^*$ defined by
\begin{equation}
\label{eq:QstarDV}
\frac{dQ^*}{dP}
=
\frac{e^{h}}{\E_P[e^{h}]}.
\end{equation}
\end{lemma}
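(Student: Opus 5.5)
The statement is the Donsker--Varadhan variational formula, and I would prove it directly via the Gibbs variational principle: define $Q^*$ through the density in \eqref{eq:QstarDV}, then show that for every $Q\ll P$ the gap between $\ln\E_P[e^h]$ and $\E_Q[h]-\KL(Q\|P)$ is exactly $\KL(Q\|Q^*)\ge 0$. Attainment and uniqueness then follow from the characterization of when a Kullback--Leibler divergence vanishes.

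First, set $c:=\E_P[e^h]\in(0,\infty)$. It is finite by assumption and positive because $e^h>0$. Hence $Q^*$ is a well-defined probability measure equivalent to $P$. Next, fix $Q\ll P$ and reduce to the case where both $\KL(Q\|P)<\infty$ and $\E_Q[h^+]<\infty$; in the remaining cases the bracket is $-\infty$ or must be interpreted with a convention such as $\E_Q[h]-\KL=-\infty$ whenever the KL term is infinite. To control this I would use the Fenchel--Young inequality $st\le e^{s}+t\ln t-t$ (applied with $s=h^+$ and $t=dQ/dP$), which gives $\E_Q[h^+]\le \E_P[e^{h}]+\KL(Q\|P)+1$. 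Consequently $\E_Q[h]$ is well defined in $[-\infty,\infty)$ whenever the KL term is finite.

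For such $Q$ we have $Q\ll Q^*$, since $Q^*\sim P$, with $\frac{dQ}{dQ^*}=\frac{dQ}{dP}\,c\,e^{-h}$. Taking logarithms and integrating against $Q$ yields
\[
\KL(Q\|Q^*)=\KL(Q\|P)+\ln c-\E_Q[h],
\]
that is, $\E_Q[h]-\KL(Q\|P)=\ln\E_P[e^h]-\KL(Q\|Q^*)$. Splitting the integrand into its positive and negative parts, and using the integrability bounds above, justifies this linearity (including the case $\E_Q[h]=-\infty$, which makes both sides $-\infty$ consistently). Since $\KL(Q\|Q^*)\ge0$ by Jensen's inequality applied to the convex function $t\ln t$, the supremum is at most $\ln c$.

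It remains to show attainment and uniqueness. For $Q=Q^*$ we get $\KL(Q^*\|P)=\E_{Q^*}[h]-\ln c$, and I need this to be finite. Equality for $Q^*$ holds as soon as $\E_{Q^*}[|h|]<\infty$, i.e.\ $\E_P[|h|e^h]<\infty$; the negative part of $h$ causes no trouble because $|h|e^{h}\le e^{-1}$ on $\{h<0\}$. The positive part could in principle diverge. In that case the supremum is still $\ln c$, approached by truncating $h$ at level $n$ and using monotone convergence, but attainment would need the convention $\infty-\infty$. I expect this integrability point to be the main obstacle, and I would handle it by adding the mild hypothesis $\E_P[|h|e^h]<\infty$, which holds in the paper's use since $h$ is a bounded quadratic on $[a,b]$. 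For uniqueness, equality forces $\KL(Q\|Q^*)=0$. Strict convexity of $t\ln t$ in the Jensen step then forces $dQ/dQ^*=1$ $Q^*$-a.s., so $Q=Q^*$.
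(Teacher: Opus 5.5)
Your proof is correct, and it is organized differently from the paper's. The paper applies Jensen's inequality for the concave $\ln$ directly to $\E_Q[\ln(e^h\,dP/dQ)]$, checks $Q^*$ by direct computation, and gets uniqueness from the equality case of Jensen. You instead first establish the exact identity $\E_Q[h]-\KL(Q\|P)=\ln\E_P[e^h]-\KL(Q\|Q^*)$ and then use Gibbs' inequality. Since $\ln(e^h\,dP/dQ)=\ln\E_P[e^h]-\ln(dQ/dQ^*)$ $Q$-a.s., the paper's Jensen step is really a proof of $\KL(Q\|Q^*)\ge 0$, so the core inequality is the same in both.

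What your formulation buys is an exact gap, which makes uniqueness immediate for every $Q\ll P$. In the paper's chain, $\E_Q[e^h\,dP/dQ]$ is actually $\E_P[e^h\mathbf{1}_{\{dQ/dP>0\}}]\le\E_P[e^h]$, not an equality. Its uniqueness step also integrates $dQ/dP=e^h/C$ against $P$, although this is only known $Q$-a.s. One must additionally use equality in that second inequality to get $dQ/dP>0$ $P$-a.s. Your Fenchel--Young bound, and your remark that $\KL(Q^*\|P)$ may be infinite when only $e^h\in L^1(P)$ is assumed, fill integrability gaps the paper leaves open.

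Note, though, that your extra hypothesis $\E_P[|h|e^h]<\infty$ is needed only because you split the bracket into two terms. Read $\E_Q[h-\ln(dQ/dP)]$ as a single integral, as in the paper's first display. Your pointwise identity then shows it is well defined in $[-\infty,\ln\E_P[e^h]]$ for every $Q\ll P$, because its positive part is integrable ($\E_Q[(\ln\frac{dQ}{dQ^*})^-]\le e^{-1}$). It also equals $\ln\E_P[e^h]$ exactly at $Q^*$, with no further assumption. The paper's version buys brevity; yours is complete at the measure-theoretic level.
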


\begin{proof}
Let $Q\ll P$. Then
\begin{align*}
\E_Q[h]-\KL(Q\|P)
=
\E_Q\!\left[h-\ln\!\left(\frac{dQ}{dP}\right)\right] =
\E_Q\!\left[\ln\!\left(e^{h}\frac{dP}{dQ}\right)\right].
\end{align*}
By Jensen's inequality, since $\ln$ is concave,
\begin{align*}
\E_Q\!\left[\ln\!\left(e^{h}\frac{dP}{dQ}\right)\right]
&\le
\ln \E_Q\!\left[e^{h}\frac{dP}{dQ}\right] =
\ln \E_P[e^{h}].
\end{align*}
Hence
\(\E_Q[h]-\KL(Q\|P)\le \ln \E_P[e^{h}]
\) 
for every $Q\ll P$, which proves
\[
\sup_{Q\ll P}\left\{\E_Q[h]-\KL(Q\|P)\right\}\le \ln \E_P[e^{h}].
\]
Now define $Q^*$ by
\[
\frac{dQ^*}{dP}=\frac{e^{h}}{\E_P[e^{h}]}.
\]
Since $\E_P[e^{h}]<\infty$, $Q^*$ is a probability measure and $Q^*\ll P$. Moreover,
\(
\ln\!\left(\frac{dQ^*}{dP}\right)=h-\ln \E_P[e^{h}],
\)
so that
\begin{align*}
\E_{Q^*}[h]-\KL(Q^*\|P)
&=
\E_{Q^*}[h]
-
\E_{Q^*}\!\left[\ln\!\left(\frac{dQ^*}{dP}\right)\right] \\
&=
\E_{Q^*}[h]
-
\E_{Q^*}[h-\ln \E_P[e^{h}]] =
\ln \E_P[e^{h}].
\end{align*}
Therefore, the supremum is attained at $Q^*$, and \eqref{eq:DV} follows.

Finally, since equality in Jensen's inequality holds if and only if
\(
e^{h}\frac{dP}{dQ}
\)
is $Q$-almost surely constant, there exists a constant $C>0$ such that
\(
e^{h}\frac{dP}{dQ}=C
\quad Q\text{-a.s.},
\)
which is equivalent to
\[
\frac{dQ}{dP}=\frac{e^{h}}{C}.
\]
Since $Q$ is a probability measure, integrating with respect to $P$ yields
\[
1=\int_\Theta \frac{e^{h}}{C}\,dP=\frac{\E_P[e^{h}]}{C},
\]
hence $C=\E_P[e^{h}]$, and therefore $Q=Q^*$. This proves uniqueness.
\end{proof}
        
    \begin{lemma}\label{LEM_Appendix}
    Consider $Z_{a,b}(y,u_y;m)=\Phi(B(y,u_y;m))-\Phi(A(y,u_y;m))$ where $A(y,u_y;m),\,B(y,u_y;m)$ are defined respectively by \eqref{A_y}, \eqref{B_y}. We have
    \begin{align}
    	\underset{m\to 0}{\lim}m\ln Z_{a,b}(y,u_y;m)=\begin{cases}
    		0 & \text{if}\,\, \pi^{Merton} \in [a,b]\\
    		-\frac{1}{2}(a-\pi^{Merton})^2\eta \sigma^2(y)& \text{if}\,\, \pi^{Merton}<a\\
    		-\frac{1}{2}(b-\pi^{Merton})^2\eta \sigma^2(y)& \text{if}\,\, \pi^{Merton}>b,
    	\end{cases}
    \end{align}
    with $\pi^{Merton}=\frac{\mu-r+\rho\delta(y)\sigma(y)u_y}{\eta \sigma^2(y)}.$
     \end{lemma}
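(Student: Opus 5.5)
Throughout, write $\pi_0:=\pi^{Merton}=\frac{\mu-r+\rho\delta(y)\sigma(y)u_y}{\eta\sigma^2(y)}$ and $s_m:=\sqrt{\eta\sigma^2(y)/m}$. The plan is a direct computation. By \eqref{A_y}--\eqref{B_y}, the two arguments of $\Phi$ are the rescaled distances $(a-\pi_0)s_m$ and $(b-\pi_0)s_m$ from the untruncated Gaussian mean to the endpoints. Since $Z_{a,b}$ is the mass of a Gaussian on $[a,b]$, I read it as $\Phi\big((b-\pi_0)s_m\big)-\Phi\big((a-\pi_0)s_m\big)$, with the larger argument first. As $m\to0$ we have $s_m\to\infty$, so the only question is the exponential rate at which $Z_{a,b}$ can vanish. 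I would split into the three cases of the statement, treating the endpoint cases $\pi_0=a$ and $\pi_0=b$ together with the interior case.

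\emph{Interior case, $\pi_0\in[a,b]$.} If $a<\pi_0<b$, the lower argument tends to $-\infty$ and the upper to $+\infty$, so $Z_{a,b}\to1$. If $\pi_0=a$ (resp.\ $\pi_0=b$), one argument is identically $0$ and $Z_{a,b}\to\tfrac12$. In every sub-case $\ln Z_{a,b}$ stays bounded, hence $m\ln Z_{a,b}\to0$.

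\emph{Case $\pi_0<a$.} Set $x_m:=(a-\pi_0)s_m\to+\infty$ and $w_m:=(b-\pi_0)s_m$, so that $w_m-x_m=(b-a)s_m\to\infty$. Writing $\bar\Phi=1-\Phi$, we have $Z_{a,b}=\bar\Phi(x_m)-\bar\Phi(w_m)=\bar\Phi(x_m)\big(1-\bar\Phi(w_m)/\bar\Phi(x_m)\big)$. I will use the Mills-ratio bounds
\[
\frac{x}{1+x^2}\varphi(x)\le\bar\Phi(x)\le\frac{\varphi(x)}{x},\qquad x>0 .
\]
They give $\ln\bar\Phi(x_m)=-\tfrac12x_m^2-\ln x_m+O(1)$. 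They also give $\bar\Phi(w_m)/\bar\Phi(x_m)\le C\,e^{-(w_m^2-x_m^2)/2}\to0$, so the bracket tends to $1$. Multiplying by $m$ and using $m x_m^2=(a-\pi_0)^2\eta\sigma^2(y)$, together with $m\ln x_m\to0$, yields the limit $-\tfrac12(a-\pi_0)^2\eta\sigma^2(y)$.

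\emph{Case $\pi_0>b$.} This follows by symmetry: both arguments tend to $-\infty$. Replace $\Phi(-x)=\bar\Phi(x)$, and now the endpoint $b$ is the one closer to $\pi_0$ and dominates. The same estimate gives $-\tfrac12(b-\pi_0)^2\eta\sigma^2(y)$.

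The main obstacle is the lower bound on $Z_{a,b}$ in the tail cases. There one must exclude cancellation between the two nearly equal tiny quantities $\bar\Phi(x_m)$ and $\bar\Phi(w_m)$. The Mills lower bound combined with the divergence of $w_m-x_m$ handles this. A second point concerns the use of this lemma in Section 5, where $u_y$ is replaced by $p_m\to p_0$. There I would note that all the bounds above are uniform when $\pi_0$ is replaced by $\pi_0(p_m)\to\pi_0(p_0)$, so the same limits hold with $\pi_0=\pi_0(u^{(0)}_y)$. The boundary sub-cases remain harmless because $\ln Z_{a,b}$ is then at most of order $s_m^2\,|\pi_0(p_m)-\pi_0(p_0)|^2$, and multiplied by $m$ this is $o(1)$.
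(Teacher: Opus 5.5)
Your argument is correct, but it takes a different route from the paper. The paper writes $m\ln Z_{a,b}=\ln Z_{a,b}/(1/m)$ as an $\infty/\infty$ form and applies L'H\^opital's rule in $m$ twice. The first application leaves $\tfrac12\sqrt m\,(f(b)-f(a))/Z_{a,b}$ with $f(k)=\sqrt{\eta\sigma^2(y)}\,(k-\pi_0)\,\varphi\big((k-\pi_0)s_m\big)$, which is a $0/0$ form. A second differentiation gives a quotient in which the $f(b)$-terms are exponentially negligible against the $f(a)$-terms, and the limit $-\tfrac12(a-\pi_0)^2\eta\sigma^2(y)$ follows.

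You replace this calculus with the two-sided Mills-ratio bounds. These give $\ln Z_{a,b}=-\tfrac12x_m^2-\ln x_m+O(1)$ directly, and they make the no-cancellation issue explicit rather than leaving it hidden in the quotient $f(b)/f(a)$. L'H\^opital delivers only the limit. Your bounds also give:
\begin{itemize}
\item explicit error terms;
\item uniformity under perturbations of $\pi_0$, which is what the use in Section 5 (with $u_y$ replaced by $p_m$) actually requires and which the paper does not verify;
\item the next-order term $-m\ln x_m=\tfrac m2\ln m+O(m)$, which matters for the $m\ln m$-order part of the Section 5 expansion wherever the constraint binds.
\end{itemize}

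You also treat $\pi_0\in\{a,b\}$ correctly ($Z_{a,b}\to\tfrac12$). The paper's assertion that $\ln Z_{a,b}\to0$ on all of $[a,b]$ fails at the endpoints, though harmlessly because of the factor $m$.

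One small fix to your closing aside. Near an endpoint you only have $-\ln Z_{a,b}\le C+\tfrac12 s_m^2\epsilon_m^2+\ln(1+s_m\epsilon_m)$ with $\epsilon_m:=|\pi_0(p_m)-\pi_0(p_0)|$, not a bound of order $s_m^2\epsilon_m^2$ alone (take $\epsilon_m=0$). This bound is still $o(1/m)$, which is all you need.
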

    \begin{proof}
    	First we notice that \begin{align*}
    		\underset{m\to 0}{\lim}\ln Z_\zs{a,b}(y,u_y;m)=\begin{cases}
    			0 & \text{if}\,\, \pi^{Merton} \in [a,b]\\
    			-\infty &\text{if}\,\, \pi^{Merton}<a\\
    			-\infty &\text{if}\,\, \pi^{Merton} >b.
    		\end{cases}
    	\end{align*}
    	For $\pi^{Merton} \in [a,b]$, $\underset{m\to 0}{\lim} m\ln Z_\zs{a,b}(y,u_y;m)=0$. 
        Now we assume $ \pi^{Merton}<a$, and apply the Hospital's rule:
    	\begin{align}
    		\underset{m\to 0}{\lim}\frac{\ln Z_{a,b}(y,u_y;m)}{\frac{1}{m}}&=\underset{m\to 0}{\lim}\dfrac{\frac12 \sqrt{m\eta \sigma^2(y)}\varphi((b-\pi^{Merton})\sqrt{\frac{\eta \sigma^2(y)}{m}})-\frac12 \sqrt{m\eta \sigma^2(y)}\varphi((a-\pi^{Merton})\sqrt{\frac{\eta \sigma^2(y)}{m}})}{\Phi((b-\pi^{Merton})\sqrt{\frac{\eta \sigma^2(y)}{m}})-\Phi((a-\pi^{Merton})\sqrt{\frac{\eta \sigma^2(y)}{m}})}\nonumber\\
    		&\overset{(*)}{=}\underset{m\to 0}{\lim}
    		\dfrac{\frac14 mf(b)+\frac14(b-\pi^{Merton})^2\eta \sigma^2(y)f(b)-\frac14 mf(a)-\frac14(a-\pi^{Merton})^2\eta \sigma^2(y)f(a)}{-\frac12f(b)+\frac12 f(a)}\nonumber\\
    		&=	-\frac{1}{2}(a-\pi^{Merton})^2\eta \sigma^2(y)
    	\end{align}
    	where $ f$ in $(*)$ is given by $f(k)=\sqrt{\eta\sigma^2(y)}(k-\pi^{Merton})\vphi\bigg((k-\pi^{Merton})\sqrt{\frac{\eta\sigma^2(y)}{m}}\bigg)$, {the proof of the case $\pi^{Merton}>b$ is similar and hence is omitted}.
    \end{proof}
    

\begin{theorem}[SDE under Condition \textbf{SV}]\label{Appendix_Existence_SDE}
	Suppose that $\varpi$ and $\gamma$ are continuous and there exists $K<\infty$ such that
	\begin{enumerate}
		\item $|\varpi(x)-\varpi(y)|+|\gamma(x)-\gamma(y)|\le K|x-y|$
		\item 
		$|\varpi(y)|+|\gamma(y)|\le K(1+|y|)$
	\end{enumerate}
	Then, for all $T>0$, there exists a unique solution to the SDE
    \begin{equation}
	\label{AD}
	\begin{cases}
		
		\d y_\zs{t}=\varpi (y_\zs{t})\d t+ \gamma(y_{t})\d W_\zs{t}\\
		y_0 \in \mathbb{R}
	\end{cases}
\end{equation}
 Moreover this solution $(y_\zs{s})_{0\le s\le T}$ satisfies
	\(
	\E\left(\sup_{0 \le s\le T}|y_\zs{s}|^2\right) <\infty\).
\end{theorem}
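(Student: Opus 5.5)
This is the classical Itô existence–uniqueness theorem for SDEs with globally Lipschitz, linearly growing coefficients, applied to the scalar equation \eqref{AD}. The plan has three parts: uniqueness via Gronwall, existence via Picard iteration on the space of continuous adapted processes with finite $\E\sup|\cdot|^2$, and the moment bound via BDG and Gronwall. Continuity of $\varpi,\gamma$ is implied by the Lipschitz condition 1 and is only needed for measurability of the coefficients.

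\textbf{Uniqueness.} Let $y,\tilde y$ be two solutions with $y_0=\tilde y_0$, and localize with $\sigma_N=\inf\{t:|y_t|\vee|\tilde y_t|\ge N\}\wedge T$. By $(a+b)^2\le 2a^2+2b^2$, Cauchy–Schwarz on the $\d t$ term, and the Itô isometry on the $\d W$ term, condition 1 gives
\[
\E|y_{t\wedge\sigma_N}-\tilde y_{t\wedge\sigma_N}|^2\le 2K^2(T+1)\int_0^t \E|y_{s\wedge\sigma_N}-\tilde y_{s\wedge\sigma_N}|^2\,\d s .
\]
Gronwall's lemma then gives zero for every $t$. Letting $N\to\infty$ (so $\sigma_N\to T$ by path continuity) and using continuity of paths yields indistinguishability.

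\textbf{Existence.} Define $y^{(0)}_t\equiv y_0$ and
\[
y^{(k+1)}_t=y_0+\int_0^t\varpi(y^{(k)}_s)\,\d s+\int_0^t\gamma(y^{(k)}_s)\,\d W_s .
\]
Using condition 2, each iterate has $\E\sup_{s\le T}|y^{(k)}_s|^2<\infty$, by induction together with Doob's maximal inequality. The Lipschitz bound combined with Doob's inequality gives
\[
\Delta_{k+1}(t):=\E\sup_{s\le t}|y^{(k+1)}_s-y^{(k)}_s|^2\le C\int_0^t\Delta_k(s)\,\d s,
\]
with $C=2K^2(T+4)$. Iterating, $\Delta_k(T)\le \Delta_0(T)(CT)^k/k!$. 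Chebyshev's inequality and Borel–Cantelli then show that $y^{(k)}$ converges a.s. uniformly on $[0,T]$ to a continuous adapted process $y$. The same convergence holds in $L^2(\Omega;C[0,T])$, so passing to the limit in the integral equation, again via the Lipschitz estimates, shows that $y$ solves \eqref{AD}.

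\textbf{Moment bound.} With $\tau_N=\inf\{t:|y_t|\ge N\}\wedge T$, set $f(t)=\E\sup_{s\le t\wedge\tau_N}|y_s|^2$. The inequalities $(a+b+c)^2\le3(a^2+b^2+c^2)$, Cauchy–Schwarz, Doob's inequality, and condition 2 give
\[
f(t)\le 3|y_0|^2+6K^2(T+4)\int_0^t(1+f(s))\,\d s .
\]
Gronwall's lemma gives a bound on $f$ independent of $N$, and Fatou's lemma as $N\to\infty$ yields $\E\sup_{0\le s\le T}|y_s|^2<\infty$.

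The only delicate point is the \textbf{a priori finiteness} needed before applying Gronwall. In the existence step this is handled by the induction on the iterates; in the uniqueness and moment steps it is handled by the localizing stopping times. Everything else is routine.
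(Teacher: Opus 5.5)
Your proposal is correct and follows essentially the standard argument the paper cites for this result (Lamberton--Lapeyre, pp.\ 57--60): Gronwall for uniqueness, and Picard iteration with Doob's inequality and the same constant $2K^2(T+4)$ in the space of continuous adapted processes with $\E\sup_{s\le T}|y_s|^2<\infty$. The differences are cosmetic. The reference gets the moment bound for free because the fixed point is built in that space; your $L^2(\Omega;C[0,T])$ convergence already gives this, so your separate localized moment step is redundant. Your indexing of $\Delta_k$ is also off by one, which does not affect the argument.
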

\proof
The proof can be found e.g. in \cite{lamberton1997introduction} pages 57-60.
\endproof

\begin{theorem}[Generalized Feynman-Kac's formula] \label{Thm:FeymanKac}Consider the partial differential equation 
\begin{align*}
	\frac{\partial}{\partial t}u(x,t)+\mu(x,t)\frac{\partial}{\partial x}u(x,t)+\frac12\sigma^2(x,t)\frac{\partial^2}{\partial x^2}u(x,t)-r(t,x)u(x,t)+f(x,t)=0 \,\, (t,x)\in [0, T)\times \bbr,
\end{align*}
with $u(x, T)=g(x)$, where $\mu, \sigma, g, f$ are known functions, and $u: \bbr\times [0, T]\to \bbr$ is the unknown. Then the Feynman-Kac formula expresses $u(x,t)$ as 
\begin{align}
	u(x,t)=\mathbb{E}\bigg[\int_{t}^{T}e^{-\int_{t}^{T}r(u,X_u)\d u}f(s,X_s)\d s+e^{-\int_{t}^{T}r(u,X_u)\d u}g(X_T)\bigg|X_t=x\bigg]
	,\end{align}
where $X$ satisfies $\d X_t=\mu(X_t,t)\d t+\sigma(X_t,t)\d W_t$.

\end{theorem}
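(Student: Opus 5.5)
We need to prove the generalized Feynman–Kac formula (Theorem \ref{Thm:FeymanKac}). Note that the discount factor in the running-cost term should read $e^{-\int_t^s r(u,X_u)\,\d u}$, not $e^{-\int_t^T}$; we prove the formula in that form. We assume $u\in C^{2,1}$ on $\bbr\times[0,T)$, continuous up to $T$, with polynomial growth. We also assume the SDE for $X$ has a unique strong solution with moments of all orders, for instance by Theorem \ref{Appendix_Existence_SDE}, and that $r$ is bounded below while $f,g$ have polynomial growth.

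Fix $(x,t)$ and let $X$ start from $X_t=x$. Set $D_s:=e^{-\int_t^s r(v,X_v)\,\d v}$ and apply Itô's formula to the product $D_s\,u(X_s,s)$ on $[t,s]$. Since $\d D_s=-r(s,X_s)D_s\,\d s$, we obtain
\[
\d\big(D_s u(X_s,s)\big)=D_s\Big(u_t+\mu u_x+\tfrac12\sigma^2u_{xx}-ru\Big)(X_s,s)\,\d s+D_s\sigma(X_s,s)u_x(X_s,s)\,\d W_s .
\]
The PDE replaces the $\d s$-bracket by $-f(X_s,s)$.

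Integrating from $t$ to $\tau_n\wedge s$ gives
\[
u(x,t)=D_{\tau_n\wedge s}\,u(X_{\tau_n\wedge s},\tau_n\wedge s)+\int_t^{\tau_n\wedge s}D_v f(X_v,v)\,\d v-\int_t^{\tau_n\wedge s}D_v\sigma u_x\,\d W_v .
\]
Here $\tau_n:=\inf\{v\ge t:|X_v|\ge n\}\wedge T$ is a localizing time. On $[t,\tau_n]$ all integrands are bounded, so the stochastic integral has zero expectation. Taking expectations then yields an exact identity for every $n$ and every $s<T$.

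Next we let $s\uparrow T$ and $n\to\infty$. Pathwise, $\tau_n\wedge s\to T$ by continuity of $X$, and continuity of $u$ up to $T$ gives $u(X_T,T)=g(X_T)$. To pass the limit inside the expectation, note that $D\le e^{c(T-t)}$ when $r\ge -c$. The random variables $D_{\tau_n\wedge s}u(X_{\tau_n\wedge s},\cdot)$ and the $f$-integral are then dominated by $C(1+\sup_{v\le T}|X_v|^k)$. This bound is integrable by the moment estimate on $X$, so dominated convergence yields the formula.

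The main obstacle is this limiting step: securing uniform integrability via growth bounds on $u$, $f$, $g$ and moment bounds on $X$. If $r$ is unbounded below, as with the entropy weight $-m(1-\eta)\cE$ in the paper's applications, we would instead use an exponential-moment condition like Definition \ref{def_1}(iv) together with Hölder's inequality, as in the proof of Theorem \ref{Theo_value}.
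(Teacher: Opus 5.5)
Your argument is correct under the hypotheses you add. Your correction of the running-cost discount to $e^{-\int_t^s r(v,X_v)\,dv}$ is also right: that is the form the paper actually uses in \eqref{Re-Sol} and \eqref{eq:new-EU}.

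The paper itself gives no argument; it only cites Peng and Pardoux--Peng, where the representation comes from BSDE theory. There, It\^o's formula shows that $(u(s,X_s),\sigma u_x(s,X_s))$ solves the linear BSDE with driver $-ry+f$ and terminal value $g(X_T)$. Uniqueness, together with the explicit solution of linear BSDEs (itself obtained by It\^o's formula on $D_sY_s$), then yields the formula.

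Your direct route merges these two steps into one. Because you localize before taking expectations, you need growth control only on $u$, not on $u_x$, and only a lower bound on $r$. The BSDE route in its standard form needs $\sigma u_x(s,X_s)$ to be square-integrable and a Lipschitz driver, i.e.\ bounded $r$. In exchange, it extends to nonlinear drivers and viscosity solutions, and it is the framework the paper relies on in Lemma \ref{Lemma1} and Remark \ref{Rem1}.

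Two small points. First, Theorem \ref{Appendix_Existence_SDE} only gives $\mathbb{E}\sup_{s\le T}|y_s|^2<\infty$, and only for time-homogeneous coefficients. For polynomial growth of degree $k>2$ you should instead invoke the standard $L^k$ moment estimate (Burkholder--Davis--Gundy plus Gronwall) under Lipschitz and linear-growth conditions uniform in $t$.

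Second, your closing remark misplaces the difficulty in the paper's application. There $r=-m(1-\eta)\mathcal{E}(\lambda)$, and any density on $[a,b]$ satisfies $\mathcal{E}(\lambda)\le\ln(b-a)$. So the rate is automatically bounded below, and the discount factor is at most $\max\{1,(b-a)^{m(1-\eta)T}\}$. What is genuinely uncontrolled is the running cost $f=m\mathcal{E}(\lambda_s)$: for a general admissible policy it can be arbitrarily negative and need not obey any growth bound in the state. That is where an integrability condition on $\int_0^T|\mathcal{E}(\lambda_s)|\,ds$ is needed, as implied by Definition \ref{def_1}(iv). Your proof otherwise extends verbatim to the two-dimensional state $(X,y)$ and the relaxed generator $\mathcal{A}^\lambda$ used there.
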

\proof See e.g. \cite{pardoux2005backward, peng1991probabilistic,peng1992generalized}. \endproof
\section{Truncated Gaussian distribution} \label{Sec:App-Truncated}
Let $X$ be $\cN(\alpha,\beta^2)$ and $Y$ a truncated normal random variable of $X$ on $[a,b]$. With $\phi$ the standard normal probability density function, the truncated normal PDF and CDF are, respectively, given by
\[f(y,\alpha,\beta^2,a,b)=\frac{\vphi(\frac{y-\alpha}{\beta})}{\beta[\Phi(\frac{b-\alpha}{\beta})-\Phi(\frac{a-\alpha}{\beta})]}\mathds{1}_{[a,b]}(y)
\]
\[F(y,\alpha,\beta^2,a,b)=\frac{[\Phi(\frac{y-\alpha}{\beta})-\Phi(\frac{a-\alpha}{\beta})]}{[\Phi(\frac{b-\alpha}{\beta})-\Phi(\frac{a-\alpha}{\beta})]}\mathds{1}_{[a,b]}(y)
\]
The mean and variance of a truncated Gaussian distribution are, respectively, given by
\[
\mathbb{E}(Y)=\alpha +\beta \frac{\phi(\frac{a-\alpha}{\beta})-\phi(\frac{b-\alpha}{\beta})}{\Phi(\frac{b-\alpha}{\beta})-\Phi(\frac{a-\alpha}{\beta})}
\]
\[ \text{Var}(Y)=\beta^2\left[1+\dfrac{\frac{a-\alpha}{\beta}\phi(\frac{a-\alpha}{\beta})-\frac{b-\alpha}{\beta}\phi(\frac{b-\alpha}{\beta})}{\Phi(\frac{b-\alpha}{\beta})-\Phi(\frac{a-\alpha}{\beta})}-\left(\frac{\phi(\frac{a-\alpha}{\beta})-\phi(\frac{b-\alpha}{\beta})}{\Phi(\frac{b-\alpha}{\beta})-\Phi(\frac{a-\alpha}{\beta})} \right)^2
\right].
\]
The differential entropy of the truncated normal distribution is given by
\[
\mathcal{H}(Y)
= \frac{1}{2}\log(2\pi e \beta^2)
+ \log\!\Big(\Phi\big(\tfrac{b-\alpha}{\beta}\big)-\Phi\big(\tfrac{a-\alpha}{\beta}\big)\Big)
+ \frac{\frac{a-\alpha}{\beta}\phi\big(\tfrac{a-\alpha}{\beta}\big)
      -\frac{b-\alpha}{\beta}\phi\big(\tfrac{b-\alpha}{\beta}\big)}
       {2\Big[\Phi\big(\tfrac{b-\alpha}{\beta}\big)-\Phi\big(\tfrac{a-\alpha}{\beta}\big)\Big]},
\]
see, e.g., \cite{shah1966, johnson1972continuous} for properties of the truncated normal distribution.

\section{Quasilinear parabolic PDEs}\label{Sec:paraPDE}
\noindent 
Here we recall the existence theorem from \cite{ladyzhenskaia1968linear} for the Cauchy problem in $\bbr^{n}$. To do this, we set the differential operator as the operator.
\begin{equation}\label{Operator-Gen-1-1}
	\cL(x,t,\frac{\partial }{\partial x})u
	=
	\sum_\zs{1  \leq  i, j  \leq  n} c_\zs{ij}(x,t, u)u_\zs{x_\zs{i} x_\zs{j} }- c(x, t, u, u_\zs{x})\,.
\end{equation}
\noindent
Now we study the following Cauchy problem.
\begin{align}\label{sec:CauchyEq.110}
	\begin{cases}
		u_\zs{t}  -\cL(x,t,\frac{\partial }{\partial x})u
		= 0 \,, \\[3mm]
		u|_\zs{t = 0} = u(x, 0) = \psi_\zs{0}(x)\,, \quad x\in\bbr^{n}\,.
	\end{cases}
\end{align}
We assume that there exist some functions $(c_\zs{1}, c_\zs{2}, \ldots, c_\zs{n}) ,$ such that
\begin{equation}\label{sec:CaPr.00}
	c_\zs{ij}(x, t, u, p)  \equiv \frac{\partial c_\zs{i}(x, t,  u, p)}{\partial p_\zs{j}} 
	\,,\quad p\in\bbr^{n}\,.
\end{equation}
Using these functions, we set
$$
C(x, t, u, p)  \equiv c(x, t, u, p)- \sum^{n}_\zs{i = 1} \frac{\partial c_\zs{i}}{\partial u} p_i- \sum^{n}_\zs{i = 1} \frac{\partial c_\zs{i}}{\partial x_\zs{i}}\,.
$$
\noindent
Next, we also need the differential operator in $\bbr^{n}$ to be defined as
\begin{equation}
	\label{Dif-Opr--1-x}
	D_\zs{l_\zs{1},\dots,l_\zs{n}}\,\psi(x)
	=\frac{\partial^{l_\zs{1}+\dots+l_\zs{n}}}{(\partial x_\zs{1})^{l_\zs{1}}\dots (\partial x_\zs{n})^{l_\zs{n}}}\,\psi(x)\,.
\end{equation}
\noindent Let now $\V$ be some convex closure subset in $\bbr^{n}$, i.e. $\V\subseteq \bbr^{n}$.   Now we recall
the definitions of the H\"older spaces. First of all, for any $\V\to \bbr$ function $\psi$, we set the H\"oder constant of  order $0<\alpha<1$ as
\begin{equation}
	\label{holder-const-1}
	<\psi>^{(\alpha)}_\zs{\V}=
	\sup_\zs{|x-y|\le 1, x,y\in\V}
	\frac{|\psi(x)-\psi(y)|}{|x-y|^{\alpha}}\,.
\end{equation}
\noindent
We denote by $\cH^{\mu}(\V)$ the H\"older space of order $\mu>0$, i.e., the Banach space
of $\V\to\bbr$ functions $\psi$, which are continuous along with all their derivatives
$D_\zs{l_\zs{1},\dots,l_\zs{n}}\,\psi(x)$ for $l_\zs{1}+\dots+l_\zs{n}\le [\mu]$
, and
$
\|\psi\|_\zs{\mu,\cH}<\infty\,,
$
where $[\mu]$ is the integer part of $\mu$, and in the sequel, we denote the fractional part by $\{\mu\}$, and $|\psi|_\zs{\mu,\cH}$ is the H\"older norm, which
in this case
is defined as
\begin{align}
	\|\psi\|_\zs{\mu,\cH}&=
	\sum^{[\mu]}_\zs{j=0}
	\sum_\zs{l_\zs{1}+\ldots+l_\zs{n}=j}
	\sup_\zs{x\in\V}
	\big|
	D_\zs{l_\zs{1},\dots,l_\zs{n}}\,\psi(x)
	\big| 
	+
	\sum_\zs{l_\zs{1}+\ldots+l_\zs{n}=[\mu]}
	<D_\zs{l_\zs{1},\dots,l_\zs{n}}\,\psi>^{\{\mu\}}\,.\label{Hld-nprm--1}
\end{align}
\noindent
It should be noted that in the scale case, i.e., when $n=1$, this norm is represented as
\begin{equation}
	\label{Hld-nprm--1-n-1}
	\|\psi\|_\zs{\mu,\cH}=
	\sum^{[\mu]}_\zs{j=0}
	\sup_\zs{x\in\V}
	\big|
	\psi^{(j)}(x)
	\big|
	+
	<\psi^{([\mu])}>^{\{\mu\}}_\zs{\V}\,,
\end{equation}
\noindent where $\psi^{(l)}$ is the derivative of order $l$.

\noindent
Moreover, similar to the definition \eqref{Dif-Opr--1-x}
we set
for any $\bbr^{n}\times [0,T]\to\bbr$ function $u$  the corresponding differential operator as
\begin{equation}
	\label{Dif-Opr--2-x-t}
	D^{k}_\zs{l_\zs{1},\dots,l_\zs{n}}\,u(x,t)
	=
	\frac{\partial^{l_\zs{1}+\dots+l_\zs{n}}}{(\partial x_\zs{1})^{l_\zs{1}}\dots (\partial x_\zs{n})^{l_\zs{n}}}\,
	\frac{\partial^{k}}{(\partial t)^{k}}
	\,u(x,t)\,.
\end{equation}
\noindent
Moreover, 
let now $\Gamma=\V\times [0,T]$. 
We denote by $\cH^{\mu,\mu/2}(\Gamma)$ the H\"older space of orders $(\mu,\mu/2)$, i.e.
this is the Banach space  of the $\bbr^{n}\times [0,T]\to\bbr$ functions $u$ which are continuous together with all their derivatives 
$D^{k}_\zs{l_\zs{1},\dots,l_\zs{n}}\,u(x,t)$ for $2k+l_\zs{1}+\ldots+l_\zs{n}\le [\mu]$ and 
$$
\|u\|_\zs{\mu,\mu/2,\cH}<\infty\,.
$$
\noindent Here $\|u\|^{\mu}_\zs{\cH}$ is the h\"older norm, which is defined as
\begin{align}
	\nonumber
	\|u\|_\zs{\mu,\mu/2,\cH}&=
	\sum^{[\mu]}_\zs{j=0}
	\sum_\zs{2k+l_\zs{1}+\ldots+l_\zs{n}=j}
	\sup_\zs{x\in\V}
	\sup_\zs{0\le t\le T}
	\big|
	D^{k}_\zs{l_\zs{1},\dots,l_\zs{n}}\ u(x,t)
	\big|\\[4mm] \nonumber
	&+
	\sum_\zs{2k+l_\zs{1}+\ldots+l_\zs{n}=[\mu]}
	<D^{k}_\zs{l_\zs{1},\dots,l_\zs{n}}\,u>^{\{\mu\}}_\zs{x}\\[4mm] \label{Holder-norm-2}
	&+
	\sum_\zs{2k+l_\zs{1}+\ldots+l_\zs{n}=[\mu]}
	<D^{k}_\zs{l_\zs{1},\dots,l_\zs{n}}\,u>^{\{\mu\}}_\zs{t}
	\,,
\end{align} 
where
$$
<u>^{\alpha}_\zs{\V,x}=
\sup_\zs{|x-y|\le 1, x,y\in\V\,t\in [0,T]}
\frac{|u(x,t)-u(y,t)|}{|x-y|^{\alpha}}
$$
and 
$$
<u>^{\alpha}_\zs{\V,t}=
\sup_\zs{|t-s|\le 1, x\in\V\,s,t\in [0,T]}
\frac{|u(x,t)-u(x,s)|}{|t-s|^{\alpha}}\,.
$$
\noindent 
Moreover, for any  $N \geq 1$ we set
$$
\Gamma_\zs{N} =  \{ (x, t)\in\bbr^{n}\times [0,T]:  \,  |x|  \leq  N \}\,.
$$
We introduce the following conditions for ensuring the existence of  at least one  solution $u(x, t)$ for the problem \eqref{sec:CauchyEq.110}.


\begin{itemize}
	\item[$\C_\zs{1}$)] \hspace{1mm} {\sl 
		The operator \eqref{Operator-Gen-1-1} is uniformly elliptic, i.e.  for any $ t \in (0,T]$, for  arbitrary $ x,\;  u,\;  p \in \bbr^n $, and any $z =  (z_\zs{1}, z_\zs{2}, ..., z_\zs{n}) \in \bbr^{n}$, there exists 
		$ 0 < \nu_\zs{1} \le \nu_\zs{2}<\infty$  such that
		\begin{align*}
			\nu_\zs{1} |z|^2  \leq 
			\sum^{n}_\zs{i,j=1}
			c_\zs{ij} (x, t,  u, p) z_\zs{i} z_\zs{j}  \leq  \nu_\zs{2} |z|^2\,,
		\end{align*}
		\noindent where $|z|^{2}=\sum^{n}_\zs{j=1}\,z^{2}_\zs{j}$.
	}
	\item[$\C_\zs{2}$)] \hspace{1mm} { There exists $ b \geq 0 $ and some $ \bbr_\zs{+} \rightarrow \bbr_\zs{+}$ function $ \Phi $,
		such that for all $ x \in \bbr^n$, $u \in \bbr$ and for all $ 0 \le t \le T$,
		\begin{equation}\label{sec:CaPr.1}
			C(x, t, u, 0) u \geq - \Phi (|u|) |u| - b ,
			\quad
			\mbox{and}
			\quad
			\int_\zs{0}^\infty \frac{d \tau}{ \Phi (\tau)} =  \infty\,.
		\end{equation}
		
	}
	\item[$\C_\zs{3}$)] \hspace{1mm} {\sl There exists $\varepsilon>0$ such that for all $ N \geq 1$,  
		\begin{align*}
			\psi_\zs{0}(x) \in \cH^{2+\varepsilon}(\Gamma_\zs{N})
			\quad
			\mbox{and}
			\quad 
			\max_\zs{x\in\bbr^{n}} \mid \psi_\zs{0}(x) \mid <\infty\,.
	\end{align*} }

	\item[$\C_\zs{4}$)] \hspace{1mm}
	{\sl The functions $ c_\zs{i} (x, t, u, p) $ and $ c(x, t, u, p)$ are continuous, 
		the functions $ (c_\zs{i})_\zs{1  \leq  i  \leq  n} $ are differentiable with respect to $ x, u$ and $ p \in \bbr^n$ ,
		and for any $ N \geq 1$
		$$
		\sup_\zs{(x, t)  \leq  \Gamma_\zs{N}}\,\sup_\zs{|u|  \leq  N}\,\sup_\zs{ p \in \bbr^n}
		\,\frac{\sum^{n}_\zs{i = 1} 
			\big( |c_i| + \big| \dfrac{\partial c_i}{\partial u} \big| \big) (1+|p|) + \sum^n_\zs{i, j = 1} |\dfrac{\partial c_i}{\partial x_\zs{j}}| +|c|}{1+ |p|^2} <\infty\,.
		$$
	} 
	
	\item[$\C_\zs{5}) $] \hspace{1mm}
	{\sl  For all $ N \geq 1,$ 
		and for all $(x,t)\in\Gamma_\zs{N}$,
		$ |u|  \leq  N$ 
		and $ |p|  \leq  N $,
		the functions $c_\zs{i}$, $c$,  $\partial c_\zs{i} / \partial p_\zs{j}$, $\partial c_\zs{i} / \partial u$,
		and $ \partial c_\zs{i} / \partial x_\zs{i}$ are continuous functions satisfying a H$\ddot{o}$lder condition in 
		$x$, $t$, $u$ and $p$ with exponents $\varepsilon$, $\varepsilon/ 2$, $\varepsilon$ and $\varepsilon$ respectively for $\varepsilon>0$ from 
		the condition $\C_\zs{1})$.
	}    
\end{itemize}

\begin{theorem}[ See Theorem 8.1, p. 495 of \cite{ladyzhenskaia1968linear}]\label{CaPrt00_Theorem}
	Assume that the conditions $ \C_\zs{1}$)--$ \C_\zs{5}$) hold. Then there exists at least one solution $ u(x,t)$ of Cauchy problem (\ref{sec:CauchyEq.110})
	which is bounded in $\bbr^n \times [0, T]$ and  for any $N \geq 1$ belongs to $\cH^{2+\varepsilon, 1+\varepsilon/2}(
	\Gamma_\zs{N})$.It will be an element of $H^{2+\epsilon,1+\epsilon/2}(\bbr^n \times [0, T])$ if, in addition, it is assumed that the constants in conditions $\gr{C}_4), \gr{C}_5)$do not depend on $\Gamma_N$.
	
	If the functions $c_{ij}(x,t,u,p)$ and $C(x,t,u,p)$ are differentiable with respect to $u$ and $p$ and
	\[
	\max_{(x,t) \in \bbr^n\times [0, T], |u,p| \leq N} \left| \frac{\partial c_{ij}(x,t,u,p)}{\partial u},\frac{\partial c_{ij}(x,t,u,p)}{\partial p}, \frac{\partial C(x,t,u,p)}{\partial p}   \right| \leq \mu_1(N),
	\]
	\[
	\min_{(x,t) \in \bbr^n\times [0, T], |u,p| \leq N} \frac{\partial C(x,t,u,p)}{\partial p} \geq -\mu_2(N),
	\]
	for an arbitrary $N$ and some constants $\mu_1, \mu_2$ depending possibly on $N$, then problem \eqref{sec:CauchyEq.110} in  $\bbr^n\times [0, T]$ has no more than one classical solution $u(x,t)$ that is bounded in $\bbr^n\times [0, T]$ together with its derivatives of first and second orders.
	
\end{theorem}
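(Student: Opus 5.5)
The plan follows the classical Leray--Schauder scheme of \cite{ladyzhenskaia1968linear}: solve the Cauchy problem \eqref{sec:CauchyEq.110} on bounded cylinders first, obtain a priori estimates independent of the cylinder, then pass to the limit by a diagonal argument. For each $R\ge 1$ I will consider the first boundary value problem on $Q_R=\{|x|<R\}\times(0,T]$. I take initial datum $\psi_0$ and lateral datum $\zeta_R\psi_0$, where $\zeta_R$ is a smooth cutoff chosen so that compatibility conditions of order one hold.

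\textbf{Step 1 (sup bound).} Using $\C_2)$ and the maximum principle, I will show $\max_{Q_R}|u|\le M$ with $M$ independent of $R$. At an interior maximum of $v=e^{-\kappa t}u$, the structure $C(x,t,u,0)u\ge-\Phi(|u|)|u|-b$ yields the differential inequality $\frac{d}{dt}\max|u|\le \Phi(\max|u|)+b$. Its solutions stay bounded on $[0,T]$ precisely because $\int^\infty d\tau/\Phi(\tau)=\infty$. The divergence form \eqref{sec:CaPr.00} is what turns $c$ into $C$ when the equation is written in divergence form. The bound on $|\psi_0|$ from $\C_3)$ starts the comparison.

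\textbf{Step 2 (gradient and H\"older bounds).} With $|u|\le M$ fixed, I will use $\C_1)$ and the quadratic-growth condition $\C_4)$ to derive interior estimates in $\Gamma_N$ for $N<R-1$. First comes the Ladyzhenskaya--Ural'tseva De Giorgi-type bound on $\max|u_x|$ in $\Gamma_N$: test the differentiated equation with $\zeta^2 u_{x_k}$ and use that the nonlinearities are at most $O(1+|p|^2)$. This is the step I expect to be the main obstacle, since it uses the full divergence structure. Next comes a H\"older bound for $u_x$. Finally, I will linearize with frozen coefficients and apply Schauder theory together with $\C_5)$ to get $\|u\|_{2+\varepsilon,1+\varepsilon/2,\Gamma_N}\le K(N)$, independent of $R$. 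Existence on $Q_R$ itself then follows from the Leray--Schauder fixed point theorem in $\cH^{1+\beta,(1+\beta)/2}(\bar Q_R)$, applied along the homotopy $\tau\in[0,1]$ that multiplies the nonlinear terms. The same estimates hold uniformly in $\tau$.

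\textbf{Step 3 (limit and uniqueness).} Arzel\`a--Ascoli applied on each $\Gamma_N$, followed by the diagonal process, produces a subsequence $u^{R_k}\to u$ in $C^{2,1}$ locally. The limit $u$ solves \eqref{sec:CauchyEq.110}, satisfies $|u|\le M$, and lies in $\cH^{2+\varepsilon,1+\varepsilon/2}(\Gamma_N)$. If the constants in $\C_4)$--$\C_5)$ are uniform in $N$, the local norms are uniformly bounded, which gives membership in the global space.

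For uniqueness, let $u^1,u^2$ be two solutions bounded together with their first and second derivatives. The difference $w=u^1-u^2$ satisfies a linear equation whose coefficients are obtained by the mean value theorem. These coefficients are bounded thanks to $\mu_1(N)$, with $N$ the common bound of the derivatives, and the zeroth-order coefficient is bounded below thanks to $\mu_2(N)$. I then apply the maximum principle on $\bbr^n\times[0,T]$ to $e^{-\lambda t}w/(1+|x|^2)^{\gamma}$ with $\lambda$ large, which handles the unbounded domain. Since $w(\cdot,0)=0$, this gives $w\equiv0$.
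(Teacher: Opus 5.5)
Your outline is the Ladyzhenskaya--Solonnikov--Ural'tseva argument that the paper relies on; the paper gives no proof beyond the citation. The ingredients match: first boundary value problems on expanding cylinders, an $R$-independent sup bound from $\C_2)$ by comparison with $w'=\Phi(w)+b$, local $\cH^{2+\varepsilon,1+\varepsilon/2}(\Gamma_N)$ bounds uniform in $R$, a diagonal limit, and uniqueness by linearizing the difference.

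Your uniqueness argument with the weight $e^{-\lambda t}(1+|x|^2)^{-\gamma}$ is correct, provided the second hypothesis is read as $\partial C/\partial u\ge-\mu_2(N)$. The printed $\partial C/\partial p$ is a misprint, and as printed nothing bounds the zeroth-order coefficient of the equation for $w$ from below.

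In Step 1, the maximum principle sees the lower-order term of the non-divergence form at $p=0$. $\C_2)$ on $C$ is the right hypothesis because the equation is meant as $u_t-\frac{d}{dx_i}c_i(x,t,u,u_x)+c=0$, whose non-divergence lower-order term is exactly $C$. Your sentence on this has the direction reversed, and with \eqref{Operator-Gen-1-1} read literally you would need the bound on $c(x,t,u,0)$ instead. Also, the global H\"older conclusion needs $\|\psi_0\|_{2+\varepsilon,\Gamma_N}$ bounded uniformly in $N$, not only the constants in $\C_4),\C_5)$.

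The step that fails as written is existence on $Q_R$.

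\emph{The lateral datum.} $\zeta_R\psi_0$ cannot in general satisfy first-order compatibility. Zeroth order forces it to equal $\psi_0$ on $\{|x|=R,\ t=0\}$, and first order then demands $\partial_t(\zeta_R\psi_0)=\cL\psi_0$ there. In the case the paper actually uses, $\psi_0\equiv0$ for \eqref{Ap_2}, your datum vanishes identically while $\cL\psi_0=-c(x,0,0,0)$ is nonzero in general. You need either a time-dependent datum $g$ with $g(\cdot,0)=\psi_0$, $g_t(\cdot,0)=\cL\psi_0$ on $|x|=R$ and $\sup|g|$ bounded independently of $R$, or solutions that are smooth only away from the corner. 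The latter is harmless here, since only $x$-interior estimates are passed to the limit.

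\emph{The fixed-point argument.} Leray--Schauder in $\cH^{1+\beta,(1+\beta)/2}(\bar Q_R)$ needs a priori bounds on all of $\bar Q_R$. That means a gradient bound at the lateral boundary (barriers from $\C_1)$ and $|c|\le K(1+|p|^2)$) and a boundary H\"older estimate for $u_x$. It also needs a homotopy along which compatibility holds for every $\tau$; multiplying the lower-order terms by $\tau$ changes $\cL\psi_0$ at the corner. Your Step 2 estimates are interior in $x$ and give none of this, so ``the same estimates hold uniformly in $\tau$'' does not close the argument. The clean route is to take solvability on $Q_R$ from the bounded-cylinder theorem for divergence-form equations in Ch.~V of \cite{ladyzhenskaia1968linear}, and use your $R$-independent local estimates only for $R\to\infty$.

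\emph{Order of estimates in Step 2.} Under mere quadratic growth, the gradient bound needs a H\"older estimate for $u$ first. The terms $\int\zeta^2|u_x|^4$ are absorbed only through smallness of $\mathrm{osc}\,u$ on small cylinders.
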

\section{Linear parabolic PDEs}

\noindent
In this section we study the problem \eqref{sec:CauchyEq.110} for the linear operator 

\begin{align}
\label{Operator-Linear-1-1}
	\cL(x,t,\frac{\partial }{\partial x})u
	&=
	\sum_\zs{1  \leq  i, j  \leq  n} a_\zs{ij}(x,t, u)u_\zs{x_\zs{i} x_\zs{j} }
	-\sum^{n}_\zs{j=1}\,a_\zs{i}(x,t)\,u_\zs{x_\zs{i}}
	-
	a(x, t) u\,.
\end{align}

\noindent 
Now for this operator we consider 
the following Cauchy problem in $\bbr^{n}\times [0,T]$, i.e
\begin{align}\label{sec:CauchyEq.110-LinEq}
	\begin{cases}
		u_\zs{t}  -\cL(x,t,\frac{\partial }{\partial x})u
		= f \,, \\[3mm]
		u|_\zs{t = 0} = u(x, 0) = \psi_\zs{0}(x)\,, \quad x\in\bbr^{n}\,.
	\end{cases}
\end{align}
\noindent where $f$ is some $\bbr^{n}\times [0,T]\to\bbr$ function.

\begin{theorem}[ See Theorem 5.1, p. 320 in \cite{ladyzhenskaia1968linear}]\label{CaPrt00-Theorem-Lin}
	Assume that the coefficients of the operator \eqref{Operator-Linear-1-1} belong to the space 
	$\cH^{\mu,\mu/2}(\bbr^{n}\times [0,T])$, where $\mu=l+\epsilon$
	for some integer $l\ge 0$ and $0<\epsilon<1$. Then for any $f\in\cH^{\mu,\mu/2}(\bbr^{n}\times [0,T])$ and 
	$\psi_\zs{0}\in \cH^{\mu+2}(\bbr^{n})$ the problem \eqref{sec:CauchyEq.110-LinEq} with the differential operator \eqref{Operator-Linear-1-1}
	has a unique solution in 
	$\cH^{\mu+2,\mu/2+1}(\bbr^{n}\times [0,T])$
	such that
	$$
	\sup_\zs{f\in\cH^{\mu,\mu/2}(\bbr^{n}\times [0,T])}
	\,
	\sup_\zs{\psi_\zs{0}\in \cH^{\mu+2}(\bbr^{n})}
	\frac{
		\|u\|_\zs{\mu+2,\mu/2+1,\cH}}{\|f\|_\zs{\mu,\mu/2,\cH}
		+
		\|\psi_\zs{0}\|_\zs{\mu+2,\cH}}
	<\infty\,.
	$$
\end{theorem}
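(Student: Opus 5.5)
The natural route is the classical Schauder programme for the linear problem \eqref{sec:CauchyEq.110-LinEq}: establish an a priori estimate in $\cH^{\mu+2,\mu/2+1}(\bbr^{n}\times[0,T])$, deduce uniqueness from a maximum principle, and obtain existence by the method of continuity starting from the heat equation. Throughout I would assume, as is implicit in the reference, that the principal part is uniformly parabolic, i.e.\ $(a_{ij})$ satisfies the bounds of condition $\C_1)$ with constants $\nu_1,\nu_2$. The norm estimate in the statement is exactly the a priori bound, so the whole proof reduces to proving it and then running the continuity argument. I would first treat $l=0$ (so $\mu=\epsilon$), and then obtain higher $l$ by differentiating the equation in $x$ and in $t$.

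\textbf{Step 1 (constant coefficients).} For $u_t-\sum a_{ij}^0u_{x_ix_j}=f$ with constant positive definite $(a^0_{ij})$, a linear change of variables reduces the problem to the heat equation. The solution is then written explicitly through the Gaussian kernel: a Poisson integral for $\psi_0$ plus a Duhamel (volume potential) term for $f$. The estimate
\[
\langle u_{xx}\rangle^{(\epsilon)}+\langle u_t\rangle^{(\epsilon)}\le C\big(\langle f\rangle^{(\epsilon)}+\|\psi_0\|_{2+\epsilon,\cH}\big)
\]
follows from the standard singular-integral computation: cancellation of second derivatives of the kernel against $f(y,s)-f(x,t)$, together with the parabolic scaling $|x|+|t|^{1/2}$. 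This step is the main technical obstacle, since every other step rests on it. I would follow the potential estimates of Chapter IV of \cite{ladyzhenskaia1968linear} rather than rederive them.

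\textbf{Step 2 (variable coefficients, a priori estimate).} Freeze the coefficients at a point $(x_0,t_0)$ and localise with a cutoff supported in a parabolic cylinder of small radius $r$. Apply Step 1 to the localised function, and move the commutator terms and the differences $(a_{ij}-a_{ij}(x_0,t_0))u_{x_ix_j}$ to the right-hand side. Choosing $r$ small makes the oscillation of $a_{ij}$ times $\langle u_{xx}\rangle^{(\epsilon)}$ absorbable into the left-hand side. The resulting lower-order terms are controlled by the interpolation inequality, which bounds $\sup|u_x|$ and the lower Hölder norms by $\eta\langle u_{xx}\rangle^{(\epsilon)}+C_\eta\sup|u|$. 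Finally, $\sup|u|$ is bounded by the maximum principle applied to $e^{-\kappa t}u$ with $\kappa>\sup|a|$, giving
\[
\sup|u|\le e^{\kappa T}\big(\sup|\psi_0|+T\sup|f|\big).
\]
Since the Hölder seminorms only involve $|x-y|\le1$, the local estimates patch together uniformly on the whole of $\bbr^n$. The same maximum-principle argument applied to the difference of two solutions yields uniqueness in the class of bounded classical solutions.

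\textbf{Step 3 (existence and higher regularity).} Consider the family
\[
L_\tau=(1-\tau)(\partial_t-\Delta)+\tau(\partial_t-\cL),\qquad \tau\in[0,1],
\]
viewed as bounded maps from $\cH^{2+\epsilon,1+\epsilon/2}$ into $\cH^{\epsilon,\epsilon/2}\times\cH^{2+\epsilon}$. Each $L_\tau$ is uniformly parabolic with uniform constants, so Step 2 gives a $\tau$-independent bound. At $\tau=0$ the heat equation is solvable by Step 1, and the method of continuity then transfers solvability to $\tau=1$. For $l\ge1$, differentiate the equation. The spatial derivatives $v=\partial_{x_k}u$ solve an equation of the same form, with right-hand side $\partial_{x_k}f+(\partial_{x_k}a_{ij})u_{x_ix_j}+\dots$, which lies in $\cH^{\mu-1,(\mu-1)/2}$. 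The initial data of $v$ is compatible because $\psi_0\in\cH^{\mu+2}$. An induction on $l$ then gives $u\in\cH^{\mu+2,\mu/2+1}$ together with the stated norm bound.
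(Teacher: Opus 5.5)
The paper does not prove this statement; it only quotes Theorem 5.1 of Chapter IV of \cite{ladyzhenskaia1968linear}. Your outline is the classical Schauder route by which that result is proved there: heat potentials for the model problem, freezing of coefficients with interpolation, a maximum-principle bound on $\sup|u|$, and continuation in a parameter from the heat equation. You are also right that uniform parabolicity of $(a_{ij})$ has to be added, since the statement as transcribed omits it. The one step that does not close as written is the first step of your induction on $l$.

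Take $\mu=1+\epsilon$. The case $l=0$ gives only $u\in\cH^{2+\epsilon,1+\epsilon/2}$, so $v=\partial_{x_k}u$ lies merely in $\cH^{1+\epsilon,(1+\epsilon)/2}$. It has no classical $v_t=u_{x_kt}$ or $v_{x_ix_j}$, and it satisfies the differentiated equation only in the sense of distributions. Your uniqueness statement covers bounded classical solutions only. So $v$ cannot be identified with the solution $w\in\cH^{2+\epsilon,1+\epsilon/2}$ that the $l=0$ theorem produces for the differentiated problem. For $l\ge 2$ the identification does work: the $(l-1)$-case gives $u\in\cH^{\mu+1,(\mu+1)/2}$ with $\mu+1\ge 3+\epsilon$, so the differentiation is classical and $\partial_{x_k}u\in\cH^{\mu,\mu/2}\subset\cH^{2+\epsilon,1+\epsilon/2}$. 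Every higher step, however, rests on $l=1$.

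The standard repair is by difference quotients $\delta_h u=(u(\cdot+he_k,\cdot)-u)/h$. These lie in $\cH^{2+\epsilon,1+\epsilon/2}$ and solve classically an equation with translated coefficients, initial datum $\delta_h\psi_0$ (bounded in $\cH^{2+\epsilon}$ uniformly in $h$, since $\psi_0\in\cH^{3+\epsilon}$), and right-hand side $\delta_h f+(\delta_h a_{ij})u_{x_ix_j}-(\delta_h a_i)u_{x_i}-(\delta_h a)u$ (bounded in $\cH^{\epsilon,\epsilon/2}$ uniformly in $h$). The $l=0$ a priori estimate, whose constant is invariant under translation of the coefficients, then bounds $\|\delta_h u\|_{2+\epsilon,1+\epsilon/2,\cH}$ uniformly. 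Letting $h\to0$ gives $u_x\in\cH^{2+\epsilon,1+\epsilon/2}$. The time regularity is then read off from $u_t=\cL u+f$ rather than by differentiating in $t$.

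Alternatively, run the continuity method directly in $\cH^{\mu+2,\mu/2+1}$. For $u$ already in that space the differentiation is legitimate, so it yields the level-$\mu$ a priori estimate from the level-$(\mu-1)$ one. The heat equation is solvable at level $\mu$ because $x$-derivatives commute with the heat potentials.
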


\section{Solution existence and volatility upper bound for special cases}\label{sec: upperbound}

\begin{proposition}[No exploration, without portfolio constraints]\label{Prop_1}
	{In the absence of exploration and portfolio constraints, i.e $m=0$ and $a=-\infty,\,b=+\infty$ in PDE \eqref{sol_u}, the condition $\C_{2})$ in Theorem \ref{CaPrt00_Theorem} is always satisfied with $\Psi=\frac{1-\eta}{2\eta}\bigg(\frac{\mu-r}{\sigma_*}\bigg)^2+r(1-\eta) >0$.}
\end{proposition}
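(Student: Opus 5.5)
The plan is to verify $\C_2)$ directly from the structure of the reduced equation \eqref{Ap_2}, exactly as in Proposition \ref{Prop_4} but in the degenerate setting where the entropy and truncation terms disappear. First I would specialize \eqref{Ap_2} to $m=0$ and $a=-\infty$, $b=+\infty$. With no constraint, $D\to-\infty$ and $F\to+\infty$ in \eqref{A_y}--\eqref{B_y}, so $Z_{a,b}\equiv 1$ and $\ln Z_{a,b}=0$. With $m=0$, the term $\frac{(1-\eta)m}{2}\ln\frac{2\pi_e m}{\eta\sigma^2(y)}$ also vanishes, since $m\ln m\to 0$. Hence the reduced equation is
\[
l_t-\tfrac12\delta^2(y)l_{yy}+c(y,t,l,l_y)=0,
\]
with
\[
c(y,t,l,p)=-\tfrac12\delta^2(y)p^2-\varpi(y)p-\tfrac{1-\eta}{2\eta}\Big[\tfrac{(\mu-r)^2}{\sigma^2(y)}+\tfrac{2\rho(\mu-r)\delta(y)}{\sigma(y)}p+\eta^2\delta^2(y)p^2\Big]-r(1-\eta).
\]

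Next I would compute $C$ as in Appendix \ref{Sec:paraPDE}, with $n=1$ and $c_1=\frac{\delta^2(y)}{2}p$. Since $\partial c_1/\partial l=0$ and $\partial c_1/\partial y=\delta(y)\dot\delta(y)p$, we have $C(y,t,l,p)=c(y,t,l,p)-\delta(y)\dot\delta(y)p$. At $p=0$ all gradient terms drop out, leaving
\[
C(y,t,l,0)=-\frac{1-\eta}{2\eta}\frac{(\mu-r)^2}{\sigma^2(y)}-r(1-\eta).
\]
This quantity is independent of $t$ and $l$.

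The key estimate uses $\sigma(y)\ge\sigma_*>0$ from $(\gr{SB})$, together with $0<\eta<1$ and $\mu\ge r$. These give the uniform bound
\[
|C(y,t,l,0)|\le \frac{1-\eta}{2\eta}\Big(\frac{\mu-r}{\sigma_*}\Big)^2+r(1-\eta)=\Psi .
\]
It follows that $C(y,t,l,0)\,l\ge -|C(y,t,l,0)|\,|l|\ge -\Psi|l|$. Thus the first inequality in \eqref{sec:CaPr.1} holds with the constant function $\Phi(\tau)\equiv\Psi$ and $b=0$. The Osgood-type condition is immediate, since $\int_0^\infty d\tau/\Psi=\infty$.

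Finally I would check that $\Psi>0$, so that $\Phi$ maps $\bbr_+$ into $\bbr_+$ and the integral is well defined. Both summands are nonnegative, and $\Psi>0$ holds as soon as $\mu>r$ or $r>0$; I would state this mild proviso explicitly. There is no real obstacle in this argument. The only point requiring care is justifying that the logarithmic entropy terms vanish in the limit: $m\ln m\to0$, and $Z_{a,b}\to1$ when the bounds are removed, which is consistent with Lemma \ref{LEM_Appendix} in the case $\pi^{Merton}\in(-\infty,\infty)$.
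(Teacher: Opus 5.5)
Your argument is correct and follows the paper's proof. You specialize to $m=0$, $a=-\infty$, $b=+\infty$, compute $C(y,t,l,0)=-\frac{1-\eta}{2\eta}\frac{(\mu-r)^2}{\sigma^2(y)}-r(1-\eta)$, bound it using $\sigma\ge\sigma_*$, and take $\Phi\equiv\Psi$ constant so that $\int_0^\infty d\tau/\Psi=\infty$. Your two-sided bound $|C(y,t,l,0)|\le\Psi$ tacitly uses $r\ge 0$, as does the paper's claim $\Psi>0$, and it is slightly cleaner than the paper's one-sided lower bound because it gives $C(y,t,l,0)\,l\ge-\Psi|l|$ for both signs of $l$ at once.
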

\begin{proof}
{It can be seen directly that}
	\begin{align}\label{proof_Lemma1}
		C_\zs{0}(y,t,l,0)&=-\frac{1-\eta}{2}\bigg[\frac{(\mu-r)^2}{\eta \sigma^2(y)}\bigg]-r(1-\eta) \ge -\frac{1-\eta}{2\eta}\bigg(\frac{\mu-r}{\sigma_*}\bigg)^2-r(1-\eta),\nonumber \\
	\end{align}
    and, for any $l \in \bbr$,
	$
		C_\zs{0}(y,t,l,0)l\ge -\Psi(|l|)|l|,
	$ 
	where $\Psi=\frac{1-\eta}{2\eta}\bigg(\frac{\mu-r}{\sigma_*}\bigg)^2+r(1-\eta) >0$. Hence, $\C_2)$ is fulfilled.
\end{proof}

{The following is a direct result of Proposition \ref{Prop_4} for the case without portfolio constraints. }

\begin{proposition}[{Exploration without portfolio constraints}]
	Let $m\ne 0$ and $a=-\infty,\,b=+\infty$ and assume that $\frac{(\mu-r)^2}{\eta \sigma_*^2}+m\ln\big(\frac{2\pi_\zs{e}m}{\eta \sigma^2_*}\big)>0$ and $\underset{y \in \bbr}{\sup} \,\sigma^2(y)\le \frac{q_\zs{0}^{m}}{\eta}$ with $q_\zs{0}^{m}$ is the unique solution of
		\begin{align}\label{Prop_2_eq_1}
			f_\zs{m}(q)=\frac{(\mu-r)^2}{q}+m\ln\bigg(\frac{2\pi_e m}{q}\bigg)=0.
		\end{align}
		Then $\C_2)$ in Theorem \eqref{CaPrt00_Theorem} is fulfilled.
\end{proposition}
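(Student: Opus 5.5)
The statement is the special case $a=-\infty$, $b=+\infty$ of Proposition \ref{Prop_4}. I would prove it either by specializing that proposition or by repeating its argument directly. With no constraints, the normalizing factor $Z_{a,b}(y,p;m)=\Phi(D)-\Phi(F)$ has both endpoints infinite. The appropriate limits are $\Phi(+\infty)=1$ and $\Phi(-\infty)=0$, taken in whichever order makes the difference positive, so $Z\equiv 1$ and $\ln Z\equiv 0$ for every $(y,p)$. The function $f^{a,b}_m$ of \eqref{Prop_4_eq_1} therefore reduces to $f_m(q)=\frac{(\mu-r)^2}{q}+m\ln\frac{2\pi_e m}{q}$. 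Hypothesis (i) of Proposition \ref{Prop_4} becomes the assumed inequality $f_m(\eta\sigma_*^2)>0$, and hypothesis (ii) is exactly the assumed bound $\sup_y\sigma^2(y)\le q_0^m/\eta$.

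Next I check that $q_0^m$ is well defined and is the unique root, as the statement claims. The map $q\mapsto f_m(q)$ is continuous and strictly decreasing on $(0,\infty)$, being a sum of $(\mu-r)^2/q$ and $-m\ln q$ plus a constant. It tends to $-\infty$ as $q\to\infty$ because $m>0$. Since $f_m(\eta\sigma_*^2)>0$, the intermediate value theorem together with monotonicity gives a unique root $q_0^m>\eta\sigma_*^2$. This root coincides with the infimum used in Proposition \ref{Prop_4}. Moreover $f_m(q)\ge 0$ on $[\eta\sigma_*^2,q_0^m]$, and by (ii) every value $q=\eta\sigma^2(y)$ lies in this interval.

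For $\C_2)$ itself, I would compute $C_m(y,t,l,0)$ from the expression preceding Proposition \ref{Prop_4} with the $\ln Z$ term removed:
\[
C_m(y,t,l,0)=-\frac{1-\eta}{2}\Big[\frac{(\mu-r)^2}{\eta\sigma^2(y)}+m\ln\frac{2\pi_e m}{\eta\sigma^2(y)}\Big]-r(1-\eta).
\]
The bracket equals $f_m(\eta\sigma^2(y))$. Because $f_m$ is decreasing and $\eta\sigma^2(y)\ge\eta\sigma_*^2$, the bracket is at most $f_m(\eta\sigma_*^2)$. So $C_m(y,t,l,0)$ is bounded below uniformly by $-\Psi$, where $\Psi=\frac{1-\eta}{2}f_m(\eta\sigma_*^2)+r(1-\eta)>0$. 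Then $C_m(y,t,l,0)\,l\ge-\Psi|l|$, so $\C_2)$ holds with the constant function $\Phi\equiv\Psi$ and $b=0$, and $\int_0^\infty d\tau/\Psi=\infty$. The lower bound $f_m\ge0$ from (ii) is not actually needed for this inequality, only the upper bound, exactly as in Proposition \ref{Prop_4}.

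The only delicate point is justifying $\ln Z\equiv0$ when $a$ and $b$ are infinite, which means confirming that the sign conventions for $D$ and $F$ make the limit of $Z$ equal to $1$ rather than $-1$. I would handle this by working directly with the untruncated Gaussian maximizer from Lemma \ref{lem_2}, whose entropy is $\frac12\ln(2\pi_e e\beta^2)$ with no normalizing correction.
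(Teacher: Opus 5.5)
Your route matches the paper's: write $C_m(y,t,l,0)=-\frac{1-\eta}{2}f_m(\eta\sigma^2(y))-r(1-\eta)$, use that $f_m$ is strictly decreasing with a unique root $q_0^m>\eta\sigma_*^2$, and take a constant function in $\C_2)$. Your caution about the sign convention in $Z$ is justified; the paper simply drops the $\ln Z$ term.

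The final step, however, has a genuine gap. Your remark that the lower bound $f_m\ge 0$ coming from (ii) ``is not actually needed'' is false. Condition $\C_2)$ must hold for every $l\in\mathbb{R}$. The bound $C_m(y,t,l,0)\ge-\Psi$ gives $C_m(y,t,l,0)\,l\ge-\Psi|l|$ only for $l\ge 0$. For $l<0$ you have $C_m(y,t,l,0)\,l=-C_m(y,t,l,0)\,|l|$, so the required inequality is equivalent to an \emph{upper} bound $C_m(y,t,l,0)\le\Psi$. That upper bound needs $\sigma$ bounded above. Since $f_m(q)\to-\infty$ as $q\to\infty$, an unbounded $\sigma$ would give $\sup_y C_m(y,t,l,0)=+\infty$. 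Then, for any fixed $l<0$, $C_m(y,t,l,0)\,l$ is unbounded below in $y$, and no choice of $\Phi(|l|)$ and $b$ works.

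The repair uses exactly the fact you established and then set aside. By (ii), $\eta\sigma^2(y)\in[\eta\sigma_*^2,q_0^m]$ for all $y$, hence $0=f_m(q_0^m)\le f_m(\eta\sigma^2(y))\le f_m(\eta\sigma_*^2)$. This gives the two-sided bound $-\Psi\le C_m(y,t,l,0)\le -r(1-\eta)$, and so $\sup_y|C_m(y,t,l,0)|\le\frac{1-\eta}{2}f_m(\eta\sigma_*^2)+|r|(1-\eta)$, which equals your $\Psi$ when $r\ge0$. Taking this constant as $\Phi$ (with $b=0$), the inequality $C_m(y,t,l,0)\,l\ge-\Phi|l|$ holds for all $l$, and $\int_0^\infty d\tau/\Phi=\infty$.

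This two-sided boundedness is what the paper's proof implicitly relies on when it invokes $f_m>0$ on $(\eta\sigma_*^2,q_0^m)$. For $\C_2)$ alone, any finite upper bound on $\sigma$ would suffice with a larger constant. The specific threshold $q_0^m/\eta$ just makes $C_m(y,t,l,0)\le -r(1-\eta)$.
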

\begin{proof}
{In this case, we observe that }
	\begin{align}
		C_\zs{m}(y,t,l,0)&=-\frac{1-\eta}{2}\bigg[\frac{(\mu-r)^2}{\eta \sigma^2(y)}+m\ln\bigg(\frac{2\pi_e m}{\eta \sigma^2(y)}\bigg)\bigg]-r(1-\eta)\nonumber \\
		&=-\frac{1-\eta}{2}f_\zs{m}(\eta\sigma^2(y))-r(1-\eta)
	\end{align}
    where $f_\zs{m}(q)$ is defined by \eqref{Prop_2_eq_1}. It is straightforward to see that $f_\zs{m}(q)$ is a decreasing function from $(\eta\sigma^2_*,\infty)$ to $( -\infty,f^{a,b}_m(\eta\sigma^2_*)]$, so there exists a unique $\q_\zs{0}^{m}$ such that $f_\zs{m}(q_\zs{0}^m)=0$ and $f_\zs{m}(q)>0$ for $q\in (\eta\sigma^2_*,q_\zs{0}^m)$ i.e $\sigma^2(y) \in (\sigma^2_*,\frac{q_\zs{0}^m}{\eta})$. Note that if $m=0$, $q_\zs{0}^{0}=+\infty$. Then, condition $C_2)$ is fulfilled with $\Psi=\frac{1-\eta}{2\eta}\bigg(\frac{\mu-r}{\sigma_*}\bigg)^2+m\ln(\frac{2\pi_e m}{\eta \sigma^2_*})+r(1-\eta) >0$.
\end{proof}
\begin{proposition}[{Exploration with shortselling constraints}]\label{prop_3}
	Let $m\ne 0$, $a=0, b=+\infty$ and assume that $\frac{(\mu-r)^2}{\eta \sigma_*^2}+m\ln\big(\frac{2\pi_\zs{e}m}{\eta \sigma^2_*}\big)+2m\ln(\Phi\big(\frac{\mu-r}{\sqrt{m\eta\sigma_*^2}}\big))>0$ and $\underset{y\in \bbr}{\sup}_, \sigma^2(y)\le \frac{\q_\zs{0}^m}{\eta}$, with $q_\zs{0}^{m}$ is the unique solution of \begin{align}\label{Prop_3_eq_1}
			f^a_\zs{m}(q)=\frac{(\mu-r)^2}{q}+m\ln\bigg(\frac{2\pi_e m}{q}\bigg)+2m\ln\bigg(\Phi\bigg(\frac{\mu-r}{\sqrt{mq}}\bigg)\bigg)=0.\end{align}  
Then $\C_2)$ in Theorem \eqref{CaPrt00_Theorem} is fulfilled.
\end{proposition}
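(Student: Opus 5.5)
Proposition~\ref{prop_3} is the special case $a=0$, $b=+\infty$ of Proposition~\ref{Prop_4}, and I will follow the same two steps: identify $C_m(y,t,l,0)$ with $-\tfrac{1-\eta}{2}f^a_m(\eta\sigma^2(y))-r(1-\eta)$, then show that $f^a_m$ stays nonnegative on the range of $\eta\sigma^2(y)$.

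\textbf{Step 1: the nonlinearity at $p=0$.} Setting $b=+\infty$ gives $\Phi(F)=1$. The argument $D(y,0;m)$ equals $-\frac{\mu-r}{\sqrt{m\eta\sigma^2(y)}}$ once $a=0$ is inserted in \eqref{A_y}. Therefore
\[
Z_{0,\infty}(y,0;m)=1-\Phi\Big(-\tfrac{\mu-r}{\sqrt{m\eta\sigma^2(y)}}\Big)=\Phi\Big(\tfrac{\mu-r}{\sqrt{m\eta\sigma^2(y)}}\Big).
\]
Substituting this into the expression for $C_m(y,t,l,0)$ recorded before Proposition~\ref{Prop_4} yields
\[
C_m(y,t,l,0)=-\tfrac{1-\eta}{2}f^a_m(\eta\sigma^2(y))-r(1-\eta).
\]

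\textbf{Step 2: sign of $f^a_m$ and the constant $\Psi$.} The hypothesis says exactly that $f^a_m(\eta\sigma_*^2)>0$. The function $f^a_m$ is continuous on $[\eta\sigma_*^2,\infty)$, and $f^a_m(q)\to-\infty$ as $q\to\infty$: the first term vanishes, $m\ln(2\pi_e m/q)\to-\infty$, and $\ln\Phi(\cdot)\to\ln\tfrac12$. Hence a zero exists. I will argue that it is unique, as claimed, through monotonicity. The first two terms are strictly decreasing in $q$, and the third is also decreasing because $\Phi$ is increasing while its argument $\frac{\mu-r}{\sqrt{mq}}$ decreases in $q$ (using $\mu\ge r$). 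Thus $f^a_m$ is strictly decreasing, which is the one point where this case is cleaner than the general $[a,b]$ case of Proposition~\ref{Prop_4}.

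It follows that $f^a_m(q)\ge0$ for $q\in[\eta\sigma_*^2,q_0^m]$. Since $\eta\sigma^2(y)\in[\eta\sigma_*^2,q_0^m]$ for every $y$ by the volatility bound, we get $f^a_m(\eta\sigma^2(y))\le f^a_m(\eta\sigma_*^2)$. Consequently
\[
C_m(y,t,l,0)\ge -\Psi,\qquad \Psi=\tfrac{1-\eta}{2}f^a_m(\eta\sigma_*^2)+r(1-\eta)>0.
\]
A uniform bound of this kind suffices for $\C_2)$: it gives $C_m(y,t,l,0)\,l\ge-\Psi|l|$, so $\C_2)$ holds with the constant function $\Phi\equiv\Psi$, for which $\int_0^\infty d\tau/\Psi=\infty$.

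The only delicate point is the sign and monotonicity of the $\ln\Phi$ term, which depends on $\mu\ge r$. In fact only the upper bound $f^a_m(\eta\sigma^2(y))\le f^a_m(\eta\sigma_*^2)$ is needed, and monotonicity already delivers it.
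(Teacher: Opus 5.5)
Your Steps 1 and 2 follow the paper's route: write $C_m(y,t,l,0)=-\frac{1-\eta}{2}f^a_m(\eta\sigma^2(y))-r(1-\eta)$, then show $f^a_m$ is decreasing with a unique zero. Your term-by-term monotonicity argument is a cleaner substitute for the paper's derivative computation.

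The gap is in the final verification of $\C_2)$. That condition must hold for every $l\in\bbr$. Since $C_m(y,t,l,0)$ does not depend on $l$, for $l<0$ it reads $-C_m(y,t,l,0)\,|l|\ge-\Phi(|l|)|l|-b$, with $\Phi$ the function in $\C_2)$. So it requires a uniform \emph{upper} bound on $C_m(y,t,l,0)$. The implication from $C_m\ge-\Psi$ to $C_m\,l\ge-\Psi|l|$ is valid only for $l\ge0$.

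Consequently, your closing claim that only $f^a_m(\eta\sigma^2(y))\le f^a_m(\eta\sigma_*^2)$ is needed is false. The missing upper bound on $C_m$ is a \emph{lower} bound on $f^a_m(\eta\sigma^2(y))$, and monotonicity cannot provide it, since $f^a_m(q)\to-\infty$ as $q\to\infty$. If $\sigma$ were unbounded above, $C_m(y,t,l,0)$ would be unbounded above, and $\C_2)$ would fail for any fixed $l<0$.

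This is exactly where the hypothesis $\sup_y\sigma^2(y)\le q_0^m/\eta$ is used. You invoke it, but the only conclusion you draw, $f^a_m(\eta\sigma^2(y))\le f^a_m(\eta\sigma_*^2)$, needs just $\sigma\ge\sigma_*$. What the hypothesis actually gives is $f^a_m(\eta\sigma^2(y))\ge0$ for all $y$, hence $C_m(y,t,l,0)\le-r(1-\eta)$, which is $\le0$ when $r\ge0$. Together with your lower bound this yields $|C_m(y,t,l,0)|\le\frac{1-\eta}{2}f^a_m(\eta\sigma_*^2)+|r|(1-\eta)=:\bar\Psi$. Then $C_m(y,t,l,0)\,l\ge-\bar\Psi|l|$ for all $l\in\bbr$, and $\C_2)$ holds with the constant function $\bar\Psi$.

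With this two-sided bound inserted, your argument is complete. It is then consistent with the paper, which likewise records $f^a_m>0$ on $(\eta\sigma_*^2,q_0^m)$ before concluding.
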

\begin{proof}
	\begin{align}
		C_\zs{m}(y,t,l,0)&=-\frac{1-\eta}{2}\bigg[\frac{(\mu-r)^2}{\eta \sigma^2(y)}+m\ln\bigg(\frac{2\pi_e m}{\eta \sigma^2(y)}\bigg)+2m\ln\bigg(\Phi\bigg(\frac{\mu-r}{\sqrt{m\eta \sigma^2(y)}}\bigg)\bigg)\bigg]-r(1-\eta)\nonumber \\
		&=-\frac{1-\eta}{2}f^a_\zs{m}(\eta\sigma^2(y))-r(1-\eta)
	\end{align}
	We have $f^a_m(q)=\frac{(\mu-r)^2}{q}+m\ln\bigg(\frac{2\pi_e m}{q}\bigg)+2m\ln\bigg(\Phi\bigg(\frac{\mu-r}{\sqrt{mq}}\bigg)\bigg)$. Let us compute the derivative of $f^a_m(q)$: 
	$$\dot{f}_m^{a}(q)=-\frac{\mu-r}{q^2}-\frac{m}{q}-\frac{\frac{m}{q}\frac{\mu-r}{\sqrt{mq}}\vphi(\frac{\mu-r}{\sqrt{mq}})}{\Phi(\frac{\mu-r}{\sqrt{mq}})},$$
	$f^a_m(q)<0$ for all $q\in [\eta\sigma^2_*,\infty)$. Hence $f^a_m(q)$ is a decreasing function from $[\eta\sigma^2_*,\infty)$ to $( -\infty,f^{a,b}_m(\eta\sigma^2_*)]$, so there exists a unique $\q_\zs{0}^{m}$ such that $f_\zs{m}(q_\zs{0}^m)=0$ and $f_\zs{m}(q)>0$ for $q\in (\eta\sigma^2_*,q_\zs{0}^m)$ i.e $\sigma^2(y) \in (\sigma^2_*,\frac{q_\zs{0}^m}{\eta})$. Note that if $m=0$, $q_\zs{0}^{0}=+\infty$. Then, condition $\C_2)$ is fulfilled with $\Psi=\frac{1-\eta}{2\eta}\bigg(\frac{\mu-r}{\sigma_*}\bigg)^2+m\ln(\frac{2\pi_e m}{\eta \sigma^2_*})+2m \ln\bigg(\Phi\bigg(\frac{\mu-r}{\sqrt{m\eta \sigma^2_*}}\bigg)\bigg)+r(1-\eta) >0$.	
\end{proof}

\section{Diagonal process}\label{Diag_process}
\noindent By Theorem \ref{CaPrt00_Theorem}, the solution \( u^{\lambda^n} \) of the partial differential equation \eqref{related_PDE_n} belongs to the function space \( \mathcal{H}^{2+\epsilon,1+\epsilon/2}(\Gamma_N) \). To simplify notation, we denote \( u^{\lambda^n} \) by \( u^n \) in the following. This implies that \( u^n \) is uniformly bounded, i.e., there exists a constant \( c(N) \), which depends on \( N \) but not on \( n \), such that  
\[
\sup_{(t,y) \in \Gamma_N} |u^n(t,y)| \leq c(N).
\]  

\noindent Moreover, since \( u^n \in \mathcal{H}^{2+\epsilon,1+\epsilon/2}(\Gamma_N) \), it satisfies Hölder continuity conditions in both variables. In particular, there exists a constant \( C > 0 \) such that  
\[
|u^n(t,y) - u^n(s,y)| \leq C |t - s|,
\]  
which implies equicontinuity in \( t \). By the regularity of \( u^n \) in the space \( \mathcal{H}^{2+\epsilon,1+\epsilon/2} \), similar Hölder estimates hold for the first and second derivatives with respect to \( y \), ensuring equicontinuity in those variables as well. By the Ascoli–Arzelà theorem, we can extract a subsequence \( u^{n_k} \) that converges uniformly. For each fixed \( y \), the sequence \( u^{n_k}_t \) is also uniformly bounded and equicontinuous due to the Hölder conditions. Applying Ascoli–Arzelà again, we extract a further subsequence \( u^{n_{kl}}_t \) that converges uniformly. By the same argument, using the Hölder continuity of \( u^n \) in \( y \) and \( yy \), we can successively extract subsequences  
$
u^{n_{klr}}_y$ and $u^{n_{klrp}}_{yy}
$ 
that converge uniformly.  
Thus, we obtain a nested sequence of subsequences satisfying  
\[
u^n \supseteq u^{n_k} \supseteq u^{n_{kl}} \supseteq u^{n_{klr}} \supseteq u^{n_{klrp}}.
\]  
Now, consider the diagonal subsequence \( u^p \), whose \( p \)-th term is the \( p \)-th element in the nested sequence of subsequences \( u^{n_{klrp}} \). By construction, \( u^p \) converges uniformly for every \( t, y \) to a limit function \( u \), along with its corresponding derivatives.

\section{Detail derivatives for numerical example in Section \ref{sec: num}}\label{sec:NumVeri}
From \eqref{V0_form} and \eqref{L_param} we have 
\[
E = e^{\psi_0(T-t)}, \qquad
D = -\psi_2 + \psi_3 E, \qquad
B = -\psi_2 + \psi_3 ,
\]
\[
L(t)=\frac{\psi_1(E-1)}{D},\qquad
M(t)=\psi_4(T-t)+\psi_5\log\!\left(\frac{D}{B}\right),
\]
Hence, 

\[
V(t,x,y)=\frac{x^{1-\eta}e^{\mathrm{exp\_tot}} - 1}{1-\eta},
\]
where \[
\mathrm{exp\_tot}
= L(t)y + M(t) + \frac12 m(1-\eta)(T-t)\log m,
\]

\begin{align*}
\frac{\partial V}{\partial \psi_0}
&=
\frac{x^{1-\eta} e^{\mathrm{exp\_tot}}}{1-\eta}
(T-t)E
\left[
y\,\frac{\psi_1(\psi_3-\psi_2)}{D^2}
+ \psi_5\frac{\psi_3}{D}
\right], \\[0.5em]
\frac{\partial V}{\partial \psi_1}
&=
\frac{x^{1-\eta} e^{\mathrm{exp\_tot}}}{1-\eta}
\,y\,\frac{E - 1}{D}, \\[0.5em]
\frac{\partial V}{\partial \psi_2}
&=
\frac{x^{1-\eta} e^{\mathrm{exp\_tot}}}{1-\eta}
(E-1)
\left[
y\,\frac{\psi_1}{D^2}
+ \frac{\psi_3\psi_5}{D\,B}
\right], \\[0.5em]
\frac{\partial V}{\partial \psi_3}
&=
\frac{x^{1-\eta} e^{\mathrm{exp\_tot}}}{1-\eta}
(E-1)
\left[
-\,y\,\frac{\psi_1 E}{D^2}
-\,\frac{\psi_2\psi_5}{D\,B}
\right], \\[0.5em]
\frac{\partial V}{\partial \psi_4}
&=
\frac{x^{1-\eta} e^{\mathrm{exp\_tot}}}{1-\eta}\,(T-t), \\[0.5em]
\frac{\partial V}{\partial \psi_5}
&=
\frac{x^{1-\eta} e^{\mathrm{exp\_tot}}}{1-\eta}
\log\!\left(\frac{D}{B}\right).
\end{align*}
For a control $\pi\in[0,1]$, the policy $\lambda_t$ follows a truncated normal
\[
f_{\theta}(\pi \mid t,y)
=
\frac{
	\varphi\!\left(\dfrac{\pi-\mu_{\theta}(t,y)}{\sigma_{\theta}(t,y)}\right)
}{
	\sigma_{\theta}(t,y)\,
	\Big[
	\Phi(\hat b)-\Phi(\hat a)
	\Big]
}\,\mathbf 1_{\{0\le \pi\le 1\}},
\]
where $\varphi$ and $\Phi$ denote the standard normal pdf/cdf, and
\[
z=\frac{\pi-\mu_{\theta}}{\sigma_{\theta}},\qquad
\hat a=-\frac{\mu_{\theta}}{\sigma_{\theta}},\qquad
\hat b=\frac{1-\mu_{\theta}}{\sigma_{\theta}}.
\]
Thus the log-density is
\(
\log f_{\theta}(\pi\mid t,y)
=
-\frac{1}{2}z^{2}
-\log\sigma_{\theta}
-\frac{1}{2}\log(2\pi)
-
\log\big(
\Phi(\hat b)-\Phi(\hat a)
\big).
\)
A direct computation yields
\[
\frac{\partial \log f_{\theta}}{\partial \mu_{\theta}}
=
\frac{\pi-\mu_{\theta}}{\sigma_{\theta}^{2}}
+
\frac{\varphi(\hat b)-\varphi(\hat a)}
{\sigma_{\theta}\big[
	\Phi(\hat b)-\Phi(\hat a)
	\big]}=:g_{\mu}(t,y,\pi;\theta).
\]
The mean of the truncated normal is
\(
\mu_{\theta}(t,y)
=
C(y)\,\big(\theta_{4}+\theta_{5}L^{\theta}(t)\big),
\;
L^{\theta}(t)=\frac{\theta_{1}(E_{0}-1)}{D_{\theta}(t)},
\)
where 
\(
C(y):=\frac{\sqrt{y+\delta_*}}{\eta\sqrt{y^{2}+\sigma_*^{2}}},\,
E_{0}:=e^{\theta_0(T-t)},
\,
D_{\theta}(t):=-\theta_{2}+\theta_{3}E_{0}.
\)

Since $\sigma_{\theta}(t,y)$ does not depend on~$\theta$,
the chain rule gives, for each parameter $\theta_k$,
\[
\frac{\partial\log f_{\theta}}{\partial\theta_k}
=
g_{\mu}(t,y,\pi;\theta)\,
\frac{\partial\mu_{\theta}(t,y)}{\partial\theta_k}.
\]
The gradients with respect to $(\theta_0,\theta_1,\theta_2,\theta_3,\theta_4,\theta_5)$ is then given by
\begin{align*}
\frac{\partial L^{\theta}(t)}{\partial \theta_0}
&= \frac{(T-t)E_{0}\,\theta_{1}(\theta_{3}-\theta_{2})}{D_{\theta}(t)^{2}},\quad
\frac{\partial \log f_{\theta}}{\partial \theta_0}
= g_{\mu}(t,y,\pi;\theta)\,
C(y)\,\theta_{5}
\frac{(T-t)E_{0}\,\theta_{1}(\theta_{3}-\theta_{2})}
{D_{\theta}(t)^{2}}.
\end{align*}
\begin{align*}
\frac{\partial L^{\theta}(t)}{\partial \theta_1}
&= \frac{E_{0}-1}{D_{\theta}(t)}, \quad
\frac{\partial \mu_{\theta}}{\partial \theta_1}= C(y)\,\theta_{5}\frac{E_{0}-1}{D_{\theta}(t)}, \quad
\frac{\partial \log f_{\theta}}{\partial \theta_1}
= g_{\mu}(t,y,\pi;\theta)\,
C(y)\,\theta_{5}\frac{E_{0}-1}{D_{\theta}(t)}.
\end{align*}
\begin{align*}
\frac{\partial L^{\theta}(t)}{\partial \theta_2}
= \frac{\theta_{1}(E_{0}-1)}{D_{\theta}(t)^{2}},
\;
\frac{\partial \mu_{\theta}}{\partial \theta_2}
= C(y)\,\theta_{5}\frac{\theta_{1}(E_{0}-1)}{D_{\theta}(t)^{2}}, \;
\frac{\partial \log f_{\theta}}{\partial \theta_2}
= g_{\mu}(t,y,\pi;\theta)\,
C(y)\,\theta_{5}\frac{\theta_{1}(E_{0}-1)}{D_{\theta}(t)^{2}}.
\end{align*}

\begin{align*}
\frac{\partial L^{\theta}(t)}{\partial \theta_3}
= -\frac{\theta_{1}(E_{0}-1)E_{0}}{D_{\theta}(t)^{2}},
\,\frac{\partial \mu_{\theta}}{\partial \theta_3}
= -C(y)\,\theta_{5}\frac{\theta_{1}(E_{0}-1)E_{0}}{D_{\theta}(t)^{2}}, \\[0.5em]
\frac{\partial \log f_{\theta}}{\partial \theta_3}
= -\,g_{\mu}(t,y,\pi;\theta)\,
C(y)\,\theta_{5}
\frac{\theta_{1}(E_{0}-1)E_{0}}{D_{\theta}(t)^{2}}.
\end{align*}
\begin{align*}
\frac{\partial \mu_{\theta}}{\partial \theta_4}
= C(y),
\frac{\partial \log f_{\theta}}{\partial \theta_4}
= g_{\mu}(t,y,\pi;\theta)\,C(y), \;
\frac{\partial \mu_{\theta}}{\partial \theta_5}
= C(y)\,L^{\theta}(t),\;
\frac{\partial \log f_{\theta}}{\partial \theta_5}
= g_{\mu}(t,y,\pi;\theta)\,C(y)\,L^{\theta}(t).
\end{align*}

\section{Verification of the conditions in Proposition~\ref{Prop_4} under the simulation setup} \label{sec:App-num}

\noindent All numerical experiments are conducted on the truncated domain $y\in[0,1]$.  
For the volatility specification $\sigma(y)=\sqrt{y^{2}+\sigma_{*}^{2}}$ with $\sigma_*=0.3$, we have
\(sup_{0\le y\le 1}\sigma^{2}(y)=1^{2}+0.09=1.09.
\) 
The excess drift satisfies
\(
\mu(t,y)-r = k_{2}\sqrt{y+\delta_*}\,\sqrt{y^{2}+\sigma_{*}^{2}},
\;
k_{2}=0.23,\;\delta_*=0.3,
\)
and a direct evaluation yields
\(bar\mu-r := \sup_{0\le y\le 1} (\mu(t,y)-r)\approx 0.27379.
\) 
We verify Conditions (i)–(ii) of Proposition~\ref{Prop_4} using this uniform bound. 
Now, with $a=0$, $b=1$, $\eta=0.5$, $m=1$, and $\sigma_*=0.3$, the quantity
\[
\frac{(\bar\mu-r)^{2}}{\eta\sigma_{*}^{2}}
+ m\ln\!\Big(\tfrac{2\pi e m}{\eta\sigma_{*}^{2}}\Big)
+2m\ln\!\Big(\Phi(B_{0})-\Phi(A_{0})\Big)
\]
evaluates numerically to $7.40>0$; hence Condition~(i) holds. To see Condition (ii), let $q_{0}^{m}$ be the smallest solution of $f_{0,1}^{m}(q)=0$ as defined in Proposition~\ref{Prop_4}.  
A numerical root-finding procedure gives
\(
q_{0}^{m}\approx 1.62,
\;
\frac{q_{0}^{m}}{\eta}\approx 3.24.
\)
Since $\sup_{0\le y\le 1}\sigma^{2}(y)=1.09$, we obtain
\(
\sup_{0\le y\le 1}\sigma^{2}(y)\;\le\; q_{0}^{m}/\eta,
\) 
which establishes condition~(ii).
Both conditions (i) and (ii) of Proposition~\ref{Prop_4} are satisfied on the truncated simulation domain $y\in[0,1]$.  
Accordingly, the structural assumptions ensuring the validity of condition~$\C_2)$ are met for all parameter values used in our numerical experiments

The following proposition confirms that the parameter configurations employed in the numerical experiments satisfy these conditions, thereby guaranteeing that the corresponding exploratory strategies remain admissible.
\begin{proposition}
	Assume $\sigma_*>0$, $\delta_*> 0$, $\eta\in(0,1)$, and let
	\[
	\mu(t,y)-r \;=\; k_2\,\sqrt{y+\delta_*}\,\sqrt{y^2+\sigma_*^2},\qquad
	\sigma(y)=\sqrt{y^2+\sigma_*^2},\qquad
	\delta(y)=k_1\sqrt{y+\delta_*}.
	\]
	Let $\pi^{\mathrm{M}}$ be the unconstrained Merton fraction defined by
	\[
	\pi^{\mathrm{M}}(t,y)
	=\frac{\mu-r+\rho\,\delta\,\sigma\,u_y}{\eta\,\sigma^2}
	=\frac{\sqrt{y+\delta_*}}{\eta\,\sqrt{y^2+\sigma_*^2}}\,
	\Big(k_2+\rho\,k_1\,u_y(t,y)\Big).
	\]
	Suppose there exists  $R>0$ such that $|u_y(t,y)|\le R$ for all $(t,y)$.
	Set $\Delta:=\sqrt{\delta_*^2+\sigma_*^2}$. If
	\[
{\quad
		k_2\ \in\ \Big[\,\rho\,k_1\,R\;,\ \ \eta\,\sqrt{\,2\big(\Delta-\delta_*\big)\,}\ -\ \rho\,k_1\,R\Big]\quad}
	\]
    then \(
	\pi^{\mathrm{M}}(t,y)\in[0,1]\;\text{for all }y\ge 0
	\). 
\end{proposition}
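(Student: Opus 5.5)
The plan is to factor the Merton fraction as
\[
\pi^{\mathrm M}(t,y)=g(y)\,\big(k_2+\rho k_1 u_y(t,y)\big),\qquad g(y):=\frac{\sqrt{y+\delta_*}}{\eta\sqrt{y^2+\sigma_*^2}}>0 \quad (y\ge 0),
\]
and to bound the two factors separately: the second factor using $|u_y|\le R$, and the first by an explicit maximization over $y\ge0$. The two inequalities $\pi^{\mathrm M}\ge 0$ and $\pi^{\mathrm M}\le 1$ then correspond to the two endpoints of the admissible interval for $k_2$. We also use, implicitly, that this interval is nonempty, i.e. that the hypothesis on $k_2$ can actually be met.

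\emph{Lower bound.} Since $\rho>0$, $k_1>0$ and $|u_y|\le R$, we have
\[
k_2+\rho k_1 u_y \ge k_2-\rho k_1 R \ge 0,
\]
using the left endpoint $k_2\ge \rho k_1 R$. Since $g>0$, this gives $\pi^{\mathrm M}\ge 0$.

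\emph{Upper bound, reduction.} In the same way,
\[
0\le k_2+\rho k_1 u_y\le k_2+\rho k_1 R\le \eta\sqrt{2(\Delta-\delta_*)},
\]
using the right endpoint. Hence it suffices to prove
\[
\sup_{y\ge0} g(y)^2=\frac{1}{\eta^2}\,\sup_{y\ge0}h(y)\le \frac{1}{2\eta^2(\Delta-\delta_*)},\qquad h(y):=\frac{y+\delta_*}{y^2+\sigma_*^2}.
\]
Multiplying the two bounds, together with the nonnegativity of both factors, then gives $\pi^{\mathrm M}\le 1$.

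\emph{Maximization of $h$.} This is the only genuine computation. The condition $h'(y)=0$ is equivalent to
\[
y^2+\sigma_*^2=2y(y+\delta_*),\qquad\text{i.e.}\qquad y^2+2\delta_*y-\sigma_*^2=0 .
\]
Its unique nonnegative root is $y^*=\Delta-\delta_*>0$, where $\Delta=\sqrt{\delta_*^2+\sigma_*^2}$. At this critical point the defining relation gives $y^2+\sigma_*^2=2y(y+\delta_*)$, so
\[
h(y^*)=\frac{1}{2y^*}=\frac{1}{2(\Delta-\delta_*)}.
\]
To see that this is the global maximum on $[0,\infty)$, note that the sign of $h'$ is the sign of $-(y^2+2\delta_*y-\sigma_*^2)$. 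It is positive on $[0,y^*)$ and negative afterwards, so $h$ increases and then decreases. In particular $h(0)=\delta_*/\sigma_*^2\le h(y^*)$ and $h(y)\to0$ as $y\to\infty$. This yields the required bound on $\sup_{y\ge0}g$ and completes the argument.

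The main point needing care is this sup computation, in particular confirming that the critical point is the global maximizer on $y\ge0$ rather than only a local one. The sign analysis of $h'$ settles this directly, so I do not expect a real obstacle.
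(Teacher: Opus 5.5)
Your argument is correct. It is exactly the routine verification the paper has in mind when it omits the proof as straightforward: you bound $k_2+\rho k_1 u_y$ below by $k_2-\rho k_1R\ge 0$ and above by $k_2+\rho k_1R\le \eta\sqrt{2(\Delta-\delta_*)}$, and you use that $(y+\delta_*)/(y^2+\sigma_*^2)$ attains its maximum on $[0,\infty)$ at $y^*=\Delta-\delta_*$ with value $1/(2(\Delta-\delta_*))$, so that $\sup_{y\ge 0} g(y)=1/(\eta\sqrt{2(\Delta-\delta_*)})$. The signs $\rho>0$ and $k_1>0$ that you invoke are implicit in the statement (they follow from $0<\rho<1$ and $\inf_y\delta(y)>0$ in Assumption (SB)), and you do not need the $k_2$-interval to be nonempty, since otherwise the claim holds vacuously.
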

\proof The proof is straightforward; hence, it is omitted. \endproof
\section{Connection to the Merton problem}\label{se:Mertonconnection}
\noindent As pointed out in \cite{chau2025continuous}, the recursive entropy-penalized optimization \eqref{exploratory} considered in this paper does not fully recover the classical Merton problem. In particular, as noted in Remark \ref{remark2}, when $\eta=1$, one recovers the logarithmic case, while the extension to power utility remains open within this framework. In order to address this issue, we propose an alternative approach inspired by the policy parametrization developed throughout the paper. The key idea is the following: we no longer consider the trade-off between exploration and exploitation through an entropy penalty in the objective function, but we still allow the agent to use a randomized (exploratory) policy. In other words, exploration is preserved at the level of the control parametrization, but it is no longer penalized. More precisely, we consider the optimization problem
\begin{equation}\label{E1}
    V^{\mathrm{Merton}}(t,x,y)
    =
    \sup_{\lambda^{\mathrm{Merton}}\in \mathcal H_{[a,b]}}
    \mathbb{E}\left[
        U\big(X_T^{\lambda^{\mathrm{Merton}}}\big)
        \,\middle|\,
        X_t^{\lambda^{\mathrm{Merton}}}=x,\;
        Y_t=y
    \right].
\end{equation}
Contrary to the entropy-regularized problem \eqref{exploratory} studied in the main text, no entropy term appears in the objective function. However, we keep the same class of parameterized policies. In particular, based on Corollary \ref{cor1}, we consider controls of the form
\begin{equation}
    \lambda^{\mathrm{Merton}}
    =
    \mathcal N\!\left(
        \alpha^{\mathrm{Merton}}(t,x,y),
        \frac{m}{\eta\sigma^2(y)}
    \right)_{[a,b]},
\end{equation}
where $\alpha^{\mathrm{Merton}}$ is a deterministic function and $m>0$ controls the level of exploration. Therefore, even though the objective is not entropy-penalized, the agent still explores through the variance of the policy.

The optimization problem \eqref{E1} can then be rewritten as
\begin{equation}
    V^{\mathrm{Merton}}(t,x,y)
    =
    \sup_{\alpha^{\mathrm{Merton}}\in \Upsilon}
    \mathbb{E}\left[
        U\big(X_T^{\lambda^{\mathrm{Merton}}}\big)
        \,\middle|\,
        X_t^{\lambda^{\mathrm{Merton}}}=x,\;
        Y_t=y
    \right],
\end{equation}
where $\Upsilon$ is the set of admissible deterministic functions $\alpha$. By the dynamic programming principle, we obtain the following HJB equation:
\begin{align}
     V_t
     &+
     \sup_{\alpha^{\mathrm{Merton}}}
     \bigg\{
     ((r+(\mu-r))xV_x)\,\alpha^{\mathrm{Merton}}
     +\frac{1}{2}\sigma^2(y)
     \big((\alpha^{\mathrm{Merton}})^2+\text{Var}(\lambda^{\mathrm{Merton}})\big)
     x^2 V_{xx}
     \nonumber\\
     &\qquad
     +\varpi(y)V_y
     +\frac{1}{2}\gamma^2(y)V_{yy}
     +\rho\gamma(y)\sigma(y)xV_{xy}\,\alpha^{\mathrm{Merton}}
     \bigg\}
     =0,
 \end{align}
with terminal condition
\(
V(T,x,y)
=
\frac{x^{1-\eta}-1}{1-\eta}.
\)Using the first-order condition, we obtain
\begin{equation}
     \alpha^{\mathrm{Merton}}
     =
     -\frac{(\mu-r)xV_x
     +\rho\gamma(y)\sigma(y)xV_{xy}}
     {\sigma^2(y)x^2V_{xx}}.
\end{equation}
\begin{table}[H]
\centering
\caption{Learned parameters for the Merton problem (final iteration)}
\begin{tabular}{c|cccccc}
\hline
 & $0$ & $1$ & $2$ & $3$ & $4$ & $5$ \\
\hline
$\psi^\star$ 
& 1.9952 & 0.0168 & 0.0135 & 0.9554 & -8.0055 & 4.0011 \\

$\theta^\star$ 
& 2.1000 & 0.3997 & 0.0069 & 0.8801 & 0.1633 & 0.0989 \\
\hline
\end{tabular}
\end{table}
We can then apply the same transformations as in the main part of the paper to derive the reduced PDE satisfied by the value function (see PDE \eqref{sol_u} and the proof of Theorem \ref{thm:HJBEQ}). In this case, the PDE analysis is significantly simplified compared to PDE \eqref{sol_u} of the entropy-regularized case because no entropy term appears in the Hamiltonian. To avoid repetitions, we therefore do not repeat the detailed derivation here.
To complement the above theoretical analysis, we now present numerical results obtained for the Merton case under the proposed exploratory parametrization. In particular, we report the learned values of the critic parameters $\psi^\star$ and actor parameters $\theta^\star$ at convergence, together with the full trajectory of $\psi$ during training. The convergence behavior of each component of $\psi$ is illustrated in Figures~\ref{fig:psi_merton_all}, which provide additional insight into the stability of the learning procedure.

\vspace{0.3cm}

\begin{figure}[H]
\centering

\begin{subfigure}{0.3\textwidth}
\centering
\includegraphics[width=\textwidth]{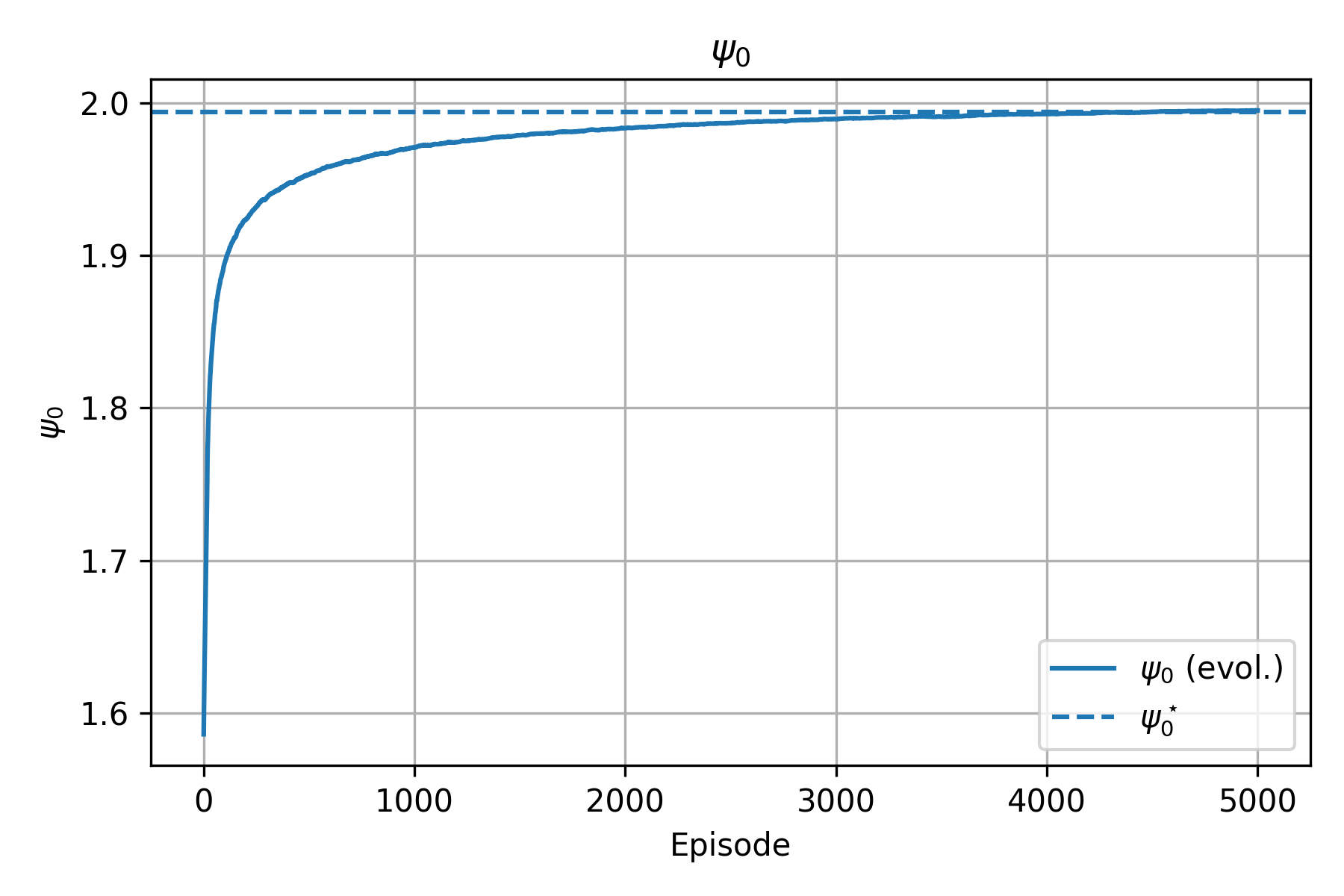}
\caption{$\psi_0$}
\end{subfigure}
\hfill
\begin{subfigure}{0.3\textwidth}
\centering
\includegraphics[width=\textwidth]{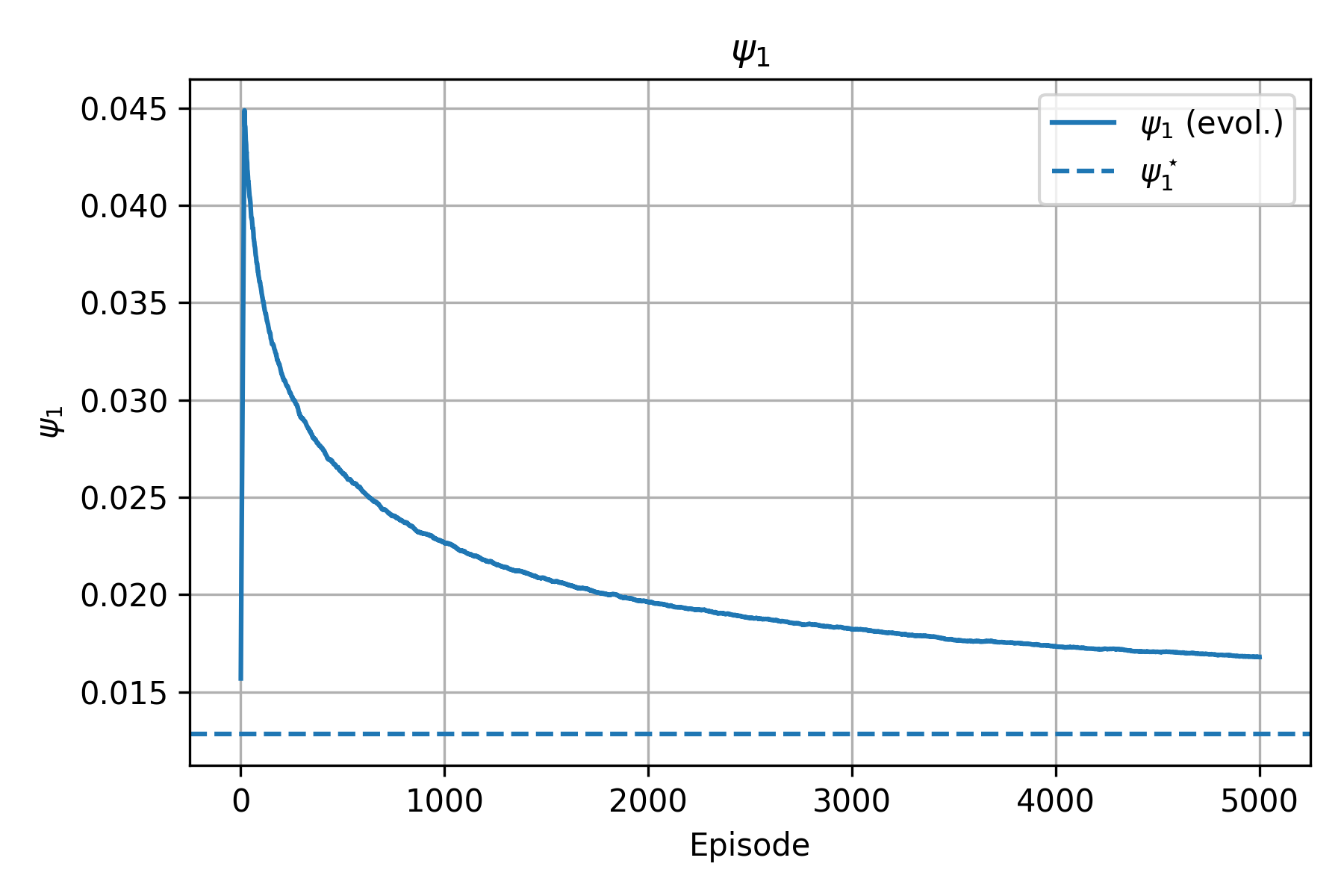}
\caption{$\psi_1$}
\end{subfigure}
\hfill
\begin{subfigure}{0.3\textwidth}
\centering
\includegraphics[width=\textwidth]{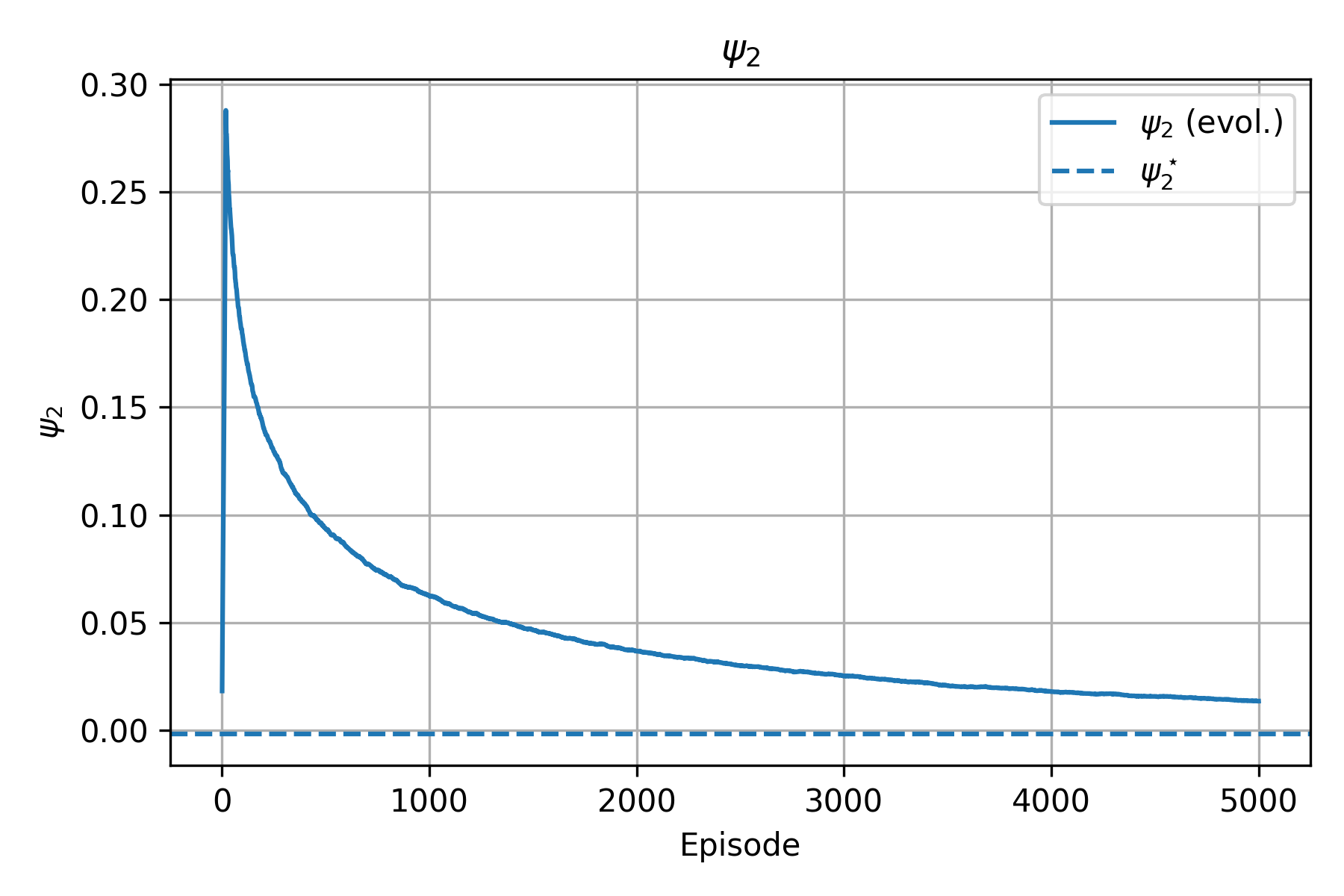}
\caption{$\psi_2$}
\end{subfigure}

\vspace{0.3cm}

\begin{subfigure}{0.3\textwidth}
\centering
\includegraphics[width=\textwidth]{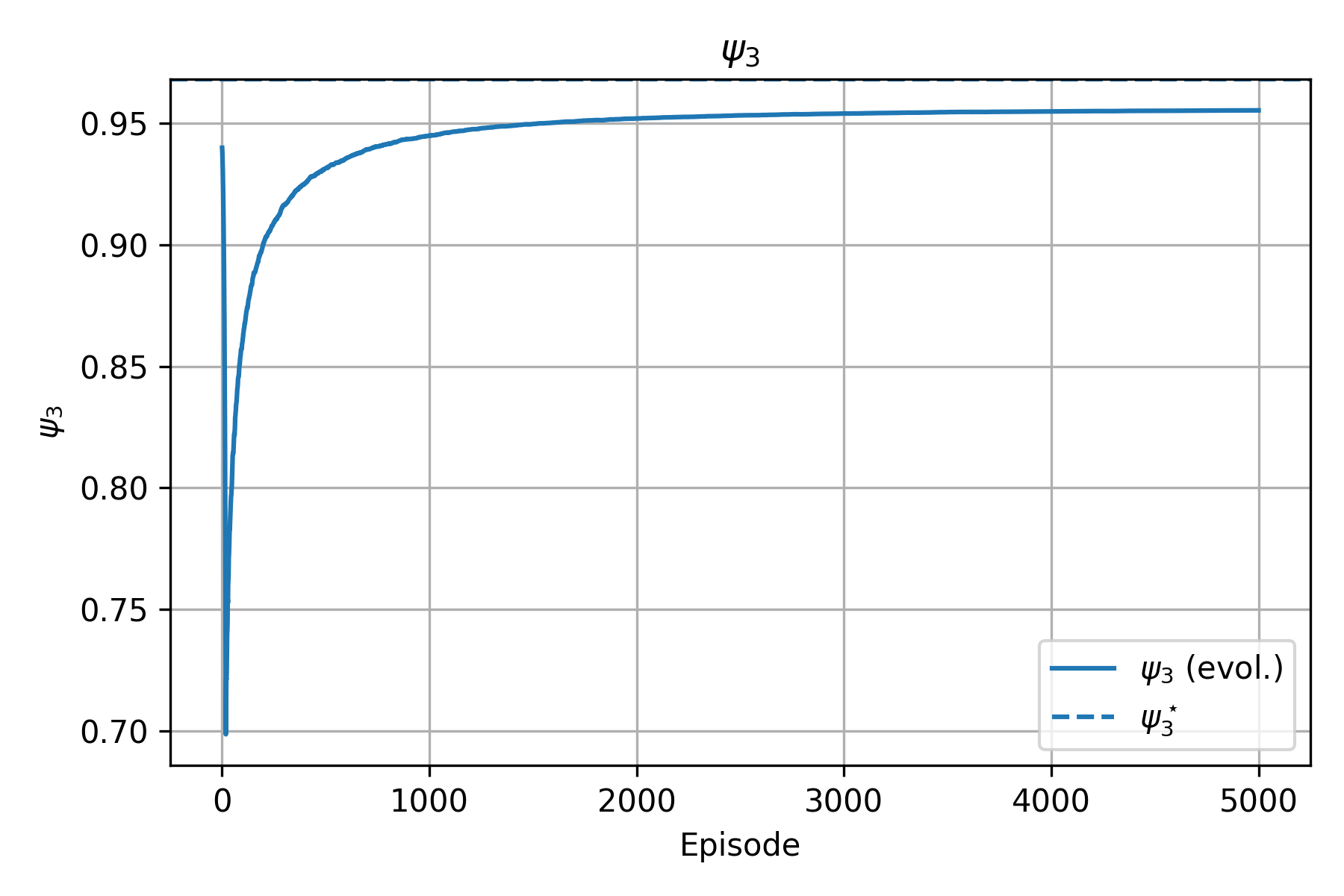}
\caption{$\psi_3$}
\end{subfigure}
\hfill
\begin{subfigure}{0.3\textwidth}
\centering
\includegraphics[width=\textwidth]{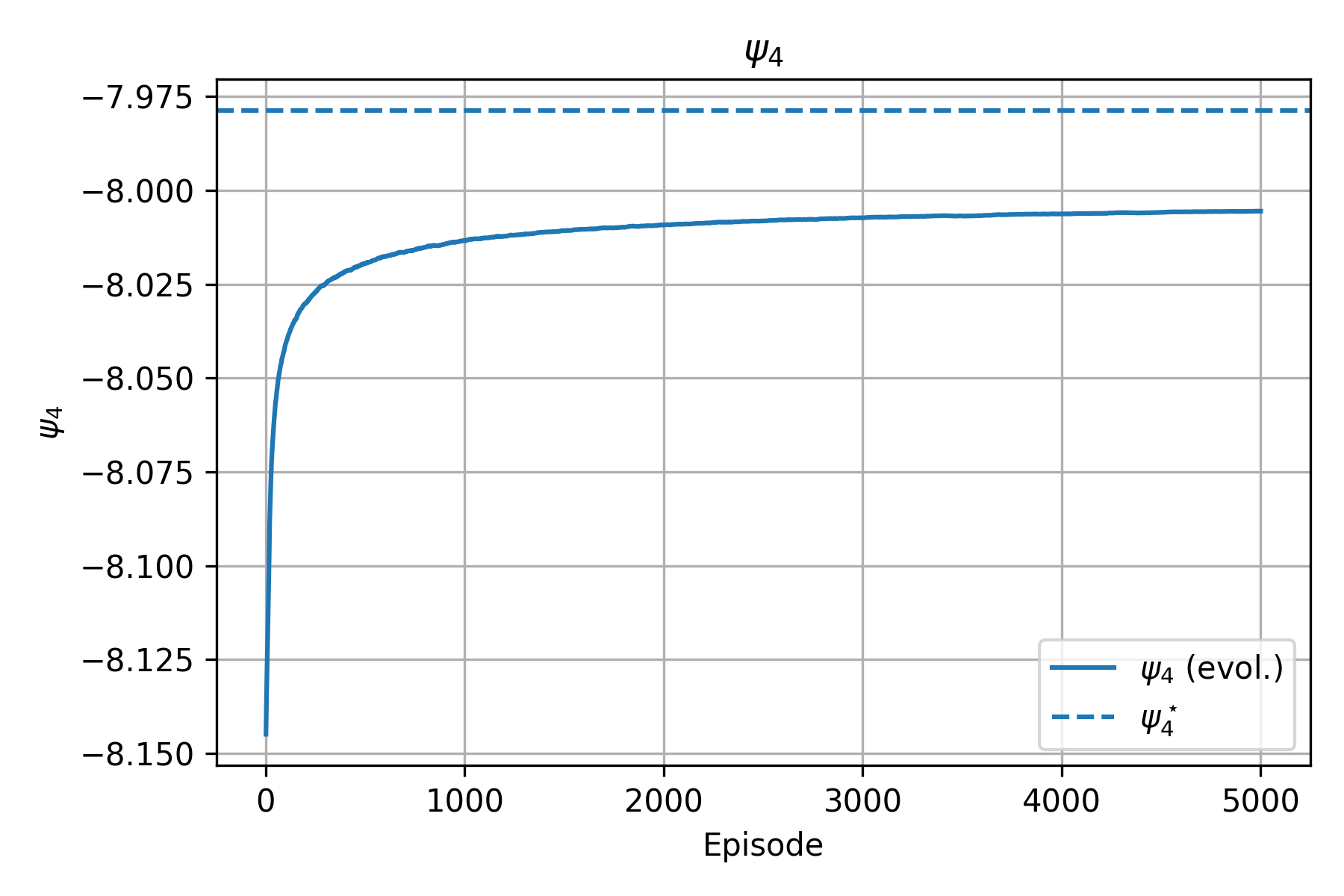}
\caption{$\psi_4$}
\end{subfigure}
\hfill
\begin{subfigure}{0.3\textwidth}
\centering
\includegraphics[width=\textwidth]{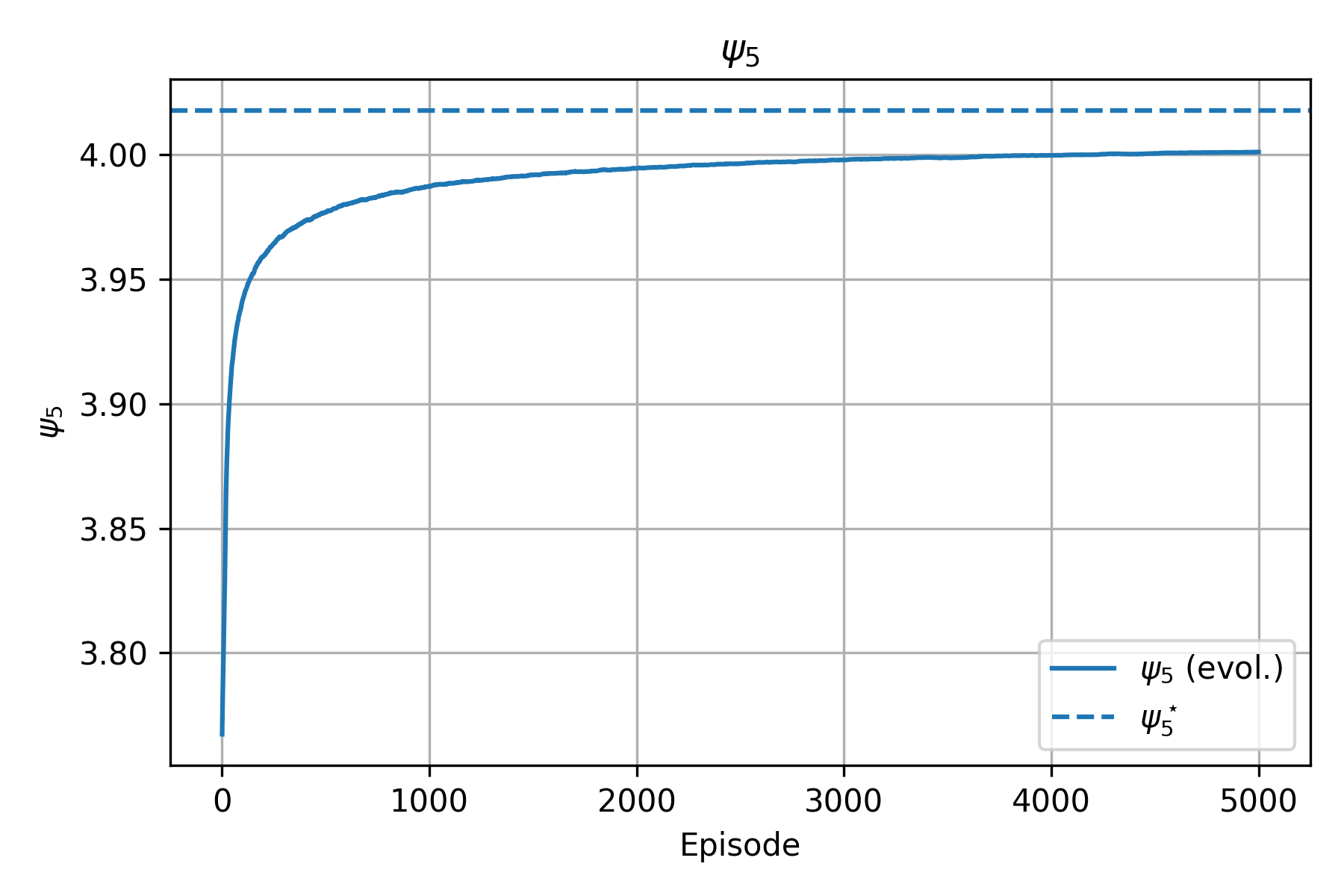}
\caption{$\psi_5$}
\end{subfigure}

\caption{Convergence of the critic parameters $\psi$ across training iterations in the Merton case.}
\label{fig:psi_merton_all}
\end{figure}

\end{document}